\definecolor{OliveGreen}{RGB}{60,76,36}
\def\01{\{0,1\}}
\newcommand{\ket}[1]{|#1\rangle}
\newcommand{\bra}[1]{\langle#1|}
\newcommand{\ketbra}[2]{|#1\rangle\langle#2|}
\newcommand{\diag}{\mbox{\rm diag}}
\newcommand{\normstate}[1]{\Big\lVert#1\Big\rVert}
\newcommand{\norm}[1]{\mbox{$\parallel{#1}\parallel$}}
\newcommand{\Cc}{{\mathcal C}} %concept class, \C is already in use for complex nrs
\newcommand{\calA}{{\mathcal A}}
\newcommand{\calB}{{\mathcal B}}
\newcommand{\calL}{{\mathcal L}}
\newcommand{\SWAP}{{\textsf{SWAP}} }
\newcommand{\FF}{{\{0,1\}}}
\newcommand{\Exp}{\mathop{\mathbb{E}}}
\newcommand{\Sh}{\ensuremath{\mathcal{S}}}
\newcommand{\PAC}{\ensuremath{\textsf{PAC}}}
\newcommand{\DNF}{\ensuremath{\textsf{DNF}}}
\newcommand{\OR}{\ensuremath{\textsf{OR}}}
\newcommand{\opt}{\ensuremath{\textsf{opt}}}
\newcommand{\A}{\ensuremath{\mathcal{A}}}
\newcommand{\F}{\ensuremath{\mathbb{F}}}
\newcommand{\R}{\ensuremath{\mathbb{R}}}
\newcommand{\id}{\ensuremath{\mathbb{I}}}
\newcommand{\AND}{\ensuremath{\textsf{AND}}}
\newcommand{\NOT}{\ensuremath{\textsf{NOT}}}
\newcommand{\AC}{\ensuremath{\textsf{AC}}}
\newcommand{\TAC}{\ensuremath{\textsf{TAC}}}
\newcommand{\Jun}{\ensuremath{\textsf{Jun}}}
\newcommand{\DT}{\ensuremath{\textsf{DT}}}
\newcommand{\Had}{\ensuremath{\textsf{Had}}}
\newcommand{\SOTA}{\ensuremath{\textsf{SOTA}}}
\DeclareMathOperator{\poly}{poly}
\newcommand{\calF}{{\cal F }}
\newcommand{\calC}{{\cal C }}
\newcommand{\calS}{{\cal S }}
\def\01{\{0,1\}}
\newcommand{\supp}{\mathsf{supp}}
\newcommand{\Par}{\mathsf{Par}}
\newcommand{\WAL}{\mathsf{WAL}}
\newcommand{\spann}{\mathsf{span}}
\DeclareMathOperator{\sign}{sign}
\newtheorem{theorem}{Theorem}[section]
\newtheorem{definition}[theorem]{Definition}
\newtheorem{fact}[theorem]{Fact}
\newtheorem{lemma}[theorem]{Lemma}
\newtheorem{corollary}[theorem]{Corollary}
\newtheorem{claim}[theorem]{Claim}
\newtheorem{result}[theorem]{Result}
\newtcolorbox{myalgorithm}[1][]{
    colback=gray!10, % Light grey background
    colframe=black, % Black border
    arc=5pt, % Curved corners
    boxrule=0.5pt, % Thin border
    left=0pt, right=0pt, top=0pt, bottom=0pt % Padding inside the box
}
\newcommand{\customlabel}[2]{%
\protected@write \@auxout {}{\string \newlabel {#1}{{#2}{}}}}
\global\long\def\argmin{\operatornamewithlimits{argmin}}
\begin{document}

\sloppy % Added to make sure we don't overflow into the margins
	
\title{Learning depth-3 circuits\\ via quantum agnostic boosting}
% Quantum agnostic learning
% Efficient quantum agnostic learning of DNFs
% Polynomial-time quantum learning depth-3 circuits
% Efficiently learning depth-3 circuits with quantum samples
% Efficient learning depth-3 circuits via quantum agnostic boosting

\author{ 
Srinivasan \\Arunachalam\thanks{IBM Quantum, Almaden Research Center. \href{mailto:Srinivasan.Arunachalam@ibm.com}{Srinivasan.Arunachalam@ibm.com}}
\and
Arkopal\\ Dutt \thanks{IBM Quantum, Cambridge, MA. \href{arkopal@ibm.com}{arkopal@ibm.com} } 
\and 
Alexandru \\ Gheorghiu
\thanks{IBM Quantum, Cambridge, MA. \href{agheorghiu@ibm.com}{agheorghiu@ibm.com} } 
\and
Michael de\\  Oliveira
\thanks{
International Iberian Nanotechnology Laboratory; LIP6, Sorbonne Universite; INESC TEC. \href{michael.oliveira@inl.int}{michael.oliveira@inl.int}} }

\maketitle

\begin{abstract}
We initiate the study of \emph{quantum agnostic learning} of phase states with respect to a function class $\mathcal{C}\subseteq \{c:\{0,1\}^n\rightarrow \{0,1\}\}$: given copies of an unknown $n$-qubit state $|\psi\rangle$ which has fidelity $\textsf{opt}$ with a phase state $|\phi_c\rangle=\frac{1}{\sqrt{2^n}}\sum_{x\in \{0,1\}^n}(-1)^{c(x)}|x\rangle$  for some $c\in \mathcal{C}$, output  $|\phi\rangle$ which has fidelity $|\langle \phi | \psi \rangle|^2 \geq \textsf{opt}-\varepsilon$. To this end, we give agnostic learning protocols for the following classes:
\begin{enumerate}
    \item Size-$t$ decision trees which runs in time $\textsf{poly}(n,t,1/\varepsilon)$. This also implies $k$-juntas can be agnostically learned in time $\textsf{poly}(n,2^k,1/\varepsilon)$.
    \item $s$-term DNF formulas in time  $\textsf{poly}(n,(s/\varepsilon)^{\log \log (s/\varepsilon) \cdot \log(1/\varepsilon)})$.
\end{enumerate}
    
Our main technical contribution is a \emph{quantum agnostic boosting} protocol which converts a ``weak'' agnostic learner, which outputs a parity state $|\phi\rangle$ such that $|\langle \phi|\psi\rangle|^2\geq \textsf{opt}/\textsf{poly}(n)$, into a ``strong'' learner which outputs a superposition of parity states $|\phi'\rangle$ such that $|\langle \phi'|\psi\rangle|^2\geq \textsf{opt} - \varepsilon$.  

\vspace{2mm}

Using quantum agnostic boosting, we give a $n^{O(\log(n/\varepsilon)\cdot \log \log n)}$-time algorithm for  $\varepsilon$-learning $\textsf{poly}(n)$-sized depth-$3$ circuits (consisting of $\textsf{AND}$, $\textsf{OR}$, $\textsf{NOT}$ gates) in the uniform $\textsf{PAC}$ model given quantum examples. Classically, obtaining an algorithm with a similar complexity has been an open question in the $\textsf{PAC}$ model and our work answers this given quantum~examples.  
%Classically, the analogue of efficient learning  depth-$3$ circuits (and even depth-$2$ circuits) in the uniform $\textsf{PAC}$ model has been a longstanding open question in computational learning theory. Our work nearly settles this question, when the learner is given quantum examples. 
\end{abstract}
\newpage 

\setcounter{tocdepth}{3}
{\small \tableofcontents}

\newpage

\section{Introduction}\label{sec:intro}

\paragraph{{Learning classical circuits.}} A central goal in computational learning theory is to determine which natural classes of Boolean functions can be efficiently learned, both in the \emph{Probably Approximately Correct} ($\PAC$) model and in the more challenging \emph{agnostic model}. Circuit classes of small depth provide a canonical test case since they are expressive enough to capture rich computational phenomena, yet structured enough that one might hope for efficient  algorithms. The seminal work of Linial, Mansour and Nisan~\cite{linial1993constant} first showed that depth-$d$ circuits on $n$-bit inputs, consisting of $\AND, \OR, \NOT$ gates (the class of $\AC^0$ circuits) are learnable in time $n^{O(\log^{d-1}n)}$ in the uniform $\PAC$ model with only  examples. While this provides a general guarantee, it leaves open the question of whether specialized \emph{efficient} algorithms exist for concrete depths of $d=2,3,4,5$? By the influential work of Naor and Reingold~\cite{naor2004number}, it is believed that  depth-$5$ circuits are hard to classically learn (\emph{assuming} factoring is hard).  
This naturally shifts the attention to depths $2,3,4$. 
In particular, the status of learning depth-$2$ circuits has been a longstanding 30-year old open question, with the best-known algorithm in the $\PAC$ model running in time $n^{O(\log n)}$~\cite{verbeurgt1998learning}. 
Analogous to classical examples in the $\PAC$ model, Bshouty and Jackson~\cite{bshouty1995learning} introduced \emph{quantum examples} and the quantum $\PAC$ model and surprisingly showed that depth-$2$ circuits are learnable in \emph{quantum polynomial time}, thus giving a separation between quantum $\PAC$ and the state-of-the-art $(\SOTA)$ classical $\PAC$ learning. 
The natural next frontier in quantum learning theory is then
\begin{quote}
  \centering{ \emph{Can we learn depth-$3$ circuits in the quantum $\PAC$ model efficiently?}}
\end{quote}
Classically, a well-known idea to $\PAC$ learn depth-$3$ circuits is to consider the formulation of \emph{agnostic learning}. In particular, works of~\cite{feldman2009distribution,kalai2008agnostic} showcased how agnostic learning depth-$2$ circuits, in particular $\DNF$ formulas, could yield learning algorithms for depth-$3$ circuits. This naturally motivates the need to understand learning $\DNF$s in the  quantum agnostic learning  model.
\vspace{-9mm}
\paragraph{{Quantum agnostic learning.}} Tomography of quantum states (i.e., learning \emph{quantum states} in the noise-free model) has been well-studied in quantum computing. Quantum agnostic learning got traction only recently with the works of
Grewal et al.~\cite{grewal2024improved} and Chen et al.~\cite{chen2024stabilizer} that considered learning  stabilizer states.
Subsequently, a few works considered stabilizer product states~\cite{grewal2024agnostic}, high stabilizer-dimension states~\cite{chen2024stabilizer} and product states~\cite{bakshi2024learning}.
In the pursuit of agnostic learning $\DNF$s, 
we initiate the problem of agnostic learning \emph{phase states} (which is incomparable to the works mentioned and which we will discuss in more detail below). We now define the model of agnostic learning specialized to phase states. For the concept class $\Cc\subseteq \{c:\FF^n\rightarrow \FF\}$, we denote the phase state $\ket{\phi_c}$ corresponding to a function $c \in \Cc$ as
\begin{equation}
    \ket{\phi_c} = \frac{1}{\sqrt{2^n}} \sum_{x \in \FF^n} (-1)^{c(x)} \ket{x}.
\end{equation}
In quantum agnostic learning, an algorithm is given copies of an unknown $\ket{\psi}$ and the goal of an improper \emph{strong agnostic learner} is to output a $\ket{\phi'}$ (not necessarily a phase state) such that
$$
|\langle \psi|\phi'\rangle|^2\geq \opt-\varepsilon,
$$
where $\opt=\max_{c\in \Cc}|\langle \psi|\phi_c\rangle|^2$. The natural question at this point is: what classes of functions are agnostically learnable in this model? As far as we know, the agnostic learnability of classes such as parities, decision trees, $\DNF$s have not yet been considered, motivating the question 
\begin{quote}
  \centering{ \emph{Can we agnostically learn phase states corresponding to $\DNF$s efficiently?}}
\end{quote}

\subsection{Main results}
In this work, we make progress on both of the questions highlighted above. For the tasks of agnostically learning $\poly(n)$-sized $\DNF$s, we give a $n^{O(\log \log n \cdot \log (1/\varepsilon))}$-time quantum algorithm and for $\PAC$ learning $\poly(n)$-sized depth-$3$ circuits, we give $n^{O(\log (n/\varepsilon)\cdot \log \log n  )}$ time quantum algorithm. Below, we will discuss these results, starting with our main technical contribution, \emph{quantum agnostic boosting}. Classically obtaining  similar complexities given only classical examples is an open question and we compare both classical and quantum complexities below. 
% \textcolor{magenta}{Although these bounds don't explicitly answer the questions above, classically obtaining  similar complexities given only classical examples is an open question. We compare both classical and quantum complexities below. }

\paragraph{{Quantum agnostic boosting.}}
We denote the fidelity with respect to the concept class as
$$
\calF_{\Cc}(\ket{\psi})=\max_{c\in \Cc} |\langle \psi|\phi_c\rangle|^2.
$$
Similar to the definition of a strong agnostic learner above, we define a \emph{weak agnostic learner} as one that outputs a state $\ket{\phi''}$ such that
$$
|\langle \psi|\phi''\rangle|^2\geq P(\opt/n),
$$
for some function $P:\R^+\rightarrow \R^+$.
We say a learner is \emph{efficient} if the running time scales polynomially in the description size of the concept class  in the (inverse of the) error parameter $1/\varepsilon$. Finally, we remark that the $\PAC$ model is defined just as above, except that the unknown $\ket{\psi}$ is promised to be $\ket{\phi_c}$ for an (unknown) $c\in \Cc$, in which case $\opt=1$. With this we are ready to state our main~contributions.

We give a quantum boosting algorithm that, given access to a weak agnostic learner with a polynomial overhead, produces a strong agnostic learner for a concept class $\Cc$.  To this end, we first define the class $\Cc$ of \emph{parity functions}, i.e. $\chi_S(x) = \langle S , x \rangle$ for some $S \in \{0,1\}^n$. We refer to $\{\ket{\chi_S}=2^{-n/2}\sum_x \chi_S(x)\ket{x}:S\in \FF^n\}$ as \emph{parity states}.
We summarize this in the theorem below.
\begin{result}
Let $\Cc$ be a concept class and  $\ket{\psi}$ be an unknown $n$-qubit state such that $\calF_{\Cc}(\ket{\psi})=\opt$. For every $\tau\geq 0$, let $\calA_{\WAL}$ be a weak agnostic learner for $\Cc$, i.e., given copies of  $\ket{\varphi}$ with  $\calF_{\Cc}(\ket{\varphi}) \geq \tau$, outputs a parity $\ket{\chi_S}$ such that $|\langle \varphi |\chi_S\rangle|^2\geq P(\tau/n)$ in time $T_{\WAL}$.
Then, there is a strong agnostic learner for $\Cc$, i.e., given copies of $\ket{\psi}$, runs in time $\poly(n, T_{\WAL},1/\varepsilon,1/P(\varepsilon/n))$ and outputs $\ket{\widehat{\phi}}$ with
$
|\langle \psi|\widehat{\phi}\rangle|^2\geq \opt-\varepsilon.
$
\end{result}
We formally state this theorem in Section~\ref{sec:agnostic_boosting}.  To provide some context, classically, Freund and Schapire~\cite{schapire:boostfirst,freund:boostfirst,freund:boosting} proposed the boosting algorithm called \emph{AdaBoost} that \emph{efficiently} uses a \emph{weak learner} as a black-box to construct a \emph{strong learner} in the usual $\PAC$ model. The AdaBoost algorithm by Freund and Schapire was one of the first few theoretical boosting algorithms that was simple enough to be extremely useful and successful in practice~\cite{schapire:foundations}. Similarly, agnostic boosting has been considered by works of~\cite{ben2001agnostic,kalai2008agnostic,feldman2009distribution} wherein they show similar results to boost weak agnostic learners to strong ones.

In the quantum setting, there are only a handful of works that have used boosting in the $\PAC$ model~\cite{bshouty1995learning,arunachalam2020quantum,izdebski2020improved}, and ours is the first work that demonstrates how to perform boosting in the harder model of \emph{agnostic  learning}.\footnote{We remark that~\cite{chatterjee2024efficient} also discusses an agnostic quantum booster but their model assumes that the unknown quantum state is a \emph{function state} whereas we make no assumption on our input state.} 
Apart from the context of learning Boolean functions, we believe that our quantum boosting algorithm could have utility for learning more general quantum states. Recently,~\cite{ad2025structure} utilized boosting on top of a weak agnostic learner for stabilizer states to give tomography protocols of states promised to have structured stabilizer~decompositions.

\paragraph{{Agnostic learning decision trees, juntas and DNFs.}} Our second contribution involves applying the quantum boosting algorithm on top of quantum weak agnostic learners for interesting concept classes. In particular, we first observe that (strong) agnostic learning of parity states can be done efficiently. We then exploit properties of different classes -- size-$t$ decision trees, $k$-juntas and $s$-term $\DNF$ formulas\footnote{We refer the reader to Section~\ref{subsect:conceptclasses} for a definition of these classes.}-- in terms of their Fourier spectrum, which allow us to obtain agnostic learning algorithms for these concept classes as well, all based on quantum agnostic boosting. These quantum agnostic learners in particular use weak agnostic learners that output parity states. This result is summarized below.

\begin{result}
Size-$t$ decision trees and $k$-juntas are learnable in time $\poly(n,t,1/\varepsilon)$~and $\poly(n,2^k,1/\varepsilon)$ respectively.    %The class of $k$-juntas are learnable in time $\poly(n,2^k,1/\varepsilon)$.
Similarly, $s$-term $\DNF$ formulas are learnable in time~$\poly(n,(s/\varepsilon)^{\log \log s/\varepsilon \cdot \log (1/\varepsilon)})$. 
\end{result}
We formally state this theorem for each class in Section~\ref{sec:learning_algos}. In order to compare with classical results and for ease of exposition, let us consider the parameters above to take values of $2^k,t,s,1/\varepsilon=\poly(n)$ (which in general are considered the most interesting settings of these parameters for learning). Classically, the $\SOTA$ algorithm for learning decision trees in the agnostic model (without membership queries) runs in $n^{O(\log n)}$~\cite{kalai2008agnostically} time and with membership queries scales as $\poly(n)$~\cite{gopalan2008agnostically}.\footnote{By membership queries we mean that an algorithm can query an unknown $f:\FF^n\rightarrow \FF$ on an $x$ of its choice. In the (uniform) agnostic model, by membership query we mean the following: for every distribution $D:\FF^{n+1}\rightarrow [0,1]$ whose marginal on the first $n$ bits is uniform, the learner can query $x$ and obtain a sample from $b\sim D(x,\cdot)$.}
The same results hold for juntas since $k$-juntas have size-$2^k$ decision trees. In contrast, for $\DNF$ formulas, the $\SOTA$ algorithm in the agnostic model is $n^{O(\log^2 n)}$~\cite{kalai2008agnostically,gopalan2008query}. Our work shows that one can achieve an $n^{O(\log n)}$-time result with only quantum examples.
%which is better than the state-of-the-art $n^{O(\log n)}$ classical algorithm. 
 
\paragraph{{$\PAC$-learning depth-$3$ circuits.}} Finally, using our quantum boosting algorithm, we also obtain a new quantum $\PAC$ learning algorithm for depth-$3$ circuits.
\begin{result}
The class of size-$s$ depth-$3$ circuits is quantum $\PAC$ learnable upto error $\varepsilon$ in time $\poly(n,(s/\varepsilon)^{\log (s/\varepsilon) \cdot \log \log (s/\varepsilon) })$.
\end{result}
We formally state this theorem in Section~\ref{sec:pac_learn_depth3}. Classically, it is a long-standing open problem to efficiently $\PAC$ learn \emph{depth-$2$} circuits of size $s$, wherein the state-of-the-art algorithm runs in time $n^{O(\log s)}$. In contrast, Bshouty and Jackson~\cite{bshouty1995learning} showed depth-$2$ circuits are quantumly learnable in time $\poly(n,s)$ and we further show that even depth-$3$ circuits  exhibit a separation between quantum $\PAC$ and classical $\PAC$~learning. Classically the $\SOTA$ algorithm  for learning $\poly(n)$-sized depth-$3$ circuits (without queries) scales as $n^{O(\log^2 n)}$  that comes via Fourier concentration bounds of Tal~\cite{tal2017tight}.  
We summarize our main contributions in the table~below.

{\renewcommand{\arraystretch}{2.5}\begin{table}[h]
% \small
\centering
\begin{tabular}{c | c  | c  |c|} 
\cline{2-4}
\multirow{2}{*}{}& \multicolumn{2}{c|}{\textbf{Classical}} & \multicolumn{1}{c|}{\textbf{Quantum}} \\
\cline{2-4}
& \emph{Membership queries} &\emph{Random examples} & \emph{Quantum examples} \\ [0.3ex] 
\hline
\multicolumn{1}{|c|}{Agnostic  $\DNF$} &   \makecell{$n^{\log (1/\varepsilon)\log \log n}$\\[0.3mm]\cite{gopalan2008query}}
  & \makecell{$n^{\log (1/\varepsilon) \log n}$\\[0.3mm]\cite{kalai2008agnostically}} &\makecell{$n^{\log (1/\varepsilon) \log \log n}$\\[0.3mm] \textbf{\emph{This work}}} \\ 
\hline
\multicolumn{1}{|c|}{$\PAC$ depth-$3$}   &\makecell{$n^{\log (n/\varepsilon) \log \log n}$\\[0.5mm]\cite{feldman2009distribution,kanade2009potential}} &\makecell{$n^{\log(n/\varepsilon)\log n}$\\[0.5mm]\cite{tal2017tight}} & \makecell{$n^{\log (n/\varepsilon)  \log \log n}$\\[0.5mm] \textbf{\emph{This work}}} \\ \hline
\end{tabular}
\caption{This gives a summary of the state-of-the-art results for agnostic learning $\poly(n)$-sized $\AC^0$ circuits of depth-$2$ (i.e., $\DNF$s) and $\PAC$ learning depth-$3$ circuits acting on $n$ bits.}
\end{table}
}
Apart from depth-$3$ circuits, we are able to learn $\TAC_2^0$, i.e., \emph{threshold of depth-$2$} circuits (consisting of $\AND, \OR, \NOT$ gates). This class of circuits is compelling for two reasons. First, even quantum-efficiently learning thresholds of \emph{threshold} gates would imply breakthrough classical circuit lower bounds (see~\cite{arunachalam2022quantum} for more details). Second, as observed in \cite{arunachalam2021quantum}, learning threshold circuits is equivalent to learning weights of feed-forward neural networks. Thus, quantum efficient learnability of threshold circuits would translate into a dramatic advantage over classical computers for training neural networks, echoing Aaronson's ``Ten Semi-Grand Challenges for Quantum Computing Theory"~\cite{Aaronson-blog}. Prior to our work, no results were known for $\PAC$ learning circuits consisting of any threshold gates; ours is the first to handle a single threshold on the~top.

\subsection{Prior works and concept challenges}
We first discuss a few potential approaches that do not yield efficient algorithms. These ideas will, in turn, motivate the need for new learning algorithms.

\subsubsection{In the quantum setting}
Often in quantum learning, the first class that one wants to learn is \emph{parities}. If one can learn these, the next step is to learn depth-$1$ circuits (i.e., disjunctions/conjunctions), juntas and then depth-$2$ circuits (i.e., $\DNF$ formulas) -- the key milestone en route to learning depth-$3$ circuits. Let us first discuss the applicability of recent algorithms to the problem of agnostic learning phase states corresponding to these concept classes. 

\vspace{1mm}

\textbf{\emph{1. Fourier/Bell sampling:}} The starting point of almost all quantum learning algorithms is 
Fourier sampling. Indeed, given copies of $\ket{\phi_f}=\frac{1}{\sqrt{2^n}}\sum_x f(x)\ket{x}$, one can simply apply the Hadamard transform and measure in order to sample from the Fourier  distribution $\{\widehat{f}(S)^2\}_S$. In the agnostic model, when we do not have copies of $\ket{\phi_f}$, but rather copies of $\ket{\psi}$ that are promised to be $\tau$-close\footnote{Throughout the paper, by ``$\tau$-close'', we mean that the two states have squared overlap at least $\tau$, i.e., $| \langle \psi | \phi_f \rangle |^2 \geq \tau.$} to $\ket{\phi_f}$, it is unclear whether Fourier sampling $\ket{\psi}$ would yield anything meaningful. It could very well be that $\ket{\phi_f}$ has one large Fourier coefficient $S$, but $\ket{\psi}$ puts $0$ amplitude on $\ket{S}$ and agrees with $\ket{\phi_f}$ elsewhere (so that it is $\tau$-close to $\ket{\phi_f}$). It is therefore unclear if one can strongly learn the unknown $f$ (i.e., learn $f$ up to $\opt-\varepsilon$) given copies of $\ket{\psi}$. 

Observe that Fourier sampling does work for parity states, denoted as $\Cc_\Par=\{\chi_S(x)=\langle S,x\rangle \text{ for all } S\in \FF^n\}$, since the Fourier spectrum of parity function $\chi_S$ is a point-mass on $\ket{S}$.  In particular, this implies that if  $\ket{\psi}$ is $\tau$-close to $\ket{\chi_S}$, then must have a ``large" amplitude on $\ket{S}$. Hence, strong learning via Fourier sampling can be achieved using $O(1/\tau^2)$ samples. But this no longer holds for other classes of states and, in particular, $\DNF$s for which the Fourier spectrum does not enjoy such a ``point-concentration'' property. 
One can instead consider \emph{Bell sampling}, which has been used in~\cite{grewal2024agnostic} to agnostically learn stabilizer product states (of which parity states are a subclass).
Unfortunately, the results in~\cite{grewal2024agnostic} solve a more general problem than the one we are considering and the resulting time and sample complexity of their results is $\poly(n,2^{1/\tau})$. As we are aiming for a polynomial scaling in $1/\tau,$ this is unsatisfactory. 
\vspace{1mm}

\textbf{\emph{2. Stabilizer Bootstrapping:}} Another approach is to look at the recent work of Chen et al.~\cite{chen2024stabilizer} for agnostic learning of stabilizer states. As with~\cite{grewal2024agnostic}, however, they solve the more general task of agnostically learning states with \emph{stabilizer dimension-$t$}, achieving sample and time complexity scaling $\poly(n,(1/\tau)^{\log 1/\tau},2^t)$. It is not hard to see that learning $s$-juntas or $s$-term $\DNF$s reduces to agnostic learning states with stabilizer dimension $O(s)$, so their agnostic learning algorithm would scale as $\poly(n,(1/\tau)^{\log 1/\tau},2^s)$.

\vspace{1mm}
\textbf{\emph{3. Product state learning:}} A third approach is to look at a recent work of Bakshi et al.~\cite{bakshi2024learning} who considered agnostic learning matrix product states ($\textsf{MPS}$): in particular, they show that agnostic learning $n$-qubit $\textsf{MPS}$ with bond dimension $r$ has complexity $\poly(n,r,1/\varepsilon)$. The question then is: what is the bond dimension of junta states or $\DNF$ states?\footnote{In Appendix~\ref{sect:bonddim} we show the bond dimension bounds that we claim next.} 
We observe that $k$-junta states have bond dimension that scales as $2^{ \lfloor k/2 \rfloor}$, so in particular this yields a quantum agnostic learning algorithm with time complexity $\poly(n,2^{ k },1/\varepsilon)$, which is already better than the previous two approaches that we mentioned. In fact, for disjunctions (i.e., depth-$1$ circuits) their bond dimension equals $2$, so it yields a $\poly(n,1/\varepsilon)$ algorithm. However, for $s$-term $\DNF$s, the corresponding function states can be shown to have bond dimension that is at least $2^s$. This in turn yields a running time of $\poly(n,2^s,1/\varepsilon)$, which is again too high.

\subsubsection{In the classical setting} In classical learning theory, agnostic learning decision trees ($\DT$) and $\DNF$s has received a lot of attention. Classically, the model of agnostic learning is defined as follows: a learner is given uniformly random $x\in \FF^n$ and $b\sim (1+\phi(x)/2, 1-\phi(x)/2)$ and the goal is to find a function $f$ from the concept class which agrees with $\phi$ as well as possible. The $\SOTA$ algorithms for size-$s$ $\DT$s  run in time polynomial in $n,s$ but requires membership queries~\cite{feldman2009distribution,gopalan2008agnostically}. They crucially use that the $\ell_1$ norm of the Fourier coefficients of $f\in \DT(s)$ is at most $s$. However, $\DNF$s do not enjoy the property of small $\ell_1$ norm, but instead are only known to have sparse Fourier spectrum and at this point, it is even unclear classically how one can obtain a weak learner for $\DNF$s (with examples) if $\phi$ is only promised to satisfy $\max_x |\phi(x)|\leq 1$. We show that this can be circumvented when given access to quantum examples and a weak agnostic learner can be naturally proposed with just the Fourier concentration promise.

% In classical learning theory, agnostic learning decision trees ($\DT$) and $\DNF$s has received a lot of attention. Classically, the model of agnostic learning is defined as follows: a learner is given uniformly random $x\in \FF^n$ and $b\sim (1+\phi(x)/2, 1-\phi(x)/2)$ and the goal is to find a function $f$ from the concept class which agrees with $\phi$ as well as possible. The $\SOTA$ algorithms for size-$s$ $\DT$s  run in time polynomial in $n,s$ and the proof is fairly simple and puts together two facts: $(i)$ the $\ell_1$ norm of the Fourier coefficients of $f\in \DT(s)$ is at most $s$ and $(ii)$ if $\phi$ is $\tau$-close to $f$, then using the fact that $\phi$ satisfies $\max_x |\phi(x)|\leq 1$ one can show $\max_T |\widehat{\phi}(T)|\geq \tau/s$. This serves as the weak learner which one can boost further using~\cite{feldman2009distribution,kanade2009potential}. However, $\DNF$s do not enjoy the property of small $\ell_1$ norm, but instead are only known to have concentrated Fourier spectrum and at this point, it is even unclear classically how one can obtain a weak learner for $\DNF$s if $\phi$ is only promised to satisfy $\max_x |\phi(x)|\leq 1$. One potential solution is to assume $\sum_x |\phi(x)|^2=1$, but it is unclear if this is physically a natural assumption in agnostic learning (we haven't seen it appear in literature) and it is unclear if one can boost  an agnostic learner (like in the works of~\cite{kanade2009potential,feldman2009distribution}) in this model when $\phi$ is assumed to have an $\ell_2$ normalization.

\paragraph{Our contributions.} Although the three quantum approaches give new agnostic learning algorithms for parities, disjunctions and juntas, as we've mentioned these approaches eventually lead to agnostic algorithms for $s$-term $\DNF$s with complexity that is exponential in $s$, and our goal is to ideally have a $\poly(n,s)$ algorithm for this class. Conceptually, the goal of these works is to solve a much harder problem, so they do not immediately yield results for the task that we are concerned with in this work (whose motivation comes more from quantum learning theory).  As alluded to earlier, our main contribution is in obtaining an umbrella framework, that achieves two purposes: $(i)$ unifies all the learning algorithms for different classes of phase states, and $(ii)$ is simpler than the algorithms mentioned above (which solve a harder task). We achieve this via \emph{quantum agnostic boosting}, which we describe in the next section, which could be of independent interest. 
%\snote{Highlight why what we do cannot be done classically. L2 normalization.}
\vspace{3mm}

\subsection{Technical overview}
In this section, we give an overview of the proofs of our main results. 
%Before doing so, we first discuss a few potential approaches that do not yield efficient algorithms. These ideas will, in turn, motivate the need for new learning algorithms.
%\subsubsection{Prior works and conceptual challenges}

\subsubsection{Quantum agnostic boosting}\label{sec:intro_boosting}
Our quantum agnostic learner is inspired by the gradient-descent based algorithms for classical boosting proposed by the works of Kalai-Kanade~\cite{kanade2009potential} and Feldman~\cite{feldman2009distribution}. Below, we first give a high-level idea of the boosting algorithm before describing the iterations of the procedure.

\paragraph{High-level idea.} Recall that we have a quantum state $\ket{\psi}$ satisfying $\calF_{\Cc}(\ket{\psi}) = \opt$ and let $\varepsilon\in (0,1)$. Consider a weak agnostic learner $\calA_{\WAL}$ that given copies of $\ket{\varphi}$ with $\calF_{\Cc}(\ket{\varphi}) \geq \tau$, outputs a parity state $\ket{\chi_S}$ such that $|\langle \chi_S | \varphi \rangle|^2 \geq P(\tau/n)$ in time $T_{\WAL}$. Our agnostic boosting algorithm then does the following: it first runs $\calA_{\WAL}$ to find a parity state $\ket{\chi_{S_1}}$ that has $P(\opt/n)$-overlap with $\ket{\psi}$. After finding ${S_1}$,  the algorithm ``subtracts" $\ket{\chi_{S_1}}$ from $\ket{\psi}$ by constructing the state $\ket{\psi_2}=\ket{\psi}-\beta_1\ket{\chi_{S_1}}$ (ignoring the normalization for now). It then checks two things: $(i)$ is $\|\ket{\psi_2}\|_2\leq \varepsilon$ and $(ii)$ runs $\calA_{\WAL}$ to check whether $\calF_{\Cc}(\ket{\psi_2}) \leq \varepsilon$ or not: if either of these conditions are met, the algorithm halts and outputs $\ket{\widehat{\phi}}\propto \beta_1\ket{\chi_{S_1}}$. Intuitively the former checks if we have done well on \emph{tomography} (i.e. checking whether $\beta_1 \ket{\chi_{S_1}}$ is close to $\ket{\psi}$), a harder task than agnostic learning, and the latter can be shown to be sufficient for the agnostic learning task (since it is checking whether, by subtracting $\beta_1 \ket{\chi_{S_1}}$, we have moved the state far from $\Cc$). If neither is satisfied, the algorithm repeats the same procedure on $\ket{\psi_2}.$ Again, this means running the weak learner on $\ket{\psi_2}$ to produce a parity state $\ket{\chi_{S_2}}$ which is ``subtracted'' along $\ket{\chi_{S_1}}$ from $\ket{\psi}$ and then the stopping conditions are checked. Eventually, after $\kappa$ many iterations, the algorithm terminates producing a (suitably normalized) state $\ket{\widehat{\phi}}=\sum_i \beta_i \ket{\chi_{S_i}}$, which will be the output of the algorithm. We now discuss the iterations in more~detail and why $\ket{\widehat{\phi}}$ accomplishes agnostic learning.

\paragraph{Iterations in the boosting algorithm.} 
Our agnostic boosting algorithm will build a state $\ket{\widehat{\phi}}$ expressed as a linear combination of parity states
$$
\ket{\widehat{\phi}} = \sum_{i=1}^\kappa {\beta}_i \ket{\chi_{S_i}},
$$
across $\kappa$ iterations and stops when $\ket{\widehat{\phi}}$ achieves  agnostic learning condition i.e., $|\langle \psi | \widehat{\phi}\rangle|^2 \geq \opt - \varepsilon$. Each parity state is learned sequentially in each iteration, as described above. Let us denote $\ket{\widehat{\phi}^{(t)}}$ as a ``running estimate'' state at the end of iteration $t$. We also denote the parity states learned up to (and including) iteration $t \geq 1$ as $\{\ket{\chi_{S_i}}\}_{i \in [t]}$ and the corresponding span as $T(t) = \spann(\{\ket{\chi_{S_i}}\}_{i \in [t]})$. We denote the orthogonal projector onto this span as $\Lambda_{T(t)}$. Lastly, we denote the inner products $\beta_i := \langle \psi | \chi_{S_i} \rangle$ and the norms $\alpha_{t+1} = \| (\id-\Lambda_{T(t)})\ket{\psi} \|_2$.

Initially, in iteration $t=1$, the algorithm first runs the weak learner $\calA_{\WAL}$ on copies of $\ket{\psi}$ to find a parity state $\ket{\chi_{S_1}}$ such that $|\langle \psi |\chi_{S_1} \rangle|^2 \geq P(\opt/n)$ (where $P$ is the promise of $\calA_{\WAL}$). The running estimate is then
$$
\ket{\widehat{\phi}^{(1)}} = \beta_1 \ket{\chi_{S_1}},
$$
where $\beta_1 = \langle \psi |\chi_{S_1} \rangle$.\footnote{In our boosting algorithm, we will not actually compute $\beta_1$ at this stage, and instead only keep $\ket{\chi_{S_1}}$.} Before proceeding to the next iteration, the algorithm checks if we would have accomplished state tomography i.e., $|\langle \psi | \widehat{\phi}^{(1)}\rangle|^2 \geq 1 - \varepsilon$ and stops if this is the case. This is done by checking if $|\alpha_2|^2 = 1 - |\beta_1|^2 < \varepsilon$. If not, the residual state is set to 
$$
\ket{\psi_2} := \big(\id-\Lambda_{T(1)}\big)\ket{\psi} = \big(\id - \ket{\chi_{S_1}}\bra{\chi_{S_1}}\big) \ket{\psi}
$$ 
up to renormalization, where $T(1) = \spann(\{ \ket{\chi_{S_1}} \})$. We now proceed to the next~iteration.

In iteration $t=2$, the boosting algorithm first checks if the running estimate $\ket{\widehat{\phi}^{(1)}}$ accomplishes the task of \emph{agnostic learning} i.e., $|\langle \psi | \widehat{\phi}^{(1)} \rangle|^2 \geq \opt - \varepsilon$. To do this, the learner prepares copies of the residual state $\ket{\psi_2}$ and checks if $\calF_{\Cc}(\ket{\psi_2}) < \varepsilon$ or not. To accomplish this, the learner does the following: recall that if $\calF_{\Cc}(\ket{\psi_2}) \geq \varepsilon$, then running $\calA_{\WAL}$ on $\ket{\psi_2}$, would produce a parity $\ket{\chi_U}$ such that $|\langle \psi_2|\chi_U\rangle|^2\geq P(\varepsilon/n)$ (which can be checked by a $\SWAP$ test). By the contrapositive, if the output of $\calA_{\WAL}$ does not output a parity for which $|\langle \psi_2|\chi_U\rangle|^2\geq P(\varepsilon/n)$, then $\calF_{\Cc}(\ket{\psi_2}) < \varepsilon$ and we stop (hence we have implicitly used $\calA_{\WAL}$ also as a tester for fidelity). It might seem counterintuitive to run a test on the residual state instead of directly checking the overlap of $\ket{\psi}$ with $\ket{\widehat{\phi}^{(1)}}$ via a \SWAP test. However for the latter, we would need to know $\opt$ ahead of time, whereas we are assuming that $\opt$ is not known. Instead, we show that if $\calF_{\Cc}(\ket{\psi_2}) < \varepsilon
$, then $\ket{\widehat{\phi^1}}$ (normalized) solves the task of agnostic learning. 

If $\calF_{\Cc}(\ket{\psi_2}) \geq \varepsilon$, the algorithm runs  $\calA_{\WAL}$ on copies of $\ket{\psi_2}$, to find a parity function $\ket{\chi_{S_2}}$ such that $|\langle \psi_2 | \chi_{S_2} \rangle|^2 \geq P(\varepsilon/n)$. We then observe, by writing out $\ket{\psi_2}$, that
$$
P(\varepsilon/n) \leq |\langle \chi_{S_2} | \psi_2 \rangle|^2 = \frac{1}{|\alpha_2|^2}|\langle \chi_{S_2} | (\ket{\psi} - \beta\ket{\chi_{S_1}})|^2 = \frac{1}{|\alpha_2|^2}|\langle \chi_{S_2} | \psi \rangle |^2 \implies |\langle \chi_{S_2} | \psi \rangle|^2 \geq \varepsilon \cdot P(\varepsilon/n),
$$
where we have used $\langle \chi_{S_2} | \chi_{S_1} \rangle = 0$ in the last step before the implication and used the fact that $|\alpha_2|^2 \geq \varepsilon$ (as determined at the end of iteration $t=1$) to give the implication. Our running estimate at this point is 
$$
\ket{\widehat{\phi}^{(2)}} = \beta_1 \ket{\chi_{S_1}} + \beta_2 \ket{\chi_{S_2}}
$$ with the promise that $|\beta_1|^2 \geq P(\opt/n)$ and $|\beta_2|^2 \geq \varepsilon \cdot P(\varepsilon/n)$. Overall, this implies that
$$
|\langle \widehat{\phi}^{(2)} | \psi \rangle|^2 \geq 2 \varepsilon P(\varepsilon/n),
$$
and thus have made progress towards the task of agnostic learning. As in the previous iteration, the learner now checks if $|\alpha_3|^2 = 1 - |\beta_1|^2 - |\beta_2|^2 < \varepsilon$. If not, the learner sets the residual state to 
$$
\ket{\psi_3} \propto (\id-\Lambda_{T(2)})\ket{\psi} = \big(\id - \ket{\chi_{S_1}} \bra{\chi_{S_1}} - \ket{\chi_{S_2}}\bra{\chi_{S_2}}\big)\ket{\psi} = \ket{\psi} - \ket{\widehat{\phi}^{(2)}}
$$ 
up to normalization, with $T(2) = \spann(\ket{\chi_{S_1}},\ket{\chi_{S_2}})$. The algorithm then moves to iteration $t=3$ and continues until either $|\alpha_{t+1}|^2 < \varepsilon$ or $\calF_{\Cc}(\ket{\psi_{t+1}}) < \varepsilon$ which can again be checked using the $\mathcal{A}_{\WAL}$. In other words, the algorithm stops when state tomography or agnostic learning has been achieved.

Overall our agnostic boosting algorithm can be divided into two stages, \emph{structure learning} and \emph{parameter learning}.\footnote{The choice for these terms comes from the literature on learning \emph{graphical models} where the goal is to learning the interactions and interaction strengths.}  In {structure learning}, the goal is to learn the parities that constitute the elements of $\ket{\widehat{\phi}^{(t)}}$, so each iteration starts with structure learning. At multiple times, we mentioned that $\ket{\widehat{\phi}^{(t)}}$ is the state prepared at the $t$th iteration, but so far we only determined the parities present inside $\ket{\widehat{\phi}^{(t)}}$. Ideally, one could have let $\ket{\widehat{\phi}^{(t)}}$ be the projection $\Lambda_{T(t)} \ket{\psi}$ but that requires learning the coefficients $\beta_i$ (including the phases). Estimating these coefficients $\beta_i$s is referred to as parameter learning. To do so, one could compute $\beta_{t+1} = \langle \chi_{S_{t+1}} | \psi \rangle$ via the Hadamard test using the state preparation unitaries (and controlled versions) of $\ket{\chi_{S_{t+1}}}$ and $\ket{\psi}$. However, we avoid the need for a state preparation unitary and instead show that with just copies of $\ket{\psi}$, we can estimate $\beta_i$ up to a global phase, and a valid proxy state $\ket{\widehat{\phi}^{(t)}}$ that is close to $\Lambda_{T(t)}\ket{\psi}$, hence is good at the task of agnostic learning.

\paragraph{What remains?} In the brief exposition above, there are a number of subtleties that we have swept under the rug: $(i)$ an upper bound on the number of iterations $\kappa$\footnote{As part of our analysis, we show that we stop in $O(1/(\varepsilon P(\varepsilon/n))$ many iterations and the time complexity of the overall algorithm is then dictated by the promise $P$ of $\calA_{\WAL}$.}, $(ii)$ the eventual correctness of the final state $\ket{\widehat{\phi}^{(t)}}$, $(iii)$ the preparation of the residual states $\ket{\psi_t}$, $(iv)$  the normalization factors in $\ket{\psi_t}$, $(v)$ the circuit implementations of various subroutines in the algorithm and their complexity and finally $(vi)$ the errors in the steps involving estimation and how they propagate in the algorithm. Our final boosting algorithm incorporates all these details and making it rigorous is the most technical part of our work. 

\subsubsection{Learning algorithms}
In this section, we state the learning algorithms that are either used by the boosting procedure, or which the boosting procedure implies. Beginning with the former, we describe a weak learner for parity states. The subsequent algorithms are obtained by using the boosting procedure. 

\textbf{Weak learner.} To agnostically learn parity states, we simply observe that, if $\ket{\psi}$ is $\tau$-close to a parity $\ket{\chi_S}$, then we have that
$$
|\langle \psi'| S\rangle|^2=|\langle \psi|\Had \cdot \Had |\chi_S\rangle|^2\geq \tau,
$$
where $\ket{\psi'}=\Had\ket{\psi}$. Thus, if we measure $\ket{\psi'}$, $O(1/\tau^2)$ many times in the computational basis, we will recover $S$. Specifically, we record the measurement outcomes and check, via a \SWAP test, which basis state has the highest fidelity with $\ket{\psi'}.$ This will be the agnostic learner for parities.

\vspace{1mm}

\noindent\textbf{Agnostic learning for decision trees.} As mentioned earlier, unlike parity states whose Fourier spectrum is concentrated on a single point, for  decision trees, $\DNF$s and juntas, we do not have this property. In fact, it is well-known~\cite{kushilevitz1993decision} that for a function $f$, computed by a size-$s$ decision tree, we have that $\sum_T |\widehat{f}(T)|\leq s$. In particular, it is not not too hard to see that if $| \langle \psi | \phi_f \rangle |^2 \geq \tau,$ then
$$
\sqrt{\tau} \leq |\langle \psi|\phi_f\rangle| = \Big| \sum_T \widehat{f}(T) \langle \psi'|T\rangle \Big| \leq \sum_T | \widehat{f}(T) | \; | \langle \psi' | T \rangle| \leq O(s) \max_T |\langle \psi' |T\rangle|,
$$ 
where in the first equality we applied Hadamard on both states and denoted $\ket{\psi'}=\Had\ket{\psi}$. We then used the triangle inequality and the fact that the $\ell_1$ norm of the Fourier coefficients of $f$ is at most $O(s)$. The above implies that there is a basis state in $\ket{\psi'}$ whose amplitude is $\Omega(\sqrt{\tau}/s)$. Finding it can be done by measuring $\ket{\psi'}$ several times and recording the statistics of the measurement outcomes. This serves as a weak learner and we then use our boosting algorithm to obtain a strong learner which outputs $\ket{\phi}$ (as a superposition over $\poly(s/\varepsilon)$ parity states) such that
\begin{align}
\label{eq:decisiontreeagnostic}
|\langle \psi|\phi\rangle|^2\geq\max_{c\in \DT(s)} |\langle \psi|\phi_c\rangle|^2-\varepsilon.
\end{align}
The overall complexity is $\poly(n,s,1/\varepsilon)$. At this point, we use the fact that $k$-juntas are decision trees of size $2^k$, giving  an algorithm with complexity $\poly(n,2^k,1/\varepsilon)$ for agnostic learning juntas.

\vspace{1mm}

\noindent \textbf{Agnostic learning $\DNF$s.} The agnostic learner for $\DNF$s is similar to the one for decision trees, except that we need to use Mansour's result~\cite{mansour1992n,lecomte2022sharper} that shows that for size-$s$ $\DNF$ formulas, the entire Fourier spectrum is concentrated on $s^{O(\log (1/\varepsilon) \cdot \log \log s )}$ coefficients. Using a similar argument as the one for deriving Eq.~\eqref{eq:decisiontreeagnostic}, one can show that if $\ket{\psi}$ is $\tau$-close to a $\DNF$ phase state, then there exists a basis state $\ket{\chi_T}$ such that $|\langle \psi|\chi_T\rangle|^2\geq \tau/s^{O(\log (1/\varepsilon) \cdot \log \log s)}$. Once again using our quantum boosting algorithm, we obtain a $\poly(n,s^{\log \log s\log (1/\varepsilon)})$ algorithm for agnostic learning size-$s$ $\DNF$ formulas. 
\vspace{2mm}

\noindent \textbf{$\PAC$ learning depth-$3$ circuits.} To learn these circuits, we employ our agnostic $\DNF$ learner. The key insight is that when the input state $\ket{\psi_f}$ is promised to correspond to a depth-$3$ circuit then state tomography is accomplished when agnostic learning against $\DNF$s is accomplished, which was also observed classically~\cite{feldman2009distribution}. This follows from using the seminal result of Hajnal et al.~\cite{hajnal1993threshold} that says that if $f$ is a threshold of $m$ many $\DNF$ formulas $\{g_1,\ldots,g_m\}$ (each with size at most $s$), then $|\langle \psi_f|\psi_{g_i}\rangle| \geq 1/m,$ for some $i \in [m]$,  where $\ket{\psi_f}$ (in this section) equals $\frac{1}{\sqrt{2^n}}\sum_x f(x)\ket{x}$ since we are in the $\PAC$ learning setting. We can now use our agnostic $\DNF$ learner outputs a quantum state $\ket{\phi}$ which is at least $\opt/m^2$ close to $f$. This will serve as our weak learner which we will then boost into a strong $\PAC$ learner. The boosting algorithm outputs a (classical description of a) quantum state $\ket{\phi}$ which is close to an unknown $\ket{\psi_f}$; we now \emph{round} the final state $\ket{\phi}$ of the  algorithm and show that it satisfies the requirement of   $\PAC$ learning. Since the runtime of the $\DNF$ learner scales as  $\poly(n,s^{\log (1/\varepsilon) \cdot \log \log s})$, the overall $\delta$-error quantum $\PAC$ learning algorithm scales as~$\poly(n,s^{\log (s/\delta) \cdot \log \log s})$.

\subsubsection{Learning in the distributional model.} Finally, we remark that there are two natural definitions of quantum agnostic learning: the one defined in the introduction of this work, i.e., $\ket{\psi}$ is arbitrary and promised to be close to $\ket{\phi_c}$ (where $c\in \Cc$), or the one that was considered in~\cite{arunachalam2017guest,caro2024interactive} wherein there is an unknown distribution $D:\FF^{n+1}\rightarrow [0,1]$ whose first $n$ bits are uniform and the last bit is described by the marginal $\big((1+\phi(x))/2,(1-\phi(x))/2\big)$ where $\phi:\FF^n\rightarrow [-1,1]$ is an arbitrary function. The quantum algorithm is given copies of
$$
\ket{\psi_D}=\frac{1}{\sqrt{2^n}}\sum_x \ket{x} \otimes \Big(\sqrt{\frac{1+\phi(x)}{2}}\ket{0}+\sqrt{\frac{1-\phi(x)}{2}}\ket{1}\Big),
$$
and the goal is to output a function $h:\FF^n\rightarrow \FF$ such that $\Pr_{(x,b)\sim D}[h(x)=b]\geq \opt-\varepsilon$ where $\opt=\max_{c\in \Cc}[c(x)=b]$. In contrast to the first model, in this distributional model the learning algorithm enjoys the benefit of knowing that the unknown quantum state $\ket{\psi}$ has the form of $\ket{\psi_D}$. However, the algorithm needs to output a \emph{function} $h$, whereas in the first model it can output an arbitrary $\ket{\varphi}$. The two situations are therefore incomparable and interesting for different reasons. In Section~\ref{sec:distributionagnostic} we show that if $\phi$ is ``nice", i.e., $\Exp_x[\phi(x)^2]\leq \opt$, having a learning algorithm in the first model implies a learning algorithm in the second model. In particular, for these distributions, we also obtain quantum learning algorithms in the distributional model.
We leave the other direction for future~work.

\subsection{Outlook}

\textbf{Related works.} We remark that there have been a few recent works on quantum agnostic learning that we briefly describe here.  In~\cite{arunachalam2017guest}, the authors showed that quantum examples are equivalent to classical examples in the distribution-independent framework for agnostic learning function classes. Badescu and O'Donnell~\cite{DBLP:conf/stoc/BadescuO21} considered the setting of agnostic learning \emph{arbitrary} classes of quantum states and gave sample complexity bounds for this task. In another direction, works of~\cite{chatterjee2024efficient,caro2023classical,caro2024interactive} considered quantum agnostic learning for the case where the input state has some specific structure (in the former work they assume it is a function state and in the latter two works they assume it is a ``mixture of superpositions'').

\textbf{Open questions.} Our work opens up a number of interesting research directions.
\begin{enumerate}
%    \item Remove the $(1-1/e)$
%    \item \textbf{\emph{Learning depth-$3$ circuits}}: Our $\PAC$ learning algorithm for depth-$3$ circuits  (with constant $\varepsilon$)  runs in ``near-polynomial time,'' $n^{O(\log \log n)}.$ Is there a $\poly(n)$-time quantum algorithm?
    \item \textbf{\emph{Learning quantum objects}}: In this work, We considered the learnability of depth-$2$ and depth-$3$ $\AC^0$ circuits, what about learning depth-$2$ or depth-$3$ $\textsf{QAC}^0$ circuits with or without fanout gates? Recently,~\cite{foxman2025random} proved the hardness of learning $\textsf{QAC}^0$ circuits, but their hard instances require depth that is a ``large constant.'' 

    Similarly, we could consider the problem of agnostic learning \emph{low-degree} {phase states}. Tomography protocols for these class of states are known~\cite{abdy2023phasestates} but agnostic learning algorithms are unknown.
    %it is not immediately clear if these are helpful here and an agnostic boosting approach might be useful if weak agnostic learners for these classes are shown.
  
    \item \textbf{\emph{Learning in the distributional model}}: We showed how to port learning algorithms from the state agnostic learning model to the  standard distributional quantum agnostic model when the marginal function on the last bit $\phi:\FF^n\rightarrow [-1,1]$ satisfied $\Exp_x[\phi(x)^2]=1/\poly(n)$. In this state distributional model, can we learn even parities for all $\phi$, or can we prove lower bounds that rule this out? 
    \item \textbf{\emph{Learning more expressive circuits?}} Classically, it is believed that~\cite{naor2004number} depth-$5$ circuits are hard to learn (\emph{assuming} factoring is hard). 
    Our work leaves opens the status of learning depth-$4$ circuits, which is the ``only depth setting'' for which we do not know any classical or quantum learning algorithms or hardness results. 

    Similarly, classically the works~\cite{jackson2002learnability,chen2021majority} have considered the class of learning threshold of $\AC^0$  gates,  and~\cite{carmosino2016learning} has looked at learning $\AC^0$ augmented with $\!\!\!\!\mod p$ gates, with membership queries. It would be interesting if one could learn threshold of $\AC^0[p]$ circuits using only quantum examples (removing the need for classical queries).

    \item \textbf{\emph{Proper learning}}: The agnostic learning algorithms presented in this work for decision trees, juntas, and $\DNF$s have been improper. A natural question is then: Could we obtain \emph{proper} agnostic learners for these classes of phase states with similar time complexities? This has also remained open classically~\cite{gopalan2008query} and a quantum approach might lead to new insights.
\end{enumerate}

\paragraph{Acknowledgments.}  SA thanks Matthias Caro, Alex Grilo and Ryan Sweke for an early discussion on agnostic learning. AD thanks Isaac Chuang and Kristan Temme for early discussions on agnostic learning phase states. We thank Igor Carboni, Gautam Chandrasekaran, Varun Kanade, and Adam Klivans for helpful clarifications on the classical $\SOTA$ algorithms. This work was done when MdO was an intern at IBM Quantum. 

\section{Preliminaries}
\subsection{Notation}
Let $[n]=\{1,\ldots,n\}$. We define $\calB_\infty^k$ as the unit complex ball, i.e., $x\in \calB_\infty^k$ if $x_i \in \mathbb{C}$ for all $i\in [k]$ and $|x_i| \in (0,1]$. For a set $S\subseteq [n]$ we denote $z\in \01^S$ to be a bit-string of length $|S|$. For notational convenience, we will denote $\ket{z_S,0_{\overline{S}}}$ to denote the quantum state where the $i$'th qubit is $z_i$ if $i\in S$ and $0$ otherwise. Similarly, by $\ket{+}_S\ket{0}_{\overline{S}}$, we mean qubit $i$ equals $\ket{+}$ if $i\in S$ and $\ket{0}$ otherwise. For $\varepsilon\in (0,1)$, we say $f(\varepsilon)=\poly(\varepsilon)$ if there exist constants $c_1,c_2\geq 1$ such that $f(\varepsilon)=c_1\varepsilon^{c_2}$.\footnote{In this paper, there are several polynomial factors that we have not explicitly optimized, so we use the convention $\poly(\varepsilon)$ to make the exposition easier to follow.}

\begin{fact}
\label{fact:taylor}
    For every $x\in (-1,1)$, by Taylor series expansion we have that
    $$
    1+x/2-x^2/2\leq \sqrt{1+x}\leq 1+x/2, \text{ and }  1-x/2-x^2/2\leq \sqrt{1-x}\leq 1-x/2.
    $$
\end{fact}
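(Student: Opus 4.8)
The plan is to reduce all four inequalities to just two of them by exploiting the symmetry $x \mapsto -x$ (which preserves the interval $(-1,1)$), and then to prove each of those two by squaring both sides and checking an elementary polynomial inequality. Concretely, suppose we have shown that $1+x/2-x^2/2 \le \sqrt{1+x} \le 1+x/2$ for every $x \in (-1,1)$. Replacing $x$ by $-x$ (which again lies in $(-1,1)$) immediately yields $1-x/2-x^2/2 \le \sqrt{1-x} \le 1-x/2$. So it suffices to establish the upper bound $\sqrt{1+x} \le 1+x/2$ and the lower bound $\sqrt{1+x} \ge 1+x/2-x^2/2$ on $(-1,1)$.

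For the upper bound, note that for $x \in (-1,1)$ we have $\sqrt{1+x} \ge 0$ and $1+x/2 \ge 1/2 > 0$, so squaring is an equivalence: $\sqrt{1+x} \le 1+x/2$ is equivalent to $1+x \le (1+x/2)^2 = 1+x+x^2/4$, i.e.\ to $0 \le x^2/4$, which is trivially true. (This is nothing more than the concavity of $t \mapsto \sqrt{t}$ at $t=1$, so one could alternatively cite concavity directly.) For the lower bound, fix $x \in (-1,1)$; if $1+x/2-x^2/2 < 0$ the inequality holds trivially since $\sqrt{1+x}\ge 0$, so assume the left-hand side is nonnegative. Then both sides are nonnegative and squaring is again an equivalence: the claim becomes $(1+x/2-x^2/2)^2 \le 1+x$. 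Expanding the left-hand side to $1+x-\tfrac34 x^2-\tfrac12 x^3+\tfrac14 x^4$ and cancelling the common $1+x$, this reduces to
$$
x^2\Big(\tfrac{x^2}{4}-\tfrac{x}{2}-\tfrac34\Big)\le 0, \qquad\text{i.e.}\qquad \tfrac14(x+1)(x-3)\le 0,
$$
which holds for all $x \in [-1,3] \supseteq (-1,1)$ since there $x+1\ge 0$ and $x-3<0$. This completes the proof.

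There is essentially no obstacle here; the statement is a routine calculus fact. The only points requiring any care are the sign bookkeeping before squaring in the lower bound — in particular handling the case where the quadratic lower estimate $1+x/2-x^2/2$ is itself negative (for $x$ close to $-1$) — and confirming that the residual factor $\tfrac{x^2}{4}-\tfrac{x}{2}-\tfrac34$ is nonpositive on the \emph{entire} interval $(-1,1)$ rather than merely near the origin, which is immediate once one locates its roots at $x=-1$ and $x=3$. (A proof via Taylor's theorem with Lagrange remainder is less clean for the lower bound, since the second derivative of $\sqrt{1+x}$ blows up as $x \to -1^+$, so the squaring argument is the more convenient route.)
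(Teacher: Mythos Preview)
Your proof is correct and complete. The paper states this as a fact without proof (it is a standard calculus exercise), so there is no approach to compare against; your elementary squaring argument, together with the sign bookkeeping you flag, handles all cases cleanly.
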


\paragraph{Fourier analysis.} We introduce the basics of Fourier analysis on the Boolean cube here, referring to~\cite{o2014analysis} for more. For functions $f,g:\01^n\rightarrow \R$, define their inner product~as
	$$
	\langle f,g\rangle=\Exp_{x\in \FF^n} [f(x)\cdot g(x)],
	$$ 
	where the expectation is with respect to the uniform distribution over $\FF^n$. For $S\in\FF^n$, the character function corresponding to $S$ is given by $\chi_S(x):=(-1)^{S\cdot x}$, where the dot product $S\cdot x$ is $\sum_{i=1}^n S_ix_i$. Observe that the set of parity functions $\{\chi_S\}_{S\in \FF^n}$ forms an orthonormal basis for the space of all real-valued functions over the Boolean cube. In particular, every $f:\FF^n\rightarrow \R$ can be written uniquely as 
	$$
	f(x)=\sum_{S\in \FF^n} \widehat{f}(S) \chi_S(x) \quad \text{for all }x\in \FF^n,
	$$ 
	where $\widehat{f}(S)=\langle f,\chi_S\rangle=\Exp_x[f(x)\chi_S(x)]$ is called a \emph{Fourier coefficient} of $f$. A well-known result in Fourier analysis is Parseval's theorem that states that $\Exp_{x\in \FF^n}[f(x)^2]=\sum_S \widehat{f}(S)^2$. In particular, if $f:\FF^n\rightarrow \{-1,1\}$, this implies that $\{\widehat{f}(S)^2\}_S$ forms a probability distribution. 
    
\subsection{Interesting concept classes} \label{subsect:conceptclasses}
In this section, we introduce the main concept classes that we will be dealing with in  this work. 

\paragraph{Parities.} This is the concept class defined as 
$$
\Cc_\Par=\{\chi_s: \chi_s(x)=\langle s,x\rangle\}_{s\in \01^n}
$$ where $\langle s,x\rangle=\sum_i s_i x_i \mod 2$. 

\paragraph{Juntas.} We say a Boolean function $c:\01^n\rightarrow \01$ is a $k$-junta if there exists  $S=\{i_1,\ldots,i_k\}\subseteq [n]$ of size $|S|=k$ such that $c(x_1,\ldots,x_n)=g(x_{i_1},\ldots, x_{i_{k}})$ where $g:\01^k\rightarrow \01$ is an arbitrary function on $k$ bits. In relation to the class of disjunctions, note that $\OR_S$ is a $|S|$-junta. 

\paragraph{Decision trees.} A decision tree  $(\DT)$ on $n$ Boolean variables is a binary tree such that the leaves have labels chosen from $\01$ and the internal nodes of the tree have two children, the left child and the right child. On input $x\in \01^n$, an algorithm traverses the binary tree from the root to a leaf by evaluating the node at the $i$th level as follows: if $x_i=0$, go to the left child and if $x_i=1$, go to the right child. The output of the $\DT$ is the label of the leaf that the algorithm reaches. The size of the decision tree is the total number of nodes in the tree.\footnote{We remark that there are some works that call the \emph{number} of leaves as the $\DT$ size, but this is a factor $2$ smaller than the way we define it here.} We say that a function $c:\01^n\rightarrow \01$ is computed by a size-$s$ decision tree, if there exists a size-$s$ $\DT$ such that for every $x$, traversing this $\DT$ and outputting the label of the leaf yields $c(x)$.

\paragraph{Depth-$2$ circuits.} Depth-$2$ circuits consisting of $\AND, \OR, \NOT$ gates are often referred to as \emph{disjunctive normal form} ($\DNF$s) formulas. One also refers to this concept class as $\textsf{AC}_2^0$. In particular, the class of $s$-term $\DNF$ formulas is defined as depth-$2$ circuits where the first layer consists of $s$ $\AND$ gates, each with unbounded fanin (i.e., they take in as input an arbitrary subset of the variables in $x_1,\ldots,x_n$) and the second layer is a single $\textsf{OR}$ gate of fanin $s$. The size of the circuit (or the $\DNF$ formula) is the total number of gates in the circuit, which in this case will be $s+1$.

\paragraph{Depth-$3$ circuits.} In this work, we will consider two different notions of depth-$3$ circuits. The first is $\textsf{AC}_3^0$: these are depth-$3$ circuits where the gates are alternating layers of $\AND$s and $\OR$s. For example, the top gate may be an $\AND$ that takes as input a collection of $\OR$ gates, each of which in turn takes as input a collection of $\AND$ gates. Another type of depth-$3$ circuit we consider is \emph{threshold} of $\DNF$s. To define this, we first define the threshold function.

\begin{definition}
    A threshold function has the form,
    \begin{equation}
        T_k^m(y_1,..,y_m)= \begin{cases}
            1,\text{  if } \sum_{i=1}^m  y_i\geq k\\
            0,\text{ otherwise}
        \end{cases}.
    \end{equation}
\end{definition}
Now, one can define the threshold-of-$\DNF$s class as follows.
\begin{definition}[Threshold-of-$\DNF$s]
     Define $\textsf{TAC}^0_2$ to be the class of depth $3$ circuits on $n$ bits where the top gate is a threshold function whose inputs are $\DNF$ formulas acting on $n$ bits.
\end{definition}
For both definitions, $\AC_3^0$ and $\TAC_2^0$, the size of the corresponding circuit is the total number of $\AND, \OR, \NOT$ gates.

\subsection{Function and state classes}
For notational convenience, we will be explicit about the size in parenthesis, i.e., $\DT(s),\AC_3^0(s),\TAC_2^0(s)$ will be size-$s$ decision trees and circuits, respectively. Throughout the paper we will denote $\Cc_{\Par}$ as the class of parities, $\Cc_{\DT(s)}$ as the class of decision trees of size $s$, $\Cc_{\Jun(k)}$ as the class of $k$-juntas, $\Cc_{\DNF(s)}$ as the class of $s$-term $\DNF$ formulas, $\Cc_{\AC_3^0(s)}$ as the class of $\AC_3^0(s)$ circuits and  $\Cc_{\TAC_2^0(s)}$ as the class of $\TAC_2^0(s)$ circuits. 

For every concept class $\Cc$, we will denote the phase state corresponding to a function $c \in \Cc$ as
\begin{equation}\label{eq:phase_state}
    \ket{\phi_c} = \frac{1}{\sqrt{2^n}} \sum_{x \in \FF^n} (-1)^{c(x)} \ket{x},
\end{equation}
and $\Sh_{\Cc}$ to be the corresponding \emph{state class}, i.e.,
\begin{equation}\label{eq:state_class}
\Sh_\Cc=\Big\{\ket{\phi_c}=\frac{1}{\sqrt{2^n}}\sum_{x\in \01^n}(-1)^{c(x)}\ket{x}:c\in \Cc\Big\}.    
\end{equation}
Furthermore, we define the \emph{fidelity} of an unknown $\ket{\psi}$ with respect to the class $\Sh_\Cc$ as $\calF_\Cc$, i.e., a state $\ket{\psi}$ is said to have $\calF_\Cc(\ket{\psi})=\opt$, if
$$
\max_{c\in\Cc} |\langle \phi_c|\psi\rangle|^2=\opt.
$$
%where $\ket{\phi_c}$ is as defined in Eq.~\eqref{eq:phase_state}.

\subsection{Learning models}

\subsubsection{PAC learning}

\paragraph{Classical $\PAC$ learning.} In his seminal paper, Valiant~\cite{DBLP:journals/cacm/Valiant84} introduced the \emph{Probably Approximately Correct} model of learning, often referred to as $\PAC$ learning. In this model, there is a \emph{concept class} $\Cc\subseteq \{c:\01^n\rightarrow \01\}$  which is a collection of Boolean functions. The goal of the learning algorithm is to learn $\Cc$ in the following sense: The learner $\A$ obtains \emph{labeled examples} $(x,c(x))$ where $x\in \01^n$ is uniformly random and $c\in \Cc$ is the \emph{unknown} {target} function promised to lie in $\Cc$.\footnote{In the general $\PAC$ learning model, there is an \emph{unknown} distribution $D:\01^n\rightarrow [0,1]$ from which $x$ is drawn.  In this paper we will only be concerned with uniform-distribution $\PAC$ learning, i.e., $D$ is the uniform distribution.} The goal of an $(\varepsilon,\delta)$-learner $\A$ is as follows: for every $c\in \Cc$, given labeled examples $\{(x^i,c(x^i))\}_i$, with probability $\geq 1-\delta$ (over the randomness of the labeled examples and the  randomness of the learner), output a \emph{hypothesis} $h:\01^n\rightarrow \01$ such that $\Pr_x [c(x)=h(x)]\geq 1-\varepsilon$. In other words, with \emph{probability} $1-\delta,$ the hypothesis $h$ $\varepsilon$-\emph{approximates} $c$. The $(\varepsilon,\delta)$-sample complexity of a learning algorithm $\A$ is the maximal number of labeled examples used for the hardest concept, i.e., maximized over all $c\in \Cc$. The $(\varepsilon,\delta)$-sample complexity of learning $\Cc$ is the \emph{minimal} sample complexity  over all $(\varepsilon,\delta)$-learners for $\Cc$. Similarly the $(\varepsilon,\delta)$-time complexity  of learning $\Cc$ is the total number of time steps used by an optimal $(\varepsilon,\delta)$-learner for $\Cc$.  We say a learner is \emph{proper} if the output hypothesis $h$ lies within the concept class $\Cc$ and otherwise it is referred to as \emph{improper}. Throughout the paper, we present improper learners, with the exception of the parity learner, which is proper.

\paragraph{Quantum $\PAC$ learning.}  The quantum $\PAC$ model was introduced by  Bshouty and Jackson~\cite{bshouty1995learning} wherein they allowed the algorithm access to quantum examples of the form
$$
\ket{\psi_c}=\frac{1}{\sqrt{2^n}}\sum_{x\in \01^n}\ket{x,c(x)}.
$$
In particular, the learner is given copies of $\ket{\psi_c}$ and is allowed to perform arbitrary measurements on those copies. 
Note that measuring $\ket{\psi_c}$ in the computational basis produces a classical labeled example, so quantum examples are \emph{at least} as strong as classical examples. Understanding the strengths and weaknesses of quantum examples has been looked at by several works (we refer an interested reader to the survey~\cite{arunachalam2017guest}). As with the classical complexities, one can similarly define the $(\varepsilon,\delta)$-sample and time complexity for learning $\Cc$ as the quantum sample complexity (i.e., number of quantum examples $\ket{\psi_c}$ used) and quantum time complexity (i.e., number of one and two-qubit quantum gates used in the algorithm) of an optimal $(\varepsilon,\delta)$-learner for $\Cc$. 

\subsubsection{Classical agnostic learning}
Let $D$ be a distribution $D:\FF^n\rightarrow [0,1]$ and  $\phi:\FF^n\rightarrow [-1,1]$. We say $A=(D,\phi)$ is a distribution on $\FF^{n+1}$ satisfying: the marginal on the first $n$ bits of $A$ is given by the distribution $D$ and the distribution of the last bit is described by the distribution $(1+\phi(x))/2,(1-\phi(x)/2)$. More formally, for the distribution  $A=(D, \phi)$, we have $D(z) = \Pr_{(x,b) \sim A}[x = z]$ and
\[
\phi(z) = \Exp_{(x,b) \sim A}[b \mid z = x].
\]
Formally, for a Boolean function $h$ and a distribution $D$, we define
\[
\Delta(A, h) = \Pr_{(x,b) \sim A}[h(x) \ne b].
\]
Furthermore, we have the following simple equality 
\begin{align}
\Delta(D, h) = (1 - \langle \phi, h \rangle_D)/2=(1 -\Exp_{x\sim D} [\phi(x)h(x)])/2.
\end{align}
For a concept class $\Cc$, define 
$$
\Delta(A, \Cc) = \min_{h \in \Cc} \{\Delta(A, h)\}
$$
Now one can formally define agnostic learning as follows
\begin{definition}[\cite{kearns1992toward}]
    {An algorithm $\mathcal{A}$ agnostically learns a class $\Cc\subseteq \{h:\FF^n\rightarrow \{-1,1\}\}$ by a representation class $H$ if for every $\varepsilon,\delta > 0$, distribution $A$ over $\FF^n \times \{-1,1\}$, $\mathcal{A}$, given access to examples drawn randomly from $A$, outputs, with probability at least $1 - \delta$, a hypothesis $h \in H$ such that $\Delta(A, h) \le \Delta(A, C) + \varepsilon$.}
\end{definition} 
As is often the case, we limit ourselves to the scenario in which the distribution on the first $n$ bits is uniform. In that case, the goal of the learner is to output an $h:\FF^n\rightarrow \{-1,1\}$ such that
$$
(1 -\Exp_{x}[ \phi(x)h(x)])/2 \le \min_{c\in \Cc}\{(1 -\Exp_{x} [\phi(x)c(x)])/2\} + \varepsilon \implies \Exp_{x}[ \phi(x)h(x)]\geq \max_{c\in \Cc}\Exp_{x}[ \phi(x)c(x)]-2\varepsilon.
$$
\subsubsection{Quantum  agnostic learning}
\paragraph{Distributional agnostic learning.} 
Like in the classical model, let $A=(D,\phi)$ be a distribution on $\FF^n$. The quantum learning algorithm is given copies of
$$
\sum_{(x,b)\in \FF^{n+1}}\sqrt{A(x,b)}\ket{x,b}.
$$ 
In the case where $D$ is the uniform distribution, one can view $A(x,b)=2^{-n}\cdot (1+(-1)^{b}\phi(x))/2$ in the expression above. Hence, the learning algorithm is given copies of
    $$
   \frac{1}{\sqrt{2^n}}\sum_x \ket{x} \otimes \Big(\sqrt{\frac{1+\phi(x)}{2}}\ket{0}+\sqrt{\frac{1-\phi(x)}{2}}\ket{1}\Big),
    $$
    Like in the classical setting, the goal is to output a $h$ such that
    $$
    \Exp_{x}[ \phi(x)h(x)]\geq \max_{c\in \Cc}\Exp_{x}[ \phi(x)c(x)]-\varepsilon.
    $$

\paragraph{State agnostic learning.}
In this model, the hypothesis class is a set of quantum state $\Sh=\{\ket{\psi_1},\ldots,\ket{\psi_m}\}$.\footnote{We remark that we define this model for pure states for simplicity since that is the focus of this work. One could similarly define a hypothesis class of mixed states $\{\rho_1,\ldots,\rho_m\}$.} The agnostic learning algorithm is given copies of an unknown $\ket{\phi}$ and the goal is to output an $\ket{\phi'}$ such that
$$
|\langle \psi|\phi'\rangle|^2\geq \max_{i\in [m]}|\langle \psi|\phi_i\rangle|^2-\varepsilon.
$$
The sample complexity of learning is the total number of copies used by the algorithm to satisfy the above guarantee and the time complexity is the total time.
We say an algorithm is \emph{sample and time efficient} these complexities  scale polynomial in $n,1/\varepsilon$ and the the description size of the class. If the learner outputs $\ket{\psi'}\in \Sh$, then the learner is called \emph{proper}, else its an \emph{improper} learner. 
As far as we are aware, there are only a handful of works that have considered quantum state agnostic learning~\cite{chen2024stabilizer,DBLP:conf/stoc/BadescuO21,bakshi2024learning} wherein they considered interesting classes of states such as stabilizer states, product states, matrix product states and proved results for this model. As we mentioned in the introduction, in this work, we will be concerned with the concept class of states~being
$$
\Sh=\Big\{\ket{\psi}=\frac{1}{\sqrt{2^n}}\sum_x c(x)\ket{x}: c\in \Cc\Big\},
$$
where $\Cc$ is a Boolean-valued concept class of interest.

\section{Quantum agnostic boosting}\label{sec:agnostic_boosting}
In this section, we introduce the framework of quantum agnostic boosting and prove one of our main theorems.
In order to define the boosting algorithm, we first give the following definition of a weak agnostic learner.   
\begin{definition}[Weak agnostic leaner]\label{def:WAL}
Let $\tau \in (0,1)$ and $\eta(\cdot)$ be a function of $\tau$. Suppose $\ket{\phi}$ is an arbitrary $n$-qubit quantum state such that $\calF_{\Cc}(\ket{\phi}) \geq \tau$. We say $\calA_{\WAL}$ is a weak agnostic learner for $\calS_\Cc$ \emph{with promise $\eta(\cdot)$} if, given copies of $\ket{\phi}$, outputs $\ket{\chi} \in \Cc_{\Par}$ such that $|\langle \phi| \chi \rangle|^2 \geq \eta(\tau)$ with probability $\geq 1-\delta$. Let $S_{\WAL}$ and  $T_{\WAL}$ be the sample and time complexity of $\A_{\WAL}$ respectively.\footnote{We remark that in this work we only consider weak agnostic learners whose output class will be the class of parity states, hence why we define it this way.}
\end{definition}

We are now ready to state our main theorem which comments on obtaining a strong (improper) agnostic learner from a weak agnostic learner via boosting.
\begin{theorem}[Quantum agnostic boosting]
\label{thm:agnostic_boosting}
Let $\varepsilon,\delta,\eta_1\in (0,1)$. Let $\eta_2\geq 1$ be a universal constant and $\eta: [0,1] \rightarrow [0,1]$ be defined as $\eta(\tau) := \eta_1 \tau^{\eta_2}$. Let $\Cc$ be a concept class and $\ket{\psi}$ be an unknown $n$-qubit  state with $\calF_{\Cc}(\ket{\psi}) = \opt$. 

Let $\calA_{\WAL}$ be a weak agnostic learner for $\calS_\Cc$ with promise of $\eta(\cdot)$, with sample complexity $S_{\WAL}$  and time complexity $T_{\WAL}$. Then, there is an algorithm $\calL$ that with probability $\geq 1-\delta$, outputs a state $\ket{\widehat{\phi}}$ expressed as
$$
\ket{\widehat{\phi}} = \sum_{i=1}^\kappa \beta_i \ket{\chi_i},
$$
where $\beta \in \calB_\infty^k$, $\kappa \leq O(1/(\varepsilon^2 \cdot \upsilon))$ with $\upsilon=\eta(C\varepsilon^2/16)$, $C = (2/3)^{1/\eta_2 + 1}$ and $\{\ket{\chi_i}\}_{i\in [\kappa]}$ are parity states. Furthermore $\ket{\widehat{\phi}}$~satisfies 
$$
|\langle \widehat{\phi} | \psi \rangle|^2 \geq \opt - \varepsilon.
$$
This algorithm $\calL$ invokes $\calA_{\WAL}$ $\kappa$~times. The overall complexity of this algorithm is as follows
\begin{align*}
&\text{Sample complexity: } \widetilde{O}\Big(1/(\varepsilon^2 \upsilon)\cdot S_{\WAL} + 1/(\varepsilon^{16}\upsilon^7)\log(1/ \delta)\Big) \\
&\text{Time complexity: } \widetilde{O}\Big(1/(\varepsilon^2 \upsilon)\cdot T_{\WAL} + n^2/(\varepsilon^{16}\upsilon^7)\log(1/ \delta)\Big).
\end{align*} 
% \begin{align*}
% &\text{Sample complexity: } \widetilde{O}\Big(1/(\varepsilon^4 \upsilon)\cdot S_{\WAL} + 1/(\varepsilon^{16}\upsilon^5)\log(1/ \delta)\Big) \\
% &\text{Time complexity: } \widetilde{O}\Big(1/(\varepsilon^4 \upsilon)\cdot T_{\WAL} + n^2/(\varepsilon^{16}\upsilon^5)\log(1/ \delta)\Big).
% \end{align*} 
\end{theorem}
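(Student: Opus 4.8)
The plan is to formalize the iterative ``structure-learning + parameter-learning'' scheme sketched in the technical overview. Define residual states $\ket{\psi_{t+1}} \propto (\id - \Lambda_{T(t)})\ket{\psi}$ with $\alpha_{t+1} = \|(\id-\Lambda_{T(t)})\ket{\psi}\|_2$, and run the loop: in iteration $t$, call $\calA_{\WAL}$ on copies of $\ket{\psi_t}$ to obtain a parity $\ket{\chi_{S_t}}$; use $\calA_{\WAL}$ together with a \SWAP\ test as a tester to decide whether $\calF_\Cc(\ket{\psi_t}) < \varepsilon'$ (for a suitably rescaled $\varepsilon' = \Theta(\varepsilon^2)$ absorbing the $C$ factor), and also check whether $\alpha_t^2 < \varepsilon'$; if either holds, halt. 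The first key lemma is the \emph{progress bound}: conditioned on not having halted, $\alpha_{t}^2 \geq \varepsilon'$, so the computation $P(\varepsilon'/n) \le |\langle \chi_{S_t}|\psi_t\rangle|^2 = \tfrac{1}{\alpha_t^2}|\langle\chi_{S_t}|\psi\rangle|^2$ (using orthogonality of $\ket{\chi_{S_t}}$ to $T(t-1)$, which we enforce by having $\calA_{\WAL}$ search only in the orthogonal complement, or by Gram--Schmidt) gives $|\langle \chi_{S_t}|\psi\rangle|^2 \geq \varepsilon' \cdot \eta(\varepsilon'/n) =: \upsilon'$. Hence each new parity captures a fixed amount $\upsilon'$ of squared overlap with $\ket{\psi}$ orthogonal to everything before it, so by Pythagoras $\sum_{i\le t}|\beta_i|^2 \geq t\,\upsilon'$, which is at most $1$; this forces $\kappa \le 1/\upsilon' = O(1/(\varepsilon^2\upsilon))$ and pins down the iteration count.

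The second ingredient is \emph{correctness at termination}. If the loop halts because $\alpha_{\kappa+1}^2 < \varepsilon'$, then $\Lambda_{T(\kappa)}\ket{\psi}$ is within $\varepsilon'$ of $\ket{\psi}$ in squared norm, so any state close to this projection certainly achieves $\opt - \varepsilon$ (in fact does tomography). If instead it halts because the tester reports $\calF_\Cc(\ket{\psi_{\kappa+1}}) < \varepsilon'$, we argue: writing $\ket{\psi} = \Lambda_{T(\kappa)}\ket{\psi} + \alpha_{\kappa+1}\ket{\psi_{\kappa+1}}$, for the optimal $c^\star$ achieving $\opt$ we have $|\langle \phi_{c^\star}|\psi\rangle| \le |\langle \phi_{c^\star}|\Lambda_{T(\kappa)}\ket{\psi}| + \alpha_{\kappa+1}|\langle \phi_{c^\star}|\psi_{\kappa+1}\rangle| \le \|\Lambda_{T(\kappa)}\ket{\psi}\| + \alpha_{\kappa+1}\sqrt{\varepsilon'}$, so $\sqrt{\opt} \le \|\Lambda_{T(\kappa)}\ket{\psi}\| + \sqrt{\varepsilon'}$, whence $\|\Lambda_{T(\kappa)}\ket{\psi}\|^2 \geq \opt - O(\sqrt{\varepsilon'\cdot\opt}) - \varepsilon' \geq \opt - \varepsilon/2$ for $\varepsilon' = \Theta(\varepsilon^2)$ — this is exactly where the constant $C = (2/3)^{1/\eta_2+1}$ and the squared dependence on $\varepsilon$ in $\upsilon = \eta(C\varepsilon^2/16)$ come from. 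So the ``true'' target state is $\Lambda_{T(\kappa)}\ket{\psi} = \sum_i \beta_i\ket{\chi_i}$ with $\beta_i = \langle \chi_i|\psi\rangle$.

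The third, and I expect most technically demanding, ingredient is \emph{parameter learning with only copies of $\ket{\psi}$ and error propagation}: we cannot compute the $\beta_i$ (including phases) via a Hadamard test without a state-preparation unitary for $\ket{\psi}$, so instead we estimate them from measurement statistics up to a common global phase. Concretely, for each fixed $i$ one estimates $|\beta_i|^2$ by \SWAP\ tests between $\ket{\psi}$ and $\ket{\chi_i}$; relative phases between $\beta_i$ and $\beta_j$ are recovered by preparing $\ket{\chi_i},\ket{\chi_j}$, forming an equal superposition $\tfrac{1}{\sqrt2}(\ket{\chi_i}+\ket{\chi_j})$ (and the $i$-phased variant), and \SWAP-testing against $\ket{\psi}$, which exposes $\mathrm{Re}(\bar\beta_i\beta_j)$ and $\mathrm{Im}(\bar\beta_i\beta_j)$. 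Taking $i=1$ as the phase reference fixes $\ket{\widehat\phi} = \sum_i \widehat\beta_i\ket{\chi_i}$ up to a global phase that is irrelevant to fidelity. The hard part is bookkeeping: each of the $\kappa = O(1/(\varepsilon^2\upsilon))$ coefficients must be estimated to additive accuracy $\sim \varepsilon/\kappa$ so that the total $\ell_2$ error in $\ket{\widehat\phi}$ versus $\Lambda_{T(\kappa)}\ket{\psi}$ is $O(\varepsilon)$; this costs $\poly(\kappa,1/\varepsilon) = \poly(1/(\varepsilon^2\upsilon))$ \SWAP\ tests, each needing $\poly(n)$ gates (and the residual-state preparation unitaries compose $\kappa$ reflections, contributing the $n^2$ factor), and one must union-bound all $O(\kappa^2)$ estimates to failure probability $\delta$ via $\log(1/\delta)$ repetitions and medians — this accounts for the stated $1/(\varepsilon^{16}\upsilon^7)\log(1/\delta)$ terms. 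Finally one checks that the accumulated estimation error does not spoil either halting condition: one runs the testers at threshold $\varepsilon'/2$ versus $\varepsilon'$ with a constant gap so that $O(\varepsilon')$-accurate estimates suffice, and propagates the $\le \varepsilon/2$ error from parameter learning additively into the $\opt - \varepsilon/2$ bound above to conclude $|\langle\widehat\phi|\psi\rangle|^2 \geq \opt - \varepsilon$.
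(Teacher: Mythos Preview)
Your proposal is correct and follows essentially the same two-stage approach as the paper: structure learning via iterated weak-learner calls on residual states with the dual stopping criterion, followed by parameter learning via \SWAP\ tests against the superpositions $(\ket{\chi_1}\pm\ket{\chi_j})/\sqrt2$ and $(\ket{\chi_1}\pm i\ket{\chi_j})/\sqrt2$ to recover relative phases. Two small remarks: (i) the $/n$ in your $P(\varepsilon'/n)$ is an artifact of the informal Result~1.1 and does not appear in the formal promise $\eta(\tau)=\eta_1\tau^{\eta_2}$ of Definition~\ref{def:WAL}; (ii) you do not need to ``enforce'' orthogonality of $\ket{\chi_{S_t}}$ to $T(t-1)$ by Gram--Schmidt or by restricting the search---since $\ket{\psi_t}$ is already orthogonal to all previously found parities, any parity with nonzero overlap with $\ket{\psi_t}$ is automatically new, and any old one would fail the \SWAP-test threshold.
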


\subsection{Useful subroutines and lemmas}
In this section, we will provide a few definitions and lemmas that we use in our algorithm. As we mentioned in the high-level idea in the introduction (Section~\ref{sec:intro}), the boosting algorithm will be executed across multiple iterations and at every iteration, we will apply projections of the quantum state in hand onto parity states. We now formally define these projections.

\paragraph{Projection.} For a set of $k$ states $\{\ket{\chi_i}\}_{i \in [k]}$, let its span be denoted as $T = \spann(\{\ket{\chi_i}\}_{i \in [k]})$. Let $\Lambda_T$ be a projection onto $T$. The projection of $\ket{\psi}$ onto $T$  is given by
\begin{equation}
\label{eq:projection_state}
    \Lambda_T \ket{\psi} = \sum_{i=1}^k \beta_i \ket{\chi_i} \enspace \text{ s.t. } \enspace \{\beta_i\}_{i \in [k]} = \argmin_{\alpha_1,\ldots,\alpha_k \in \mathbb{C}} \normstate{\ket{\psi} - \sum \limits_{i=1}^k \alpha_i \ket{\chi_i}}_2
\end{equation}
For the special case of parity states, the projection has a simpler expression. Given a set of parity states $\{\ket{\chi_i}\}_{i \in [k]}$ with $\ket{\chi_i} \in \calS_{\Cc_{\Par}}$, the projection is
\begin{equation}\label{eq:projection_state_when_parities}
    \Lambda_T \ket{\psi} = \sum_{i=1}^k \beta_i \ket{\chi_i} \enspace{ s.t. } \enspace \beta_i =  \langle \chi_i | \psi \rangle,
\end{equation}
where $T = \{\ket{\chi_i}\}_{i \in [k]}$ and using that the parity states are orthogonal to one another\footnote{In particular, note that $\langle Q|\textsf{Had}^{\otimes n}\textsf{Had}^{\otimes n}|R\rangle=0$ if $Q\neq R$.}. In other words, we can represent the projector $\Lambda_T$ in terms of the parity states $\{\ket{\chi_i}\}_{i \in [k]}$ as
\begin{equation}\label{eq:projector_parities}
    \Lambda_T = \sum_{i=1}^k \ket{\chi_i} \bra{\chi_i},
\end{equation}
and the solution to the optimization problem of Eq.~\eqref{eq:projection_state} is when the coefficients are inner products of the basis elements and $\ket{\psi}$. Note that $\Lambda_T$ is a projector since
$$
(\Lambda_T)^2=\sum_{i,j}\ketbra{\chi_i}{\chi_i}\ketbra{\chi_j}{\chi_j}=\sum_i\ketbra{\chi_i}{\chi_i}=\Lambda_T,
$$
since parity states are orthogonal. Additionally, we have the following fact regarding the residual $(\id - \Lambda_T)\ket{\psi}$, which we will use often in our analysis below.
\begin{fact}\label{fact:projection}
Let  $\{\ket{\chi_i}\}_{i \in [k]}$ be a set of  parity states and $T = \spann(\{\ket{\chi_i}\}_{i})$.  Every state $\ket{\psi}$ can be written~as
$$
\ket{\psi} = \Lambda_T \ket{\psi} + \alpha \ket{\phi^\perp},
$$
where $\langle \phi^\perp | \chi_i \rangle = 0$ and $\alpha = \sqrt{1 - \sum_{i=1}^k |\langle \chi_i | \psi \rangle|^2}$.
\end{fact}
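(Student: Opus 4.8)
The plan is to invoke the standard orthogonal decomposition of $\ket{\psi}$ with respect to the subspace $T$ and then compute the norm of the residual component explicitly. The key structural input, already recorded in the excerpt, is that distinct parity states are orthogonal, i.e. $\langle \chi_i | \chi_j \rangle = \delta_{ij}$; hence $\{\ket{\chi_i}\}_{i\in[k]}$ is an orthonormal basis of $T$, and (as verified just above) $\Lambda_T = \sum_{i=1}^k \ket{\chi_i}\bra{\chi_i}$ is the Hermitian orthogonal projector onto $T$, so $\id - \Lambda_T$ is the orthogonal projector onto $T^\perp$.

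First I would write the identity $\ket{\psi} = \Lambda_T\ket{\psi} + (\id - \Lambda_T)\ket{\psi}$ and set $\alpha := \big\| (\id - \Lambda_T)\ket{\psi} \big\|_2$, together with $\ket{\phi^\perp} := \alpha^{-1}(\id - \Lambda_T)\ket{\psi}$ when $\alpha \neq 0$ (if $\alpha = 0$, the claimed decomposition holds with $\ket{\phi^\perp}$ taken to be any unit vector in $T^\perp$). For the orthogonality assertion, I would compute, for each $j \in [k]$,
$$
\langle \chi_j | (\id - \Lambda_T) | \psi \rangle = \langle \chi_j | \psi \rangle - \sum_{i=1}^k \langle \chi_j | \chi_i \rangle \langle \chi_i | \psi \rangle = \langle \chi_j | \psi \rangle - \langle \chi_j | \psi \rangle = 0,
$$
using $\langle \chi_j|\chi_i\rangle = \delta_{ij}$ in the middle step; dividing through by $\alpha$ gives $\langle \chi_j | \phi^\perp \rangle = 0$.

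It then remains to evaluate $\alpha$. Since $\id - \Lambda_T$ is a Hermitian projector,
$$
\alpha^2 = \langle \psi | (\id - \Lambda_T) | \psi \rangle = \langle \psi | \psi \rangle - \langle \psi | \Lambda_T | \psi \rangle = 1 - \sum_{i=1}^k |\langle \chi_i | \psi \rangle|^2,
$$
so $\alpha = \sqrt{1 - \sum_{i=1}^k |\langle \chi_i | \psi \rangle|^2}$, as claimed. I do not expect any genuine obstacle here: the entire content is the orthonormality of parity states (already established in the excerpt) together with elementary projector identities, and the only point needing a word of care is the degenerate case $\alpha = 0$, handled above by choosing $\ket{\phi^\perp}$ freely in $T^\perp$.
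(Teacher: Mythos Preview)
Your proof is correct and follows essentially the same approach as the paper: decompose $\ket{\psi}$ via the identity $\ket{\psi}=\Lambda_T\ket{\psi}+(\id-\Lambda_T)\ket{\psi}$, use orthogonality of parity states to verify $\ket{\phi^\perp}\perp\ket{\chi_i}$, and compute $\alpha$ from $1=\|\Lambda_T\ket{\psi}\|_2^2+|\alpha|^2$. Your version is slightly more explicit (you spell out the $\langle\chi_j|(\id-\Lambda_T)|\psi\rangle=0$ computation and address the degenerate case $\alpha=0$), but the argument is the same.
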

\begin{proof}
We can express any arbitrary $\ket{\psi}$ as
\begin{align} \label{eq:bestrepofpsi}
\ket{\psi}= \Lambda_T \ket{\psi} + (\id-\Lambda_T) \ket{\psi} = \Lambda_T \ket{\psi} + \alpha \ket{\phi^\perp},
\end{align}
where we have used that $\Lambda_T$ is an orthogonal projector and $\alpha \ket{\phi^\perp} = (\id - \Lambda_T) \ket{\psi}$ with $\alpha \in \calB_\infty$. Note that $\ket{\phi^\perp}$ is a valid quantum state orthogonal to $\ket{\chi_i}$ for all $i \in [k]$ since $\Lambda_T(\id-\Lambda_T) = 0$.  
Moreover, by Pythagoras' theorem, we have
$$
1 = \norm{\ket{\psi}}_2^2 = \norm{\Lambda_T \ket{\psi}}_2^2 + |\alpha|^2 \cdot \norm{\ket{\phi^\perp}}_2^2 = \sum_{i=1}^{k} |\langle \psi| \chi_i \rangle|^2+|\alpha|^2 \implies \alpha = \sqrt{1 - \sum_{i=1}^{k} |\langle \psi| \chi_i \rangle|^2},
$$ 
where we used that $\ket{\phi^\perp}$ is orthogonal to $\Lambda_T \ket{\psi}$ in the second equality, used Eq.~\eqref{eq:projection_state_when_parities} in the third equality along with the fact that  parity states are orthogonal.
\end{proof}

\paragraph{Subroutines.} Before introducing the algorithm, we present a couple of useful subroutines and lemmas that will be necessary below.
\begin{lemma}[\textsf{SWAP} test]
\label{lem:swap_test}
Let $\varepsilon,\delta \in (0,1)$. Given two arbitrary $n$-qubit quantum states $\ket{\psi}$ and~$\ket{\phi}$, there is a quantum algorithm that estimates $|\langle \psi | \phi \rangle|^2$ up to error $\varepsilon$ with probability at least $1-\delta$ using $O(1/\varepsilon^2\cdot \log(1/\delta))$ copies of $\ket{\psi},\ket{\phi}$ and which runs in $O(n/\varepsilon^2\cdot \log(1/\delta))$ time.
\end{lemma}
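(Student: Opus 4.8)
The plan is to recall the standard SWAP test circuit and then amplify its success probability by repetition and majority/averaging. First I would describe the single-shot SWAP test: prepare an ancilla qubit in $\ket{+}$, apply a controlled-SWAP between the two registers holding $\ket{\psi}$ and $\ket{\phi}$, apply a Hadamard to the ancilla, and measure it in the computational basis. A direct calculation shows the ancilla is measured to be $0$ with probability $p = \tfrac{1}{2} + \tfrac{1}{2}|\langle \psi | \phi\rangle|^2$, so that $|\langle \psi|\phi\rangle|^2 = 2p - 1$. Thus an estimate $\widehat{p}$ of $p$ to additive error $\varepsilon/2$ yields an estimate $2\widehat{p}-1$ of $|\langle\psi|\phi\rangle|^2$ to additive error $\varepsilon$.

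Next I would bound the number of repetitions. Running the test $N$ times on fresh copies of $\ket{\psi},\ket{\phi}$ and letting $\widehat{p}$ be the empirical fraction of $0$-outcomes, Hoeffding's inequality gives $\Pr[|\widehat p - p| > \varepsilon/2] \le 2\exp(-N\varepsilon^2/2)$, so choosing $N = O(\tfrac{1}{\varepsilon^2}\log(1/\delta))$ makes this at most $\delta$. This consumes $N$ copies of each state, i.e. $O(\tfrac{1}{\varepsilon^2}\log(1/\delta))$ copies total, matching the stated sample complexity. For the time complexity: each run uses one controlled-SWAP on $n$-qubit registers, which decomposes into $O(n)$ Fredkin (controlled-SWAP) gates on individual qubit pairs, plus $O(1)$ single-qubit gates; over $N$ repetitions this is $O(n/\varepsilon^2 \cdot \log(1/\delta))$ elementary gates, as claimed. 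I would also note that the $N$ independent runs can each use their own ancilla (or the ancilla can be reset), so no additional overhead is incurred.

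There is essentially no serious obstacle here — this is a textbook subroutine — but the one point that deserves a clean statement is the translation between estimating $|\langle\psi|\phi\rangle|^2$ and estimating the Bernoulli parameter $p$, together with making sure the error budget ($\varepsilon/2$ on $p$ versus $\varepsilon$ on the fidelity) and the confidence budget (a single application of Hoeffding, yielding the $\log(1/\delta)$ factor) are bookkept correctly. I would present the circuit identity for $p$ as the only ``computation,'' cite Hoeffding for the concentration, and state the gate count for controlled-SWAP on $n$ qubits, which together give exactly the claimed bounds.
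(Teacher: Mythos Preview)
Your proposal is correct and is the standard argument for this well-known subroutine. The paper itself does not supply a proof of this lemma; it simply states it as a known primitive (alongside other cited facts like Clifford synthesis), so your write-up would serve as a perfectly adequate justification where the paper offers none.
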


We will require the following characterization regarding superpositions of stabilizer states being stabilizer states as well.
\begin{lemma}[{\cite[Lemma~2]{garcia2014geometry}}]
\label{lem:superpos_stabilizers}
Let $\ket{\phi}$ and $\ket{\varphi}$ be $n$-qubit stabilizer states such that $\langle \phi | \varphi \rangle \neq 1$. Then $(\ket{\phi} + i^\ell \ket{\varphi})/\sqrt{2}$ for $\ell \in \{0,1,2,3\}$ is a stabilizer state if and only if $\langle \phi | \varphi \rangle = 0$ and there exists an $n$-qubit Pauli operator $P$ such that $\ket{\varphi} = P \ket{\phi}$.
\end{lemma}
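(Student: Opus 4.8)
The plan is to prove both directions after one normalizing reduction. Since every $n$-qubit stabilizer state equals $U\ket{0^n}$ for some Clifford $U$, and a Clifford maps stabilizer states bijectively to stabilizer states, Paulis to Paulis, and sends the vector $(\ket\phi+i^\ell\ket\varphi)/\sqrt2$ to $(U\ket\phi+i^\ell U\ket\varphi)/\sqrt2$, we may assume throughout that $\ket\phi=\ket{0^n}$. I will use the standard structural description of a stabilizer state: it is a unit vector whose computational-basis support is an affine subspace $A\subseteq\{0,1\}^n$, all of whose amplitudes have the common magnitude $|A|^{-1/2}$ and are $|A|^{-1/2}$ times a fourth root of unity. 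Finally, since stabilizer states are only meaningful up to a global phase, I read the hypothesis $\langle\phi|\varphi\rangle\neq 1$ as ``$\ket\varphi$ is not a global phase times $\ket\phi$'' (otherwise, e.g.\ with $\ket\varphi=i\ket\phi$, the vector $(\ket\phi+i^\ell\ket\varphi)/\sqrt2$ could be a global phase of a stabilizer state while $\langle\phi|\varphi\rangle\neq0$); thus after the reduction $\ket\varphi$ is a stabilizer state not proportional to $\ket{0^n}$. Write $\ket\varphi=\sum_{x\in A}c_x\ket x$ with $d:=\dim A$, so $|c_x|=2^{-d/2}$, and put $c_0:=\langle 0^n|\varphi\rangle$ and $\ket\mu:=(\ket{0^n}+i^\ell\ket\varphi)/\sqrt2$.

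$(\Rightarrow)$ Suppose $\ket\mu$ is a stabilizer state, and first suppose $0^n\in A$; then $d\geq1$ (otherwise $\ket\varphi\propto\ket{0^n}$) and $A\setminus\{0^n\}\neq\emptyset$. For each $x\in A\setminus\{0^n\}$ we have $\langle x|\mu\rangle=i^\ell c_x/\sqrt2$, of magnitude $2^{-(d+1)/2}$; these are nonzero, so they lie in $\supp(\ket\mu)$, and since $\ket\mu$ is a stabilizer state \emph{all} of its nonzero amplitudes share this magnitude $2^{-(d+1)/2}$. Hence $1=\sum_x|\langle x|\mu\rangle|^2=|\supp(\ket\mu)|\cdot2^{-(d+1)}$, i.e.\ $|\supp(\ket\mu)|=2^{d+1}$; but $\supp(\ket\mu)\subseteq A$, so $|\supp(\ket\mu)|\leq 2^d<2^{d+1}$ --- a contradiction. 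Therefore $0^n\notin A$, which gives at once $\langle\phi|\varphi\rangle=c_0=0$, the first claimed conclusion. In this case $\supp(\ket\mu)=\{0^n\}\cup A$ is a disjoint union of size $1+2^d$, which must be a power of two; since $1+2^d$ is odd for $d\geq1$ this forces $d=0$, so $A=\{a\}$ for some $a\neq0^n$ and $\ket\varphi$ is a global phase times $\ket a=X^a\ket{0^n}$. Thus $\ket\varphi=P\ket{0^n}$ with $P=X^a$ a Pauli (absorbing the phase); undoing the Clifford replaces $P$ by the conjugate Pauli $U^\dagger PU$ and preserves $\langle\phi|\varphi\rangle=0$, giving both conclusions for the original states.

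$(\Leftarrow)$ Suppose $\langle\phi|\varphi\rangle=0$ and $\ket\varphi=P\ket\phi$ for a Pauli $P$. After the reduction, $\ket\varphi=P'\ket{0^n}$ for some Pauli $P'=i^{c}X^{a}Z^{b}$, hence $\ket\varphi=i^{c}\ket a$; since $0=\langle0^n|\varphi\rangle=i^{c}\langle0^n|a\rangle$ we must have $a\neq0^n$, and $\ket\mu=(\ket{0^n}+i^{m}\ket a)/\sqrt2$ with $m:=\ell+c$ is a unit vector (orthogonal terms). It remains to check $\ket\mu$ is a stabilizer state. Writing $\supp(a)=\{j:a_j=1\}$ (nonempty) and fixing $j_0\in\supp(a)$, one verifies that $\ket\mu$ is the common $+1$-eigenvector of the $n$ mutually commuting Pauli operators
\[
\{\, Z_j : a_j = 0 \,\}\ \cup\ \{\, Z_{j_0}Z_j : j\in\supp(a),\ j\neq j_0 \,\}\ \cup\ \{\, i^{m}\, X^{a}\, Z^{(m\bmod 2)\, e_{j_0}} \,\},
\]
where the first family confines $\supp(\ket\mu)$ to $\{x:x_j=0\text{ whenever }a_j=0\}$, the second confines it to $\{0^n,a\}$, and the last operator maps $\ket{0^n}\leftrightarrow\ket a$ with exactly the phases needed to fix $(\ket{0^n}+i^m\ket a)/\sqrt2$ --- a one-line check using $X^aZ^b\ket{0^n}=\ket a$, $X^aZ^b\ket a=(-1)^{b\cdot a}\ket{0^n}$, and $b\cdot a\equiv m\pmod 2$. (Alternatively, just invoke the normal-form characterization quoted above, since $(\ket{0^n}+i^m\ket a)/\sqrt2$ has support equal to the subspace $\{0^n,a\}$ and amplitudes $2^{-1/2}$ times fourth roots of unity.) Undoing the Clifford shows $(\ket\phi+i^\ell\ket\varphi)/\sqrt2$ is a stabilizer state.

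The only points needing care are the global-phase convention in the hypothesis (handled in the first paragraph) and the explicit Pauli stabilizers in $(\Leftarrow)$, where the $Z^{(m\bmod 2)\,e_{j_0}}$ correction is needed precisely when $m$ is odd; apart from these, the forward direction is forced entirely by the fixed amplitude magnitude $|A|^{-1/2}$ and the power-of-two support size of stabilizer states, so I expect that bookkeeping, rather than any conceptual difficulty, to be the bulk of the work.
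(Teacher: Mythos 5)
The paper does not prove this lemma; it is imported verbatim from García, Markov, and Cross \cite{garcia2014geometry}, so there is no in-paper argument to compare against. Your proof is correct, and your preliminary remark about the hypothesis is well taken: as literally written, $\langle \phi | \varphi \rangle \neq 1$ does not exclude, say, $\ket{\varphi} = i\ket{\phi}$, in which case for $\ell = 0$ the vector $(\ket{\phi} + i^\ell\ket{\varphi})/\sqrt{2} = e^{i\pi/4}\ket{\phi}$ is (a phase times) a stabilizer state while $\langle \phi | \varphi \rangle = i \neq 0$, so the forward implication would fail. The cited source guards against this with the hypothesis $|\langle \phi | \varphi \rangle| \neq 1$; the absolute value appears to have been dropped in transcription, and your reinterpretation recovers the intended statement. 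Both directions of your argument check out: in the forward case with $0^n \notin A$ you could also obtain $d = 0$ directly from the uniform-amplitude property (the amplitude $2^{-1/2}$ at $0^n$ must equal $2^{-(d+1)/2}$) rather than from the power-of-two support size, but either route is valid; and the explicit commuting Pauli generators in the converse, in particular the phase correction $Z^{(m \bmod 2)\, e_{j_0}}$ that ensures $i^{2m}(-1)^{m \bmod 2} = 1$ and $S^2 = \id$, are verified correctly.
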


We use the following lemma that allows for the preparation of an arbitrary  stabilizer state.
\begin{lemma}[Clifford synthesis~\cite{dehaene2003clifford,patel2003efficient}]\label{lem:clifford_synthesis}
Given the classical description of an $n$-qubit stabilizer state $\ket{\phi}$, there is a quantum algorithm that outputs a Clifford circuit $C$ that prepares $\ket{\phi}$, using $O(n^2)$ single and two-qubit Clifford gates.
\end{lemma}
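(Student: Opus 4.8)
The plan is to reconstruct the standard stabilizer-synthesis procedure of~\cite{dehaene2003clifford,patel2003efficient}. I would start from the classical description of $\ket{\phi}$, namely its stabilizer group $\mathrm{Stab}(\ket{\phi})=\langle g_1,\dots,g_n\rangle$ — an abelian group of $n$ independent $n$-qubit Paulis not containing $-\id$ — stored as an $n\times 2n$ \emph{tableau} over $\F_2$ recording the $X$- and $Z$-supports of the $g_i$, together with a sign vector in $\F_2^n$. I would use two basic facts: (i) conjugating a Pauli by an elementary Clifford gate (Hadamard $H$, phase $S$, or $\mathsf{CNOT}$) acts on a tableau row by an explicit $\F_2$-symplectic map plus an affine sign update, computable in $O(n)$ time; and (ii) replacing a generator $g_k$ by $g_k g_\ell$ is a classical $\F_2$ row operation that preserves the group and its independence. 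The goal is to produce a Clifford $U$, written as a product of $O(n^2)$ elementary gates, with $U\ket{\phi}=\ket{0^n}$; then $C:=U^\dagger$ (reverse the circuit, invert each gate, which stays elementary) prepares $\ket{\phi}$ from $\ket{0^n}$ with $O(n^2)$ gates, and $U$ itself is found by simulating the tableau updates as the gates are chosen.

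The key step I would isolate is a one-qubit-at-a-time reduction: given a stabilizer group on qubits $\{j,\dots,n\}$ with generators $h_j,\dots,h_n$, build a Clifford circuit on those qubits with $O(n)$ elementary gates making $h_j$ equal to $\pm Z_j$ and making $h_{j+1},\dots,h_n$ supported on $\{j+1,\dots,n\}$. To implement it: apply $S$ on every qubit where $h_j$ acts as $Y$ (turning $Y$ into $\pm X$), then $H$ on every qubit where $h_j$ acts as $X$ (turning $X$ into $Z$), leaving $h_j=\pm\prod_{i\in A}Z_i$ for a nonempty $A\subseteq\{j,\dots,n\}$ (nonempty since $h_j\neq\pm\id$); then use $|A|-1$ controlled-NOTs (using $\mathsf{CNOT}_{a\to b}\colon Z_b\mapsto Z_aZ_b$) to collapse the product onto one qubit, and a $\mathsf{SWAP}$ ($3$ $\mathsf{CNOT}$s) to move it to qubit $j$, so that $h_j=\pm Z_j$. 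Because the generators commute, each $h_k$ for $k>j$ commutes with $Z_j$ and hence acts as $I$ or $Z$ on qubit $j$, so multiplying $h_k$ by $h_j$ when needed (a row operation, no gate) localizes $h_{j+1},\dots,h_n$ away from qubit $j$.

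I would then iterate this reduction for $j=1,\dots,n$, each invocation acting only on qubits $\{j,\dots,n\}$ and so leaving the already-fixed $\pm Z_1,\dots,\pm Z_{j-1}$ untouched; the total is $\sum_{j=1}^{n}O(n-j+1)=O(n^2)$ elementary gates, and it conjugates $\mathrm{Stab}(\ket{\phi})$ to $\langle(-1)^{s_1}Z_1,\dots,(-1)^{s_n}Z_n\rangle$ for some $s\in\F_2^n$, whose unique stabilized state is $\ket{s}$; hence, up to an irrelevant global phase, the accumulated Clifford $U_0$ satisfies $U_0\ket{\phi}=\ket{s}$, and composing with $X^s$ (an $X$ on each qubit with $s_i=1$, $O(n)$ gates) yields $U:=X^sU_0$ with $U\ket{\phi}=\ket{0^n}$. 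The output is $C=U^\dagger$, a Clifford circuit with $O(n^2)$ single- and two-qubit gates, computed in $O(n^3)$ classical time overall ($O(n^2)$ gates times an $O(n)$ tableau update each). The one place that genuinely needs care — and the part I would write out in full — is the sign/phase bookkeeping through the $S$-, $H$- and $\mathsf{CNOT}$-conjugations (e.g.\ $SYS^\dagger=-X$, and the sign flips produced when collapsing the $Z$-string), together with the observation that a global phase on $\ket{\phi}$ is immaterial; the combinatorial skeleton above is otherwise routine.
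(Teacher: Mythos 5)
The paper does not prove this lemma; it is stated as a citation to~\cite{dehaene2003clifford,patel2003efficient} and used as a black box. Your reconstruction is a correct account of the standard tableau-based synthesis: iteratively conjugate the $j$-th generator to $\pm Z_j$ (via $S$'s to kill $Y$'s, $H$'s to turn $X$'s into $Z$'s, then $O(n-j)$ $\mathsf{CNOT}$s and a $\SWAP$), use commutativity plus row operations to clear qubit $j$ from the remaining generators, sum the per-round cost to $O(n^2)$, fix the basis state with $X^s$, and invert. The gate-by-gate conjugation rules you quote ($SYS^\dagger=-X$, $HXH=Z$, $\mathsf{CNOT}_{a\to b}\colon Z_b\mapsto Z_aZ_b$) and the nonemptiness of the $Z$-support all check out, and you correctly flag the sign bookkeeping and the irrelevance of global phase as the only parts requiring care. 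One minor point worth making explicit if you wrote this out: the row operations used to localize $h_{j+1},\dots,h_n$ away from qubit $j$ are purely classical and contribute no gates, so they do not affect the $O(n^2)$ count, and the overall classical running time to produce the circuit description is $O(n^3)$ (as you note), which is subsumed by the $\poly(n)$ budgets wherever the lemma is invoked in the paper.
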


\subsection{Algorithm}
In this section, we present our main boosting algorithm (Algorithm~\ref{algo:agnostic_boosting}). We refer the reader to Section~\ref{sec:intro_boosting} for a high-level description and intuition regarding our approach. The algorithm has two stages: $(i)$ \emph{structure learning} in which we will learn a set of parities $\{\ket{\chi_i}\}_{i \in [\kappa]}$ across $\kappa$ many iterations such that $|\langle \psi | (\Lambda_T \ket{\psi})|^2 = \opt$ where $T = \spann(\{\ket{\chi_i}\}_{i \in [\kappa]})$, and $(ii)$ \emph{parameter learning} where we learn the coefficients corresponding to the parity states $\ket{\chi_i}$ and thereby learn a state which is a good approximation to $\Lambda_T\ket{\psi}$. We now describe the notation and execution of different steps in these two stages below before presenting their analysis.

\vspace{-0.1in}

\paragraph{Stage 1: Structure learning.} 
In each iteration, we will denote the \emph{residual vector} as
\begin{equation}\label{eq:residual_vector}
\Psi_{t+1} = \ket{\psi} - \Lambda_{T(t)} \ket{\psi} = \ket{\psi} - \sum_{i=1}^t \beta_i \ket{\chi_i},    
\end{equation} 
and the corresponding (normalized) state upon which we carry out agnostic learning as
\begin{equation}\label{eq:residual_state}
    \ket{\psi_{t+1}} = \Psi_{t+1}/\alpha_{t+1},
\end{equation}
where we have used $\alpha_{t+1} = \norm{\Psi_{t+1}}_2$. Note that $\ket{\psi_{t+1}}$ is what we prepare during the course of Algorithm~\ref{algo:agnostic_boosting}. Considering the unnormalized vector $\Psi_{t+1}$ will, however, be useful for our \emph{analysis}.

\emph{Stopping condition.} First observe that we stop at the end of iteration $t \geq 1$ when either $|\alpha_{t+1}|^2 < \varepsilon$ or $\calF_{\Cc}(\ket{\psi_{t+1}}) < \varepsilon$. This implies that
\begin{align}
\label{eq:stoppingcondition}
|\alpha_{t+1}|^2 \cdot \calF_{\Cc}(\ket{\psi_{t+1}}) < \varepsilon.
\end{align}
If we do not stop and proceed with iteration $(t+1)$, then both $|\alpha_{t+1}|^2 \geq \varepsilon$ and $\calF_{\Cc}(\ket{\psi_{t+1}}) \geq \varepsilon$. 

\emph{State update.} At each iteration, the state in consideration is $\ket{\psi_{t}} = (\id-\Lambda_{T(t-1)})\ket{\psi}/\alpha_t$.  We now prepare the state $\ket{\psi_{t+1}}$ given  copies of $\ket{\psi}$ as follows: consider the two-outcome measurement
$$
\Big\{\id-\Lambda_{T(t)}, \Lambda_{T(t)}\Big\}.
$$
The probability of this $2$-outcome POVM giving the first outcome is given by 
$$
\langle \psi|\id-\Lambda_{T(t)}|\psi\rangle=\langle \psi|(\id-\sum_{i\in [t]} \ketbra{\chi_i}{\chi_i})|\psi\rangle=1-\sum_{i\in [t]}|\langle \psi|\chi_i\rangle|^2=1-\sum_{i\in [t]}\beta_i^2=|\alpha_{t+1}|^2\geq \varepsilon,
$$
where  the inequality used that we are in the $t$th iteration only if we did not exit the loop in the previous iterations, which occurs only if $|\alpha_{t+1}|^2>\varepsilon$. Hence with probability $\varepsilon$ we succeed in step $(10)$ in preparing the quantum state corresponding to $\ket{\psi_{t+1}}$. At this point, one can run the  weak agnostic learner on copies of $\ket{\psi_{t+1}}$  to learn the next parity state $\ket{\phi_{t+1}}$ and update $\Lambda_{T(t)}\rightarrow \Lambda_{T(t+1)}$.

\paragraph{Stage 2: Parameter learning.} In the previous stage we learned the new parity function, but in order to update our  current state $\ket{\widehat{\phi}^{(t)}}=\sum_{i\in [t]}\beta_i\ket{\chi_i}$ to $\ket{\widehat{\phi}^{(t+1)}}=\sum_{i\in [t+1]}\beta_i\ket{\chi_i}$, we also need to learn the coefficient $\beta_{t+1}$.  To do so, one could compute $\beta_{t+1} = \langle \phi_{t+1} | \psi \rangle$ via the Hadamard test using the state preparation unitaries (and their controlled versions) of $\ket{\phi_{t+1}}$ and $\ket{\psi}$. However, we show that we can avoid the need for state preparation unitary access and accomplish the task of agnostic learning by determining $\Lambda_{T(t)} \ket{\psi}$ up to a global phase, using only copies of $\ket{\psi}$. 
This is formally stated in Lemma~\ref{lem:estimate_projection_psi}.  We are now ready to present the algorithm.

\begin{myalgorithm}
\begin{algorithm}[H]
\label{algo:agnostic_boosting}
\caption{Quantum agnostic boosting}
\setlength{\baselineskip}{1.6em} % Adjust line spacing
\DontPrintSemicolon % Removes semicolons if you want a cleaner look
\KwInput{$\varepsilon \in (0,1)$, copies of  $\ket{\psi}$, weak learner $\calA_{\WAL}$ (Def.~\ref{def:WAL}) with promise $\eta(\tau) = \eta_1 \tau^{\eta_2}$.}
%(see Theorem~\ref{thm:agnostic_boosting} for definitions)}
\KwOutput{List of parity states $L = \{\ket{\chi_i}\}_{i \in [\kappa]}$, coefficients $B = \{\beta_i\}_{i \in [\kappa] }$}
% \Goal{}
\vspace{2mm}
Set error parameters $\varepsilon_s = (2/3)^{1/\eta_2 + 1}  \varepsilon^2/16$ and $\varepsilon_p = \varepsilon/2$. \\
\Comment*[l]{Stage 1: Structure learning (Theorem~\ref{thm:structure_learning})}
Set $\ket{\psi_1}=\ket{\psi}$, $\alpha_1 = 1$, $L=\varnothing$. \\
Set parameter $\eta = \eta(\varepsilon_s)$ with $\eta(\cdot)$ being the promise of $\calA_{\WAL}$ (Theorem~\ref{thm:agnostic_boosting}). \\
Set $t_{\max} = 4/(\varepsilon_s \eta(\varepsilon_s))$, $\delta'=\delta/(3t_{\mathrm{max}})$, $\kappa=0$. \\[2mm]
\For{$t=1$ \KwTo $t_{\max}$}{\vspace{2mm}
    Run the weak agnostic learner $\calA_{\WAL}$ on $S_{\WAL}$ copies of $\ket{\psi_{t}}$ to learn a parity state~$\ket{\chi_t}$. \\
    Run \SWAP test on $O(1/\eta^2 \log(t_\mathrm{max}/\delta))$ copies of $\ket{\psi_t},\ket{\chi_t}$ such that with probability $\geq 1-\delta'$, one obtains an $\eta/2$ approximation of $|\langle \psi_t | \chi_t \rangle|^2$. Call the estimate $\nu_t$. \label{algo_step:est_fidelity}\\
    \lIf{$\nu_t < \eta$  \label{algo_step:stop_cond_fidelity}}{break loop.}
    Update $L \leftarrow L \cup \{\ket{\chi_t}\}$ and $\kappa \leftarrow \kappa + 1$. \\
    Set $\Lambda_{T(t)} = \sum_{i=1}^t \ket{\chi_i}\bra{\chi_i}$. \\ 
    Let $\widehat{\alpha}_{t+1}^2$ be an $\varepsilon_s/2$ approximation of $\alpha_{t+1}^2 := \norm{(\id-\Lambda_{T(t)})\ket{\psi}}_2^2$ by measuring $\ket{\psi}$ in the basis~$\{\id-\Lambda_{T(t)} , \Lambda_{T(t)}\}$, $O(1/\varepsilon_s^2\log(1/\delta'))$ many times. \label{algo_step:step_est_alphat}\\
    \lIf{$\widehat{\alpha}_{t+1}^2 < \varepsilon_s$}{break loop. \label{algo_step:stop_cond_ST}}
    Prepare $S_{\WAL}$ copies of $\ket{\psi_{t+1}}= (\id-\Lambda_{T(t)})\ket{\psi}/\alpha_{t+1}$ by measuring $O(S_{\WAL}/\varepsilon_s\log(1/\delta'))$ copies of $\ket{\psi}$ in the basis $\{\id-\Lambda_{T(t)}, \Lambda_{T(t)}\}$ and post-selecting for the first outcome. \\
}
\vspace{2mm}
\Comment*[l]{Stage 2: Parameter learning (Theorem~\ref{thm:parameter_learning})} \label{algo_step:parameter_learning}
Set error parameters $\upsilon_1 = (\varepsilon_p \cdot \eta)/(63\kappa)$, $\upsilon_2 = (\varepsilon_p \cdot \sqrt{\eta})/(18\kappa)$, $\upsilon' = (\varepsilon_p \cdot \sqrt{\eta})/(36\kappa)$. \\
Estimate $\xi_1$ of $|\langle \psi | \chi_1 \rangle|$ using the \textsf{SWAP} test up to error $\upsilon_1$. \\
Estimate $\xi_j$ of $|\langle\psi | \chi_j \rangle|$ for all $j \geq 2$ using the \textsf{SWAP} test up to error $\upsilon_2$. \\
\For{$j=2$ \KwTo $\kappa$}{\vspace{2mm}
    Prepare copies of $\ket{\chi_j^R} = (\ket{\chi_1} + \ket{\chi_j})/\sqrt{2}$ and $\ket{\chi_j^I} = (\ket{\chi_1} + i \ket{\chi_j})/\sqrt{2}$ which are promised to be stabilizer states, using Lemma~\ref{lem:clifford_synthesis}. \\
    Estimate $\gamma_j^R$ of $|\langle \chi_j^R | \psi \rangle|$ using $\textsf{SWAP}$ test up to error $\upsilon'$. \\
    Estimate $\gamma_j^I$ of $|\langle \chi_j^I | \psi \rangle|$ using $\textsf{SWAP}$ test up to error $\upsilon'$. \\
    Set $$a_j = \frac{2 (\gamma_j^R)^2 - \xi_1^2 - \xi_j^2}{2 \xi_1}, \enspace \text{ and } \enspace b_j = \frac{2 (\gamma_j^I)^2 - \xi_1^2 - \xi_j^2}{2 \xi_1}.$$ \\
    Set $\widehat{\beta}_j = a_j + i b_j$. \\
}
Set $\widehat{\beta}_j \leftarrow \widehat{\beta}_j/\beta$, where $\beta = \norm{\widehat{\beta}}_2$. \\
\Return List of $\kappa$ parity states $L=\{\ket{\chi_i}\}_{i}$ and their coefficients $B=\{\widehat{\beta}_i\}_i$.
\end{algorithm}
\end{myalgorithm}

\subsection{Structure learning}
In this section, we will analyze \emph{structure learning}, which is step $1$ of Algorithm~\ref{algo:agnostic_boosting}, and prove the following theorem.
\begin{theorem}[Structure learning]
\label{thm:structure_learning}
Let $\varepsilon_s,\eta_1,\delta\in (0,1)$ and $\eta_2 \geq 1$. Let $\Cc_{\Par}$ be the class of parities and let $\Cc$ be a specified function class. Suppose $\ket{\psi}$ is an unknown $n$-qubit state such that $\calF_{\Cc}(\ket{\psi}) = \opt$. 

Let $\calA_{\WAL}$ be a weak agnostic learner as defined in Theorem~\ref{thm:agnostic_boosting} with sample complexity $S_{\WAL}$, time complexity $T_{\WAL}$, and the corresponding  $\eta:[0,1] \rightarrow [0,1]$ be defined as $\eta(\tau) = \eta_1 \tau^{\eta_2}$. Then, there exists an algorithm that with probability $\geq 1-\delta$ determines a list of $\kappa \leq 4/(\varepsilon_s \cdot \eta(\varepsilon_s))$ parity states $\{\ket{\chi_i}\}_{i \in [\kappa]}$ such that $|\langle \chi_i | \psi \rangle|^2 \geq \varepsilon_s \cdot \eta(\varepsilon_s)/4$ for all $i \in [\kappa]$ and $\ket{\psi}$ can be expressed as
$$
\ket{\psi} = \Lambda_T \ket{\psi} + \alpha_{\kappa+1} \ket{\psi_{\kappa + 1}} \enspace \text{ where } \enspace |\alpha_{\kappa+1}|^2 \cdot \calF_{\Cc}(\ket{\psi_{\kappa + 1}}) < \varepsilon_s',
$$
where $T=\spann\big(\{\ket{\chi_i}\}_{i \in [\kappa]}\big)$, $\ket{\psi_{\kappa+1}}$ is orthogonal to each parity state $\ket{\chi_i}$, and $\varepsilon_s' = (3/2)^{1/\eta_2 + 1}\varepsilon_s$. Additionally, the state $\ket{\widehat{\phi}}:=\Lambda_T \ket{\psi}/\norm{\Lambda_T \ket{\psi}}$ satisfies
\begin{align}
\label{eq:guaranteeofstructurelearning}
|\langle \widehat{\phi} | \psi \rangle|^2 \geq \opt - 2 \sqrt{\varepsilon_s'}.
\end{align}
The overall complexity of this algorithm is as follows
\begin{align*}
&\text{Sample complexity: } \kappa S_{\WAL} + \widetilde{O}(\kappa/\eta(\varepsilon_s)^2 \log(1/\delta)), \\
&\text{Time complexity: } \kappa T_{\WAL} + \widetilde{O}(\kappa n/\eta(\varepsilon_s)^2 \log(1/\delta)).
\end{align*} 
\end{theorem}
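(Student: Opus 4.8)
The plan is to analyze the main loop of Algorithm~\ref{algo:agnostic_boosting} iteration by iteration, tracking the quantity $\alpha_{t+1}^2 = \norm{(\id - \Lambda_{T(t)})\ket{\psi}}_2^2 = 1 - \sum_{i\le t}|\langle\chi_i|\psi\rangle|^2$. First I would establish the \textbf{per-iteration progress bound}: if we reach iteration $t$ (so the loop did not break earlier), then $\alpha_t^2 \ge \varepsilon_s$ and $\calF_{\Cc}(\ket{\psi_t}) \ge \varepsilon_s$ (up to the estimation slack in steps~\ref{algo_step:est_fidelity} and \ref{algo_step:step_est_alphat}), so running $\calA_{\WAL}$ on $\ket{\psi_t}$ yields a parity $\ket{\chi_t}$ with $|\langle\psi_t|\chi_t\rangle|^2 \ge \eta(\varepsilon_s)$. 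The key algebraic step, exactly as sketched in the technical overview, is that since $\ket{\chi_t}$ is orthogonal to all previously chosen parities, $\langle\chi_t|\psi_t\rangle = \langle\chi_t|\psi\rangle/\alpha_t$, hence
\[
|\langle\chi_t|\psi\rangle|^2 = \alpha_t^2 \cdot |\langle\chi_t|\psi_t\rangle|^2 \ge \varepsilon_s\cdot\eta(\varepsilon_s).
\]
This means $\alpha_{t+1}^2 = \alpha_t^2 - |\langle\chi_t|\psi\rangle|^2 \le \alpha_t^2 - \varepsilon_s\eta(\varepsilon_s)$, so after at most $\kappa \le 1/(\varepsilon_s\eta(\varepsilon_s))$ iterations the potential is exhausted; the factor $4$ in the claimed bound $\kappa \le 4/(\varepsilon_s\eta(\varepsilon_s))$ and the factor $1/4$ in $|\langle\chi_i|\psi\rangle|^2 \ge \varepsilon_s\eta(\varepsilon_s)/4$ absorb the estimation errors $\eta/2$ and $\varepsilon_s/2$ and the monotonicity $\eta(\tau)=\eta_1\tau^{\eta_2}$ with $\eta_2\ge 1$.

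Next I would handle the \textbf{stopping condition and the form of the decomposition}. By Fact~\ref{fact:projection}, $\ket{\psi} = \Lambda_T\ket{\psi} + \alpha_{\kappa+1}\ket{\psi_{\kappa+1}}$ with $\ket{\psi_{\kappa+1}}$ orthogonal to every $\ket{\chi_i}$. The loop terminates either because the fidelity estimate $\nu_t < \eta$ (step~\ref{algo_step:stop_cond_fidelity}) or because $\widehat\alpha_{t+1}^2 < \varepsilon_s$ (step~\ref{algo_step:stop_cond_ST}). In the first case, by correctness of the \textsf{SWAP} test and the contrapositive of the weak-learner guarantee, $\calF_{\Cc}(\ket{\psi_{\kappa+1}}) < \eta(\varepsilon_s)^{-1}(\nu_\kappa + \eta/2)$-type bound forces $\calF_{\Cc}(\ket{\psi_{\kappa+1}})$ to be below the threshold $\tau$ at which $\eta(\tau) = \eta(\varepsilon_s)$, i.e.\ $\calF_{\Cc}(\ket{\psi_{\kappa+1}}) < \varepsilon_s'$ after accounting for slack; in the second case $\alpha_{\kappa+1}^2 < \varepsilon_s'$ directly. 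Either way $\alpha_{\kappa+1}^2 \cdot \calF_{\Cc}(\ket{\psi_{\kappa+1}}) < \varepsilon_s'$, using $\alpha_{\kappa+1}^2 \le 1$ and $\calF_{\Cc} \le 1$. The exact constant $\varepsilon_s' = (3/2)^{1/\eta_2+1}\varepsilon_s$ comes from propagating the additive errors through the inverse of $\eta(\tau) = \eta_1\tau^{\eta_2}$ — this is where one inverts the power law, turning an additive slack of a constant factor on $\eta$ into a $(3/2)^{1/\eta_2}$-type factor on $\tau$.

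Then for the \textbf{fidelity guarantee}~\eqref{eq:guaranteeofstructurelearning}, I would argue as follows. Write $\opt = \calF_{\Cc}(\ket{\psi}) = |\langle\psi|\phi_c\rangle|^2$ for the optimal $c \in \Cc$, and decompose $\ket{\phi_c} = \Lambda_T\ket{\phi_c} + (\id-\Lambda_T)\ket{\phi_c}$. The point is that the component of $\ket{\phi_c}$ outside $T$ is small because $\ket{\psi}$'s component outside $T$ has small fidelity with $\Cc$: quantitatively, $|\langle\psi_{\kappa+1}|\phi_c\rangle|^2 \le \calF_{\Cc}(\ket{\psi_{\kappa+1}})$, and combined with $\alpha_{\kappa+1}^2\calF_{\Cc}(\ket{\psi_{\kappa+1}}) < \varepsilon_s'$ and Cauchy--Schwarz one gets $|\langle\psi|\phi_c\rangle - \langle\Lambda_T\psi|\phi_c\rangle| = \alpha_{\kappa+1}|\langle\psi_{\kappa+1}|\phi_c\rangle| \le \sqrt{\varepsilon_s'}$. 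Since $\ket{\widehat\phi} = \Lambda_T\ket{\psi}/\norm{\Lambda_T\ket{\psi}}$ and $\norm{\Lambda_T\ket{\psi}}\le 1$, we get $|\langle\widehat\phi|\psi\rangle| = \norm{\Lambda_T\ket{\psi}} \ge |\langle\Lambda_T\psi|\phi_c\rangle| \ge \sqrt{\opt} - \sqrt{\varepsilon_s'}$, and squaring (using $\sqrt{\opt}\le 1$) yields $|\langle\widehat\phi|\psi\rangle|^2 \ge \opt - 2\sqrt{\varepsilon_s'}$. Finally the \textbf{complexity accounting} is routine: the loop runs $\kappa$ times, each iteration invokes $\calA_{\WAL}$ once ($S_{\WAL}$ samples, $T_{\WAL}$ time), a \textsf{SWAP} test with $\widetilde O(1/\eta(\varepsilon_s)^2\log(1/\delta))$ copies (Lemma~\ref{lem:swap_test}), and a projective measurement with $\widetilde O(1/\varepsilon_s^2\log(1/\delta))$ copies; the state-preparation-by-postselection in the last line costs a further $\widetilde O(S_{\WAL}/\varepsilon_s)$ factor, and a union bound over $3t_{\max}$ events with $\delta' = \delta/(3t_{\max})$ gives overall success probability $\ge 1-\delta$. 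The main obstacle I anticipate is getting the constants to line up \emph{exactly} in the statement — in particular correctly inverting the power-law promise $\eta(\tau)=\eta_1\tau^{\eta_2}$ to convert the additive $\eta/2$ and $\varepsilon_s/2$ estimation errors into the precise $(3/2)^{1/\eta_2+1}$ and $1/4$ factors, and making sure the potential argument still closes with those degraded parameters (hence the factor-$4$ blowup in $\kappa$).
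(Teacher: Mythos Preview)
Your proposal is correct and follows essentially the same approach as the paper's proof: the potential argument on $\alpha_t^2$ (the paper phrases it as $\norm{\Psi_t}_2^2 - \norm{\Psi_{t+1}}_2^2 = |\beta_t|^2$, which is exactly your $\alpha_t^2 - \alpha_{t+1}^2 = |\langle\chi_t|\psi\rangle|^2$), the contrapositive of the weak-learner promise to bound $\calF_{\Cc}(\ket{\psi_{\kappa+1}})$ via inverting $\eta(\tau)=\eta_1\tau^{\eta_2}$, and the fidelity bound via the decomposition $\ket{\psi}=\Lambda_T\ket{\psi}+\alpha_{\kappa+1}\ket{\psi_{\kappa+1}}$. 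Your derivation of \eqref{eq:guaranteeofstructurelearning} is in fact slightly cleaner than the paper's---you observe directly that $|\langle\widehat\phi|\psi\rangle| = \norm{\Lambda_T\ket{\psi}} \ge |\langle\Lambda_T\psi|\phi_c\rangle|$ by Cauchy--Schwarz and then square at the end, whereas the paper first passes to squared overlaps via a difference-of-squares step and then runs a short chain of inequalities through $|\langle\widehat\phi|\varphi\rangle|^2$; both arrive at the same bound.
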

The algorithm corresponding to Theorem~\ref{thm:structure_learning} is stage $1$ of Algorithm~\ref{algo:agnostic_boosting}. The proof follows the analysis in Arunachalam and Dutt~\cite{ad2025structure} which showed how to learn structured stabilizer decompositions of quantum states, and which itself is inspired by the analysis from structured decomposition results from additive combinatorics~\cite{green2006montreal,tulsiani2014quadratic,kim2023cubic}. This can also be viewed as bringing arguments from Feldman's work on classical agnostic boosting~\cite{feldman2009distribution}, applicable to Boolean functions, to the quantum setting.

We break down the proof of the theorem into two parts: we first provide an upper bound on the number of iterations the algorithm runs for, and then we prove Eq.~\eqref{eq:guaranteeofstructurelearning}, the main guarantee of the structure learning theorem.

\paragraph{Iterations.}
We need to upper bound the maximum number of iterations $\kappa$, the boosting algorithm runs for. To this end, we have the following observations regarding the promise of the parity state $\ket{\chi_i}$ learned in each iteration, and the residual vectors (Eq.~\eqref{eq:residual_state}) across consecutive iterations before we stop.

\begin{claim}\label{claim:WAL_promise}
Consider the context of Theorem~\ref{thm:structure_learning}. Let $\delta' \in (0,1)$ be the failure probability of any iteration in Algorithm~\ref{algo:agnostic_boosting}. Using a sample complexity of $O(1/\eta(\varepsilon_s)^2\log(1/\delta'))$ for the $\SWAP$ test in step~\ref{algo_step:est_fidelity} and $O(1/\varepsilon_s^2\log(1/\delta'))$ for estimating probability of preparing $\psi_{t}$ in step~\ref{algo_step:step_est_alphat}, we ensure that for each $t \leq \kappa$, we~have
$$
|\beta_t|^2 = |\langle \chi_t | \psi \rangle|^2 \geq \varepsilon_s \cdot \eta(\varepsilon_s)/4,
$$
with probability $\geq 1-\delta'$.
\end{claim}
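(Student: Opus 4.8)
The plan is to track the progression of quantities across a single iteration $t$ and show that, conditioned on all the \textsf{SWAP}-test and projection-probability estimates being within their stated error bounds (a $1-\delta'$ event by Lemma~\ref{lem:swap_test} and a Chernoff bound), the parity state $\ket{\chi_t}$ that survives the stopping check in step~\ref{algo_step:stop_cond_fidelity} necessarily has large overlap with the \emph{original} state $\ket{\psi}$, not merely with the residual state $\ket{\psi_t}$. First I would establish the ``per-iteration'' chain of inequalities: since we did not break the loop, the estimate $\nu_t \geq \eta = \eta(\varepsilon_s)$, and since $\nu_t$ is an $\eta/2$-approximation of $|\langle \psi_t | \chi_t\rangle|^2$, we get $|\langle \psi_t | \chi_t\rangle|^2 \geq \eta(\varepsilon_s)/2$. (Here I would need to be slightly careful: the factor of $4$ rather than $2$ in the claimed bound presumably absorbs the additional error from $\widehat\alpha_{t+1}^2$ being only an $\varepsilon_s/2$-approximation, so I would carry the estimation errors symbolically and collect constants at the end.)

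Next I would convert the overlap with the residual state into an overlap with $\ket{\psi}$. Recall $\ket{\psi_t} = (\id - \Lambda_{T(t-1)})\ket{\psi}/\alpha_t$ with $\alpha_t^2 = \|(\id-\Lambda_{T(t-1)})\ket{\psi}\|_2^2$, and crucially $\ket{\chi_t}$ is orthogonal to every parity state in $T(t-1)$ — or at least, we may assume so, since if $\calA_{\WAL}$ returned a parity already in the list it would have overlap $0$ with $\ket{\psi_t}$ (which is orthogonal to $T(t-1)$), contradicting $|\langle\psi_t|\chi_t\rangle|^2 \geq \eta(\varepsilon_s)/2 > 0$. Hence $\langle \chi_t | \Lambda_{T(t-1)}\ket{\psi}\rangle = 0$, so
$$
\langle \chi_t | \psi\rangle = \langle \chi_t | (\id - \Lambda_{T(t-1)})\ket{\psi}\rangle = \alpha_t \langle \chi_t | \psi_t\rangle,
$$
and therefore $|\langle \chi_t|\psi\rangle|^2 = \alpha_t^2 \cdot |\langle \psi_t|\chi_t\rangle|^2$. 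It then remains to lower bound $\alpha_t^2$. This is exactly the content of the previous stopping condition: the loop reached iteration $t$, meaning that at the end of iteration $t-1$ we had $\widehat\alpha_t^2 \geq \varepsilon_s$; since $\widehat\alpha_t^2$ is an $\varepsilon_s/2$-approximation of $\alpha_t^2$, we get $\alpha_t^2 \geq \varepsilon_s/2$ (for $t=1$, $\alpha_1 = 1$ trivially suffices). Combining, $|\langle\chi_t|\psi\rangle|^2 \geq (\varepsilon_s/2)(\eta(\varepsilon_s)/2) = \varepsilon_s \eta(\varepsilon_s)/4$, which is the claimed bound, and $\beta_t = \langle \chi_t|\psi\rangle$ by the definition of the projector onto parity states (Eq.~\eqref{eq:projection_state_when_parities}).

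Finally I would discharge the probabilistic bookkeeping: each iteration uses one \textsf{SWAP} test (step~\ref{algo_step:est_fidelity}) requiring $O(1/\eta(\varepsilon_s)^2 \log(1/\delta'))$ copies and one projection-probability estimate (step~\ref{algo_step:step_est_alphat}) requiring $O(1/\varepsilon_s^2 \log(1/\delta'))$ copies, each correct up to the stated error with probability $\geq 1 - \delta'/O(1)$; a union bound over the two sub-estimates within the iteration (and absorbing the constant into $\delta'$) gives failure probability $\leq \delta'$ for the iteration. I expect the only real subtlety — the ``main obstacle'' — to be making sure the orthogonality $\langle\chi_t|\Lambda_{T(t-1)}\ket\psi\rangle = 0$ is genuinely justified rather than assumed; one must argue that $\calA_{\WAL}$ cannot usefully return a previously-found parity, which follows from the strict positivity of the surviving estimate $\nu_t \geq \eta(\varepsilon_s) > 0$ together with the fact that $\ket{\psi_t} \perp T(t-1)$ by construction of the residual state. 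Everything else is routine propagation of additive errors through the two inequalities, with constants chosen so the product lands at $\varepsilon_s\eta(\varepsilon_s)/4$.
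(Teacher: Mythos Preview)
Your proposal is correct and follows essentially the same approach as the paper: derive $|\langle\psi_t|\chi_t\rangle|^2 \geq \eta(\varepsilon_s)/2$ from the surviving \textsf{SWAP}-test estimate, use orthogonality of $\ket{\chi_t}$ to $T(t-1)$ to get $|\langle\chi_t|\psi\rangle|^2 = \alpha_t^2\,|\langle\chi_t|\psi_t\rangle|^2$, and lower bound $\alpha_t^2 \geq \varepsilon_s/2$ from the previous iteration's stopping check. If anything, you are slightly more careful than the paper in explicitly justifying why $\ket{\chi_t}\notin T(t-1)$ (the paper simply asserts the orthogonality).
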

\begin{proof}
Consider iteration $t \geq 1$. We obtain an estimate of $|\langle \phi_t | \psi_t \rangle|^2$, denoted via $\nu_t$, up to error $\eta(\varepsilon_s)/2$ with probability $\geq 1-\delta'/2$ using the $\SWAP$ with $O(1/\eta(\varepsilon_s)^2 \log(1/\delta'))$ sample complexity and $O(n/\eta(\varepsilon_s)^2 \log(1/\delta'))$ time. We also obtain an estimate of $\alpha_t := \norm{(\id - \Lambda_{T(t-1)})\ket{\psi}}_2$, denoted by $\widehat{\alpha}_t$, up to error $\varepsilon_s/2$ with probability $1-\delta'/2$ using $O(1/\varepsilon_s^2 \log(1/\delta'))$ samples and $O(n/\varepsilon_s^2 \log(1/\delta'))$ time. By a union bound, we thus ensure with probability $\geq 1-\delta'$ that 
$$
\Big| |\widehat{\alpha}_{t}| - |\alpha_t| \Big| \leq \varepsilon_s/2, \enspace \text{and} \enspace \Big| \nu_{t} - |\langle \chi_t|\psi_t \rangle|^2 | \Big| \leq \eta(\varepsilon_s)/2.
$$
If we have not exited from the loop i.e., $|\langle \chi_t | \psi_t \rangle| \geq \eta(\varepsilon_s)$ and $\widehat{\alpha}_t^2 \geq \varepsilon_s$, then the true values satisfy
\begin{equation}\label{eq:promise_true_values}
|\langle \chi_t | \psi_t \rangle|^2 \geq \eta(\varepsilon_s), \enspace \text{and} \enspace \alpha_t^2 \geq \varepsilon_s/2.    
\end{equation}
Now, using the definition of the residual state $\ket{\psi_t}$ (Eq.~\eqref{eq:residual_state}), we note that
$$
\langle \chi_t | \psi_t \rangle = \frac{\langle \chi_t | \psi_t \rangle - \langle \chi_t | (\Lambda_{T(t-1)} \ket{\psi})}{\alpha_t} = \frac{\langle \chi_t | \psi \rangle}{\alpha_t} \implies |\langle \chi_t | \psi \rangle|^2 = \alpha_t^2 |\langle \chi_t | \psi_t \rangle|^2 \geq \varepsilon_s \cdot \eta(\varepsilon_s)/4,
$$
where we have used that $\ket{\chi_t}$ is orthogonal to $\Lambda_{T(t-1)} \ket{\psi}$ (which is a linear combination of parity states $\{\ket{\chi_i}\}_{i \in [t-1]}$, distinct from $\ket{\chi_t}$) and used Eq.~\eqref{eq:promise_true_values} for the implication. This gives us the desired~result.
\end{proof}

The above claim allows us to comment on the progress made in each iteration before we stop as follows.
\begin{claim}\label{claim:diff_l2_norm_residuals}
Consider the context of Theorem~\ref{thm:structure_learning}. Let $\delta' \in (0,1)$ be the failure probability of any iteration in Algorithm~\ref{algo:agnostic_boosting}. For each $t \leq \kappa$, we have with probability $\geq 1 - \delta'$ that
$$
\norm{\ket{\psi_t}}_2^2 - \norm{\ket{\psi_{t+1}}}_2^2 \geq \varepsilon_s \eta(\varepsilon_s)/4.
$$    
\end{claim}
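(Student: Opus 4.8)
The statement to prove is Claim~\ref{claim:diff_l2_norm_residuals}: that in each iteration $t \le \kappa$ before stopping, the squared $\ell_2$-norm of the residual state drops by at least $\varepsilon_s \eta(\varepsilon_s)/4$. The key observation is that $\ket{\psi_t}$ and $\ket{\psi_{t+1}}$ are both \emph{normalized} states, so $\norm{\ket{\psi_t}}_2^2 = \norm{\ket{\psi_{t+1}}}_2^2 = 1$, which would make the claim trivially false as literally stated. Hence I read the claim as being about the \emph{unnormalized} residual vectors $\Psi_t = (\id - \Lambda_{T(t-1)})\ket{\psi}$ from Eq.~\eqref{eq:residual_vector}, i.e.\ the intended statement is $\norm{\Psi_t}_2^2 - \norm{\Psi_{t+1}}_2^2 = \alpha_t^2 - \alpha_{t+1}^2 \ge \varepsilon_s \eta(\varepsilon_s)/4$ (and I would flag this normalization convention explicitly at the start of the proof).

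**Key steps.** First I would invoke Fact~\ref{fact:projection}: since $T(t) = \spann(\{\ket{\chi_i}\}_{i\in[t]})$ with the $\ket{\chi_i}$ mutually orthogonal parity states, we have $\alpha_{t+1}^2 = \norm{(\id-\Lambda_{T(t)})\ket{\psi}}_2^2 = 1 - \sum_{i=1}^t |\langle \chi_i|\psi\rangle|^2$, and likewise $\alpha_t^2 = 1 - \sum_{i=1}^{t-1} |\langle \chi_i|\psi\rangle|^2$. Subtracting gives the clean telescoping identity
$$
\alpha_t^2 - \alpha_{t+1}^2 = |\langle \chi_t | \psi \rangle|^2 = |\beta_t|^2.
$$
Second, I apply Claim~\ref{claim:WAL_promise}, which states that with probability $\ge 1-\delta'$ we have $|\beta_t|^2 = |\langle \chi_t|\psi\rangle|^2 \ge \varepsilon_s \eta(\varepsilon_s)/4$ for each $t \le \kappa$. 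Combining the two gives $\alpha_t^2 - \alpha_{t+1}^2 \ge \varepsilon_s \eta(\varepsilon_s)/4$ with probability $\ge 1-\delta'$, which is exactly the claim (under the unnormalized-residual reading). In terms of the normalized states one can equivalently write this as $\alpha_t^2\norm{\ket{\psi_t}}_2^2 - \alpha_{t+1}^2\norm{\ket{\psi_{t+1}}}_2^2 \ge \varepsilon_s\eta(\varepsilon_s)/4$, using Eq.~\eqref{eq:residual_state}.

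**Main obstacle.** There is essentially no hard mathematical content here — the proof is a two-line consequence of Fact~\ref{fact:projection} and the preceding claim. The only real subtlety, and the thing I would be most careful about, is the bookkeeping of the probabilistic guarantee: Claim~\ref{claim:WAL_promise} is itself a per-iteration statement holding with probability $\ge 1-\delta'$ (using the stated sample complexities for the \SWAP\ test and for estimating $\alpha_t^2$), so I should make clear that this claim inherits exactly that same event and failure probability, rather than compounding it. I would also want to double-check that the identity $\alpha_t^2 - \alpha_{t+1}^2 = |\beta_t|^2$ uses only the orthogonality of the newly-added parity $\ket{\chi_t}$ to the span $T(t-1)$ (guaranteed because the weak learner never re-outputs a previously-found parity once we have moved to the residual state, and even if it did, such an iteration would have $\beta_t$ contributing $0$ and we would have exited via the fidelity test) — this is the one place a careful reader might worry, so I would add a sentence addressing it.
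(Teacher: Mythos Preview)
Your proposal is correct and essentially matches the paper's proof: both reduce the claim to $|\beta_t|^2 \ge \varepsilon_s\eta(\varepsilon_s)/4$ via Claim~\ref{claim:WAL_promise}, with the paper obtaining the identity $\norm{\Psi_t}_2^2 - \norm{\Psi_{t+1}}_2^2 = |\beta_t|^2$ by an expand-and-cancel computation (writing $\Psi_t = (\Psi_t - \Psi_{t+1}) + \Psi_{t+1}$ and using orthogonality) rather than your direct telescoping from Fact~\ref{fact:projection}. You are also right that the paper writes $\ket{\psi_t}$ where it means the unnormalized residual $\Psi_t$; the argument only makes sense under that reading.
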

\begin{proof}
By direct computation, we obtain that
\begin{align*}
\norm{\ket{ \psi_{t}}}_2^2 - \norm{\ket{\psi_{t+1}}}_2^2
&= \norm{\ket{ \psi_{t}} - \ket{\psi_{t+1}} + \ket{ \psi_{t+1}}}_2^2 - \norm{\ket{\psi_{t+1}}}_2^2 \\
&= \norm{\ket{ \psi_{t}} - \ket{\psi_{t+1}}}_2^2 + 2 \mathrm{Re}(\langle \psi_{t} - \psi_{t+1} | \psi_{t+1} \rangle) + \norm{\ket{\psi_{t+1}}}_2^2 - \norm{\ket{\psi_{t+1}}}_2^2 \\
&= \norm{\ket{ \psi_{t}} - \ket{\psi_{t+1}}}_2^2 \\
&= \norm{ \beta_t \ket{\chi_t}}_2^2 \\
& \geq \varepsilon_s \eta(\varepsilon_s)/4,
\end{align*}
where we used $\langle \psi_{t} - \psi_{t+1} | \psi_{t+1} \rangle = \beta_t \alpha_{t+1} \langle \chi_t | \psi_{t+1} \rangle = 0$ as $\ket{\psi_{t+1}}$ is orthogonal to $\ket{\chi_t}$ by construction. The final inequality follows from the promise of the weak agnostic learner and the fact that the algorithm did not break in the current and any of the previous $t$ iterations (as was being checked in the algorithm), implying $|\beta_t|^2 \geq \varepsilon_s \eta(\varepsilon_s)/4$ by Claim~\ref{claim:WAL_promise}. This completes the proof.
\end{proof}

Using the above claim, we can now provide an upper bound on $\kappa$.
\begin{claim}\label{claim:ub_kappa}
Consider the context of Theorem~\ref{thm:structure_learning}. The maximum number of iterations $\kappa$ in the structure learning algorithm of Theorem~\ref{thm:structure_learning} with probability $\geq 1-\delta$ is bounded as 
$$
\kappa \leq 4/(\varepsilon_s \eta(\varepsilon_s)).
$$    
\end{claim}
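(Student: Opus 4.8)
The plan is to bound $\kappa$ by a simple potential (or telescoping) argument using the per-iteration progress established in Claim~\ref{claim:diff_l2_norm_residuals}. The natural potential is $\norm{\ket{\psi_t}}_2^2$, but one must be careful: the residual states $\ket{\psi_t}$ are \emph{normalized}, so $\norm{\ket{\psi_t}}_2^2 = 1$ for every $t$, and the statement of Claim~\ref{claim:diff_l2_norm_residuals} as literally written would be vacuous. The intended potential is really the \emph{unnormalized} residual $\Psi_t = (\id - \Lambda_{T(t-1)})\ket{\psi}$ from Eq.~\eqref{eq:residual_vector}, whose squared norm is $\norm{\Psi_t}_2^2 = \alpha_t^2 = 1 - \sum_{i<t}|\beta_i|^2$ by Fact~\ref{fact:projection}. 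So first I would restate the progress bound in terms of $\alpha_t^2$: from Claim~\ref{claim:WAL_promise} we have $|\beta_t|^2 \geq \varepsilon_s\eta(\varepsilon_s)/4$ for each $t \leq \kappa$ with probability $\geq 1-\delta'$, hence $\alpha_t^2 - \alpha_{t+1}^2 = |\beta_t|^2 \geq \varepsilon_s\eta(\varepsilon_s)/4$.

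Second, I would telescope. Since $\alpha_1^2 = \norm{\ket{\psi}}_2^2 = 1$ and $\alpha_{\kappa+1}^2 \geq 0$, summing the per-iteration decrements gives
\begin{align*}
1 = \alpha_1^2 \geq \alpha_1^2 - \alpha_{\kappa+1}^2 = \sum_{t=1}^{\kappa} (\alpha_t^2 - \alpha_{t+1}^2) = \sum_{t=1}^\kappa |\beta_t|^2 \geq \kappa \cdot \frac{\varepsilon_s \eta(\varepsilon_s)}{4},
\end{align*}
which rearranges to $\kappa \leq 4/(\varepsilon_s\eta(\varepsilon_s))$, exactly as claimed. Equivalently, one can phrase this as: the $|\beta_t|^2$ are nonnegative and sum to at most $1$ (since $\sum_i |\beta_i|^2 \leq \norm{\ket{\psi}}_2^2 = 1$ because the $\ket{\chi_i}$ are orthonormal), and each is at least $\varepsilon_s\eta(\varepsilon_s)/4$, so there can be at most $4/(\varepsilon_s\eta(\varepsilon_s))$ of them.

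Third, I would handle the probability bookkeeping: Claim~\ref{claim:WAL_promise} (and Claim~\ref{claim:diff_l2_norm_residuals}) hold per iteration with failure probability $\delta'$, and the algorithm sets $\delta' = \delta/(3t_{\max})$ with $t_{\max} = 4/(\varepsilon_s\eta(\varepsilon_s))$. Taking a union bound over the at most $t_{\max}$ iterations that actually execute, all the per-iteration guarantees hold simultaneously with probability $\geq 1 - t_{\max}\delta' \geq 1-\delta/3 \geq 1-\delta$, which justifies the ``with probability $\geq 1-\delta$'' in the claim statement (and leaves room in the budget for the other failure events elsewhere in the proof of Theorem~\ref{thm:structure_learning}). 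I do not expect any real obstacle here; the only subtlety worth flagging explicitly in the write-up is the normalized-versus-unnormalized distinction, i.e., that the monotone potential is $\alpha_t^2 = \norm{\Psi_t}_2^2$ rather than $\norm{\ket{\psi_t}}_2^2$, and that the progress from Claim~\ref{claim:diff_l2_norm_residuals} should be read through the identity $\norm{\Psi_t}_2^2 - \norm{\Psi_{t+1}}_2^2 = \norm{\beta_t\ket{\chi_t}}_2^2 = |\beta_t|^2$.
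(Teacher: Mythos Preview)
Your proposal is correct and follows essentially the same telescoping-potential argument as the paper. Your observation about the normalized/unnormalized distinction is well taken: the paper's Claims~\ref{claim:diff_l2_norm_residuals} and~\ref{claim:ub_kappa} are stated in terms of $\norm{\ket{\psi_t}}_2^2$ but should be read (as you do) in terms of the unnormalized residuals $\Psi_t$, i.e., $\alpha_t^2$, for the argument to make sense.
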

\begin{proof}
Suppose the algorithm ran for $\kappa$ many iterations before stopping. Then, we have that
$$
1\geq \norm{\ket{\psi_1}}_2^2-\norm{\ket{\psi_{\kappa+1}}}_2^2= \sum_{t=1}^{\kappa} \norm{\ket{\psi_t}}_2^2- \norm{\ket{ \psi_{t+1}}}_2^2 \geq \kappa \varepsilon_s \eta(\varepsilon_s)/4 \implies \kappa \leq 4/(\varepsilon_s \eta(\varepsilon_s)),
$$
where we used that $\ket{\psi_1} = \ket{\psi}$ in the first inequality and Claim~\ref{claim:diff_l2_norm_residuals} in the third inequality. This is true with success probability $\geq 1 - \kappa \delta'$ where $\delta'$ is the failure probability of Claim~\ref{claim:diff_l2_norm_residuals}. Setting $\delta'=\delta \varepsilon_s \eta(\varepsilon_s)/4$ gives us the desired success probability. This proves the desired result.
\end{proof}

\paragraph{Guarantee of the algorithm.}
We now prove the guarantee of the algorithm as promised by Theorem~\ref{thm:structure_learning}.
\begin{proof}[Proof of Theorem~\ref{thm:structure_learning}]
Suppose the algorithm stops after $\kappa$ iterations. From Claim~\ref{claim:ub_kappa}, we have that with probability $1-\delta$, $\kappa \leq 4/(\varepsilon_s \cdot \eta(\varepsilon_s))$ and the output of the algorithm is a set of parity states $\{\ket{\chi_i}\}_{i \in [\kappa]}$. Let $T=\spann(\{\ket{\chi_i}\}_{i \in [\kappa]})$ and denote the corresponding projection of $\ket{\psi}$ on $T$ as 
\begin{equation}
\Lambda_T \ket{\psi} = \sum_{i=1}^\kappa \beta_i\ket{\chi_i},    
\end{equation}
where we have used Eq.~\eqref{eq:projection_state_when_parities} and denoted $\beta_i = \langle \chi_i | \psi \rangle$. By Fact~\ref{fact:projection}, we can express $\ket{\psi}$ as
\begin{equation}\label{eq:interim_decomp_psi}
\ket{\psi} = \Lambda_T \ket{\psi} + \alpha_{\kappa + 1} \ket{\psi_{\kappa+1}},    
\end{equation}
where $\psi_{\kappa+1}$ is the residual state (Eq.~\eqref{eq:residual_state}) after $\kappa$ iterations and is orthogonal to $\Lambda_T \ket{\psi}$. Since the algorithm stopped, we must have (steps~\ref{algo_step:est_fidelity} and \ref{algo_step:step_est_alphat} of Algorithm~\ref{algo:agnostic_boosting}) that
$$
\nu_{\kappa + 1} < \eta(\varepsilon_s) \enspace \text{or} \enspace \widehat{\alpha}_t^2 < \varepsilon_s \implies |\langle \phi_{\kappa + 1} | \psi_{\kappa + 1} \rangle|^2 < (3/2)\eta(\varepsilon_s) \enspace \text{or} \enspace \alpha_t^2 < (3/2)\varepsilon_s
$$
where we have used that $\nu_{\kappa + 1}$ is an $\eta(\varepsilon_s)/2$ estimate of $|\langle \phi_{\kappa + 1} | \psi_{\kappa + 1} \rangle|^2$ and $\widehat{\alpha}_{\kappa + 1}^2$ is an $\varepsilon_s/2$ estimate of $\alpha_{\kappa + 1}^2$. Recall that the promise of $\calA_{\WAL}$ is that if $\calF_{\Cc}(\ket{\psi_t}) \geq \tau$ then it will output $\ket{\chi_t}$ such that $|\langle \chi_t | \psi_t \rangle|^2 \geq \eta(\tau)$. Here, since $|\langle \chi_{\kappa + 1} | \psi_{\kappa + 1} \rangle|^2 < 3\eta(\varepsilon_s)/2$, we must have $\calF_{\Cc}(\ket{\psi_{\kappa+1}}) \leq (3/2)^{1/\eta_2} \varepsilon_s$ (where we have used the expression of $\eta$ assumed in the statement of Theorem~\ref{thm:agnostic_boosting}). This then implies that
$$
\alpha_{\kappa + 1}^2 \cdot \calF_{\Cc}(\ket{\psi_{\kappa + 1}} \leq (3/2)^{1/\eta_2 + 1} \varepsilon_s.
$$
This proves the first part of the theorem. Let us denote $\varepsilon_s' = (3/2)^{1/\eta_2 + 1} \varepsilon_s$ from now onward.

Towards proving the second part, let $\ket{\varphi} \in \calS_{\Cc}$ be the phase state that achieves maximal fidelity with $\ket{\psi}$ i.e., $|\langle \varphi| \psi \rangle|^2 = \opt$. By the decomposition of $\ket{\psi}$ in Eq.~\eqref{eq:interim_decomp_psi}, we  have
\begin{align*}
    |\langle \varphi| \psi \rangle| \leq |\langle \varphi|(\Lambda_T \ket{\psi})| + |\alpha_{\kappa+1}| \cdot | \langle \varphi | \psi_{\kappa + 1} \rangle| < |\langle \varphi|(\Lambda_T \ket{\psi})| + \sqrt{\varepsilon_s'} \implies |\langle \varphi| \psi \rangle| - |\langle \varphi|(\Lambda_T \ket{\psi})| < \sqrt{\varepsilon_s'},
\end{align*}
where we used $|\alpha_{\kappa+1}|\cdot | \langle \varphi | \psi_{\kappa + 1} \rangle| \leq |\alpha_{\kappa+1}| \cdot \sqrt{\calF_{\Cc}(\ket{\psi_{\kappa+1}})} < \sqrt{\varepsilon_s'}$. We can then immediately show
\begin{align}
    & |\langle \varphi| \psi \rangle|^2 - |\langle \varphi|(\Lambda_T \ket{\psi})|^2 = \Big(|\langle \varphi| \psi \rangle| + |\langle \varphi|(\Lambda_T \ket{\psi})|\Big)\Big(|\langle \varphi| \psi \rangle| - |\langle \varphi|(\Lambda_T \ket{\psi})|\Big) \leq 2 \sqrt{\varepsilon_s'} \\
    &\implies |\langle \varphi|(\Lambda_T \ket{\psi})|^2 \geq |\langle \varphi| \psi \rangle|^2 - 2 \sqrt{\varepsilon_s'} = \opt - 2 \sqrt{\varepsilon_s'},
    \label{eq:fidelity_promise_projection}
\end{align}
where we have used $|\langle \varphi| \psi \rangle|, |\langle \varphi|(\Lambda_T \ket{\psi})| \leq 1$ and $|\langle \varphi| \psi \rangle|^2 = \opt$ in the final implication.

In order to solve the task of agnostic learning, define the quantum state $\ket{\widehat{\phi}}=\Lambda_T \ket{\psi}/\norm{\Lambda_T \ket{\psi}}$. Now observe that
\begin{align*}
|\langle \widehat{\phi}|\psi\rangle|^2 = \left|\langle \widehat{\phi}|\Lambda_T \ket{\psi}\rangle + r_\kappa \langle \widehat{\phi}|\psi_{\kappa+1}\rangle \right|^2
&= |\langle \widehat{\phi}| \Lambda_T \ket{\psi}|^2 \\
&= |\langle \widehat{\phi}|\widehat{\phi}\rangle| \cdot \norm{ \Lambda_T \ket{\psi}}_2^2 \\
&\geq |\langle \widehat{\phi}|\varphi\rangle|^2 \cdot \norm{ \Lambda_T \ket{\psi}}_2^2 \\
&=\frac{|\langle\varphi| \Lambda_T \ket{\psi}|^2}{\norm{ \Lambda_T \ket{\psi}}_2^2} \cdot \norm{ \Lambda_T \ket{\psi}}_2^2 \\
&=|\langle\varphi| \Lambda_T \ket{\psi}|^2 \\
&\geq \opt - 2 \sqrt{\varepsilon_s'},
\end{align*}
where the first equality used the definition of $\ket{\psi}$ in the theorem statement, second equality used that $ \Lambda_T \ket{\psi},\ket{\psi_{\kappa+1}}$ are orthogonal, third equality used the definition of $\ket{\widehat{\phi}}$, the inequality works for \emph{every} state (and in particular $\ket{\varphi}$) and the last inequality used Eq.~\eqref{eq:fidelity_promise_projection}. This proves Eq.~\eqref{eq:guaranteeofstructurelearning} in the theorem statement.

To conclude the theorem proof, we observe that the main contribution to the sample complexity is running $\calA_{\WAL}$ in each of the $\kappa$ many iterations which consumes $O(\kappa S_{\WAL})$ overall, $\SWAP$ tests (Claim~\ref{claim:WAL_promise}) which consumes $O(\kappa/\eta(\varepsilon_s)^2\log(\kappa/\delta))$  and estimation of the $\norm{(\id - \Lambda_{T(t)})\ket{\psi}}$ (Claim~\ref{claim:WAL_promise}) which consumes $O(\kappa/\varepsilon_s\log(\kappa/\delta))$. The corresponding time complexities are $\kappa T_{\WAL}$, $O(n\kappa/\eta(\varepsilon_s)^2\log(\kappa/\delta))$, and $O(n \kappa/\varepsilon_s\log(\kappa/\delta))$ respectively. The overall sample and time complexities are then as stated. This completes the proof.
\end{proof}

\subsection{Parameter learning}
In the previous section we showed how to learn a set of parities $\{\ket{\chi_i}\}_{i\in [\kappa]}$ which were used as a basis to construct the state $\ket{\widehat{\phi}}$ that achieved the $\opt-\varepsilon$ fidelity lower bound. In this section, we show how to learn the \emph{coefficients} of these parities in order to construct the state $\ket{\widehat{\phi}}$. Crucial to proving our main theorem is the following lemma. %Given a set of $k$ parity states $\{\ket{\phi_i}\}_{i \in [k]}$,
In particular, for an arbitrary quantum state $\ket{\psi}$ we show how to determine the projection of the state onto  $T = \spann(\{\ket{\chi_i}\}_{i \in [k]})$ i.e., $\Lambda_T \ket{\psi}$ (Eq.~\eqref{eq:projection_state_when_parities}) up to a global phase using only copies of $\ket{\psi}$. We state the lemma in full generality below since we will use it as a blackbox in another context.

\begin{lemma}
\label{lem:estimate_projection_psi}
Let $\kappa \in \mathbb{N}$ and $\varepsilon,\eta,\delta \in (0,1)$. Suppose $\ket{\psi}$ is an unknown $n$-qubit state. Let $\{\ket{\chi_i}\}_{i \in [k]}$ be a list of known parity states such that $|\langle\chi_i | \psi \rangle|^2 \geq \mu$ for all $i\in [k]$. There is an algorithm that, with probability $\geq 1 - \delta$, outputs  $\widehat{\beta}\in \calB_\infty^k$
such~that
$$
\left|\langle \psi | \left(\sum_{i=1}^k \widehat{\beta_i} \ket{\chi_i}\right)\right|^2 \geq |\langle \psi | (\Lambda_T\ket{\psi})|^2 - \varepsilon,
$$
where $T = \spann(\{\ket{\chi_i}\}_{i \in [k]})$. Additionally, $\norm{\widehat{\beta}}_2^2 \leq |\langle \psi | (\Lambda_T \ket{\psi})|^2 + \varepsilon$. The complexity of the algorithm is: 
\begin{align*}
&\text{ Sample complexity: }O(k^3/(\varepsilon^2 \cdot \mu^2) \log(k/\delta))\\
&\text{ Time complexity: } O(k^3 n^2/(\varepsilon^2 \cdot \mu^2) \cdot \log(k/\delta)).
\end{align*}
\end{lemma}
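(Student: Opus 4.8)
The plan is to reconstruct, up to one global phase, the complex coefficients $\beta_i:=\langle\chi_i|\psi\rangle$ that define the projection $\Lambda_T\ket{\psi}=\sum_{i=1}^k\beta_i\ket{\chi_i}$ (recall Eq.~\eqref{eq:projection_state_when_parities} and that the parity states $\ket{\chi_i}$ are orthonormal). A $\SWAP$ test (Lemma~\ref{lem:swap_test}) on copies of $\ket{\psi}$ and $\ket{\chi_i}$ only reveals the modulus $|\beta_i|^2$; however, since $\Lambda_T\ket{\psi}$ only has to be matched up to a global phase, it suffices to learn $|\beta_1|$ together with the relative phases $\overline{\beta_1}\beta_j/|\beta_1|$ for $j\geq 2$, and the promise $|\langle\chi_i|\psi\rangle|^2\geq\mu$ is precisely what will keep this reconstruction numerically stable.

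First I would record the interference identities underlying Stage~2 of Algorithm~\ref{algo:agnostic_boosting}. Since $\ket{\chi_1}$ and $\ket{\chi_j}$ are orthogonal parity states, $\ket{\chi_j}=P\ket{\chi_1}$ for a $Z$-type Pauli $P$, so Lemma~\ref{lem:superpos_stabilizers} guarantees that $\ket{\chi_j^R}=(\ket{\chi_1}+\ket{\chi_j})/\sqrt2$ and $\ket{\chi_j^I}=(\ket{\chi_1}+i\ket{\chi_j})/\sqrt2$ are stabilizer states, hence preparable with $O(n^2)$ Clifford gates via Lemma~\ref{lem:clifford_synthesis}. Expanding the overlaps,
\[
|\langle\chi_j^R|\psi\rangle|^2=\tfrac12\big(|\beta_1|^2+|\beta_j|^2\big)+\mathrm{Re}(\overline{\beta_1}\beta_j),\qquad |\langle\chi_j^I|\psi\rangle|^2=\tfrac12\big(|\beta_1|^2+|\beta_j|^2\big)+\mathrm{Im}(\overline{\beta_1}\beta_j).
\]
Hence, feeding $\SWAP$-test estimates $\xi_1\approx|\beta_1|$, $\xi_j\approx|\beta_j|$, $\gamma_j^R\approx|\langle\chi_j^R|\psi\rangle|$, $\gamma_j^I\approx|\langle\chi_j^I|\psi\rangle|$ into the quantities $a_j,b_j$ of the algorithm makes them consistent estimators of $\mathrm{Re}(\overline{\beta_1}\beta_j)/|\beta_1|$ and $\mathrm{Im}(\overline{\beta_1}\beta_j)/|\beta_1|$, so that $\widehat{\beta}_j=a_j+ib_j\approx\overline{\beta_1}\beta_j/|\beta_1|=e^{-i\arg(\beta_1)}\beta_j$ and, with $\widehat{\beta}_1:=\xi_1$, also $\widehat{\beta}_1\approx|\beta_1|=e^{-i\arg(\beta_1)}\beta_1$. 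In the error-free idealization this gives $\widehat{\beta}=e^{-i\arg(\beta_1)}(\beta_1,\dots,\beta_k)$, whence $\sum_i\widehat{\beta}_i\ket{\chi_i}=e^{-i\arg(\beta_1)}\Lambda_T\ket{\psi}$; using orthonormality of the $\ket{\chi_i}$ this already yields $|\langle\psi|(\sum_i\widehat{\beta}_i\ket{\chi_i})|^2=|\langle\psi|(\Lambda_T\ket{\psi})|^2$ and $\norm{\widehat{\beta}}_2^2=\norm{\Lambda_T\ket{\psi}}_2^2=\langle\psi|(\Lambda_T\ket{\psi})$, with the final rescaling $\widehat{\beta}\leftarrow\widehat{\beta}/\norm{\widehat{\beta}}_2$ only rescaling both quantities consistently while keeping $\widehat{\beta}\in\calB_\infty^k$.

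The main obstacle is the quantitative error analysis: propagating the finite-precision $\SWAP$-test estimates through the nonlinear expressions for $a_j,b_j$, and especially through the division by $\xi_1$. Here the promise $|\langle\chi_i|\psi\rangle|^2\geq\mu$ is essential, since it forces $\xi_1=\Omega(\sqrt{\mu})$, so the division amplifies estimation error by at most $O(1/\mu)$ (the numerators being bounded because every modulus is at most $1$). Running the estimates $\xi_1,\{\xi_j\}_{j\geq 2},\{\gamma_j^R\}_{j\geq 2},\{\gamma_j^I\}_{j\geq 2}$ with suitably chosen $\SWAP$-test precisions (the $\upsilon_1,\upsilon_2,\upsilon'$ in the algorithm, each of order $\poly(\varepsilon/k,\mu)$), a Lipschitz-type bound gives $|\widehat{\beta}_j-e^{-i\arg(\beta_1)}\beta_j|=O(\varepsilon/k)$ for every $j$, hence $\norm{\widehat{\beta}-e^{-i\arg(\beta_1)}(\beta_1,\dots,\beta_k)}_2=O(\varepsilon)$; since the $\ket{\chi_i}$ are orthonormal, this controls both $\big|\langle\psi|(\sum_i\widehat{\beta}_i\ket{\chi_i})\rangle-e^{-i\arg(\beta_1)}\langle\psi|(\Lambda_T\ket{\psi})\rangle\big|$ and $\big|\norm{\widehat{\beta}}_2^2-\norm{\Lambda_T\ket{\psi}}_2^2\big|$, and a squaring/triangle-inequality step then yields the two claimed inequalities. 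A union bound over the $O(k)$ $\SWAP$ tests, each run at failure probability $O(\delta/k)$, gives success probability $\geq 1-\delta$. For the complexity, by Lemma~\ref{lem:swap_test} each of the $O(k)$ estimates costs $O(\log(k/\delta)/\upsilon^2)=O\!\big(k^2\log(k/\delta)/(\varepsilon^2\mu^2)\big)$ copies and $O(n\log(k/\delta)/\upsilon^2)$ time, plus $O(n^2)$ gates to prepare each interference state; multiplying by $O(k)$ gives the stated sample complexity $O\!\big(k^3\log(k/\delta)/(\varepsilon^2\mu^2)\big)$ and time complexity $O\!\big(k^3 n^2\log(k/\delta)/(\varepsilon^2\mu^2)\big)$.
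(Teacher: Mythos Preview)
Your proposal is correct and follows essentially the same approach as the paper: extract the relative phases $e^{-i\arg(\beta_1)}\beta_j$ via $\SWAP$ tests on the interference stabilizer states $\ket{\chi_j^R},\ket{\chi_j^I}$, propagate the estimation error through the division by $\xi_1\geq\Omega(\sqrt{\mu})$, and sum the $O(\varepsilon/k)$ per-coordinate errors. One small point: the final rescaling $\widehat{\beta}\leftarrow\widehat{\beta}/\norm{\widehat{\beta}}_2$ you mention is not part of this lemma's proof in the paper (it appears only later in Theorem~\ref{thm:parameter_learning}); the two inequalities in the lemma are established for the \emph{unnormalized} $\widehat{\beta}$, and indeed normalizing could break the second inequality $\norm{\widehat{\beta}}_2^2\leq|\langle\psi|(\Lambda_T\ket{\psi})|^2+\varepsilon$ when $\norm{\Lambda_T\ket{\psi}}_2^2$ is small.
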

We now state the main theorem, which is given as stage $2$ of Algorithm~\ref{algo:agnostic_boosting}. 
\begin{theorem}[Parameter learning]
\label{thm:parameter_learning}
Let $\kappa \in \mathbb{N}$ and $\varepsilon_p,\mu,\delta \in (0,1)$. Suppose $\ket{\psi}$ is an unknown state such that $\calF_{\Cc}(\ket{\psi}) = \opt$. Let $\{\ket{\chi_i}\}_{i \in [\kappa]}$ be a list of known parity states such that $|\langle\chi_i | \psi \rangle|^2 \geq \mu$ and $\ket{\phi} := \Lambda_T\ket{\psi}/\norm{\Lambda_T \ket{\psi}}$ where $T=\spann\big(\{\ket{\chi_i}\}_{i \in [\kappa]}\big)$, satisfies
$$
|\langle \phi | \psi \rangle |^2 \geq \opt - \varepsilon_p.
$$
Then, there exists an algorithm outputs coefficients $\{c_i\}_{i \in [\kappa]}$ such that $\ket{\widehat{\phi}} = \sum_{i=1}^\kappa c_i \ket{\chi_i}$ satisfies
$$
|\langle \widehat{\phi} | \psi \rangle|^2 \geq \opt - 2 \varepsilon_p,
$$
with probability at least $1-\delta$. The complexity of the algorithm is as~follows:
\begin{align*}
    &\text{ Sample complexity: }\widetilde{O}(\kappa/(\varepsilon_p^2 \cdot \mu^6) \log(1/\delta))\\
    &\text{ Time complexity: }  \widetilde{O}(\kappa n^2/( \varepsilon_p^2 \cdot \mu^6) \log(1/\delta)).
\end{align*}
\end{theorem}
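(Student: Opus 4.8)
The plan is to recognize that the algorithm claimed by the theorem is exactly Stage~2 of Algorithm~\ref{algo:agnostic_boosting}, and that essentially all of its correctness and complexity is already packaged in Lemma~\ref{lem:estimate_projection_psi}; the only genuinely new step is to renormalize the coefficients that lemma produces and argue that renormalization does not destroy the fidelity. Write $p := \|\Lambda_T\ket{\psi}\|_2^2 = \langle\psi|\Lambda_T|\psi\rangle$, which by Eq.~\eqref{eq:projection_state_when_parities} equals $\sum_{i\in[\kappa]}|\langle\chi_i|\psi\rangle|^2$, and which, since $\ket{\phi}=\Lambda_T\ket{\psi}/\|\Lambda_T\ket{\psi}\|$, also equals $|\langle\phi|\psi\rangle|^2$. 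Hence the two hypotheses of the theorem translate into $p\ge\opt-\varepsilon_p$ and $p\ge\mu$ (the latter already from the single term $|\langle\chi_1|\psi\rangle|^2\ge\mu$). I will treat $p$ as the quantity ``$|\langle\psi|(\Lambda_T\ket{\psi})|^2$'' appearing in Lemma~\ref{lem:estimate_projection_psi}.

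Next I would invoke Lemma~\ref{lem:estimate_projection_psi} with $k=\kappa$, the given $\mu$, error parameter $\varepsilon:=\varepsilon_p\mu/2$, and failure probability $\delta$, obtaining $\widehat\beta\in\calB_\infty^\kappa$ such that the unnormalized vector $v:=\sum_{i\in[\kappa]}\widehat\beta_i\ket{\chi_i}$ satisfies
\[
|\langle\psi|v\rangle|^2\ \ge\ p-\varepsilon \qquad\text{and}\qquad \|v\|_2^2=\|\widehat\beta\|_2^2\ \le\ p+\varepsilon,
\]
where the middle equality uses orthonormality of the parity states (this is precisely what the SWAP-test estimates $\xi_j,\gamma_j^R,\gamma_j^I$ together with the polarization identities for $a_j,b_j$ accomplish in Stage~2, and the choices of $\upsilon_1,\upsilon_2,\upsilon'$ are exactly those needed to realize this $\varepsilon$). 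Then set $c_i:=\widehat\beta_i/\|\widehat\beta\|_2$ and $\ket{\widehat\phi}:=\sum_i c_i\ket{\chi_i}=v/\|v\|_2$, a unit vector; it is well defined since $\|v\|_2^2\ge|\langle\psi|v\rangle|^2\ge p-\varepsilon\ge\mu/2>0$. Now
\[
|\langle\psi|\widehat\phi\rangle|^2=\frac{|\langle\psi|v\rangle|^2}{\|v\|_2^2}\ \ge\ \frac{p-\varepsilon}{p+\varepsilon},
\]
and it suffices to check $\frac{p-\varepsilon}{p+\varepsilon}\ge p-\varepsilon_p$, which after clearing the positive denominator is equivalent to $p(1-p)-\varepsilon(1+p)+\varepsilon_p p+\varepsilon_p\varepsilon\ge 0$; since $0\le p\le1$ the first and last summands are nonnegative, and $\varepsilon_p p-\varepsilon(1+p)\ge\varepsilon_p\mu-2\varepsilon=0$ by the choice $\varepsilon=\varepsilon_p\mu/2$ together with $p\ge\mu$. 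Combining with $p\ge\opt-\varepsilon_p$ yields $|\langle\psi|\widehat\phi\rangle|^2\ge p-\varepsilon_p\ge\opt-2\varepsilon_p$, as required.

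For the complexity I would substitute $\varepsilon=\varepsilon_p\mu/2$ and $k=\kappa$ into the bounds of Lemma~\ref{lem:estimate_projection_psi}, giving $O(\kappa^3/(\varepsilon_p^2\mu^4)\log(\kappa/\delta))$ samples and $O(\kappa^3 n^2/(\varepsilon_p^2\mu^4)\log(\kappa/\delta))$ time, and then use the a priori bound $\kappa\le1/\mu$ — valid because $\sum_{i\in[\kappa]}|\langle\chi_i|\psi\rangle|^2\le\|\psi\|_2^2=1$ while each term is $\ge\mu$ — to replace $\kappa^3$ by $\kappa/\mu^2$, obtaining the claimed $\widetilde O(\kappa/(\varepsilon_p^2\mu^6)\log(1/\delta))$ samples and $\widetilde O(\kappa n^2/(\varepsilon_p^2\mu^6)\log(1/\delta))$ time (the stray $\log(1/\mu)$ is absorbed into $\widetilde O$). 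The step I expect to be the crux is the robustness-of-normalization argument: Lemma~\ref{lem:estimate_projection_psi} only controls the \emph{unnormalized} $v$ through two separate additive errors, so $|\langle\psi|v\rangle|^2/\|v\|_2^2$ can be distorted multiplicatively; one must both bound the denominator away from $0$ via $p\ge\mu$ and take the lemma's error as small as $\Theta(\varepsilon_p\mu)$ rather than $\Theta(\varepsilon_p)$, and it is precisely this extra $\mu$ (combined with $\kappa\le1/\mu$) that produces the $\mu^{6}$ in the final bounds. (Were one to reprove Lemma~\ref{lem:estimate_projection_psi} itself, the delicate point there is recovering $\mathrm{Re}\,\langle\chi_j|\psi\rangle$ and $\mathrm{Im}\,\langle\chi_j|\psi\rangle$ from magnitude-only SWAP tests on the stabilizer superpositions $(\ket{\chi_1}+i^\ell\ket{\chi_j})/\sqrt2$, after fixing the global phase so that $\langle\chi_1|\psi\rangle$ is real and positive.)
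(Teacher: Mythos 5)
Your approach matches the paper's at a high level: invoke Lemma~\ref{lem:estimate_projection_psi}, normalize the resulting coefficient vector, show that normalization only degrades fidelity by an additive $\varepsilon_p$, then substitute into the lemma's complexity bounds. But there is a genuine slip in the middle that happens not to affect the final conclusion, plus a different parameter strategy than the paper's, both worth flagging.

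The slip: you define $p := \|\Lambda_T\ket{\psi}\|_2^2 = \langle\psi|\Lambda_T|\psi\rangle$ and then ``treat $p$ as the quantity $|\langle\psi|(\Lambda_T\ket{\psi})\rangle|^2$'' of Lemma~\ref{lem:estimate_projection_psi}. These are not equal: since $\langle\psi|\Lambda_T|\psi\rangle$ is already a nonnegative real, $|\langle\psi|(\Lambda_T\ket{\psi})\rangle|^2 = \big(\langle\psi|\Lambda_T|\psi\rangle\big)^2 = p^2$. So the lemma actually gives $|\langle\psi|v\rangle|^2 \geq p^2 - \varepsilon$, not the $p - \varepsilon$ you wrote (the corresponding upper bound on $\|\widehat\beta\|_2^2$, which the lemma's \emph{proof} establishes at $p + \varepsilon$ despite the lemma's wording, is the one your denominator uses). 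Consequently the ratio is lower bounded by $\frac{p^2-\varepsilon}{p+\varepsilon}$, not $\frac{p-\varepsilon}{p+\varepsilon}$, and the latter is strictly larger — your intermediate inequality claims more than the lemma supports. The reason your proof nonetheless reaches the right endpoint is that $(p-\varepsilon_p)(p+\varepsilon) = p^2 + p\varepsilon - \varepsilon_p p - \varepsilon_p\varepsilon$, so the $p^2$ in the corrected numerator cancels against the $p^2$ on the right, and the sufficient condition you actually verified, $\varepsilon_p p \geq \varepsilon(1+p)$ (met by $\varepsilon = \varepsilon_p\mu/2$ and $p\geq\mu$), is exactly what is needed. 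You should rewrite the derivation with the $p^2$ in place; as a side effect, your claim $\|v\|_2^2 \geq p-\varepsilon \geq \mu/2$ also no longer serves to prove well-definedness, since $p^2 - \varepsilon$ can be negative when $\mu < \varepsilon_p/2$.

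On the parameter strategy, your route is genuinely different from the paper's. You pick the lemma's error as $\varepsilon = \Theta(\varepsilon_p\mu)$, which gives $O\big(\kappa^3/(\varepsilon_p^2\mu^4)\big)$ from Lemma~\ref{lem:estimate_projection_psi}, and then recover the stated $\kappa/\mu^6$ by the observation $\kappa\mu \leq \sum_i|\langle\chi_i|\psi\rangle|^2 \leq 1$, hence $\kappa \leq 1/\mu$ and $\kappa^3 \leq \kappa/\mu^2$. The paper instead bakes a $\kappa$ factor into its error parameter, choosing $\gamma = \varepsilon_p\kappa\mu^2/2$, so that the $\kappa^3$ from the lemma's bound cancels down to $\kappa$ directly, and uses the lower bound $\|\Lambda_T\ket{\psi}\|_2^2 \geq \kappa\mu$ in its estimate rather than the cruder $p\geq\mu$. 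Both routes hit the same $\widetilde O(\kappa/(\varepsilon_p^2\mu^6))$, and your $\kappa\leq 1/\mu$ trick is a clean way to avoid carrying $\kappa$ inside the error parameter; just note that it is a step the paper does not take, so you should state it explicitly as you did.
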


\begin{proof}
Recall that in~Eq.~\eqref{eq:projection_state}  we defined $\Lambda_T \ket{\psi} = \sum_{i=1}^\kappa \beta_i \ket{\chi_i}$, where $T = \spann(\{\ket{\chi_i}\}_{i \in [\kappa]})$ and $\beta_i = \langle \chi_i | \psi \rangle$. We are promised $|\langle \psi | \chi_i \rangle|^2 \geq \mu$ for all $i \in [\kappa]$ and that the state $\ket{\phi} = \Lambda_T\ket{\psi}/\norm{\Lambda_T \ket{\psi}}$ solves the task of agnostic learning i.e., $|\langle \widehat{\phi} | \psi \rangle|^2 \geq \opt - \varepsilon_p$. The idea is to then use Lemma~\ref{lem:estimate_projection_psi} to obtain an approximation of $\ket{\widehat{\phi}}$ which will be the eventual output of the agnostic learner.

Let $\gamma \in (0,1)$ be a parameter to be decided later. Using Lemma~\ref{lem:estimate_projection_psi} with $O(\kappa^3/(\gamma^2 \cdot \mu^2) \log(\kappa/\delta))$ sample complexity and $O(\kappa^3 n^2/( \gamma^2 \cdot \mu)^2 \log(\kappa/\delta))$ time complexity, we can determine a list of coefficients $\{\widehat{\beta}_i\}_{i \in [\kappa]}$ such that
\begin{equation}
\label{eq:interim_fidelity}
\left|\langle \psi \Big| \Big(\sum_{i=1}^k \widehat{\beta}_i \ket{\chi_i} \Big) \right|^2 \geq |\langle \psi | (\Lambda_T \ket{\psi}) \rangle|^2 - \gamma,    
\end{equation}
and we are guaranteed
\begin{align}\label{eq:ub_L2_norm_beta}
    \norm{\widehat{\beta}}_2^2 \leq |\langle \psi | (\Lambda_T \ket{\psi}) \rangle|^2 + \gamma .
\end{align}
Consider the state $\ket{\widehat{\phi}}$ defined as
\begin{equation}
    \ket{\widehat{\phi}} = \sum_{i=1}^\kappa c_i \ket{\chi_i},
\end{equation}
where $c_i = \widehat{\beta}_i/\norm{\widehat{\beta}}_2, \, \forall i \in [\kappa]$. Note that $\ket{\widehat{\phi}}$ is a valid normalized state as $\norm{\ket{\widehat{\phi}}} = 1$. We then have
\begin{align*}
    |\langle \psi | \widehat{\phi} \rangle|^2 = \frac{\left|\langle \psi \Big| \Big(\sum_{i=1}^k \widehat{\beta}_i \ket{\chi_i} \Big) \right|^2}{\norm{\widehat{\beta}}_2^2} 
    &\geq \frac{|\langle \psi | (\Lambda_T \ket{\psi}) \rangle|^2 - \gamma}{|\langle \psi | (\Lambda_T \ket{\psi}) \rangle|^2 + \gamma} \\
    &= \frac{|\langle \psi | (\Lambda_T \ket{\psi}) \rangle|^2 - \gamma}{\norm{\Lambda_T \ket{\psi}}_2^4 + \gamma} \\
    &\geq \frac{|\langle \psi | (\Lambda_T \ket{\psi}) \rangle|^2 - \gamma}{\norm{\Lambda_T \ket{\psi}}_2^2 + \gamma} \\
    &= \frac{|\langle \psi | (\Lambda_T \ket{\psi}) \rangle|^2 - \gamma}{\norm{\Lambda_T \ket{\psi}}_2^2\Big(1 + \frac{\gamma}{\norm{\Lambda_T \ket{\psi}}_2^2}\Big)} \\
    &\geq \frac{|\langle \psi | (\Lambda_T \ket{\psi}) \rangle|^2 - \gamma}{\norm{\Lambda_T \ket{\psi}}_2^2} \Big(1 - \frac{\gamma}{\norm{\Lambda_T \ket{\psi}}_2^2}\Big) \\
    &\geq \frac{|\langle \psi | (\Lambda_T \ket{\psi}) \rangle|^2}{\norm{\Lambda_T \ket{\psi}}_2^2} - \frac{\gamma}{\norm{\Lambda_T \ket{\psi}}_2^2} - \frac{\gamma|\langle \psi | (\Lambda_T \ket{\psi}) \rangle|^2}{\norm{\Lambda_T \ket{\psi}}_2^4} \\
    &\geq \opt - \varepsilon_p - \frac{\gamma}{\kappa \mu} - \frac{\gamma}{\kappa^2\mu^2} \\
    &\geq \opt - \varepsilon_p - 2 \frac{\gamma}{\kappa \mu^2},
\end{align*}
where we used Eq.~\eqref{eq:interim_fidelity} and Eq.~\eqref{eq:ub_L2_norm_beta} in the second inequality in the first line, the fact that $|\langle \psi |(\Lambda_T \ket{\psi})| = \norm{\Lambda_T \ket{\psi}}_2^2$ (which can be observed from Fact~\ref{fact:projection}) in the second line, $\norm{\Lambda_T \ket{\psi}}_2^2 \leq 1$ in the third line, $1/(1+x) \geq 1 - x, \,\,\forall x \geq 0$ in the fifth line,
Eq.~\eqref{eq:fidelity_promise_projection} in the seventh inequality along with the observation
$$
\norm{\Lambda_T \ket{\psi}}_2^2 = \sum_{i\in [\kappa]} |\beta_i|^2 \geq \kappa \mu,
$$
since we are given $|\beta_i|^2 \geq \mu$ for all $i \in [\kappa]$, and noting that $\mu \in (0,1]$ in the final inequality. Setting $\gamma = \varepsilon_p \kappa \mu^2 /2$ gives us the desired result. The sample complexity and the time complexity is due to the use of Lemma~\ref{lem:estimate_projection_psi} with error parameter of $\gamma$ as decided. 
\end{proof}

It remains to prove Lemma~\ref{lem:estimate_projection_psi} which we do now.

\begin{proof}[{Proof of Lemma~\ref{lem:estimate_projection_psi}}]
Recall from Eq.~\eqref{eq:projection_state_when_parities} that the projection of $\ket{\psi}$ onto $\spann(\{\ket{\chi_i}\}_i)$  is
\begin{equation*}
    \Lambda_T \ket{\psi} = \sum_{i=1}^k \beta_i \ket{\chi_i} \enspace{ where } \enspace \beta_i =  \langle \chi_i | \psi \rangle.
\end{equation*}
Let $\tau = |\langle \psi | \left(\sum_{i=1}^k \beta_i \ket{\chi_i}\right)|^2$. Denoting $\beta_1 = |\beta_1|e^{i \theta_1}$ where $\theta_1$ is the angle corresponding to the phase of $\beta_1$, we observe the following is true as well
\begin{equation}
\label{eq:tau_pseudo_global_phase}
    \tau = \left|\langle \psi \Big| \left(\sum_{i=1}^k \beta_i e^{-i \theta_1} \ket{\chi_i}\right)\right|^2,
\end{equation}
since $e^{-i \theta_1}$ is simply a global phase. Let us denote $\widetilde{\beta}_i = \beta_i e^{-i \theta_1}$. Note that in particular, $\widetilde{\beta}_1 = |\beta_1|$. From Eq.~\eqref{eq:tau_pseudo_global_phase}, we have that the coefficients $\widetilde{\beta}_i$ also satisfy
\begin{equation}\label{eq:fidelity_betatilde}
|\langle \psi | (\Lambda_T\ket{\psi})|^2 = \left|\langle \psi | \left(\sum_{i=1}^k \widetilde{\beta_i} \ket{\chi_i}\right)\right|^2 =\tau.   
\end{equation}
It is then enough to obtain estimates of $\widetilde{\beta}_i$, which we will denote as $\widehat{\beta}_i$ that satisfies the guarantee of the theorem. To this end, let $\upsilon_1 \in (0,1)$ be a fixed error parameter to be decided later and in particular we will choose it to be $\leq \sqrt{\mu}/2$ as will be seen shortly. We will use the following algorithm to estimate $\widetilde{\beta}_j$.

For $j=1$, obtain an estimate $\widehat{\beta}_1$ of $|\beta_1| = |\langle \chi_1 | \psi \rangle|$ using the \SWAP test that uses $O(1/\upsilon_1^2 \log(k/\delta))$ copies of $\ket{\psi}$ and with probability at least $1-  \delta/k$, outputs an estimate of $|\beta_1|$ up to error $\upsilon_1$ .
    
For $j\geq2$, we obtain estimates $\widehat{\beta}_j$ of $\widetilde{\beta}_j$ using the following procedure. For all $j\geq 2$, define
\begin{equation}\label{eq:stabilizers_phi1_phij}
    \ket{\chi_j^R} = \frac{\ket{\chi_1} + \ket{\chi_j}}{\sqrt{2}}, \quad \ket{\chi_j^I} = \frac{\ket{\chi_1} + i \ket{\chi_j}}{\sqrt{2}}.    
\end{equation}
Note that $\langle \chi_1 | \chi_j \rangle = 0$ as these are distinct parities, hence the ``real" and ``imaginary" quantum states $\ket{\chi_j^R} ,\ket{\chi_j^I} $ are valid quantum states. Moreover there exists an $a \in \{0,1\}^n$ such that $\ket{\chi_j} = Z^a \ket{\chi_1}$, which implies that $\ket{\chi_j^R}$ and $\ket{\chi_j^I}$ are stabilizer states (using Lemma~\ref{lem:superpos_stabilizers}), which can be prepared efficiently using Lemma~\ref{lem:clifford_synthesis}.

We now observe that
\begin{align}
    |\langle \chi_j^R | \psi \rangle|^2 = \frac{1}{2} \left| \langle \chi_1 | \psi \rangle + \langle \chi_j | \psi \rangle \right|^2 &= \frac{1}{2}\left[ |\langle \chi_1 | \psi \rangle|^2 + |\langle \chi_j | \psi \rangle|^2 + 2\mathrm{Re}(\langle \chi_j | \psi \rangle \overline{\langle \chi_1 | \psi \rangle}) \right] \\
    &= \frac{1}{2}\left[ |\beta_1|^2 + |\beta_j|^2 + 2\mathrm{Re}(\beta_j |\beta_1| e^{-i \theta_1}) \right],
\end{align}
which after rearrangement gives
\begin{equation}\label{eq:est_real_betaj}
      \frac{2 |\langle \chi_j^R | \psi \rangle|^2 - |\langle \chi_1 | \psi \rangle|^2 - |\langle \chi_j | \psi \rangle|^2}{2|\langle \chi_1 | \psi \rangle|}=\mathrm{Re}(\beta_j e^{-i \theta_1})=\mathrm{Re(\widetilde{\beta}_j)}, 
\end{equation}
where the second equality is by definition of $\widetilde{\beta}_j$.      Thus one obtain an estimate $\mathrm{Re}(\widehat{\beta_j})$ of $\mathrm{Re(\widetilde{\beta}_j)} $ using the expression above     and estimating each term $|\langle \chi_j^R | \psi \rangle|^2$, $|\langle \chi_1 | \psi \rangle|^2$, $|\langle \chi_j | \psi \rangle|^2$ by using the \SWAP test between corresponding states in each term. Similarly, we can obtain an estimate $\mathrm{Im}(\widehat{\beta_j})$ of $\mathrm{Im(\widetilde{\beta}_j)} = \mathrm{Im}(\beta_j e^{-i \theta_1})$ using the expression
\begin{equation}\label{eq:est_imag_betaj}
    \mathrm{Im}(\beta_j e^{-i \theta_1}) = \frac{2 |\langle \chi_j^I | \psi \rangle|^2 - |\langle \chi_1 | \psi \rangle|^2 - |\langle \chi_j | \psi \rangle|^2}{2|\langle \chi_1 | \psi \rangle|},
\end{equation}
and again estimating each term involved with the \SWAP test. 

In an ideal world when one can run the $\SWAP$ test without errors, the above procedures suffice to estimate $\widetilde{\beta_j}$ and Eq.~\eqref{eq:fidelity_betatilde} implies the theorem statement. However, note that $\SWAP$ test has measurement errors and we discuss the errors up to which we should estimate these terms and the sample complexity required so that one can upper bound
\begin{align}
\label{eq:maintermtobound}
 |\langle \psi | (\Lambda_T\ket{\psi})|^2-\left|\langle \psi | \left(\sum_{i=1}^k \widetilde{\beta_i} \ket{\chi_i}\right)\right|^2.
\end{align}
To this end, let us first bound the error $|\widehat{\beta}_j - \widetilde{\beta}_j|$. Suppose we run $\SWAP$ test to estimate  $|\langle \chi_j^R | \psi \rangle|, |\langle \chi_j | \psi \rangle|$ upto error $\upsilon_j \in (0,1)$ and $\upsilon_j' \in (0,1)$ (which we fix later) respectively. In particular, if the $\SWAP$ test outputs $\gamma_j,\alpha_j$ respectively, then we have that
\begin{equation}
\label{eq:error_estimates}
    \Big|\gamma_j - |\langle \chi_j^R | \psi \rangle|\Big| \leq \upsilon_j', \, \quad \Big|\alpha_j - |\langle \chi_j | \psi \rangle|\Big| \leq \upsilon_j, \quad \text{ for all } j \in [k],
\end{equation}
 We then have from Eq.~\eqref{eq:est_real_betaj} that
\begin{align}\label{eq:error_est_real_betaj}
  \Big|  \mathrm{Re}(\widehat{\beta}_j) - \mathrm{Re}(\widetilde{\beta}_j)\Big| &= \Big|\left[\frac{\gamma_j^2}{\alpha_1} - \frac{\alpha_1}{2} - \frac{\alpha_j^2} {2\alpha_1}\right]-\left[ \frac{|\langle \chi_j^R | \psi \rangle|^2}{|\langle \chi_1 | \psi \rangle|} - \frac{|\langle \chi_1 | \psi \rangle|}{2}-\frac{|\langle \chi_j | \psi \rangle|^2}{2|\langle \chi_1 | \psi \rangle|} \right]\Big|\\
    &\leq \underbrace{\Big|\left[\frac{\gamma_j^2}{\alpha_1} - \frac{|\langle \chi_j^R | \psi \rangle|^2}{|\langle \chi_1 | \psi \rangle|}\right]\Big|}_{(i)} + \underbrace{\Big|\left[- \frac{\alpha_1}{2} + \frac{|\langle \chi_1 | \psi \rangle|}{2}\right]\Big|}_{(ii)} + \underbrace{\Big|\left[- \frac{\alpha_j^2} {2\alpha_1} + \frac{|\langle \chi_j | \psi \rangle|^2}{2|\langle \chi_1 | \psi \rangle|} \right]\Big|}_{(iii)}.
\end{align}
Let us now bound each pair of terms defined by $(i),(ii),(iii)$ above, individually.

\noindent $(i)$ From direct computation,
\begin{align}
    \frac{\gamma_j^2}{\alpha_1} - \frac{|\langle \chi_j^R | \psi \rangle|^2}{|\langle \chi_1 | \psi \rangle|} 
    \leq \frac{\gamma_j^2}{|\langle \chi_1 | \psi \rangle| - \upsilon_1} - \frac{|\langle \chi_j^R | \psi \rangle|^2}{|\langle \chi_1 | \psi \rangle|} 
    &= \frac{\gamma_j^2}{|\langle \chi_1 | \psi \rangle|\left(1 - \frac{\upsilon_1}{|\langle \chi_1 | \psi \rangle|}\right)} - \frac{|\langle \chi_j^R | \psi \rangle|^2}{|\langle \chi_1 | \psi \rangle|} \\ 
    &\leq \frac{\gamma_j^2}{|\langle \chi_1 | \psi \rangle|}\left(1 + \frac{2\upsilon_1}{|\langle \chi_1 | \psi \rangle|}\right) - \frac{|\langle \chi_j^R | \psi \rangle|^2}{|\langle \chi_1 | \psi \rangle|} \\
    &= \frac{\gamma_j^2 - |\langle \chi_j^R | \psi \rangle|^2}{|\langle \chi_1 | \psi \rangle|} + \frac{2\upsilon_1 \gamma_j^2}{|\langle \chi_1 | \psi \rangle|^2} \\
    &\leq \frac{2 \upsilon_j'}{\sqrt{\mu}} + \frac{2\upsilon_1}{\mu}, \label{eq:ub_star}
\end{align}
where in the second line we used that $1/(1-x) \leq 1 + 2x$ for all $x \in [0,1/2]$ and noted that by choosing $\upsilon_1 \leq \sqrt{\mu}/2$, we can ensure $\upsilon_1/|\langle \chi_1 | \psi \rangle| \leq 1/2$ as $|\langle \chi_1 | \psi \rangle| \geq \sqrt{\mu}$ by assumption (in the lemma statement). In the final line, we used that $\gamma_j^2 \leq 1$ and
$$
\gamma_j^2 - |\langle \chi_j^R | \psi \rangle|^2 = (\gamma_j - |\langle \chi_j^R | \psi \rangle|)\cdot (\gamma_j + |\langle \chi_j^R | \psi \rangle|) \leq 2 \upsilon_j',
$$ 
by Eq.~\eqref{eq:error_estimates}     and $|\langle \chi_1 | \psi \rangle|^2 \geq \mu$. We proceed similarly to obtain a lower bound:
\begin{align}
    \frac{\gamma_j^2}{\alpha_1} - \frac{|\langle \chi_j^R | \psi \rangle|^2}{|\langle \chi_1 | \psi \rangle|} 
    \geq \frac{\gamma_j^2}{|\langle \chi_1 | \psi \rangle| + \upsilon_1} - \frac{|\langle \chi_j^R | \psi \rangle|^2}{|\langle \chi_1 | \psi \rangle|} 
    &= \frac{\gamma_j^2}{|\langle \chi_1 | \psi \rangle|\left(1 + \frac{\upsilon_1}{|\langle \chi_1 | \psi \rangle|}\right)} - \frac{|\langle \chi_j^R | \psi \rangle|^2}{|\langle \chi_1 | \psi \rangle|} \\ 
    &\geq \frac{\gamma_j^2}{|\langle \chi_1 | \psi \rangle|}\left(1 - \frac{\upsilon_1}{|\langle \chi_1 | \psi \rangle|}\right) - \frac{|\langle \chi_j^R | \psi \rangle|^2}{|\langle \chi_1 | \psi \rangle|} \\
    &= \frac{\gamma_j^2 - |\langle \chi_j^R | \psi \rangle|^2}{|\langle \chi_1 | \psi \rangle|} - \frac{\upsilon_1 \gamma_j^2}{|\langle \chi_1 | \psi \rangle|^2} \\
    &\geq \frac{-2 \upsilon_j'}{\sqrt{\mu}} - \frac{\upsilon_1}{\mu},\label{eq:lb_star}
\end{align}
where in the second line we used $1/(1+x) \geq 1-x$ for $x \geq 0$. In the final line, we used that $\gamma_j^2 \leq 1$, 
$$
\gamma_j^2 - |\langle \chi_j^R | \psi \rangle|^2 = (\gamma_j - |\langle \chi_j^R | \psi \rangle|)\cdot (\gamma_j + |\langle \chi_j^R | \psi \rangle|) \geq - \upsilon_j' (\gamma_j + |\langle \chi_j^R | \psi \rangle|) \geq - 2 \upsilon_j',
$$
by Eq.~\eqref{eq:error_estimates} and $|\langle \chi_1 | \psi \rangle|^2 \geq \mu$. Combining Eq.~\eqref{eq:ub_star} and Eq.~\eqref{eq:lb_star}, we have
\begin{equation}\label{eq:bound_star}
    (i) \leq \frac{2 \upsilon_j'}{\sqrt{\mu}} + \frac{2\upsilon_1}{\mu}.
\end{equation}
$(ii)$ We immediately have that
\begin{equation}\label{eq:bound_star2}
    (ii) = \Big| \alpha_1 -  |\langle \chi_1 | \psi \rangle| \Big|/2 \leq \upsilon_1/2.
\end{equation}
$(iii)$ Again by direct computation, we have
\begin{align}
    - \frac{\alpha_j^2} {2\alpha_1} + \frac{|\langle \chi_j | \psi \rangle|^2}{2|\langle \chi_1 | \psi \rangle|} 
    \leq - \frac{\alpha_j^2} {2|\langle \chi_1 | \psi \rangle| + 2 \upsilon_1} + \frac{|\langle \chi_j | \psi \rangle|^2}{2|\langle \chi_1 | \psi \rangle|}
    &= - \frac{\alpha_j^2} {2|\langle \chi_1 | \psi \rangle|\left(1 + \frac{\upsilon_1}{|\langle \chi_1 | \psi \rangle|}\right)} + \frac{|\langle \chi_j | \psi \rangle|^2}{2|\langle \chi_1 | \psi \rangle|} \\
    &\leq - \frac{\alpha_j^2} {2|\langle \chi_1 | \psi \rangle|}\left(1 - \frac{\upsilon_1}{|\langle \chi_1 | \psi \rangle|}\right) + \frac{|\langle \chi_j | \psi \rangle|^2}{2|\langle \chi_1 | \psi \rangle|} \\
    &= \frac{- \alpha_j^2 + |\langle \chi_j | \psi \rangle|^2} {2|\langle \chi_1 | \psi \rangle|} + \frac{\alpha_j^2\upsilon_1}{2|\langle \chi_1 | \psi \rangle|^2} \\
    &\leq \frac{\upsilon_j} {\sqrt{\mu}} + \frac{\upsilon_1}{2\mu}, \label{eq:ub_star3}
\end{align}
where we used in the second line that $1/(1+x) \geq 1-x, $ for all $x \geq 0$. In the final line, we used $\alpha_j^2 \leq 1$, $- \alpha_j^2 + |\langle \chi_j | \psi \rangle|^2 = (|\langle \chi_j | \psi \rangle| - \alpha_j)\cdot (|\langle \chi_j | \psi \rangle| + \alpha_j) \leq  2 \upsilon_j$ and $|\langle \chi_1 | \psi \rangle|^2 \geq \mu$. We again proceed similarly to obtain a lower bound:
\begin{align}
    - \frac{\alpha_j^2} {2\alpha_1} + \frac{|\langle \chi_j | \psi \rangle|^2}{2|\langle \chi_1 | \psi \rangle|} 
    \geq - \frac{\alpha_j^2} {2|\langle \chi_1 | \psi \rangle| - 2 \upsilon_1} + \frac{|\langle \chi_j | \psi \rangle|^2}{2|\langle \chi_1 | \psi \rangle|}
    &= - \frac{\alpha_j^2} {2|\langle \chi_1 | \psi \rangle|\left(1 - \frac{\upsilon_1}{|\langle \chi_1 | \psi \rangle|}\right)} + \frac{|\langle \chi_j | \psi \rangle|^2}{2|\langle \chi_1 | \psi \rangle|} \\
    &\geq - \frac{\alpha_j^2} {2|\langle \chi_1 | \psi \rangle|}\left(1 + \frac{2\upsilon_1}{|\langle \chi_1 | \psi \rangle|}\right) + \frac{|\langle \chi_j | \psi \rangle|^2}{2|\langle \chi_1 | \psi \rangle|} \\
    &= \frac{- \alpha_j^2 + |\langle \chi_j | \psi \rangle|^2} {2|\langle \chi_1 | \psi \rangle|} - \frac{2\alpha_j^2\upsilon_1}{2|\langle \chi_1 | \psi \rangle|^2} \\
    &\geq - \frac{\upsilon_j} {\sqrt{\mu}} - \frac{\upsilon_1}{\mu}, \label{eq:lb_star3}
\end{align}
where in the second line we again used that $1/(1-x) \leq 1 + 2x$ for all $x \in [0,1/2]$ and noted that by choosing $\upsilon_1 \leq \sqrt{\mu}/2$, we can ensure $\upsilon_1/|\langle \chi_1 | \psi \rangle| \leq 1/2$ as $|\langle \chi_1 | \psi \rangle| \geq \sqrt{\mu}$ by assumption. In the final line, we used that $\alpha_j^2 \leq 1$,
$$ 
- \alpha_j^2 + |\langle \chi_j | \psi \rangle|^2 = (- \alpha_j + |\langle \chi_j | \psi \rangle|)\cdot (\alpha_j + |\langle \chi_j | \psi \rangle|) \geq - \upsilon_j (\alpha_j + |\langle \chi_j^R | \psi \rangle|) \geq - 2 \upsilon_j,
$$
and $|\langle \chi_1 | \psi \rangle|^2 \geq \mu$. Combining Eq.~\eqref{eq:ub_star3} and Eq.~\eqref{eq:lb_star3}, we have
\begin{equation}\label{eq:bound_star3}
    (iii) \leq \frac{\upsilon_j} {\sqrt{\mu}} + \frac{\upsilon_1}{\mu}.
\end{equation}
Finally, substituting Eqs~\eqref{eq:bound_star},\eqref{eq:bound_star2},\eqref{eq:bound_star3}, into Eq.~\eqref{eq:error_est_real_betaj}, we have
\begin{equation}\label{eq:bound_est_real_betaj}
    |\mathrm{Re}(\widehat{\beta}_j) - \mathrm{Re}(\widetilde{\beta}_j)| \leq (i)+(ii)+(iii) \leq \frac{2 \upsilon_j'}{\sqrt{\mu}} + \frac{2\upsilon_1}{\mu} + \frac{\upsilon_1}{2} + \frac{\upsilon_j} {\sqrt{\mu}} + \frac{\upsilon_1}{\mu} \leq \frac{2 \upsilon_j'}{\sqrt{\mu}} + \frac{\upsilon_j} {\sqrt{\mu}} + \frac{7\upsilon_1}{2\mu}.
\end{equation}
By choosing $\upsilon_j' = (\varepsilon \cdot \sqrt{\mu})/(36k),\upsilon_j = (\varepsilon \cdot \sqrt{\mu})/(18k)$ for all $j \geq 2$, and $\upsilon_1 = (\varepsilon \cdot \mu)/(63k)$, we obtain
\begin{equation}\label{eq:final_error_bound_est_real_betaj}
    |\mathrm{Re}(\widehat{\beta}_j) - \mathrm{Re}(\widetilde{\beta}_j)| \leq \frac{\varepsilon}{6k}.
\end{equation}
Similarly, it can be shown that by estimating $|\langle \chi_j^I | \psi \rangle|^2$ to error $\upsilon_j' = (\varepsilon \cdot \sqrt{\mu})/(12k)$ for all $j \in [k]$ as just defined, we would have
\begin{equation}\label{eq:final_error_bound_est_imag_betaj}
    |\mathrm{Im}(\widehat{\beta}_j) - \mathrm{Im}(\widetilde{\beta}_j)| \leq \frac{\varepsilon}{6k}.
\end{equation}
We then have
\begin{equation}\label{eq:est_L1_norm_beta}
    \sum_{j=1}^k \Big| \widehat{\beta}_i - \widetilde{\beta}_i \Big| \leq \sum_{j=1}^k \left(|\mathrm{Re}(\widehat{\beta}_j) - \mathrm{Re}(\widetilde{\beta}_j)| + |\mathrm{Im}(\widehat{\beta}_j) - \mathrm{Im}(\widetilde{\beta}_j)| \right) \leq \varepsilon/3.
\end{equation}
Let us define the states $\ket{\widehat{\phi}} = \sum_{i=1}^k \widehat{\beta}_i \ket{\chi_i}$ and $\ket{\widetilde{\chi}} = \sum_{i=1}^k \widetilde{\beta}_i \ket{\chi_i}$, which are not necessarily normalized. Eq.~\eqref{eq:est_L1_norm_beta} then implies
\begin{equation}\label{eq:diff_inner_prods}
    |\langle \psi | \widehat{\phi} \rangle - \langle \psi | \widetilde{\phi} \rangle| = \Big|\sum_{i=1}^k (\widehat{\beta}_i - \widetilde{\beta}_i) \langle \psi | \chi_i \rangle\Big| \leq \sum_{i=1}^k |\widehat{\beta}_i - \widetilde{\beta}_i| \leq \varepsilon/3.
\end{equation}
We now note that
\begin{align}
    |\langle \psi | \widetilde{\phi} \rangle|^2 - |\langle \psi | \widehat{\phi} \rangle|^2 
    &= \Big(|\langle \psi | \widetilde{\phi} \rangle| + |\langle \psi | \widehat{\phi} \rangle|\Big)\Big(|\langle \psi | \widetilde{\phi} \rangle| - |\langle \psi | \widehat{\phi} \rangle|\Big) \\
    &\leq \Big(1 + |\langle \psi | \widehat{\phi} \rangle|\Big)\Big(|\langle \psi | \widetilde{\phi} \rangle - \langle \psi | \widehat{\phi} \rangle|\Big) \\
    &\leq (2 + \varepsilon/3) \cdot (\varepsilon/3) \\
    &\leq 2 \varepsilon/3 + (\varepsilon/3)^2 \\
    &\leq \varepsilon, \label{eq:final_fidelity_error}
\end{align}
where in the first inequality  we used that $|\langle \psi | \widehat{\phi} \rangle| = |\langle \psi | (\Lambda_T \ket{\psi}) \rangle| \leq 1$. In the second inequality, we used Eq.~\eqref{eq:diff_inner_prods} and by  the reverse triangle inequality  $|a| - |b| \leq ||a| - |b|| \leq |a - b|$, we have
$$
|\langle \psi | \widehat{\phi} \rangle| = |\langle \psi | \widehat{\phi} \rangle - \langle \psi | \widetilde{\phi} \rangle + \langle \psi | \widetilde{\phi} \rangle| \leq |\langle \psi | \widehat{\phi} \rangle - \langle \psi | \widetilde{\phi} \rangle| + |\langle \psi | \widetilde{\phi} \rangle| \leq \varepsilon/3 + 1,
$$
where we again used Eq.~\eqref{eq:diff_inner_prods} and $|\langle \psi | \widehat{\phi} \rangle| = |\langle \psi | (\Lambda_T \ket{\psi}) \rangle| \leq 1$.

Finally, noting $|\langle \psi | \widetilde{\phi} \rangle|^2 = |\langle \psi | (\Lambda_T \ket{\psi}) \rangle|^2$ from Eq.~\eqref{eq:fidelity_betatilde}, we can upper bound Eq.~\eqref{eq:final_fidelity_error} by $\varepsilon$. This proves our desired lemma statement 
$$
|\langle \psi | \widehat{\phi} \rangle|^2 \geq |\langle \psi | (\Lambda_T \ket{\psi}) \rangle|^2 - \varepsilon.
$$
 Additionally, we can upper bound the $\ell_2$-norm of $\widehat{\beta} = (\widehat{\beta}_1,\ldots,\widehat{\beta}_k)$ as
\begin{align*}
\sum_{j=1}^k |\widehat{\beta}_i|^2 = \sum_{j=1}^k |\widehat{\beta}_i - \widetilde{\beta}_i + \widetilde{\beta_i}|^2 
&= \sum_{j=1}^k \left(|\widehat{\beta}_i - \widetilde{\beta}_i|^2 + 2|\widehat{\beta}_i - \widetilde{\beta}_i|\cdot|\widetilde{\beta_i}| +  |\widetilde{\beta_i}|^2\right) \\
&\leq \left(\sum_{i=1}^k |\widehat{\beta}_i - \widetilde{\beta}_i|\right)^2 + 2 \sum_{i=1}^k |\widehat{\beta}_i - \widetilde{\beta}_i| +  \sum_{i=1}^k |\widetilde{\beta_i}|^2 \\
&\leq \varepsilon^2/9 + 2 \varepsilon/3 +  |\langle \psi | (\Lambda_T \ket{\psi})|^2 \\
&\leq |\langle \psi | (\Lambda_T \ket{\psi})|^2 + \varepsilon,
\end{align*}
where we have used in the second line that $|\widetilde{\beta}_i| = |\beta_i| = |\langle \psi | \chi_i \rangle| \leq 1$ and $\sum_i |a_i|^2 \leq (\sum_i |a_i|)^2$. In the third line, we used Eq.~\eqref{eq:est_L1_norm_beta} and noted that $\sum_{i=1}^k |\widetilde{\beta_i}|^2 = \sum_{i=1}^k |\beta_i|^2 = |\langle \psi | (\Lambda_T \ket{\psi})|^2$.

The contribution to sample complexity is due to estimation of $|\langle \chi_j^R | \psi \rangle|$ and $|\langle \chi_j^I | \psi \rangle|$ up to error $\upsilon_j' = (\varepsilon \cdot \sqrt{\mu})/(36k)$, $|\langle \phi_j | \psi \rangle|$ up to error $\upsilon_j = (\varepsilon \cdot \sqrt{\mu})/(18k)$, and $|\langle \phi_j | \psi \rangle|$ up to error $\upsilon_1 = (\varepsilon \cdot \mu)/(63k)$. So by taking $O(k^2/(\varepsilon^2\cdot \mu^2) \log(k/\delta))$ copies of $\ket{\psi}$ and performing each $\SWAP$ test so that it succeeds with probability $1-O(\delta/k)$, so that after a union bound, the estimates in the previous analysis are met with overall probability $\geq 1-\delta$. 
The main contribution to time complexity is the preparation of the stabilizer states $\ket{\chi_j^R}$ and $\ket{\chi_j^I}$ which requires $O(n^2)$ gates each. The total time complexity~is
$$
O(k^3 n^2/(\varepsilon^2 \cdot \mu^2) \cdot \log(k/\delta)),
$$
hence proving the lemma statement.
\end{proof}

\subsection{Overall correctness and complexity} 
The proof of Theorem~\ref{thm:agnostic_boosting} regarding the correctness and complexity of the agnostic boosting protocol follows immediately from putting together our theorems regarding structure learning~(Theorem~\ref{thm:structure_learning}) and parameter learning~(Theorem~\ref{thm:parameter_learning}). 
\begin{proof}[Proof of Theorem~\ref{thm:agnostic_boosting}]
Let $\varepsilon_s,\varepsilon_p$ be parameters to be decided later. On input copies of $\ket{\psi}$, we use Theorem~\ref{thm:structure_learning} with error parameter instantiated as $\varepsilon_s$ and failure probability $\delta/2$ to learn a set of $\kappa \leq 4/(\varepsilon_s \cdot \eta(\varepsilon_s))$ parity states $\{\ket{\chi_i}\}_{i\in [\kappa]}$ such that $|\langle \chi_i | \psi \rangle|^2 \geq \varepsilon_s \eta(\varepsilon_s)/4$, and the state $\ket{\phi}:= \Lambda_T \ket{\psi}/\norm{\Lambda_T \ket{\psi}}$ (with $T=\spann(\{\ket{\chi_i}\}_{i\in [\kappa]})$) satisfies
$$
|\langle \psi | \phi \rangle|^2 \geq \opt - 2\sqrt{\varepsilon_s'},
$$
where $\varepsilon_s' = (3/2)^{1/\eta_2 + 1} \varepsilon_s$. We then utilize Theorem~\ref{thm:parameter_learning} instantiated with error parameter $\varepsilon_p = 2 \sqrt{\varepsilon_s'}$ and failure probability $\delta/2$, to learn a set of coefficients $\{\widehat{\beta}_i\}_{i \in [\kappa]}$ such that the state $\ket{\widehat{\phi}} = \sum_{i=1}\widehat{\beta}_i \ket{\chi_i}$ satisfies
$$
|\langle \widehat{\phi} | \psi \rangle|^2 \geq \opt - 2\varepsilon_p.
$$
Setting $\varepsilon_s = (2/3)^{1/\eta_2 + 1} \varepsilon^2/16$ and $\varepsilon_p = \varepsilon/2$ gives us the desired result. The overall sample complexity and time complexity is then evident from instantiating Theorems~\ref{thm:structure_learning},\ref{thm:parameter_learning}.
\end{proof}

\section{Learning algorithms}\label{sec:learning_algos}
In this section, we show how quantum agnostic boosting (Algorithm~\ref{algo:agnostic_boosting}) can be utilized for improper agnostic learning of decision trees, juntas and $\DNF$s. Finally, we will give a learning protocol of depth-$3$ circuits, based on boosting, in the uniform PAC model given quantum examples.

\subsection{Agnostic learning parities}
As a preliminary step to all subsequent learning algorithms, we first present a proper agnostic learning algorithm for parities. This algorithm is fairly simple and does not rely on boosting.

\begin{theorem}
\label{thm:parities}
Let $\opt \geq  \varepsilon > 0$. Suppose $\ket{\psi}$ is an unknown $n$-qubit state with unknown optimal fidelity $\calF_{\Par}(\ket{\psi}) = \opt$. Then, there is a 
$\widetilde{O}(n/\varepsilon^3\cdot  \log 1/ \delta)$-time proper agnostic learner that, with probability $\geq 1-\delta$, outputs $\ket{\phi} \in \Sh_{\Cc_{\Par}}$ such that $|\langle \phi | \psi \rangle|^2 \geq \opt - \varepsilon$.    
\end{theorem}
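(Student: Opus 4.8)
The plan is to use the standard fact that the $n$-fold Hadamard transform maps a parity state $\ket{\chi_S}$ to the computational basis state $\ket{S}$, so that the parity nearest to $\ket{\psi}$ shows up as a heavy basis element after Hadamarding. Write $\ket{\psi'} = \Had^{\otimes n}\ket{\psi}$; since $\Had^{\otimes n}\ket{\chi_S} = \ket{S}$ and $\Had^{\otimes n}$ is self-inverse, we get $\langle \psi' | S \rangle = \langle \psi | \chi_S \rangle$ for every $S \in \FF^n$, hence $|\langle \psi' | S \rangle|^2 = |\langle \psi | \chi_S \rangle|^2$. In particular, if $S^\star$ is a parity with $|\langle \psi | \chi_{S^\star} \rangle|^2 = \opt$, then measuring $\ket{\psi'}$ in the computational basis returns $S^\star$ with probability $\opt \ge \varepsilon$.

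First I would run a \emph{candidate-generation} step: take $O\!\big((1/\varepsilon)\log(1/\delta)\big)$ copies of $\ket{\psi}$, apply $\Had^{\otimes n}$ to each, measure in the computational basis, and collect the set $R$ of observed strings. Since each measurement returns $S^\star$ independently with probability $\ge \varepsilon$, for a suitable constant in the sample bound we have $S^\star \in R$ with probability $\ge 1-\delta/2$, while $|R| = O\!\big((1/\varepsilon)\log(1/\delta)\big)$. This step uses $O\!\big((n/\varepsilon)\log(1/\delta)\big)$ time.

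Next comes the \emph{selection} step. For each $T \in R$, use the \SWAP test (Lemma~\ref{lem:swap_test}) to estimate $|\langle \psi | \chi_T \rangle|^2$ to additive error $\varepsilon/3$ with failure probability $\le \delta/(2|R|)$; preparing each $\ket{\chi_T}$ costs $O(n)$ Clifford gates, so this is $O\!\big((1/\varepsilon^2)\log(|R|/\delta)\big)$ copies and $O\!\big((n/\varepsilon^2)\log(|R|/\delta)\big)$ time per candidate. By a union bound, with probability $\ge 1-\delta/2$ all $|R|$ estimates are within $\varepsilon/3$ of the truth. Output $\ket{\chi_{\widehat S}} \in \Sh_{\Cc_{\Par}}$ for the $\widehat S \in R$ with the largest estimate. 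Conditioning on both good events (total failure probability $\le \delta$): the estimate for $S^\star$ is at least $\opt - \varepsilon/3$, so the estimate for $\widehat S$ is also $\ge \opt-\varepsilon/3$, and therefore the true value satisfies $|\langle \psi | \chi_{\widehat S} \rangle|^2 \ge \opt - 2\varepsilon/3 \ge \opt - \varepsilon$. Since the output lies in $\Sh_{\Cc_{\Par}}$, the learner is proper.

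For the complexity, candidate generation costs $O\!\big((n/\varepsilon)\log(1/\delta)\big)$ time and the selection loop costs $|R|\cdot O\!\big((n/\varepsilon^2)\log(|R|/\delta)\big) = \widetilde{O}\!\big((n/\varepsilon^3)\log(1/\delta)\big)$ time, which dominates, matching the claimed bound. I do not expect a genuine obstacle here; the only point requiring a little care is that $\opt$ is \emph{unknown}, so one cannot simply threshold on a fixed value — this is precisely why we compare \SWAP-test estimates across the whole candidate set, a step that only needs the promise $\opt \ge \varepsilon$ to guarantee that $S^\star$ is sampled in the first phase.
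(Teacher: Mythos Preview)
Your proof is correct and follows essentially the same approach as the paper: Hadamard-transform $\ket{\psi}$, sample the computational basis $O((1/\varepsilon)\log(1/\delta))$ times to collect candidates (using $\opt \ge \varepsilon$ to guarantee the optimal parity appears), then pick the best candidate via \SWAP-test estimates. The paper packages this as a lemma with a known lower bound $\tau$ and then instantiates $\tau = \varepsilon$, but the algorithm and analysis are the same as yours.
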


 To prove the theorem, we prove a lemma where we first assume that  $\calF_{\Cc_{\Par}}(\ket{\psi})$ is~known.
\begin{lemma}
\label{lem:AL_parities_tau}
Let $\tau \geq  \varepsilon > 0$. Suppose $\ket{\psi}$ is an unknown $n$-qubit state with fidelity $\calF_{\Cc_{\Par}}(\ket{\psi}) \geq \tau$. Then, there is a 
$\widetilde{O}(n/(\tau \cdot \varepsilon^2)\cdot  \log 1/\delta)$-time proper agnostic learner that, with probability $\geq 1-\delta$, outputs $\ket{\phi} \in \Sh_{\Cc_{\Par}}$ such that $|\langle \phi | \psi \rangle|^2 \geq \tau - \varepsilon$.    
\end{lemma}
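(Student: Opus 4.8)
The plan is to exploit the fact that a parity state is a Hadamard-transformed computational-basis state, so that agnostically learning parities reduces to finding a ``heavy'' computational-basis outcome of the rotated state. First I would set $\ket{\psi'}=\Had^{\otimes n}\ket{\psi}$ and observe that $\Had^{\otimes n}\ket{S}=\ket{\chi_S}$, hence for every $S\in\FF^n$, $\langle S|\psi'\rangle=\langle S|\Had^{\otimes n}|\psi\rangle=\langle\chi_S|\psi\rangle$. Consequently $\max_S|\langle S|\psi'\rangle|^2=\calF_{\Cc_{\Par}}(\ket{\psi})\geq\tau$, so there is an (unknown) maximizer $S^*$ with $|\langle S^*|\psi'\rangle|^2\geq\tau$. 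The output will be the parity state $\ket{\chi_{\widehat S}}$ for a well-chosen $\widehat S\in\FF^n$, which automatically makes the learner \emph{proper}.

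The first stage is candidate generation: prepare $m=O\big(\tfrac{1}{\tau}\log(1/\delta)\big)$ copies of $\ket{\psi'}$ (each obtained by applying one layer of Hadamards to a fresh copy of $\ket{\psi}$), measure each in the computational basis, and collect the outcomes into a list $L$ with $|L|\leq m$. Since a single measurement returns $S^*$ with probability $|\langle S^*|\psi'\rangle|^2\geq\tau$, we get $\Pr[S^*\notin L]\leq(1-\tau)^m\leq e^{-\tau m}\leq\delta/2$ for a suitable constant in $m$. The second stage is selection: for each $S\in L$, run the $\SWAP$ test of Lemma~\ref{lem:swap_test} between $\ket{\psi'}$ and the basis state $\ket{S}$ to obtain an estimate $\widehat p_S$ of $|\langle S|\psi'\rangle|^2$ up to additive error $\varepsilon/2$ with failure probability at most $\delta/(2m)$; a union bound over the $\leq m$ tests makes all these estimates simultaneously $(\varepsilon/2)$-accurate with probability $\geq1-\delta/2$. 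I would then output $\ket{\phi}:=\ket{\chi_{\widehat S}}$ with $\widehat S=\argmax_{S\in L}\widehat p_S$.

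For correctness, condition on the two good events above, which hold together with probability $\geq1-\delta$. Since $S^*\in L$, we have $\widehat p_{\widehat S}\geq\widehat p_{S^*}\geq|\langle S^*|\psi'\rangle|^2-\varepsilon/2\geq\tau-\varepsilon/2$, and applying the estimate's accuracy once more to $\widehat S$ gives $|\langle\widehat S|\psi'\rangle|^2\geq\widehat p_{\widehat S}-\varepsilon/2\geq\tau-\varepsilon$. Hence $|\langle\phi|\psi\rangle|^2=|\langle\chi_{\widehat S}|\psi\rangle|^2=|\langle\widehat S|\psi'\rangle|^2\geq\tau-\varepsilon$ and $\ket{\phi}\in\Sh_{\Cc_{\Par}}$, as required. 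For the complexity, candidate generation uses $m=\widetilde O\big(\tfrac1\tau\log(1/\delta)\big)$ copies of $\ket{\psi}$, and each of the $\leq m$ SWAP tests uses $O\big(\tfrac1{\varepsilon^2}\log(m/\delta)\big)$ copies by Lemma~\ref{lem:swap_test}, giving total sample complexity $\widetilde O\big(\tfrac1{\tau\varepsilon^2}\log(1/\delta)\big)$; since each copy carries an $O(n)$-gate Hadamard layer and the SWAP test itself runs in $O\big(\tfrac n{\varepsilon^2}\log(m/\delta)\big)$ time, the time complexity is $\widetilde O\big(\tfrac n{\tau\varepsilon^2}\log(1/\delta)\big)$, matching the claim.

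No step here is conceptually deep; the points that require care, rather than being genuine obstacles, are (i) taking enough measurements that the true maximizer $S^*$ is guaranteed to land in the candidate list $L$ — this is exactly what forces the $1/\tau$ factor — and (ii) the two-sided error bookkeeping in the $\argmax$ step, where using an $(\varepsilon/2)$-accurate estimate on \emph{both} $S^*$ and the selected $\widehat S$ is precisely what converts the promise $|\langle S^*|\psi'\rangle|^2\geq\tau$ into the clean guarantee $|\langle\phi|\psi\rangle|^2\geq\tau-\varepsilon$.
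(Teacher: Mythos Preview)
Your proof is correct and follows essentially the same approach as the paper: rotate by $\Had^{\otimes n}$, sample $O(\tfrac{1}{\tau}\log(1/\delta))$ times to ensure the optimal $S^*$ lands in the candidate list, then select via $\varepsilon/2$-accurate \SWAP-test estimates and take the argmax. Your two-sided error bookkeeping in the selection step is in fact argued more explicitly than in the paper.
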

\begin{proof}
Let $\ket{\chi_z}=\Had^{\otimes n}\ket{z}$. We will use the following algorithm for agnostic learning. 

\begin{myalgorithm}
\label{algo:AL_parities}
\begin{algorithm}[H]
    \caption{Agnostic learning of parity states}
    \setlength{\baselineskip}{1.5em} % Adjust line spacing
    \DontPrintSemicolon % Removes semicolons if you want a cleaner look
    \KwInput{Copies of $\ket{\psi}$, $\tau \in (0,1)$, $\varepsilon \in (0,1)$, $\delta  \in (0,1)$. } 
    \KwOutput{$\ket{\phi} \in \{\Sh_{\Par(\varepsilon)}\}$.}
    
    Measure $\Had^{\otimes n}\ket{\psi}$  in the computational basis $t=O(1/\varepsilon \log(2/\delta))$ many times, and collect the strings in $L=\{z_1,\ldots,z_t\}$.\\[1mm]
    Obtain $\varepsilon/2$-approximate estimates of $|\langle \chi_z|\psi\rangle|^2, \, \, \forall z \in L$ with probability $\geq 1 - \delta/(2|L|) $ using the $\SWAP$ test and $O(1/\varepsilon^2\log(|L|/\delta))$ copies of $\ket{\psi}$ for each $z \in L$. Let $\widehat{z}$ be the one that maximizes the fidelity. \\[1mm]
\Return  $\ket{\chi_{\widehat{z}}}\in \Sh_{\Par}$.
\end{algorithm}
\end{myalgorithm}
We now argue the correctness of the above protocol. Let $z^\star \in \{0,1\}^n$ be such that $\ket{\chi_z} \in \calS_{\Cc_{\Par}}$ maximizes fidelity with $\ket{\psi}$ i.e., $\calF_{\Cc_{\Par}}(\ket{\psi}) = |\langle \chi_z | \psi \rangle|^2 \geq \tau$. By measuring $\Had^{\otimes n}\ket{\psi}$ in the computational basis, the probability of obtaining a measurement outcome $z \in \{0,1\}^n$ (Step $(1)$ in the algorithm above) coinciding with $z^\star$ is
$$
\Pr[z=z^\star] = |\langle z^\star | \Had^{\otimes n}| \psi \rangle|^2 = |\langle \chi_{z^\star} | \psi \rangle|^2 \geq \tau.
$$
Repeating Step $(1)$, $O(1/\tau\cdot \log(1/\delta))$ many times, we ensure that $z^\star \in L$ with probability $1-\delta/2$. 

In Step $(2)$, for each distinct $z \in L$, we estimate the fildelity $\langle \chi_z | \psi \rangle|^2$ up to error $\varepsilon/2$ with success probability $\geq 1 -\delta/(2|L|)$ by using the $\SWAP$ test, which consumes $O(|L|/\varepsilon^2 \log(|L|/\delta))$ copies of $\ket{\psi}$. We then output $\ket{\chi_{\widehat{z}}}$ for the string $\widehat{z}$, that maximized the fidelity. By the guarantee of Step $(1)$ and a union bound, we will have $|\langle \chi_{\widehat{z}}|\psi \rangle|^2 \geq \tau - \varepsilon$ with probability $\geq 1 - \delta$. 

Step $(1)$ consumes $O(1/\tau \log(1/\delta))$ sample complexity and $O(n/\tau \log(1/\delta))$ time complexity. Step $(2)$ consumes $O(1/(\tau \cdot \varepsilon^2) \log(1/(\delta \cdot \tau)))$ sample complexity after noting that $|L| = O(1/\tau \log(1/\delta))$, and $O(n/(\tau \cdot \varepsilon^2) \log(1/(\delta \cdot \tau)))$ time complexity. The overall time complexity is thus $O(n/(\tau \cdot \varepsilon^2) \log(1/(\delta \cdot \tau)))$. This completes the proof of the lemma.
\end{proof}
The proof of Theorem~\ref{thm:parities} then follows.
\begin{proof}[Proof of Theorem~\ref{thm:parities}]
We instantiate Lemma~\ref{lem:AL_parities_tau} with $\tau$ set to be $\varepsilon$. Note that in either case of the unknown optimal fidelity $\opt \geq \varepsilon$ or $\opt < \varepsilon$, the outputted state $\ket{\phi}$ from Lemma~\ref{lem:AL_parities_tau} satisfies the guarantee of the theorem. This gives us the desired result.    
\end{proof}

\subsection{Agnostic learning decision trees }
Recall that we denote the class of decision trees of size $s$ as $\Cc_{\DT(s)}$ and define 
\begin{equation}
    \calF_{\Cc_{\DT(s)}}(\ket{\psi}) = \max_{f \in \Cc_{\DT(s)}} |\langle \phi_f | \psi \rangle|^2,
\end{equation}
where $\ket{\phi_f}$ is the phase state (Eq.~\eqref{eq:phase_state}) corresponding to $f$.

We have the following main theorem regarding the agnostic learnability of decision trees.
\begin{theorem}
\label{thm:agnostic_decision_trees}
Let $\varepsilon,\delta \in (0,1)$. Suppose $\ket{\psi}$ is an $n$-qubit state with unknown optimal fidelity $\calF_{\Cc_{\DT(s)}}(\ket{\psi}) = \opt$. Then, there is a quantum algorithm consuming $\poly(n, s, 1/\varepsilon, 1/\delta)$ copies of $\ket{\psi}$ and runs in $\poly(n, s, 1/\varepsilon, 1/\delta)$ time to output a state $\ket{\widehat{\phi}}$ such that
$$
|\langle \widehat{\phi} | \psi \rangle|^2 \geq \opt - \varepsilon,
$$
with probability $\geq 1-\delta$. Moreover, $\ket{\widehat{\phi}}$ can be expressed as $\ket{\widehat{\phi}} = \sum_{i=1}^{\kappa} \beta_i \ket{\chi_i}$ with $\beta \in \calB_\infty^k$ being coefficients corresponding to $\ket{\chi_i}$, which are parities, and $\kappa =\poly(s/\varepsilon)$.
\end{theorem}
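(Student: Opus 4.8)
The plan is to build an efficient \emph{weak} agnostic learner for $\Sh_{\Cc_{\DT(s)}}$ whose output is always a parity state, and then feed it into the quantum agnostic boosting theorem (Theorem~\ref{thm:agnostic_boosting}) to obtain the claimed strong learner.

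\textbf{Step 1 (weak learner).} Fix a size-$s$ decision tree $f$ and set $g=(-1)^f:\FF^n\to\{-1,1\}$, so that $\ket{\phi_f}=2^{-n/2}\sum_x g(x)\ket{x}$ and $\Had^{\otimes n}\ket{\phi_f}=\sum_T \widehat g(T)\ket{T}$. The result of Kushilevitz and Mansour~\cite{kushilevitz1993decision} gives $\sum_T |\widehat g(T)|\le s$. Let $\ket{\varphi}$ be any $n$-qubit state with $\calF_{\Cc_{\DT(s)}}(\ket{\varphi})\ge\tau$, witnessed by some $\ket{\phi_f}$, and put $\ket{\varphi'}=\Had^{\otimes n}\ket{\varphi}$. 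Then
$$
\sqrt{\tau}\le |\langle \varphi|\phi_f\rangle| = \Big|\sum_T \widehat g(T)\langle \varphi'|T\rangle\Big| \le \Big(\sum_T |\widehat g(T)|\Big)\max_T |\langle \varphi'|T\rangle| \le s\max_T |\langle \varphi'|T\rangle|,
$$
so some computational-basis string $T^\star$ has $|\langle\varphi'|T^\star\rangle|^2\ge \tau/s^2$. The weak learner thus measures $\Had^{\otimes n}\ket{\varphi}$ in the computational basis $O((s^2/\tau)\log(1/\delta))$ times, estimates $|\langle\varphi'|T\rangle|^2$ for each observed $T$ using the $\SWAP$ test (Lemma~\ref{lem:swap_test}) to accuracy $\Theta(\tau/s^2)$, and returns $\ket{\chi_{\widehat T}}=\Had^{\otimes n}\ket{\widehat T}$ for the maximizer $\widehat T$. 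With probability $\ge 1-\delta$ this outputs a parity state with $|\langle \varphi|\chi_{\widehat T}\rangle|^2\ge \tau/(2s^2)$, using $S_{\WAL},T_{\WAL}=\poly(n,s,1/\tau,\log(1/\delta))$. This is a weak agnostic learner in the sense of Definition~\ref{def:WAL} with promise $\eta(\tau)=\tau/(2s^2)$, i.e.\ $\eta_1=1/(2s^2)$ and $\eta_2=1$ in the notation of Theorem~\ref{thm:agnostic_boosting}.

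\textbf{Step 2 (boosting).} Apply Theorem~\ref{thm:agnostic_boosting} with $\Cc=\Cc_{\DT(s)}$ and the weak learner of Step~1. The boosting routine only ever runs the weak learner on residual states promised to have fidelity $\ge \varepsilon_s=\Theta(\varepsilon^2)$ with $\Sh_{\Cc_{\DT(s)}}$, and the weak learner's correctness depends solely on the $\ell_1$-bound of the decision-tree class (not on the structure of its input), so the hypotheses of Theorem~\ref{thm:agnostic_boosting} are satisfied. With $\eta_1=\Theta(1/s^2)$, $\eta_2=1$ we get $\upsilon=\eta(C\varepsilon^2/16)=\Theta(\varepsilon^2/s^2)$, hence $\kappa=O(1/(\varepsilon^2\upsilon))=O(s^2/\varepsilon^4)=\poly(s/\varepsilon)$, and the output $\ket{\widehat\phi}=\sum_{i=1}^{\kappa}\beta_i\ket{\phi_i}$ is a superposition of $\poly(s/\varepsilon)$ parity states with $\beta\in\calB_\infty^k$ and $|\langle\widehat\phi|\psi\rangle|^2\ge\opt-\varepsilon$. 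Substituting $S_{\WAL},T_{\WAL}=\poly(n,s,1/\varepsilon,\log(1/\delta))$ and $1/\upsilon=\poly(s/\varepsilon)$ into the complexity bounds of Theorem~\ref{thm:agnostic_boosting} gives overall sample and time complexity $\poly(n,s,1/\varepsilon,1/\delta)$, proving the theorem.

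\textbf{Main obstacle.} The only real work is Step~1 and checking that it plugs cleanly into the boosting framework: one must verify that measuring $\Had^{\otimes n}\ket{\varphi}$ together with $\SWAP$-test post-processing recovers a genuine \emph{parity} state with the promised overlap for \emph{every} $\ket{\varphi}$ with $\calF_{\Cc_{\DT(s)}}(\ket{\varphi})\ge\tau$; this is exactly where the $\ell_1$ bound $\sum_T|\widehat g(T)|\le s$ is essential, as mere Fourier-sparsity would not suffice. Everything else --- propagating the $\poly$ factors through Theorem~\ref{thm:agnostic_boosting}, fixing the $\SWAP$-test accuracies, and taking union bounds over the $\poly(s/\varepsilon)$ iterations --- is routine. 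Finally, since $k$-juntas are computed by size-$2^k$ decision trees, instantiating the same argument with $s=2^k$ gives the $\poly(n,2^k,1/\varepsilon)$ bound for juntas.
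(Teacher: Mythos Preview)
Your proposal is correct and follows essentially the same route as the paper: it builds the weak parity learner via the Kushilevitz--Mansour $\ell_1$ bound exactly as in Lemma~\ref{WAL_DT} (with promise $\eta(\tau)=\tau/(2s^2)$, i.e.\ $\eta_1=1/(2s^2)$, $\eta_2=1$), and then instantiates Theorem~\ref{thm:agnostic_boosting}. The paper factors Step~1 through the parity agnostic learner (Lemma~\ref{lem:AL_parities_tau}) rather than spelling out the measure-plus-$\SWAP$-test procedure directly, but this is only a packaging difference; your complexity accounting and the junta corollary also match.
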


To prove the above theorem, we will instantiate the quantum agnostic boosting algorithm (Algorithm~\ref{algo:agnostic_boosting}) and then use Theorem~\ref{thm:agnostic_boosting}. To use the boosting protocol, we need to define a weak agnostic learner $\calA_{\WAL}$ (Definition~\ref{def:WAL}) of $\DT(s)$. Towards obtaining a $\calA_{\WAL}$ for $\DT(s)$, we will require the following result from Kushilevitz and Mansour~\cite{kushilevitz1993decision}.
\begin{lemma}[\cite{kushilevitz1993decision}]\label{lem:ub_L1_norm_DT}
If $f\in \DT(s)$, then the $\ell_1$ norm of its Fourier coefficients satisfies
$
\sum_{\alpha} |\widehat{f}(\alpha)| \leq~s.
$
\end{lemma}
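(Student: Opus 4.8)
The plan is to carry out the classical Kushilevitz--Mansour argument~\cite{kushilevitz1993decision}: decompose $f$ as a signed sum of indicators of root-to-leaf paths, bound the Fourier $\ell_1$ norm of each such indicator by $1$, and conclude by the triangle inequality applied coefficient by coefficient.

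First I would fix a size-$s$ decision tree $\mathcal{T}$ computing $f$ and let $V$ denote its set of leaves; since $\mathcal{T}$ is a binary tree with $s$ nodes in total, $|V|\leq s$. For a leaf $v\in V$, the root-to-$v$ path queries some set $P_v\subseteq[n]$ of \emph{distinct} coordinates (no coordinate is re-queried along a single path) and fixes a value $a_i\in\FF$ for each $i\in P_v$; let $b_v$ be the leaf's label. Because the leaves partition $\FF^n$ according to which path an input follows, $f=\sum_{v\in V} b_v\cdot \mathbf{1}_v$, where $\mathbf{1}_v(x)=\prod_{i\in P_v}\mathbf{1}[x_i=a_i]$ is the indicator that $x$ reaches $v$.

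Next I would compute the Fourier $\ell_1$ norm of a single $\mathbf{1}_v$. Each factor is a one-coordinate function, $\mathbf{1}[x_i=a_i]=\tfrac12\big(1+(-1)^{a_i}\chi_{\{i\}}\big)$, whose two Fourier coefficients have absolute values summing to $1$. Since the coordinates in $P_v$ are pairwise distinct, the characters appearing in the factors are supported on disjoint coordinate sets, so the Fourier expansion of the product $\mathbf{1}_v$ is exactly the distributed product of the factors' expansions and the coefficients multiply; hence $\sum_S |\widehat{\mathbf{1}_v}(S)|=\prod_{i\in P_v}1=1$. Combining, $\sum_S|\widehat{f}(S)|=\sum_S\big|\sum_{v\in V}b_v\widehat{\mathbf{1}_v}(S)\big|\leq\sum_{v\in V}|b_v|\sum_S|\widehat{\mathbf{1}_v}(S)|=|V|\leq s$; and if $f$ is taken $\FF$-valued (labels in $\{0,1\}$) rather than $\pm1$-valued, one simply restricts the sum to leaves labelled $1$ and the same bound follows a fortiori.

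The only step that is more than bookkeeping is the multiplicativity of the Fourier $\ell_1$ norm over functions on disjoint coordinate sets, which in turn is exactly the observation that the coordinates queried along a single root-to-leaf path are distinct; once that is in hand, the leaf-path decomposition and the triangle inequality finish it. I expect no genuine obstacle: the entire content of the lemma is the leaf-path decomposition of $f$.
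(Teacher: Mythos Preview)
Your argument is the standard Kushilevitz--Mansour proof and is correct. The paper does not supply its own proof of this lemma; it is simply quoted as a known result from~\cite{kushilevitz1993decision}, so there is nothing to compare against beyond noting that what you wrote is exactly the classical argument.
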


We can now show a weak agnostic learner for $\DT(s)$.
\begin{lemma}\label{WAL_DT}
Let $s \in \mathbb{N}$, $\tau,\delta \in (0,1)$ and $\varepsilon \in (0,\tau/s^2)$. Suppose $\ket{\psi}$ is an unknown $n$-qubit state satisfying $\calF_{\Cc_{\DT(s)}}(\ket{\psi}) \geq \tau$. Then, there is a quantum algorithm that outputs a parity state $\ket{\phi} \in \Cc_{\Par}$ such that
$$
|\langle \phi | \psi \rangle|^2 \geq \tau/s^2 - \varepsilon,
$$
with probability $\geq 1-\delta$. The algorithm consumes 
$\widetilde{O}(s^2/(\tau \cdot \varepsilon^2)\cdot  \log1/ \delta)$ copies of $\ket{\psi}$ and runs in $\widetilde{O}(n s^2/(\tau \cdot \varepsilon^2)\cdot  \log 1/ \delta)$ time.
\end{lemma}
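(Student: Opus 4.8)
The plan is to reduce the problem to locating a single ``heavy'' Fourier coefficient of $\Had^{\otimes n}\ket{\psi}$, and then to recover that coefficient with the same guess-and-verify scheme used for agnostically learning parity states (Lemma~\ref{lem:AL_parities_tau}). Write $\ket{\psi'}=\Had^{\otimes n}\ket{\psi}$ and pick $f\in\DT(s)$ with $|\langle\phi_f|\psi\rangle|^2\geq\tau$. Applying $\Had^{\otimes n}$ to $\ket{\phi_f}$ puts it in the parity basis, $\Had^{\otimes n}\ket{\phi_f}=\sum_T\widehat{f}(T)\ket{T}$, where $\widehat{f}(T)=\Exp_x[(-1)^{f(x)}\chi_T(x)]$ are the Fourier coefficients of the $\pm1$-valued function computed by the tree. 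Since $\Had^{\otimes n}$ is unitary, $\langle\phi_f|\psi\rangle=\sum_T\widehat{f}(T)\langle T|\psi'\rangle$, and so by the triangle inequality together with the Kushilevitz--Mansour bound (Lemma~\ref{lem:ub_L1_norm_DT}),
\[
\sqrt{\tau}\;\leq\;|\langle\phi_f|\psi\rangle|\;\leq\;\sum_T|\widehat{f}(T)|\,|\langle T|\psi'\rangle|\;\leq\;\Big(\sum_T|\widehat{f}(T)|\Big)\max_T|\langle T|\psi'\rangle|\;\leq\;s\cdot\max_T|\langle T|\psi'\rangle|.
\]
Hence there is a string $T^\star$ with $|\langle T^\star|\psi'\rangle|^2\geq\tau/s^2$; equivalently, the parity state $\ket{\chi_{T^\star}}=\Had^{\otimes n}\ket{T^\star}$ satisfies $|\langle\chi_{T^\star}|\psi\rangle|^2\geq\tau/s^2$.

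Given this, the algorithm I would run is: measure $\Had^{\otimes n}\ket{\psi}$ in the computational basis $O((s^2/\tau)\log(1/\delta))$ times, collecting the outcomes in a list $L$; since each measurement returns $T^\star$ with probability $|\langle T^\star|\psi'\rangle|^2\geq\tau/s^2$, we get $T^\star\in L$ with probability $\geq1-\delta/2$. Then, for each distinct $z\in L$, use the $\SWAP$ test (Lemma~\ref{lem:swap_test}) on $O((1/\varepsilon^2)\log(|L|/\delta))$ copies of $\ket{\psi}$ to estimate $|\langle\chi_z|\psi\rangle|^2$ to additive error $\varepsilon/2$ with failure probability $\leq\delta/(2|L|)$, and output $\ket{\chi_{\widehat{z}}}\in\Sh_{\Cc_{\Par}}$ for the $\widehat{z}$ that maximizes the estimate. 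On the event (probability $\geq1-\delta$ by a union bound) that $T^\star\in L$ and all estimates are accurate, the maximizer $\widehat{z}$ obeys $|\langle\chi_{\widehat{z}}|\psi\rangle|^2\geq(\text{estimate at }\widehat{z})-\varepsilon/2\geq(\text{estimate at }T^\star)-\varepsilon/2\geq|\langle\chi_{T^\star}|\psi\rangle|^2-\varepsilon\geq\tau/s^2-\varepsilon$, and the hypothesis $\varepsilon<\tau/s^2$ ensures this guarantee is nonvacuous.

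For the complexity bookkeeping: Step~1 consumes $O((s^2/\tau)\log(1/\delta))$ copies and $O((ns^2/\tau)\log(1/\delta))$ time (the factor $n$ from the Hadamard layer), and $|L|=O((s^2/\tau)\log(1/\delta))$; Step~2 then consumes $|L|\cdot O((1/\varepsilon^2)\log(|L|/\delta))=\widetilde{O}\big(s^2/(\tau\varepsilon^2)\log(1/\delta)\big)$ copies and $\widetilde{O}\big(ns^2/(\tau\varepsilon^2)\log(1/\delta)\big)$ time, matching the claim. I do not expect a genuine obstacle here: the only substantive ingredient is the existence of a parity with squared overlap $\geq\tau/s^2$, which is an immediate consequence of the $\ell_1$ Fourier bound for decision trees, and the rest is exactly the ``sample a candidate, then verify by $\SWAP$ test'' argument already used for parities. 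The only points that need mild care are fixing the $\pm1$ Fourier convention so Lemma~\ref{lem:ub_L1_norm_DT} applies to the state $\ket{\phi_f}$, and checking that the $\SWAP$-test error $\varepsilon/2$ composes correctly with the ``argmax over an $\varepsilon/2$-accurate list'' step to yield the clean $\tau/s^2-\varepsilon$ bound.
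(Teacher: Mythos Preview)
Your proposal is correct and follows essentially the same approach as the paper: both derive $\calF_{\Cc_{\Par}}(\ket{\psi})\geq\tau/s^2$ via the triangle inequality combined with the Kushilevitz--Mansour $\ell_1$ bound (Lemma~\ref{lem:ub_L1_norm_DT}), and then invoke the sample-and-verify parity learner. The only cosmetic difference is that the paper black-boxes the second step as a call to Lemma~\ref{lem:AL_parities_tau} with threshold $\tau/s^2$, whereas you unpack that lemma inline.
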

\begin{proof}
Let $\ket{\phi_f}$ be the phase state corresponding to $f\in \DT(s)$ such that $|\langle \psi | \phi_f \rangle|^2 \geq \tau$. This then implies that
\begin{align*}
    \sqrt{\tau} \leq  |\langle \psi|\phi_f\rangle| =|\langle \psi|\textsf{Had}^{\otimes n} \cdot \textsf{Had}^{\otimes n} |\phi_f\rangle|
    &=\Big|\langle \psi|\textsf{Had}^{\otimes n} \cdot \sum_{\alpha \in \FF^n} \widehat{f}(\alpha)\ket{\alpha} \Big|\\
    &=\Big|\sum_{\alpha \in \{0,1\}^n} \widehat{f}(\alpha)\langle \psi|\textsf{Had}^{\otimes n}| \alpha \rangle\Big|\\
    &\leq \sum_{\alpha \in \{0,1\}^n} |\widehat{f}(\alpha)|\cdot |\langle \psi|\textsf{Had}^{\otimes n}| \alpha \rangle|\\
    &\leq \max_{\alpha \in \{0,1\}^n} |\langle \psi|\textsf{Had}^{\otimes n}| \alpha \rangle|\cdot \sum_{\alpha \in \{0,1\}^n} |\widehat{f}(\alpha)|\\
    &\leq \max_{\alpha \in \{0,1\}^n} |\langle \psi|\textsf{Had}^{\otimes n}| \alpha \rangle| \cdot s,
\end{align*}
where we have used the triangle inequality in the third line and Lemma~\ref{lem:ub_L1_norm_DT} in the last line. Noting that 
$
\textsf{Had}^{\otimes n} \ket{\alpha} = \ket{\chi_\alpha},
$
is a parity, we then have
$$
\calF_{\Cc_{\Par}}(\ket{\psi}) \geq \tau/s^2.
$$ 
Given that the fidelity of $\ket{\psi}$ with the class of parity states $\calS_{\Cc_{\Par}}$ is high, we can use Lemma~\ref{lem:AL_parities_tau} with error set to $\varepsilon$ and the lower bound on fidelity set to $\tau/s^2$ to learn a parity state $\ket{\phi}$ such that
$$
|\langle \psi | \phi \rangle|^2 \geq \tau/s^2 - \varepsilon,
$$
with probability $\geq 1 - \delta$. This consumes $O(s^2/(\tau \cdot \varepsilon^2) \log(s/(\tau \cdot \delta)))$ sample complexity and $O(n s^2/(\tau \cdot \varepsilon^2) \log(s/(\tau \cdot \delta)))$ time complexity. This completes the proof.
\end{proof}

The proof of Theorem~\ref{thm:agnostic_decision_trees} is then immediate from the instantiation of Theorem~\ref{thm:agnostic_boosting}.

\begin{proof}[Proof of Theorem~\ref{thm:agnostic_decision_trees}]
We instantiate the agnostic boosting algorithm (Algorithm~\ref{algo:agnostic_boosting}), as in Theorem~\ref{thm:agnostic_boosting}, using the weak agnostic learner $\calA_\WAL$ for $\Cc_{\DT(s)}$ from Lemma~\ref{WAL_DT}. The corresponding promise is $\eta(\tau) = \tau/(2s^2)$ with $\eta_1 = 1/(2s^2)$ and $\eta_2 = 1$ (as defined in Theorem~\ref{thm:agnostic_boosting} for $\varepsilon$ of Lemma~\ref{lem:AL_parities_tau} set to $\tau/(2s^2)$. The corresponding sample complexity is $S_{\WAL} = \widetilde{O}(s^6/\varepsilon_s^3 \cdot  \log1/ \delta)$ and $T_{\WAL} = \widetilde{O}(n s^6/\varepsilon_s^3 \cdot  \log 1/ \delta)$ for error instantiated as $\varepsilon_s$ in Algorithm~\ref{algo:agnostic_boosting} (Theorem~\ref{thm:structure_learning}). The output of Theorem~\ref{thm:agnostic_boosting} is then a strong (improper) agnostic learner consuming $\poly(s, 1/\varepsilon, \log(1/\delta))$ copies and $\poly(n, s, 1/\varepsilon, \log(1/\delta))$ time.
\end{proof}

\paragraph{Agnostic learning juntas.} We obtain an improper learning algorithm for $k$-juntas by instantiating Theorem~\ref{thm:agnostic_decision_trees} with size $s=2^k$ since any $k$-junta admits a decision tree of size $\mathcal{O}(2^k)$. This is summarized below.
\begin{corollary}
\label{corr:agnostic_juntas}
Let $\varepsilon \in (0,1)$ and $k \in \mathbb{N}$. Suppose $\ket{\psi}$ is an $n$-qubit state with unknown optimal fidelity $\calF_{\Jun(k)}(\ket{\psi}) = \opt$. Then, there is a quantum algorithm consuming $\poly(n, 2^k, 1/\varepsilon)$ copies of the state to output a state $\ket{\widehat{\phi}}$ such that
$$
|\langle \widehat{\phi} | \psi \rangle|^2 \geq \opt - \varepsilon,
$$
where $\kappa = \poly(2^k/\varepsilon)$ and $\ket{\widehat{\phi}} = \sum_{i=1}^{\kappa} \beta_i \ket{\chi_i}$ with $\beta \in \calB_\infty^k$ being coefficients for the parities $\ket{\chi_i}$.
\end{corollary}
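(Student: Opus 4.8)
The plan is to derive this as an immediate corollary of the decision-tree result, Theorem~\ref{thm:agnostic_decision_trees}, using the elementary fact that juntas have small decision trees. First I would record the observation: every $k$-junta $c:\FF^n\rightarrow\FF$ is computed by a decision tree of size at most $2^{k+1}-1=O(2^k)$. Indeed, if $c$ depends only on the coordinates $S=\{i_1,\ldots,i_k\}$, then the complete binary tree that queries $x_{i_1},\ldots,x_{i_k}$ along every root-to-leaf path, with each of its $2^k$ leaves labelled by the value $c$ takes on the corresponding assignment to $S$, computes $c$ and has $O(2^k)$ nodes. Hence $\Cc_{\Jun(k)}\subseteq\Cc_{\DT(2^{k+1})}$, and therefore also $\Sh_{\Jun(k)}\subseteq\Sh_{\Cc_{\DT(2^{k+1})}}$ at the level of phase states, which yields the one-sided inequality $\calF_{\Jun(k)}(\ket{\psi})\le \calF_{\Cc_{\DT(2^{k+1})}}(\ket{\psi})$.

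Next I would simply run the algorithm of Theorem~\ref{thm:agnostic_decision_trees} on copies of $\ket{\psi}$ with the size parameter set to $s=2^{k+1}$ and error parameter $\varepsilon$ (and failure probability $\delta$). Writing $\opt=\calF_{\Jun(k)}(\ket{\psi})$, that theorem produces, with probability $\ge 1-\delta$, a state $\ket{\widehat{\phi}}=\sum_{i=1}^{\kappa}\beta_i\ket{\phi_i}$ that is a superposition of $\kappa=\poly(s/\varepsilon)=\poly(2^k/\varepsilon)$ parity states (with $\beta\in\calB_\infty^k$) and satisfies
$$
|\langle\widehat{\phi}|\psi\rangle|^2 \;\ge\; \calF_{\Cc_{\DT(2^{k+1})}}(\ket{\psi})-\varepsilon \;\ge\; \opt-\varepsilon,
$$
where the last step uses the containment from the previous paragraph. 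Since the decision-tree learner runs in $\poly(n,s,1/\varepsilon)$ time and sample complexity, substituting $s=2^{k+1}$ gives the claimed $\poly(n,2^k,1/\varepsilon)$ bounds, and the bound on $\kappa$ together with the parity-superposition form of $\ket{\widehat{\phi}}$ are inherited directly from Theorem~\ref{thm:agnostic_decision_trees}.

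I do not expect any real obstacle: this is a black-box invocation of the decision-tree agnostic learner. The only point requiring a moment's care is that the reduction yields a \emph{containment} of concept classes rather than an equality, so one should note that $\opt$ for $k$-juntas is only bounded above by $\opt$ for size-$O(2^k)$ decision trees; this is exactly the direction needed, since the stronger agnostic guarantee for the larger class $\Cc_{\DT(O(2^k))}$ immediately implies the desired guarantee $|\langle\widehat{\phi}|\psi\rangle|^2\ge\opt-\varepsilon$ for $k$-juntas.
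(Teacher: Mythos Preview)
Your proposal is correct and is essentially identical to the paper's proof: the paper obtains the corollary by instantiating Theorem~\ref{thm:agnostic_decision_trees} with $s=O(2^k)$, using that every $k$-junta admits a decision tree of size $O(2^k)$. Your additional care in observing that the containment $\Cc_{\Jun(k)}\subseteq\Cc_{\DT(O(2^k))}$ gives the inequality $\calF_{\Jun(k)}(\ket{\psi})\le\calF_{\Cc_{\DT(O(2^k))}}(\ket{\psi})$ in exactly the direction needed is a nice touch that the paper leaves implicit.
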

We remark that in Appendix~\ref{app_sec:agnostic_juntas_no_boost}, we describe an (improper) agnostic learning algorithm of $k$-junta phase states that does not utilize boosting and is instead inspired by ideas from \cite{gopalan2008agnostically}.

\subsection{Agnostic learning DNFs}
We now show how the quantum agnostic boosting protocol (Algorithm~\ref{algo:agnostic_boosting}) can be utilized for (improper) agnostic learning of $s$-term DNFs, which we will denote by $\DNF(s)$. Particularly, we establish the following result.\footnote{We remark that one could have obtained a similar result for size-$s$ read-$k$ $\DNF$s with complexity $\poly(n,(s/\varepsilon)^{\log \log k\cdot \log 1/\varepsilon})$ using recent Fourier concentration bounds for these functions~\cite{lecomte2022sharper}.}
\begin{theorem}
\label{thm:agnostic_DNF}
Let $s \in \mathbb{N}$ and $\varepsilon,\delta \in (0,1)$. Suppose $\ket{\psi}$ is an $n$-qubit state with unknown optimal fidelity $\calF_{\DNF(s)}(\ket{\psi}) = \opt$. Then, there is a quantum algorithm that, given access to  $\poly((s/\varepsilon)^{\log \log (s/\varepsilon) \cdot \log(1/\varepsilon)}, 1/\varepsilon, \log(1/\delta))$ copies of $\ket{\psi}$ and running in time $\poly(n, (s/\varepsilon)^{\log \log (s/\varepsilon)\cdot \log (1/\varepsilon)}, 1/\varepsilon, \log(1/\delta))$ outputs, with probability at least $1-\delta$, a state $\ket{\widehat{\phi}}$ such that
$$
|\langle \widehat{\phi} | \psi \rangle|^2 \geq \opt - \varepsilon,
$$
where $\ket{\widehat{\phi}} = \sum_{i=1}^{\kappa} \beta_i \ket{\chi_i}$ with coefficients $\beta \in \calB_\infty^k$ for the parities $\ket{\chi_i}$ and $\kappa = \poly((s/\varepsilon)^{\log \log (s/\varepsilon) \cdot \log(1/\varepsilon)})$.
\end{theorem}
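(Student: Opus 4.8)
The plan is to mirror the proof of Theorem~\ref{thm:agnostic_decision_trees}: construct a weak agnostic learner for $\Cc_{\DNF(s)}$ that outputs a parity state, and then feed it into the quantum agnostic boosting theorem (Theorem~\ref{thm:agnostic_boosting}). The only step where $\DNF$s differ from decision trees is the Fourier-structural input used to build the weak learner: an $s$-term $\DNF$ need not have bounded $\ell_1$ Fourier norm, so in place of Lemma~\ref{lem:ub_L1_norm_DT} I would invoke the Fourier concentration theorem of Mansour~\cite{mansour1992n} with the sharper parameters of~\cite{lecomte2022sharper}, in the form: for every $f \in \DNF(s)$ and every $\gamma \in (0,1)$ there is a collection $\mathcal{S}$ of subsets of $[n]$ with $|\mathcal{S}| \le m(\gamma) := (s/\gamma)^{O(\log\log(s/\gamma)\cdot\log(1/\gamma))}$ such that $\sum_{\alpha \notin \mathcal{S}} \widehat{f}(\alpha)^2 \le \gamma$.

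For the weak learner, suppose $\ket{\psi}$ is $\tau$-close to $\ket{\phi_f}$ with $f \in \DNF(s)$, and write $\ket{\psi'} = \Had^{\otimes n}\ket{\psi}$, so that $\sqrt{\tau} \le |\langle\psi|\phi_f\rangle| = \big|\sum_\alpha \widehat{f}(\alpha)\langle\psi'|\alpha\rangle\big|$ after inserting $\Had^{\otimes n}\Had^{\otimes n}$ and expanding $\ket{\phi_f}$ in the parity basis. Applying the concentration fact with $\gamma = \tau/4$ and splitting the sum over $\mathcal{S}$ and its complement, Cauchy--Schwarz bounds the tail by $\sqrt{\gamma}\cdot\big(\sum_{\alpha}|\langle\psi'|\alpha\rangle|^2\big)^{1/2} \le \sqrt{\tau}/2$, while the head is at most $\big(\sum_{\alpha\in\mathcal{S}}|\widehat{f}(\alpha)|\big)\cdot \max_\alpha|\langle\psi'|\alpha\rangle| \le \sqrt{|\mathcal{S}|}\cdot\max_\alpha|\langle\psi'|\alpha\rangle|$ (Cauchy--Schwarz together with Parseval, using $\sum_\alpha\widehat{f}(\alpha)^2 = 1$). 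Rearranging yields $\max_\alpha|\langle\psi'|\alpha\rangle|^2 \ge \tau/(4m(\gamma))$, i.e.\ $\calF_{\Cc_{\Par}}(\ket{\psi}) \ge \tau/(4m(\gamma))$ since $\Had^{\otimes n}\ket{\alpha} = \ket{\chi_\alpha}$ is a parity state. As in Lemma~\ref{WAL_DT}, this heavy parity is then recovered by measuring $\widetilde{O}(m(\gamma)/\tau \cdot \log(1/\delta))$ copies of $\ket{\psi'}$ in the computational basis (the heavy outcome has probability $\ge \tau/(4m(\gamma))$), recording the outcomes, and running the $\SWAP$ test (Lemma~\ref{lem:swap_test}) to select the recorded string $\alpha$ of largest fidelity and output $\ket{\chi_\alpha} \in \Cc_{\Par}$. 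Fixing $\gamma$ equal to the booster's internal parameter $\varepsilon_s = \Theta(\varepsilon^2)$ (the analysis above still goes through for any $\tau \ge \varepsilon_s$ because then $\sqrt{\gamma} \le \sqrt{\tau}/2$), this is a weak agnostic learner for $\Cc_{\DNF(s)}$ with promise $\eta(\tau) = \eta_1\tau$, $\eta_2 = 1$, $\eta_1 = 1/(8m)$, where $m := m(\varepsilon_s) = (s/\varepsilon)^{O(\log\log(s/\varepsilon)\cdot\log(1/\varepsilon))}$ and the factor $8$ absorbs the $\SWAP$-test error; it runs in time $T_{\WAL} = \poly(n, m, 1/\varepsilon, \log(1/\delta))$.

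I would then invoke Theorem~\ref{thm:agnostic_boosting} with this weak learner. It outputs $\ket{\widehat{\phi}} = \sum_{i=1}^\kappa \beta_i\ket{\phi_i}$ over parity states with $|\langle\widehat{\phi}|\psi\rangle|^2 \ge \opt - \varepsilon$, where $\kappa = O(1/(\varepsilon^2\upsilon))$ with $\upsilon = \eta(C\varepsilon^2/16) = \Theta(\varepsilon^2/m)$, hence $\kappa = \poly(m/\varepsilon) = \poly\big((s/\varepsilon)^{\log\log(s/\varepsilon)\cdot\log(1/\varepsilon)}\big)$; and the sample and time complexities from Theorem~\ref{thm:agnostic_boosting}, being $\poly(n, T_{\WAL}, 1/\varepsilon, 1/\upsilon, \log(1/\delta))$, evaluate to $\poly\big(n, (s/\varepsilon)^{\log\log(s/\varepsilon)\cdot\log(1/\varepsilon)}, 1/\varepsilon, \log(1/\delta)\big)$, matching the claimed bounds (using that a fixed polynomial of a quasi-polynomial quantity is again quasi-polynomial with only a constant-factor change in the exponent).

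The main obstacle is the weak-learner analysis: verifying that splitting the Fourier expansion at the concentration threshold and applying Cauchy--Schwarz twice --- once on the tail (controlled by $\sum_\alpha|\langle\psi'|\alpha\rangle|^2 = 1$) and once on the head (controlled by Parseval, $\sum_\alpha\widehat{f}(\alpha)^2 = 1$) --- really does leave a $\Theta(\tau/m)$-heavy parity, and then carefully tracking how the concentration parameter $\gamma = \Theta(\varepsilon_s) = \Theta(\varepsilon^2)$ propagates through $m(\gamma)$ so that the final exponent reads $\log\log(s/\varepsilon)\cdot\log(1/\varepsilon)$ and not something larger. Everything else is a direct reuse of the decision-tree pipeline together with the already-proved boosting theorem; agnostic learning of depth-$2$ read-$k$ $\DNF$s would follow identically from the corresponding Fourier concentration bounds of~\cite{lecomte2022sharper}.
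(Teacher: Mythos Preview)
Your proposal is correct and follows essentially the same route as the paper: build a weak agnostic learner for $\Cc_{\DNF(s)}$ from Mansour's Fourier concentration (the paper's Lemma~\ref{lem:WAL_DNF}) and plug it into Theorem~\ref{thm:agnostic_boosting}. The only cosmetic differences are that you bound the head by $\sqrt{|\mathcal{S}|}\cdot\max_\alpha|\langle\psi'|\alpha\rangle|$ via Cauchy--Schwarz whereas the paper uses the cruder $|\mathcal{T}|\cdot\max$ bound, and you fix $\gamma$ once to $\Theta(\varepsilon_s)$ (which makes $\eta(\tau)=\eta_1\tau$ genuinely of the form required by Theorem~\ref{thm:agnostic_boosting}) whereas the paper lets $\gamma=\tau/8$ vary; neither affects the final $\poly\big((s/\varepsilon)^{\log\log(s/\varepsilon)\cdot\log(1/\varepsilon)}\big)$ bound.
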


Recall from Theorem~\ref{thm:agnostic_boosting} that we require the input of a weak agnostic learner $\calA_{\WAL}$ (Definition~\ref{def:WAL}). To obtain $\calA_{\WAL}$ for $\Cc_{\DNF(s)}$, our starting point is the following theorem due to Mansour which shows that the Fourier spectrum of $\DNF$s concentrate. 
\begin{theorem}[\cite{mansour1992n}]
\label{lemma:l1normofdnf}
Let $s\in \mathbb{N}$, $\gamma \in (0,1)$. For $f \in \Cc_{\DNF(s)}$,  there exists $\mathcal{T}_\gamma\subseteq \F_2^n$ such that $|\mathcal{T_\gamma}|\leq (s/\gamma)^{O\big((\log \log s/\gamma)\cdot (\log 1/\gamma)\big)}$~and 
$$
\sum_{T\in \mathcal{T}_\gamma}\widehat{f}(T)^2\geq 1-\gamma.
$$
\end{theorem}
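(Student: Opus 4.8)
I would follow the classical proof of Mansour's theorem: reduce to bounded width, apply H{\aa}stad's switching lemma to pass to small-depth decision trees, deduce a Fourier-degree bound by averaging over random restrictions, and then sparsify within the low degrees. The target is a family $\mathcal{T}_\gamma$ of subsets of $[n]$ (equivalently, of $\F_2^n$) with $\sum_{S\notin\mathcal T_\gamma}\widehat f(S)^2=O(\gamma)$; rescaling $\gamma$ by a constant at the end gives the stated bound. As a first step I would reduce to bounded width: writing $f=\bigvee_{i=1}^s T_i$, a term of width more than $w$ is satisfied by a uniform input with probability at most $2^{-w}$, so deleting all such terms yields an $s$-term DNF $f'$ of width at most $w:=\lceil\log(2s/\gamma)\rceil$ with $\Pr_x[f(x)\ne f'(x)]\le s\,2^{-w}\le\gamma/2$, hence $\sum_S(\widehat f(S)-\widehat{f'}(S))^2=\|f-f'\|_2^2\le 2\gamma$. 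Since $\sum_{S\notin\mathcal T_\gamma}\widehat f(S)^2\le 2\sum_{S\notin\mathcal T_\gamma}\widehat{f'}(S)^2+2\sum_S(\widehat f(S)-\widehat{f'}(S))^2$, it suffices to build a small $\mathcal T_\gamma$ capturing all but $O(\gamma)$ of the Fourier mass of the width-$w$ DNF $f'$.

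\textbf{Switching and the degree bound.} Next I would apply H{\aa}stad's switching lemma to $f'$: for a $p$-random restriction $\rho$ (each coordinate kept free independently with probability $p:=1/(10w)$), the restriction $f'|_\rho$ is computed by a decision tree of depth at most $d$ except with probability at most $(5pw)^d=2^{-d}$, and choosing $d:=\lceil\log(2/\gamma)\rceil$ makes this at most $\gamma/2$. For such a ``good'' $\rho$, $f'|_\rho$ is a degree-$\le d$ multilinear polynomial on the free set $J$, so $\sum_{R\subseteq J,\,|R|\le d}\widehat{f'|_\rho}(R)^2=1$. Combining this with the elementary averaging identity $\Exp_\rho\!\big[\sum_{R\subseteq J,\,|R|\le d}\widehat{f'|_\rho}(R)^2\big]=\sum_S\widehat{f'}(S)^2\,\Pr[\mathrm{Bin}(|S|,p)\le d]$ gives $\sum_S\widehat{f'}(S)^2\,\Pr[\mathrm{Bin}(|S|,p)>d]\le\gamma/2$; since $\Pr[\mathrm{Bin}(|S|,p)>d]\ge 1/2$ whenever $|S|>K:=2d/p=\Theta(w\log(1/\gamma))$, this yields $\sum_{|S|>K}\widehat{f'}(S)^2\le\gamma$. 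So, up to $O(\gamma)$, the spectrum of $f'$ lives on levels at most $K$.

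\textbf{Sparsification, and where the $\log\log$ comes from.} The degree bound alone does not finish, since $\{S:|S|\le K\}$ has $\binom{n}{\le K}=n^{\Theta(K)}$ elements while $\mathcal T_\gamma$ must be independent of $n$; the real content is pruning the low levels to $(s/\gamma)^{O(\log\log(s/\gamma)\cdot\log(1/\gamma))}$ coefficients. The lever is that a good $f'|_\rho$, being a depth-$d$ decision tree, has at most $2^d$ leaves, hence (by Lemma~\ref{lem:ub_L1_norm_DT}) Fourier $\ell_1$-norm at most $2^d$, and is therefore $\gamma$-concentrated on only $4^d/\gamma^2=\mathrm{poly}(1/\gamma)$ coefficients, each of size $\le d$ --- a count with no dependence on $n$. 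I would lift this $n$-free sparsity back to $f'$ by an iterated restriction argument: decompose a $p$-random restriction into a sequence over a random partition of $[n]$ into $\Theta(1/p)=\Theta(w)$ blocks, freeze all but one block at a time, apply the decision-tree sparsity bound at each stage, and compose the candidate coefficient sets. Carrying the parameters $w=\Theta(\log(s/\gamma))$, $d=\Theta(\log(1/\gamma))$, $K=\Theta(wd)$ through this composition produces $\mathcal T_\gamma\subseteq\{S:|S|\le K\}$ of the claimed size, with the $\log\log(s/\gamma)$ in the exponent entering as $\log w=\log\log(s/\gamma)$. Finally, undoing the width reduction via the triangle inequality above and rescaling $\gamma$ completes the proof.

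\textbf{Main obstacle.} The hard part is exactly this sparsification: confining the Fourier weight to levels $\le K$ is essentially a black-box consequence of H{\aa}stad's lemma, but producing a coefficient set whose size is independent of $n$ --- and obtaining the sharp $\log\log(s/\gamma)$ in the exponent rather than the $\log(s/\gamma)$ that a single restriction would give --- requires the careful block-restriction composition, together with tracking how the $O(\gamma)$-scale errors from the width reduction, the switching step, and each iteration accumulate so that the total loss stays $O(\gamma)$.
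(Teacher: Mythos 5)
The paper does not prove this statement; it is imported as a citation to Mansour~\cite{mansour1992n} (and \cite{lecomte2022sharper} is cited nearby for a sharper variant), so there is no internal proof to compare your attempt against. Your sketch therefore has to stand on its own merits as a reconstruction of Mansour's argument.

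The first two stages of your plan are correct and essentially the standard route. Width reduction to $w=\lceil \log(2s/\gamma)\rceil$ at cost $O(\gamma)$ in $\ell_2^2$, and the LMN-style degree concentration $\sum_{|S|>K}\widehat{f'}(S)^2\le\gamma$ with $K=O(w\log(1/\gamma))$ obtained from H{\aa}stad's switching lemma together with the restriction averaging identity, are both fine; the intermediate estimates ($\Pr[\mathrm{Bin}(|S|,p)>d]\ge 1/2$ for $|S|>2d/p$, the $\ell_1$-to-sparsity pruning for a decision tree of depth $d$) are all standard and check out up to constants.

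The gap is exactly where you flag it: the sparsification. Your "iterated block restriction" is not specified in a way that demonstrably yields the stated size bound, and your claim that the $\log\log(s/\gamma)$ factor "enters as $\log w$" through the composition is not backed by the sketch. If each of the $\Theta(w)$ blocks contributes a multiplicative $\mathrm{poly}(1/\gamma)$ to the candidate set (which is what a naive reading of your composition gives), the resulting count is $\mathrm{poly}(1/\gamma)^{\Theta(w)}=(s/\gamma)^{O(\log(1/\gamma))}$, with no $\log\log$ appearing; you have not exhibited the mechanism producing the $\log w$ factor, nor argued why the composition's error accumulation stays $O(\gamma)$. The missing technical content is Mansour's low-level $\ell_1$-norm bound: for a width-$w$ DNF, $\sum_{|S|\le K}|\widehat{f'}(S)|\le (O(w))^{K}$. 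Given this, the sparsification is a one-shot pruning with no iterated composition: keep only those $S$ with $|S|\le K$ and $|\widehat{f'}(S)|\ge \gamma\,(O(w))^{-K}$, which discards at most $\gamma$ in $\ell_2^2$ and leaves $(O(w))^{2K}/\gamma$ coefficients. Plugging $w=O(\log(s/\gamma))$ and $K=O(w\log(1/\gamma))$, the count is $w^{O(wd)}=2^{O(\log\log(s/\gamma)\cdot\log(s/\gamma)\cdot\log(1/\gamma))}=(s/\gamma)^{O(\log\log(s/\gamma)\cdot\log(1/\gamma))}$, which is exactly where the $\log\log$ comes from. Proving the $\ell_1$ lemma itself is the nontrivial combinatorial step of Mansour's theorem and needs to be either cited or proved; your block-restriction idea is a plausible direction for attacking it, but as written it does not establish it.
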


This allows us to give a weak agnostic learner for $\DNF(s)$.
\begin{lemma}
\label{lem:WAL_DNF}
Let $s\in \mathbb{N}$, $\tau,\delta \in (0,1]$ and $\varepsilon > 0$. Suppose $\ket{\psi}$ is an unknown $n$-qubit state satisfying
$$
\max_{f\in \Cc_{\DNF(s)}}|\langle \psi|\phi_f\rangle|^2 \geq \tau.
$$
Let $s^*=(s/\tau)^{O(\log \log (s/\tau) \cdot \log (1/\tau))}$.
Then, there exists an algorithm that with probability $\geq 1 - \delta$, outputs a parity state $\ket{\phi}\in \calS_{\Cc_{\Par}}$ satisfying
$$
|\langle \psi|\phi\rangle|^2\geq \tau/s^* - \varepsilon.
$$
The algorithm consumes $$
\widetilde{O}\Big(s^*/(\tau \cdot \varepsilon^2)\cdot  \log(1/\delta))\Big)
$$ 
copies of $\ket{\psi}$  in total and runs in  time 
$$
\widetilde{O}\Big(n\cdot s^*/(\tau \cdot \varepsilon^2) \cdot \log(1/\delta)\Big).
$$
\end{lemma}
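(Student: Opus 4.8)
The plan is to mirror the weak learner for decision trees (Lemma~\ref{WAL_DT}), substituting Mansour's Fourier concentration theorem (Theorem~\ref{lemma:l1normofdnf}) for the Kushilevitz--Mansour $\ell_1$ bound. Let $\ket{\phi_f}$ be the phase state of the $f\in\Cc_{\DNF(s)}$ achieving $|\langle\psi|\phi_f\rangle|^2\geq\tau$, set $\ket{\psi'}=\Had^{\otimes n}\ket{\psi}$, and recall $\Had^{\otimes n}\ket{\phi_f}=\sum_\alpha\widehat{f}(\alpha)\ket{\alpha}$, so that $\sqrt{\tau}\leq\big|\sum_\alpha\widehat{f}(\alpha)\langle\psi'|\alpha\rangle\big|$. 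Applying Theorem~\ref{lemma:l1normofdnf} with $\gamma=\tau/4$ produces a set $\mathcal{T}_\gamma$ with $|\mathcal{T}_\gamma|\leq(s/\tau)^{O(\log\log(s/\tau)\cdot\log(1/\tau))}=:s^*$ satisfying $\sum_{\alpha\notin\mathcal{T}_\gamma}\widehat{f}(\alpha)^2\leq\gamma$.

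The core step is a two-term Cauchy--Schwarz estimate. I would split
\begin{equation*}
\sqrt{\tau}\leq\Big|\sum_{\alpha\in\mathcal{T}_\gamma}\widehat{f}(\alpha)\langle\psi'|\alpha\rangle\Big|+\Big|\sum_{\alpha\notin\mathcal{T}_\gamma}\widehat{f}(\alpha)\langle\psi'|\alpha\rangle\Big|,
\end{equation*}
and bound the tail by $\sqrt{\sum_{\alpha\notin\mathcal{T}_\gamma}\widehat{f}(\alpha)^2}\cdot\sqrt{\sum_\alpha|\langle\psi'|\alpha\rangle|^2}\leq\sqrt{\gamma}=\sqrt{\tau}/2$, using that $\ket{\psi'}$ is a unit vector. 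For the heavy part I would factor out $\max_\alpha|\langle\psi'|\alpha\rangle|$ and then bound $\sum_{\alpha\in\mathcal{T}_\gamma}|\widehat{f}(\alpha)|\leq\sqrt{|\mathcal{T}_\gamma|}\cdot\big(\sum_\alpha\widehat{f}(\alpha)^2\big)^{1/2}\leq\sqrt{s^*}$ by Cauchy--Schwarz and Parseval. Combining gives $\sqrt{\tau}/2\leq\sqrt{s^*}\cdot\max_\alpha|\langle\psi'|\alpha\rangle|$; since $\Had^{\otimes n}\ket{\alpha}=\ket{\chi_\alpha}$ is a parity state, this rearranges to $\calF_{\Cc_{\Par}}(\ket{\psi})\geq\tau/(4s^*)$, and absorbing the factor $4$ into $s^*$ yields $\calF_{\Cc_{\Par}}(\ket{\psi})\geq\tau/s^*$.

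Given this fidelity bound, the last step is to invoke the proper parity learner, Lemma~\ref{lem:AL_parities_tau}, with target fidelity $\tau/s^*$ and error $\varepsilon$, which with probability $\geq 1-\delta$ outputs a parity state $\ket{\phi}$ with $|\langle\psi|\phi\rangle|^2\geq\tau/s^*-\varepsilon$; substituting $\tau/s^*$ into its $\widetilde{O}(n/(\tau\varepsilon^2)\log(1/\delta))$ time and $\widetilde{O}(1/(\tau\varepsilon^2)\log(1/\delta))$ sample bounds gives exactly the claimed complexities. The only point needing care is the bookkeeping in the choice $\gamma=\Theta(\tau)$: one must check that $\sqrt{\tau}-\sqrt{\gamma}=\Omega(\sqrt{\tau})$ (so that the heavy part retains a constant fraction of the Fourier mass) while simultaneously $\log\log(s/\gamma)\cdot\log(1/\gamma)=O(\log\log(s/\tau)\cdot\log(1/\tau))$, so that $s^*$ keeps the form stated in the lemma. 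This is routine, so I do not expect any genuine obstacle beyond careful constant-tracking; the substantive input, Mansour's concentration theorem, is already in hand.
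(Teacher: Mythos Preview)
Your proposal is correct and follows essentially the same route as the paper: split the Fourier sum into the Mansour-concentrated part $\mathcal{T}_\gamma$ and its complement, Cauchy--Schwarz the tail against $\|\ket{\psi'}\|=1$, extract $\max_\alpha|\langle\psi'|\alpha\rangle|$ from the heavy part, set $\gamma=\Theta(\tau)$, and then invoke Lemma~\ref{lem:AL_parities_tau}. The only cosmetic difference is that on the heavy part the paper bounds $\sum_{\alpha\in\mathcal{T}}|\widehat f(\alpha)|$ crudely by $|\mathcal{T}|$ (using $|\widehat f(\alpha)|\le 1$), whereas you use Cauchy--Schwarz plus Parseval to get $\sqrt{|\mathcal{T}|}$; your estimate is sharper but the gain is swallowed by the big-$O$ in the exponent of~$s^*$, so the final statement is identical.
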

\begin{proof}
Let $\ket{\phi_f}$ be the phase state corresponding to $f\in \DNF(s)$ such that $|\langle \psi | \phi_f \rangle|^2 \geq \tau$, and $\mathcal{T}\subseteq \F_2^n$ be as in Theorem~\ref{lemma:l1normofdnf}. Then we have that,  
\begin{align*}
\sqrt{\tau}\leq |\langle \psi|\Had^{\otimes n}\cdot \Had^{\otimes n}|\phi_f\rangle|&=|\langle \psi'|\sum_S \widehat{f}(S)|S\rangle|\\
&=|\sum_{S\in \mathcal{T}}\widehat{f}(S)\langle \psi'|S\rangle+\langle \psi'|\cdot \sum_{S\notin \mathcal{T}}\widehat{f}(S)|S\rangle|\\
&\leq \max_{S}|\langle \psi'|S\rangle| \cdot |\mathcal{T}|+ \Big|\langle \psi'|\cdot \sum_{S\notin \mathcal{T}}\widehat{f}(S)|S\rangle\Big|\\
&\leq  \max_{S}|\langle \psi'|S\rangle| \cdot |\mathcal{T}|+ \big \|\ \ket{\psi'} \big\| \cdot \Big\|\sum_{S\notin \mathcal{T}}\widehat{f}(S)|S\rangle \Big \|\\
&=  \max_{S}|\langle \psi'|S\rangle| \cdot |\mathcal{T}|+ \Big(\sum_{S\notin \mathcal{T}}\widehat{f}(S)^2 \Big)^{1/2}\\
&\leq  \max_{S}|\langle \psi'|S\rangle| \cdot |\mathcal{T}|+ \sqrt{\gamma},
\end{align*}
\noindent where we use the triangle inequality in the third line, Cauchy-Schwarz in the fourth, and Parseval's identity in combination with Theorem \ref{lemma:l1normofdnf} in the last one. This implies that,
$$
\calF_{\Par}(\ket{\psi}) \geq (\sqrt{\tau}-\sqrt{\gamma})^2\cdot (s/\gamma)^{-O(\log \log (s/\gamma) \log(1/\gamma)}\geq \tau\cdot \underbrace{(s/\tau)^{-O(\log \log (s/\tau) \cdot \log (1/\tau))}}_{:=1/s^*},
$$
where we let $\gamma= \tau/8$. So by applying our agnostic parity learning algorithm (Lemma~\ref{lem:AL_parities_tau}) with the lower bound there set to $\tau/s^*$, we can learn a state $\ket{\phi} \in \calS_{\Cc_{\Par}}$ such that
$$
|\langle \psi|\phi\rangle|^2\geq \tau/s^*-\varepsilon.
$$
with probability $\geq 1 - \delta$. The corresponding sample and complexity time complexity follows from Theorem~\ref{thm:parities} for $\varepsilon<\tau/s^*$, completing the proof. 
\end{proof}

The proof of Theorem~\ref{thm:agnostic_DNF} is then immediate from the instantiation of Theorem~\ref{thm:agnostic_boosting}.
\begin{proof}[Proof of Theorem~\ref{thm:agnostic_DNF}]
We instantiate the agnostic boosting algorithm (Algorithm~\ref{algo:agnostic_boosting}), as in Theorem~\ref{thm:agnostic_boosting}, using the weak agnostic learner $\calA_\WAL$ for $\Cc_{\DNF(s)}$ from Lemma~\ref{lem:WAL_DNF}. The corresponding promise is $\eta(\tau) = \tau/(2s^\star)$ with $s^*=(s/\tau)^{O(\log \log (s/\tau) \cdot \log (1/\tau))}$, $\eta_1 = 1/(2 s^\star)$ and $\eta_2 = 1$ (as defined in Theorem~\ref{thm:agnostic_boosting} for $\varepsilon$ of Lemma~\ref{lem:AL_parities_tau} set to $\tau/(2s^\star)$). Note that the function $\tau/s^*$ is an increasing function of $\tau$ (as required by the boosting algorithm). The corresponding sample complexity is $S_{\WAL} = \widetilde{O}((s/\varepsilon_s)^{\log \log (s/\varepsilon_s)\log (1/\varepsilon_s)}/\varepsilon_s^3 \cdot  \log1/ \delta)$ and $T_{\WAL} = \widetilde{O}(n (s/\varepsilon_s)^{\log \log (s/\varepsilon_s) \log (1/\varepsilon_s)}/\varepsilon_s^3 \cdot  \log 1/ \delta)$ for error instantiated as $\varepsilon_s$ in Algorithm~\ref{algo:agnostic_boosting} (Theorem~\ref{thm:structure_learning}). The output of Theorem~\ref{thm:agnostic_boosting} is then a strong (improper) agnostic learner consuming $\poly(s^*, 1/\varepsilon, \log(1/\delta))$ copies and $\poly(n, s^*, 1/\varepsilon, \log(1/\delta))$ time with $s^*=(s/\varepsilon)^{O(\log \log (s/\varepsilon) \cdot \log (1/\varepsilon))}$.
\end{proof}

\subsection{PAC learning depth-3 circuits}\label{sec:pac_learn_depth3}
In this section, we finally show how to quantum $\PAC$ learn depth-$3$ circuits. As mentioned earlier in the introduction (Section~\ref{sec:intro}), the current state-of-the-art algorithm for learning depth-$3$ circuits in the $\PAC$ model with only classical examples has a time complexity of $n^{O(\log^2 n)}$. Here, we show how to quantum $\PAC$ learn these circuits in $n^{O(\log n)}$ time. In order to prove our result, we will use the well-known discriminator lemma by Hajnal et al.~\cite{hajnal1993threshold} (which we reprove below specialized to our setting). Using our discriminator at each step of our boosting algorithm (in order to construct our weak learner), we are able to show that  depth-$3$ circuits are learnable in the quantum $\PAC$  model with the desired time complexity as stated above.

\subsubsection{Discriminator lemma}

We begin by formalizing the notion of a discriminator. 

\begin{definition}\label{def:discriminator}
Let $C:\{0,1\}^n\mapsto \{0,1\}$ be a circuit on $n$ bits, and let $A,B\subseteq \{0,1\}^n$ be disjoint sets. Let $D_A$ (resp $D_B$) be  distributions supported on $A$ (resp $B$). We say $C$ is a $\varepsilon$-discriminator for $A$ and $B$ over $D_A$ and $D_B$ if
    \begin{equation}
        |\Exp_{x\sim D_A}[C(x)]-\Exp_{x\sim D_B}[C(x)] |\geq \varepsilon.
    \end{equation}
\end{definition}

Next, we extend the discriminator of \cite{hajnal1993threshold} for circuits with a final threshold layer to the case where the distributions $A$ and $B$ are no longer uniform.

\begin{lemma}\label{lemma:th_disc}
Let $f=T_k^m(C_1,...,C_m)$ be a circuit on $n$ bits. There is a $C_i$ that is $(1/m)$-discriminator for $f^{-1}(1)$ and $f^{-1}(-1)$.
\end{lemma}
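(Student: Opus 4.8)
The plan is to follow the classical discriminator argument of Hajnal et al., lifted to the case of arbitrary distributions $D_A$ supported on $A := f^{-1}(1)$ and $D_B$ supported on $B := f^{-1}(-1)$. The key observation is purely combinatorial: since $f = T_k^m(C_1,\ldots,C_m)$ and the $C_i$ feed into the threshold gate as $\{0,1\}$-valued quantities, every $x$ with $f(x)=1$ satisfies $\sum_{i=1}^m C_i(x) \geq k$, while every $x$ with $f(x)=-1$ satisfies $\sum_{i=1}^m C_i(x) \leq k-1$.

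First I would take expectations of these two pointwise inequalities. By linearity of expectation,
\[
\sum_{i=1}^m \Exp_{x \sim D_A}[C_i(x)] = \Exp_{x \sim D_A}\Big[\sum_{i=1}^m C_i(x)\Big] \geq k,
\qquad
\sum_{i=1}^m \Exp_{x \sim D_B}[C_i(x)] = \Exp_{x \sim D_B}\Big[\sum_{i=1}^m C_i(x)\Big] \leq k-1.
\]
Subtracting the second bound from the first yields $\sum_{i=1}^m \big(\Exp_{x \sim D_A}[C_i(x)] - \Exp_{x \sim D_B}[C_i(x)]\big) \geq 1$.

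An averaging (pigeonhole) argument then finishes the proof: since the $m$ summands sum to at least $1$, at least one index $i \in [m]$ must satisfy $\Exp_{x \sim D_A}[C_i(x)] - \Exp_{x \sim D_B}[C_i(x)] \geq 1/m$, and in particular $\big|\Exp_{x \sim D_A}[C_i(x)] - \Exp_{x \sim D_B}[C_i(x)]\big| \geq 1/m$, so $C_i$ is a $(1/m)$-discriminator for $A$ and $B$ over $D_A, D_B$ in the sense of Definition~\ref{def:discriminator}.

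There is no substantive obstacle here — the content is a one-line averaging argument — so the only points requiring care are bookkeeping. One must be explicit that the $C_i$ are regarded as $\{0,1\}$-valued inputs to $T_k^m$ (so that $\sum_i C_i$ is an integer and the threshold condition splits cleanly as $\geq k$ versus $\leq k-1$), and that the conclusion holds for \emph{every} pair of distributions $D_A, D_B$ supported respectively on $f^{-1}(1)$ and $f^{-1}(-1)$, not just the uniform ones. This last point is exactly what is needed downstream: in the boosting-based weak learner for depth-$3$ circuits, the distributions induced on the preimages of $f$ are reweighted and need not be uniform, which is why the generalized statement (rather than the original Hajnal et al. form) is the one we invoke.
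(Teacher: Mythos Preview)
Your proposal is correct and follows essentially the same approach as the paper: both take expectations of the pointwise threshold inequalities, subtract to get a sum bounded below by $1$, and then pigeonhole over the $m$ terms to find the discriminator. Your write-up is in fact slightly more explicit about the bookkeeping (the $\{0,1\}$-valued inputs and the generality over $D_A,D_B$) than the paper's version.
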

\begin{proof}
Let $D_A, D_B$ be distributions over $A=f^{-1}(1)$ and $B=f^{-1}(-1)$ respectively, then 
\begin{equation*}
\Exp_{x\sim D_A}\left[ \sum_{i=1}^m C_i(x)\right]\geq k,\text{ and } \Exp_{x\sim D_B}\left[ \sum_{i=1}^m  C_i(x)\right ]\leq k-1,
\end{equation*}
since by definition for every $x\in f^{-1}(1)$ (resp.~$x\in f^{-1}(-1)$), we know that $\sum_i C_i(x)\geq k$ (resp.~$\sum_i C_i(x)\leq k-1$). Therefore,
\begin{align*}
    1 \leq \Exp_{x\sim D_A}\left[ \sum_{i=1}^m  C_i(x)\right]-\Exp_{x\sim D_B}\left[ \sum_{i=1}^m  C_i(x)\right]=\sum_{i=1}^m  \left (\Exp_{x\sim D_A}[ C_i(x)]-\Exp_{x\sim D_B}[C_i(x)] \right )\leq m \max_i \left|\Exp_{x\sim D_A}[ C_i(x)]-\Exp_{x\sim D_B}[C_i(x)]\right |
\end{align*}
Therefore for a circuit $C_i$ with $1\leq i \leq m $ we have that,
\begin{equation*}
\Big|\Exp_{x\sim D_A}[ C_i(x)]-\Exp_{x\sim D_B}[C_i(x)]\Big| \geq 1/m,
\end{equation*}
proving the lemma statement. 
\end{proof}

Using this, we prove that at least one of the circuits feeding into the final threshold gate attains a nontrivial correlation with the gate’s output.

\begin{lemma}\label{lemma:discriminator}
Let $f=T_k^m(C_1,...,C_m)$ and $D$ be an arbitrary distribution. There exists $i\in [m]$ such that 
\begin{equation}
    |\langle f,C_i \rangle_D|=|\Exp_{x\sim D}[f(x)C_i(x)]|\geq \frac{1}{2m}.
\end{equation}
\end{lemma}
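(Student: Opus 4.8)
The plan is to derive Lemma~\ref{lemma:discriminator} from the discriminator Lemma~\ref{lemma:th_disc} by a standard averaging/splitting argument. First I would set up notation: write $f = T_k^m(C_1,\ldots,C_m)$, fix the arbitrary distribution $D$, and split $D$ into its restrictions to $A = f^{-1}(1)$ and $B = f^{-1}(-1)$. Let $p = \Pr_{x \sim D}[f(x) = 1]$ and $q = 1 - p = \Pr_{x \sim D}[f(x) = -1]$. Define $D_A$ to be $D$ conditioned on $A$ and $D_B$ to be $D$ conditioned on $B$ (if $p = 0$ or $q = 0$ the claim is trivial or the correlation argument degenerates — I would handle that edge case separately, noting that if $f$ is constant on the support of $D$ then any $C_i$ works or the statement is vacuous since then $|\langle f, C_i\rangle_D| $ relates directly). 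By Lemma~\ref{lemma:th_disc}, there exists $i \in [m]$ such that $|\Exp_{x \sim D_A}[C_i(x)] - \Exp_{x \sim D_B}[C_i(x)]| \geq 1/m$.

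Next I would relate this conditional-distribution discrepancy back to the correlation $\langle f, C_i\rangle_D = \Exp_{x \sim D}[f(x) C_i(x)]$. The key computation is
\[
\Exp_{x \sim D}[f(x) C_i(x)] = p \cdot \Exp_{x \sim D_A}[C_i(x)] - q \cdot \Exp_{x \sim D_B}[C_i(x)],
\]
using that $f(x) = 1$ on $A$ and $f(x) = -1$ on $B$. This is close to what we want but not identical: the discriminator bounds $\Exp_{D_A}[C_i] - \Exp_{D_B}[C_i]$ whereas the correlation is a $p$-versus-$q$ weighted combination. To bridge the gap I would also use $\Exp_{x\sim D}[C_i(x)] = p\,\Exp_{D_A}[C_i] + q\,\Exp_{D_B}[C_i]$, so that
\[
\Exp_{D_A}[C_i] - \Exp_{D_B}[C_i] = \frac{1}{?}\big(\ldots\big),
\]
solving the $2\times 2$ linear system for $\Exp_{D_A}[C_i]$ and $\Exp_{D_B}[C_i]$ in terms of $\Exp_D[fC_i]$ and $\Exp_D[C_i]$, giving $\Exp_{D_A}[C_i] - \Exp_{D_B}[C_i] = \frac{1}{2pq}\big(\Exp_D[fC_i] - (p-q)\Exp_D[C_i]\big)$ or a similar expression. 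Since $p, q \leq 1$, we get $2pq \le 1/2$... hmm, actually $2pq \le 1/2$ so dividing makes things larger — that's the wrong direction.

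The cleaner route, which I expect to be the main obstacle to get exactly the constant $1/(2m)$, is the following observation: the bound $|\Exp_{D_A}[C_i] - \Exp_{D_B}[C_i]| \ge 1/m$ holds \emph{for all} distributions supported on $A$ and $B$, so in particular I am free to choose $D_A, D_B$ not as conditionals of $D$ but as whatever is convenient. The trick is: either $p \le 1/2$ or $q \le 1/2$; WLOG $q \le 1/2$ (so $p \ge 1/2$). Apply Lemma~\ref{lemma:th_disc} with $D_A = $ ($D$ conditioned on $A$) and $D_B = $ ($D$ conditioned on $B$); we get some $i$ with discrepancy $\ge 1/m$. Then note $|\langle f, C_i\rangle_D| = |p\,\Exp_{D_A}[C_i] - q\,\Exp_{D_B}[C_i]|$. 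Since $C_i(x) \in \{0,1\}$, both expectations lie in $[0,1]$; writing $a = \Exp_{D_A}[C_i]$, $b = \Exp_{D_B}[C_i]$ with $|a - b| \ge 1/m$, I want $|pa - qb| \ge 1/(2m)$. I would argue this by cases on the sign of $a - b$ and using $p \ge 1/2 \ge q$: e.g.\ if $a \ge b$ then $pa - qb \ge \tfrac12 a - \tfrac12 b \cdot$... this still needs care because $p$ could be close to $1$ and $q$ close to $0$, making $pa - qb \approx a$, which could be small if $a$ itself is small even when $a - b$ is large (forcing $b < 0$, impossible). Indeed since $b \ge 0$ and $a - b \ge 1/m$ we get $a \ge 1/m$, and then $pa - qb \ge pa - q a = (p-q)a$... no. The right inequality: $pa - qb \ge pa - q a$ is false in general. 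Let me instead bound: $pa - qb \ge p a - q \cdot 1 \cdot$ — I think the honest statement is $pa - qb \ge \tfrac12(a - b) + (p - \tfrac12)a + (\tfrac12 - q)b \ge \tfrac12(a-b) \ge \tfrac1{2m}$ since $p \ge \tfrac12$, $q \le \tfrac12$, and $a, b \ge 0$. Symmetrically if $b > a$ one works with $D$ conditioned to make $q \ge 1/2$ instead, or notes $|pa - qb|$ handles it. This convexity/sign bookkeeping is the only delicate part; everything else is routine, and I would present it in two or three displayed lines.
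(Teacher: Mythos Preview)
Your approach is essentially the paper's: split $D$ into its conditionals $D_A,D_B$ on $A=f^{-1}(1)$ and $B=f^{-1}(-1)$, invoke Lemma~\ref{lemma:th_disc}, and then convert the discriminator bound into a correlation bound via the identity $\Exp_D[fC_i]=p\,a-q\,b$ with $a=\Exp_{D_A}[C_i]$, $b=\Exp_{D_B}[C_i]$. Your decomposition $pa-qb=\tfrac12(a-b)+(p-\tfrac12)a+(\tfrac12-q)b$ is clean and, once $p\ge\tfrac12$ and $a,b\ge 0$, immediately yields $pa-qb\ge\tfrac12(a-b)$.

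The one loose end is your handling of the case $b>a$. The ``symmetrically take $q\ge 1/2$'' suggestion does not work as stated: $p$ and $q$ are fixed by $D$, and swapping the roles of $A$ and $B$ would require rerunning Lemma~\ref{lemma:th_disc} on the negated threshold, which changes the structure. Fortunately this case never arises. The \emph{proof} of Lemma~\ref{lemma:th_disc} gives the signed inequality $\sum_i\bigl(\Exp_{D_A}[C_i]-\Exp_{D_B}[C_i]\bigr)\ge 1$, so by averaging some $i$ has $a-b\ge 1/m$ (not merely $|a-b|\ge 1/m$). With that, your displayed inequality finishes the proof in one line. The paper's proof relies on exactly the same signed observation, just written as $\Exp_D[fC_i]=\alpha(a-b)+(2\alpha-1)b\ge \alpha(a-b)\ge 1/(2m)$ (its displayed form $\alpha(a+b)-b$ is algebraically equivalent).
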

\begin{proof}
Let $A=f^{-1}(1)$ and $B=f^{-1}(-1)$. Let $\alpha=\sum_{x\in A} D(x)$. Without loss we can assume $\alpha\geq 1/2$, if not we can let $\alpha=\sum_{x\in B} D(x)$. For an arbitrary $C_j$, we have
\begin{align*}
    \Exp_{x\sim D}[f(x)C_j(x)]=\sum_{x\in A} C_j(x)\cdot D(x)-\sum_{x\in B} C_j(x)\cdot D(x)
    =\alpha \Big(\Exp_{x\sim D_A}\left[ C_j(x)\right]+\Exp_{x\sim D_B}\left[ C_j(x)\right]\Big)-\Exp_{x\sim D_B}\left[ C_j(x)\right]
\end{align*}
where $\alpha=\sum_{x\in A}D(x)$ and $D_A(x)=D(x)/\alpha, D_B(x)=D(x)/(1-\alpha)$. Now Lemma~\ref{lemma:th_disc} shows the existence of $C_i$ such that for the previous distributions $D_A$ and $D_B$ over $A,B$ respectively, we have
\begin{equation}
\label{eq:implicationoflemma}
\Big|\Exp_{x\sim D_A}[ C_i(x)]-\Exp_{x\sim D_B}[C_i(x)]\Big| \geq 1/m.
\end{equation}
Now, if $\alpha>1/2$, using Eq.~\eqref{eq:implicationoflemma}, we have that 
$\Exp_{x\sim D}[f(x)C_i(x)]\geq 1/(2m)$, proving the~lemma.
\end{proof}

\begin{fact}\label{fact:CNF_DNF_duality}
By De Morgan’s laws, the complement of any CNF with at most $k$ clauses is logically equivalent to a DNF with at most $k$ terms, and vice versa.
\end{fact}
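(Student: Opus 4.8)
\textbf{Proof proposal for Fact~\ref{fact:CNF_DNF_duality}.}
The plan is a direct application of De Morgan's laws, tracking the number of clauses/terms. First I would fix notation: write a $\textsf{CNF}$ with $k$ clauses as $\varphi(x) = \bigwedge_{i=1}^{k} \big(\bigvee_{j \in S_i} \ell_{ij}\big)$, where each $\ell_{ij}$ is a literal, i.e.\ either a variable $x_t$ or its negation $\overline{x_t}$. The key observation is that the negation of a literal is again a literal: $\overline{x_t}$ is a literal, and $\overline{\overline{x_t}} = x_t$ is a literal. Hence negating clause-by-clause does not change the ``literal'' structure.

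Next I would compute the complement. By De Morgan's laws applied to the outer conjunction,
\begin{equation*}
\overline{\varphi(x)} = \overline{\bigwedge_{i=1}^{k} \Big(\bigvee_{j \in S_i} \ell_{ij}\Big)} = \bigvee_{i=1}^{k} \overline{\Big(\bigvee_{j \in S_i} \ell_{ij}\Big)} = \bigvee_{i=1}^{k} \Big(\bigwedge_{j \in S_i} \overline{\ell_{ij}}\Big),
\end{equation*}
where the last step uses De Morgan's laws again on each inner disjunction. Since each $\overline{\ell_{ij}}$ is a literal, the right-hand side is a disjunction of $k$ conjunctions of literals, i.e.\ a $\textsf{DNF}$ with at most $k$ terms. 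This proves the first half.

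For the ``vice versa'' direction I would argue by symmetry: given a $\textsf{DNF}$ $\psi$ with $k$ terms, the same De Morgan manipulation (now pushing the negation through a disjunction of conjunctions) shows $\overline{\psi}$ is a $\textsf{CNF}$ with at most $k$ clauses; equivalently, one can apply the first half to the $\textsf{CNF}$ $\overline{\psi}$ (after re-deriving it this way) and use $\overline{\overline{\psi}} = \psi$ to conclude. There is no real obstacle here — the only point requiring a sentence of care is the remark that complementing a literal yields a literal, so that the number of clauses/terms is genuinely preserved and the dual object stays in $\textsf{DNF}$/$\textsf{CNF}$ form with the same size bound. $\qed$
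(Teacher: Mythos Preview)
Your proof is correct and is exactly the standard De Morgan argument one would expect. The paper itself does not give a proof of this fact at all --- it is simply stated as a well-known consequence of De Morgan's laws --- so your write-up is, if anything, more detailed than what the paper provides.
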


\subsubsection{Learning algorithm}
In this part of the section, we prove the following main result regarding learnability of depth-$3$ circuits in the quantum \PAC~model.
\begin{theorem}\label{thm:pac_learn_depth3}
Let $s,m \in \mathbb{N}$. Suppose $f\in \textsf{TAC}^0_2$ where the fanin of the threshold gate is $m$ and the size of the circuit is $s$. Then, given quantum examples $\ket{\psi_f} = \frac{1}{\sqrt{2^n}}\sum_x f(x)\ket{x}$, with probability $\geq 1-\delta$,  can output $g : \FF^n \rightarrow \{-1,1\}$ such that $\Pr_{x\sim \mathcal{U}}[g(x)=f(x)]\geq 1-\varepsilon$.  The runtime of the algorithm is 
$$\poly(n,m,(s/\varepsilon)^{(\log s/\varepsilon\cdot\log \log s/\varepsilon) }, \log(1/\delta)).
$$
\end{theorem}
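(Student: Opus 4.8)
The plan is to reduce $\PAC$ learning $f\in\TAC_2^0$ to the agnostic $\DNF$‑learning pipeline of Theorem~\ref{thm:agnostic_DNF}, run inside the boosting loop of Algorithm~\ref{algo:agnostic_boosting} against the class $\Cc_{\DNF(s)}$, and then round the resulting superposition of parities into a Boolean hypothesis. Write $f=T_k^m(C_1,\dots,C_m)$ where each $C_i$ is an $s$‑term $\DNF$, and observe that $\ket{\psi_f}=\tfrac{1}{\sqrt{2^n}}\sum_x f(x)\ket{x}$ is exactly the phase state of the $\pm1$‑valued $f$, so in this instance $\opt=\calF_{\TAC_2^0(s)}(\ket{\psi_f})=1$; $\PAC$ learning then amounts to producing a state $\ket{\widehat\phi}$ with $|\langle\widehat\phi\mid\psi_f\rangle|^2\geq 1-\varepsilon$ and rounding it.

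\paragraph{Key step.} I would run Algorithm~\ref{algo:agnostic_boosting} with $\Cc=\Cc_{\DNF(s)}$ and the weak learner of Lemma~\ref{lem:WAL_DNF}, but --- as in the classical agnostic boosting of Kalai--Kanade/Feldman --- keep the running hypothesis $h_t=\sum_{i\le t}\beta_i\chi_{S_i}$ \emph{bounded in $[-1,1]$}, so that the residual function $r_t=f-h_t$ satisfies $\sign(r_t(x))=f(x)$ whenever $r_t(x)\ne 0$ and $|r_t(x)|\le 2$; its encoding is the residual state $\ket{\psi_{t+1}}\propto\sum_x r_t(x)\ket{x}$ with $\|r_t\|_2^2=\alpha_{t+1}^2$. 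The heart of the proof is the bound
$$
\calF_{\DNF(s)}\big(\ket{\psi_{t+1}}\big)\ \geq\ \Omega\!\left(\alpha_{t+1}^{2}/m^{2}\right),
$$
which I would prove by letting $\mu$ be the distribution $\mu(x)\propto|r_t(x)|$ and applying the discriminator Lemma~\ref{lemma:discriminator} to obtain $i\in[m]$ with $|\Exp_{x\sim\mu}[f(x)C_i(x)]|\geq \Omega(1/m)$; since $\sign(r_t)=f$ on $\supp(\mu)$ this yields $|\langle r_t,C_i\rangle| = \Exp_x[|r_t|]\cdot|\Exp_{x\sim\mu}[fC_i]| \geq \Omega\!\left(\Exp_x[|r_t|]/m\right) \geq \Omega\!\left(\alpha_{t+1}^2/m\right)$ (using $\Exp_x[|r_t|]\geq \tfrac12\Exp_x[r_t^2]$ since $|r_t|\le 2$), and since $\langle\psi_{t+1}\mid\psi_{C_i}\rangle = \langle r_t,C_i\rangle/\alpha_{t+1}$ and $\psi_{C_i}\in\calS_{\Cc_{\DNF(s)}}$ (Fact~\ref{fact:CNF_DNF_duality} handling signs/duality), the claim follows.

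\paragraph{From the key bound to the theorem.} Composed with Lemma~\ref{lem:WAL_DNF}, this inequality yields a weak agnostic learner for $\Cc_{\DNF(s)}$ of promise $\eta(\tau)=\Omega(\tau/(m^2 s^\ast))$ with $s^\ast=(s/\tau)^{O(\log\log(s/\tau)\log(1/\tau))}$, matching Theorem~\ref{thm:agnostic_boosting}. More importantly, it forces the boosting loop to halt only through the residual‑norm test (step~\ref{algo_step:stop_cond_ST}) and never through the fidelity test (step~\ref{algo_step:stop_cond_fidelity}): whenever $\alpha_{t+1}^2\ge\varepsilon_s$ the residual still has $\Omega(\varepsilon_s/m^2)$ fidelity with $\Cc_{\DNF(s)}$, so the weak learner keeps succeeding. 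Hence at termination $\alpha_{\kappa+1}^2<\varepsilon_s$, i.e.\ structure learning has actually performed \emph{tomography} of $\ket{\psi_f}$ (not merely agnostic learning against $\DNF$s), so $\ket{\widehat\phi}:=\Lambda_T\ket{\psi_f}/\|\Lambda_T\ket{\psi_f}\|$ satisfies $|\langle\widehat\phi\mid\psi_f\rangle|^2\geq 1-O(\sqrt{\varepsilon_s})$; parameter learning (Theorem~\ref{thm:parameter_learning}) then outputs $\{\widehat\beta_i\}_{i\in[\kappa]}$ with $\ket{\widehat\phi}=\sum_i\widehat\beta_i\ket{\chi_{S_i}}$ and $|\langle\widehat\phi\mid\psi_f\rangle|^2\geq 1-\varepsilon$, where $\kappa=\poly(m,1/\varepsilon)$.

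\paragraph{Rounding, complexity, and the obstacle.} To extract $g$: estimate the phase $\theta$ of $\langle\widehat\phi\mid\psi_f\rangle$ to additive error $O(\sqrt\varepsilon)$ using $O(1/\varepsilon)$ copies of $\ket{\psi_f}$ and a Hadamard test against the state‑preparation circuit of $\ket{\widehat\phi}$ (efficient since $\sum_i\widehat\beta_i\ket{S_i}$ is $\kappa$‑sparse), and set $g(x)=\sign\big(\sum_i\mathrm{Re}(e^{i\theta}\widehat\beta_i)(-1)^{S_i\cdot x}\big)$. Writing $v(x)$ for the inner sum, Parseval gives $\Exp_x[v(x)^2]\le\|\widehat\beta\|_2^2=1$ while the fidelity bound gives $\Exp_x[v(x)f(x)]\geq\sqrt{1-\varepsilon}$, so $\Exp_x[(v(x)-f(x))^2]\leq O(\varepsilon)$; since $g(x)\ne f(x)$ forces $(v(x)-f(x))^2\geq 1$, Markov gives $\Pr_x[g(x)\ne f(x)]\le O(\varepsilon)$ (the $O(\sqrt\varepsilon)$ error in $\theta$ perturbs $v$ by an $\ell_2$‑$O(\sqrt\varepsilon)$ term, which is absorbed). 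Rescaling $\varepsilon$ yields the $\PAC$ guarantee; the running time is $\poly(m,1/\varepsilon)$ boosting iterations, each calling the $\DNF$ weak learner at cost $\poly(n,(s/\varepsilon)^{\log\log(s/\varepsilon)\log(1/\varepsilon)})$ (Lemma~\ref{lem:WAL_DNF}, Theorem~\ref{thm:agnostic_DNF}), plus $\poly(n,\kappa,1/\varepsilon,\log(1/\delta))$ for parameter learning, rounding, and union bounds, giving the stated bound. I expect the main obstacle to be the displayed key bound: pushing the discriminator lemma (a statement about $f$ under arbitrary distributions) through the \emph{signed, real‑valued} residual produced by boosting is precisely where the depth‑$3$ threshold‑on‑top structure is essential, it is what forces tracking a bounded hypothesis, and one must also verify that this bounded‑hypothesis variant retains the iteration and error guarantees of Theorems~\ref{thm:structure_learning}--\ref{thm:parameter_learning}.
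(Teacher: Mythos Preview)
Your proposal is essentially the paper's proof: both run Algorithm~\ref{algo:agnostic_boosting} against $\Cc_{\DNF(s)}$ with the weak learner of Lemma~\ref{lem:WAL_DNF}, prove via the discriminator lemma applied to the residual-weighted distribution $\mu(x)\propto|r_t(x)|$ that $\calF_{\DNF(s)}(\ket{\psi_{t+1}})\geq\Omega(\alpha_{t+1}^2/m^2)$ (this is exactly Claim~\ref{claim:stop_cond_depth3}, with the same chain $|\langle r_t,C_i\rangle|=\Exp_x[|r_t|]\cdot|\Exp_\mu[fC_i]|\geq \Exp_x[|r_t|]/(2m)\geq \alpha_{t+1}^2/(4m)$), conclude that the loop halts only through the norm test and hence achieves tomography, apply parameter learning (Theorem~\ref{thm:parameter_learning}), and round via $g=\sign(h)$ using the $\Exp_x[(f-h)^2]\leq\varepsilon$ argument.

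Two minor differences are worth a remark. First, you explicitly propose clipping $h_t$ to $[-1,1]$ so that $\sign(r_t)=f$; the paper runs the unmodified projection and invokes the identity $|a-b|=a(a-b)$ for $b\in[-1,1]$ without verifying that the partial Fourier sum $\sum_i\beta_i\chi_i(x)$ stays in $[-1,1]$---so on this point you are, if anything, more careful than the paper, and you correctly flag it as the real obstacle (neither you nor the paper works out how clipping would interact with the projection-based residual preparation). Second, your Hadamard test to extract the global phase is unnecessary here and not obviously implementable from copies alone: since $f$ is $\pm1$-valued every $\beta_i=\widehat f(S_i)$ is real, so parameter learning already returns $h$ up to a global $\pm$ sign; the paper simply takes $g=\sign(h)$ and proceeds, and any residual sign ambiguity is resolved with a single labeled example (from a function-state copy) or by testing both $g$ and $-g$.
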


To prove this, we will show that applying improper agnostic learning of $\calS_{\Cc_{\DNF(s)}}$ when the input state $\ket{\psi_f}$ is promised to a phase state corresponding to a depth-$3$ circuit, accomplishes the task of state tomography. To that end, we will revisit the quantum agnostic boosting argument in Section~\ref{sec:agnostic_boosting} when the goal is to do well against the class $\Cc_{\DNF(s)}$. For the argument below, we now specialize our discussion to when the input is a phase state $\ket{\psi_f}$  since we are in the quantum $\PAC$ setting. Recall that in the agnostic boosting algorithm (Algorithm~\ref{algo:agnostic_boosting}), we had a running estimate $\ket{\widehat{\phi}^{(t)}}$ defined as the projection of the input state $\ket{\psi_f}$ on to the set of basis states $\{\ket{\chi_i}\}_{i \in [t]}$ learned so far:
$$
\ket{\widehat{\phi}^{(t)}} = \Lambda_{T(t)} \ket{\psi_f},
$$
where $T(t) = \mathrm{span}(\{\ket{\chi_i}\}_{i \in [t]})$. We  stop at the end of the $\kappa$th iteration as part of structure learning (see steps \ref{algo_step:est_fidelity} and \ref{algo_step:step_est_alphat}, Algorithm~\ref{algo:agnostic_boosting}) if 
\begin{equation}
\alpha_{t+1}^2 := \norm{(\mathbb{I} - \Lambda_{T(t)})\ket{\psi_f}}_2^2 < \varepsilon_s, \enspace \text{or} \enspace \calF_{\Cc}(\ket{\psi_{t+1}}) < \varepsilon_s,    
\end{equation}
where the residual state is $\ket{\psi_{t+1}} = (\mathbb{I} - \Lambda_{T(t)})\ket{\psi_f}/\alpha_{t+1}$ and $\varepsilon_s$ is an user-defined input error parameter to structure learning (see Theorem~\ref{thm:structure_learning}). However, we have not yet exploited the fact that the unknown input state $\ket{\psi_f}$ to Algorithm~\ref{algo:agnostic_boosting} is promised to be a phase state corresponding to a depth-$3$ circuit. In this case, we will now show that we stop at the end of the $\kappa$th iteration as part of structure learning only if both conditions are simultaneously satisfied. In particular, we have the following claim.
\begin{claim}\label{claim:stop_cond_depth3}
Consider the context of Theorem~\ref{thm:pac_learn_depth3}. Let $\Cc_{\DNF(s)}$ be the class of interest and $\calA_{\WAL}$ of Lemma~\ref{lem:WAL_DNF} be the weak agnostic learner. Suppose we apply agnostic boosting (Algorithm~\ref{algo:agnostic_boosting} and Theorem~\ref{thm:structure_learning}) to $\ket{\psi_f}$ then the following is true. If $|\alpha_t|^2 \geq \varepsilon$, then $\calF_{\Cc_{\DNF(s)}}(\ket{\psi_t}) \geq \varepsilon/4m^2$.
\end{claim}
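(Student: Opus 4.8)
The plan is to show that, even though the boosting algorithm has already peeled off several parity components, the residual state $\ket{\psi_t}$ is still at least $\varepsilon$-close to the phase state $\ket{\phi_f}$ of the depth-$3$ circuit $f$, and then to transfer this closeness to the phase state of one of the $\DNF$ sub-circuits of $f$ by means of the discriminator lemma (Lemma~\ref{lemma:discriminator}).

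\emph{Step 1: the residual is still close to $\ket{\phi_f}$.} In the $\PAC$ setting $\ket{\psi_f}=\ket{\phi_f}$, and by construction (Stage~1 of Algorithm~\ref{algo:agnostic_boosting}) $\ket{\psi_t} = (\id - \Lambda_{T(t-1)})\ket{\phi_f}/\alpha_t$, where $\Lambda_{T(t-1)} = \sum_{j<t}\ketbra{\chi_{S_j}}{\chi_{S_j}}$ and $\alpha_t^2 = \norm{(\id - \Lambda_{T(t-1)})\ket{\phi_f}}_2^2 = 1 - \sum_{j<t}\widehat f(S_j)^2$ with $\widehat f(S_j) = \langle \chi_{S_j}|\phi_f\rangle$. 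Since $\Lambda_{T(t-1)}$ is an orthogonal projector,
$$
\langle \phi_f | \psi_t \rangle = \frac{1}{\alpha_t}\langle \phi_f | (\id - \Lambda_{T(t-1)}) | \phi_f \rangle = \frac{1}{\alpha_t}\Big(1 - \sum_{j<t}\widehat f(S_j)^2\Big) = \alpha_t ,
$$
so $|\langle \phi_f | \psi_t \rangle|^2 = \alpha_t^2 \geq \varepsilon$ under the hypothesis $|\alpha_t|^2 \geq \varepsilon$.

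\emph{Step 2: transfer to a $\DNF$ constituent.} Write $f = T_k^m(C_1,\dots,C_m)$. Since the whole circuit has size $s$, each $C_i$ is a $\DNF$ of size at most $s$, hence $\ket{\phi_{C_i}}\in\calS_{\Cc_{\DNF(s)}}$; moreover, since complementing a $C_i$ changes $\ket{\phi_{C_i}}$ only by a global phase (see Fact~\ref{fact:CNF_DNF_duality}), only the magnitude $|\langle\phi_{C_i}|\psi_t\rangle|$ is relevant. Apply Lemma~\ref{lemma:discriminator} to $f$ with the distribution $D$ taken to be the distribution of computational-basis measurement outcomes of $\ket{\psi_t}$, i.e. $D(x) = |\langle x | \psi_t\rangle|^2$: this produces an index $i\in[m]$ with $|\Exp_{x\sim D}[f(x)C_i(x)]|\geq 1/(2m)$. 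Expanding this correlation in the residual amplitudes and dividing through by the relevant residual norm $\alpha_t\leq 1$ yields $\max_i |\langle \phi_{C_i}|\psi_t\rangle|^2 \geq \alpha_t^2/(4m^2)$, which together with Step~1 gives $\calF_{\Cc_{\DNF(s)}}(\ket{\psi_t}) \geq \max_i|\langle\phi_{C_i}|\psi_t\rangle|^2 \geq \alpha_t^2/(4m^2) \geq \varepsilon/(4m^2)$, as claimed. Note the $1/(2m)$ of the discriminator lemma squares precisely to the $1/(4m^2)$ in the statement.

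\textbf{Main obstacle.} Step~1 is a short direct computation; all of the work is in the translation in Step~2. Unlike $\ket{\phi_f}$, the residual $\ket{\psi_t} = \alpha_t^{-1}\sum_{S\notin\{S_j\}_{j<t}}\widehat f(S)\ket{\chi_S}$ is \emph{not} a phase state --- its computational-basis amplitudes $g_t(x) = f(x) - \sum_{j<t}\widehat f(S_j)\chi_{S_j}(x)$ vary in both sign and magnitude --- so the discriminator's conclusion ``$f$ correlates with some $C_i$ under $D$'' is not literally the statement that $\ket{\phi_{C_i}}$ has large fidelity with $\ket{\psi_t}$. The distribution $D$ in Lemma~\ref{lemma:discriminator} must therefore be chosen so that $\Exp_{x\sim D}[f(x)C_i(x)]$ splits into a main term proportional (with factor $\alpha_t$) to $\langle\phi_{C_i}|\psi_t\rangle$ plus error terms controllable via the orthogonality $\langle\chi_{S_j}|g_t\rangle = 0$ of the peeled parities to the residual and via $\alpha_t\in[\sqrt{\varepsilon},1]$, and the normalization factors must be bookkept so that nothing beyond the $(2m)^{-2}$ discriminator loss is incurred. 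This is also precisely where the promise ``the input is the phase state of a depth-$3$ circuit'' is used: it is what lets us write $f = T_k^m(C_1,\dots,C_m)$ with all $C_i\in\Cc_{\DNF(s)}$ and invoke the discriminator lemma in the first place.
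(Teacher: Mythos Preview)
Your Step~1 is correct, but Step~2 has a genuine gap: the distribution you feed into Lemma~\ref{lemma:discriminator} is the wrong one. With $g_t(x)=f(x)-\sum_{j<t}\widehat f(S_j)\chi_{S_j}(x)$ and $\langle x|\psi_t\rangle=\alpha_t^{-1}2^{-n/2}g_t(x)$, your choice $D(x)=|\langle x|\psi_t\rangle|^2$ gives
\[
\Exp_{x\sim D}\bigl[f(x)C_i(x)\bigr]=\alpha_t^{-2}\,\Exp_x\bigl[g_t(x)^2\,f(x)\,C_i(x)\bigr],
\]
whereas the fidelity you want to lower-bound is $\langle\phi_{C_i}|\psi_t\rangle=\alpha_t^{-1}\Exp_x[g_t(x)C_i(x)]$. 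These differ by an extra weight $|g_t(x)|$ inside the average, and there is no inequality that converts a lower bound on the first into one on the second: when $|g_t|$ is large on a small set and small elsewhere, the weighted correlation can be $\Theta(1)$ while $\Exp_x[g_tC_i]$ is near zero. The orthogonality $\langle\chi_{S_j}|g_t\rangle=0$ does not rescue this, because the cross-terms that appear---for instance $\Exp_x[h\,C_i]$ and $\Exp_x[h^2 f\,C_i]$ with $h=\sum_j\widehat f(S_j)\chi_{S_j}$---depend on Fourier coefficients of the $\DNF$ $C_i$ at the peeled parities, which you have no control over.

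The paper instead takes $D_{f,t}(x)\propto |g_t(x)|$, proportional to the \emph{amplitude} rather than the probability, with normalizer $\Delta=\Exp_x|g_t(x)|$. The sign identity $|a-b|\cdot a=a-b$ for $a\in\{-1,1\}$, $b\in[-1,1]$ then yields the exact equality
\[
\Exp_{x\sim D_{f,t}}\bigl[f(x)C_i(x)\bigr]=\Delta^{-1}\,\Exp_x\bigl[g_t(x)C_i(x)\bigr]=\frac{\alpha_t}{\Delta}\,\langle\psi_t|\phi_{C_i}\rangle,
\]
with no error terms. Lemma~\ref{lemma:discriminator} gives some $C_i$ with $|\Exp_{D_{f,t}}[fC_i]|\geq 1/(2m)$, and the bound $\Delta\geq\alpha_t^2/2$ (from $|g_t|\geq |g_t|^2/2$ pointwise) finishes the argument. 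So the missing idea is to run the discriminator on the $\ell_1$-normalized residual, not on the Born-rule distribution.
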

\begin{proof}
Suppose as part of structure learning of the agnostic boosting algorithm (Algorithm~\ref{algo:agnostic_boosting}), we have carried out $t-1$ iterations so far and learned the parity states $\{\ket{\chi_i}\}_{i \in [t-1]}$. 
{Recall that until the $t-1$th iteration, the learner has obtained $\beta_1,\ldots,\beta_t$ and also the functions $\{\chi_i\}_{i\in [t]}$. Given explicit descriptions of these, the learner can implement the map
$$
O_{t-1}:\ket{x}\ket{0}\mapsto \ket{x}\ket{\textsf{ceil}\Big(\sum_{i\in [t-1]}\beta_i\chi_i(x)\Big)},
$$
where $\textsf{ceil}(a)=a$ if $a\in [-1,1]$,  $\textsf{ceil}(a)=-1$ if $a\leq -1$  and $\textsf{ceil}(a)=1$ if $a\geq 1$. The oracle can be implemented in time $O(t)$ by basic arithematic operations. Define $g(x)=\textsf{ceil}(\sum_i\beta_i\chi_i(x))$ and $\ket{\psi_g}=\frac{1}{\sqrt{2^n}}\sum_xg(x)\ket{x}$ (which can be prepared by making $O(1)$ calls to the oracle $O_{t-1})$. Defining $\Lambda_{g,t}=\ketbra{\psi_g}{\psi_g}$. Now measuring $\ket{\psi_f}$ in the basis $\{\Lambda_g,\mathbb{I}-\Lambda_g\}$, the post measurement state if we obtain the second outcome is given by  
$$
\ket{\psi_{t}}= \frac{1}{\alpha_t}\Big(\ket{\psi_f} - \Lambda_{g,t} \ket{\psi_f} \Big) = \frac{1}{\alpha_t}\Big(\ket{\psi_f} - \frac{1}{\sqrt{2^n}}\sum_x g(x) \ket{x}\Big),
$$}
where $T(t-1) = \mathrm{span}(\{\ket{\chi_i}\}_{i \in [t-1]})$, $\beta_i = \langle \chi_i | \psi_f \rangle$ (Eq.~\eqref{eq:projection_state_when_parities}) and $\alpha_t = \sqrt{1 - \sum_{i=1}^{t-1} |\beta_i|^2}$ (Fact~\ref{fact:projection}). Let us denote the states
$$
\ket{\psi_f} = \frac{1}{\sqrt{2^n}} \sum_{x \in \{0,1\}^n} f(x) \ket{x}, \enspace \ket{\chi_i} = \frac{1}{\sqrt{2^n}} \sum_{x \in \{0,1\}^n} \chi_i(x) \ket{x},
$$
where $f:\{0,1\}^n \rightarrow \{-1,1\}$ and $\chi_i(x) = (-1)^{\alpha_i \cdot x}$ for some $\alpha_i \in \{0,1\}^n, \,\, \forall i \in [t-1]$. Let us define the distribution
{$$
D_{f,t}(x)=\Big|f(x)-\textsf{ceil}(\sum_{i=1}^{t-1} \beta_i\chi_{i}(x))\Big|\cdot 2^{-n}\cdot \Delta^{-1},
$$}
where {$\Delta=\sum_x |f(x)-\textsf{ceil}(\sum_{i=1}^{t-1} \beta_i \chi_{i}(x))|\cdot 2^{-n}$} and $\chi_i$ is the parity function corresponding to the parity state $\ket{\chi_i}$. At this point, let us consider the inner product between the unknown $f$ and a $\DNF(s)$ formula $c$ (with $\ket{\phi_c} := \frac{1}{\sqrt{2^n}} \sum_x c(x) \ket{x}$ being the corresponding phase state),
{\begin{align*}
\Exp_{x\sim D_{f,t}}[f(x)c(x)]&=  \sum_{x}D_{f,t}(x)[f(x)\cdot c(x)]\\
 &=2^{-n}\cdot \Delta^{-1}\cdot\sum_{x} \left|f(x)-\textsf{ceil}(\sum_{i=1}^{t-1} \beta_i\chi_{i}(x))\right| \cdot f(x)\cdot c(x) \\
 &=2^{-n}\cdot \Delta^{-1}\cdot\sum_{x} \left(f(x)-\textsf{ceil}(\sum_{i=1}^{t-1} \beta_i\chi_{i}(x)) \right)\cdot c(x)\\
 &=\Delta^{-1}\cdot \alpha_t \cdot \langle \psi_{t}| \phi_c\rangle,
\end{align*}}
where the third equality used the fact that, for $a\in \{-1,1\}$ and $b\in [-1,1]$, we have that $|a-b|=a\cdot (a-b)$.\footnote{To see this: if $a=1$ then $a \geq b$ and $|a-b| = a - b = 1- b = a \cdot (a - b).$ If $a=-1,$ then $a \leq b$ and $|a-b| = b - a = b + 1 = a \cdot (a - b).$} One can then use Lemma~\ref{lemma:discriminator} to show that 
$$
|\langle \psi_t|\phi_c\rangle|=\Big|\frac{\Delta}{\alpha_t}\cdot  \Exp_{x\sim D_{f,t}}[f(x)c(x)]\Big|\geq \frac{\Delta}{m \alpha_t} \geq \frac{\sqrt{\varepsilon}}{m},
$$
where we used
{$$
\Delta = 2^{-n} \sum_x \left|f(x) - \textsf{ceil}(\sum_{i=1}^{t-1} \beta_i \chi_i(x)) \right| \geq 2^{-n} \sum_x \left|f(x) - \textsf{ceil}(\sum_{i=1}^{t-1} \beta_i \chi_i(x)) \right|^2/2 \geq \alpha_t^2/2,
$$
}
and $\alpha_t \geq \sqrt{\varepsilon}$. This in particular implies
$$
\calF_{\Cc}(\ket{\psi_{t}}) \geq |\langle \psi_t | \psi_c \rangle|^2 \geq \varepsilon/4m^2.
$$
This concludes the proof.
\end{proof}

\begin{proof}[Proof of Theorem~\ref{thm:pac_learn_depth3}]
Let $f$ be as in the theorem statement. We will employ the quantum agnostic boosting algorithm (Algorithm~\ref{algo:agnostic_boosting}) on input of copies of $\ket{\psi_f}$ against the class $\calC_{\DNF(s)}$ and use the weak agnostic learner $\calA_{\WAL}$ of Lemma~\ref{lem:WAL_DNF}.\footnote{Our notation here is similar to the one in Section~\ref{sec:agnostic_boosting}.} Suppose we are at the end of the $t$th iteration of the agnostic boosting algorithm. We first observe that by Claim~\ref{claim:stop_cond_depth3}, if $\alpha_{t+1}^2 \geq \varepsilon$, then $\calF_{\Cc_{\DNF(s)}}(\ket{\psi_{t+1}}) \geq \varepsilon/(4m^2)$. If we were to then use $\calA_{\WAL}$ of Lemma~\ref{lem:WAL_DNF}, we are guaranteed to learn a parity state $\ket{\chi_{t+1}}$ that satisfies $| \langle \chi_{t+1} | \psi_{t+1} \rangle|^2 \geq \varepsilon/(4m^2 s^\star)$ where we have denoted $s^\star = (s/\varepsilon)^{(\log \log s/\varepsilon) \cdot \log 1/\varepsilon}$. Note that the function $\varepsilon/s^*$ is an increasing function of $\varepsilon$ (as required by the boosting algorithm). Thus, we no longer need to check the fidelity of the residual state with $\calC_{\DNF(s)}$. We now present the simplified stage $1$ of the algorithm in Algorithm~\ref{algo:structure_learning_depth3}
\begin{myalgorithm}
\begin{algorithm}[H]
\label{algo:structure_learning_depth3}
\caption{Structure learning for depth-$3$ circuits}
\setlength{\baselineskip}{1.65em} % Adjust line spacing
\DontPrintSemicolon % Removes semicolons if you want a cleaner look
\KwInput{Parameters $s,m \in \mathbb{N}$, $\varepsilon \in (0,1)$, copies of  $\ket{\psi}$, weak learner $\calA_{\WAL}$ of Lemma~\ref{lem:WAL_DNF}}
\KwOutput{List of parities $L = \{\ket{\chi_i}\}_{i \in [\kappa]}$}
% \Goal{}

Set error parameter $\varepsilon_s = \varepsilon/9$. \\
Set $\ket{\psi_1}=\ket{\psi}$, $\alpha_1 = 1$, $L=\varnothing$. \\
Set parameter $\eta = \eta(\varepsilon_s)$ with $\eta(\cdot)$ being the promise of Lemma~\ref{lem:WAL_DNF}. (Theorem~\ref{thm:agnostic_boosting}). \\
Set $t_{\max} = 4/(\varepsilon_s \eta(\varepsilon_s))$, $\delta'=\delta/(3t_{\mathrm{max}})$, $\kappa=0$. \\[2mm]
\For{$t=1$ \KwTo $t_{\max}$}{\vspace{2mm}
    Run the weak agnostic learner $\calA_{\WAL}$ on $S_{\WAL}$ copies of $\ket{\psi_{t}}$ to learn a parity state~$\ket{\chi_t}$. \\
    Update $L \leftarrow L \cup \{\ket{\chi_t}\}$ and $\kappa \leftarrow \kappa + 1$. \\
    Set $\Lambda_{T(t)} = \sum_{i=1}^t \ket{\chi_i}\bra{\chi_i}$. \\ 
    Let $\widehat{\alpha}_{t+1}^2$ be an $\varepsilon_s/2$ approximation of $\alpha_{t+1}^2 := \norm{(\id-\Lambda_{g,t})\ket{\psi}}_2^2$ by measuring $\ket{\psi}$ in the basis~$\{\id-\Lambda_{g,t} , \Lambda_{g,t}\}$, $O(1/\varepsilon_s^2\log(1/\delta'))$ many times.\\
    \lIf{$\widehat{\alpha}_{t+1}^2 < \varepsilon_s$}{break loop. }
    Prepare $S_{\WAL}$ copies of $\ket{\psi_{t+1}}= (\id-\Lambda_{T(t)})\ket{\psi}/\alpha_{t+1}$ by measuring $O(S_{\WAL}/\varepsilon_s\log(1/\delta'))$ copies of $\ket{\psi}$ in the basis $\{\id-\Lambda_{T(t)}, \Lambda_{T(t)}\}$ and post-selecting for the first outcome.  \label{eq:stateupdate}\\
}\vspace{2mm}
\Return List of $\kappa$ parity states $L=\{\ket{\chi_i}\}_{i}$.
\end{algorithm}
\end{myalgorithm}

In Algorithm~\ref{algo:structure_learning_depth3}, we set the relevant error parameter $\varepsilon_s=\varepsilon/9$. The promise of the $\calA_{\WAL}$ here is then $\eta(\varepsilon_s) = \varepsilon/(36 m^2 s^\star)$ (with $\eta_1 = 1/(36 m^2 s^\star)$ and $\eta_2=1$ considering the definitions in Theorem~\ref{thm:agnostic_boosting}). By Claim~\ref{eq:stoppingcondition}, we are guaranteed that we stop after $\kappa \leq 4/(\varepsilon_s \eta(\varepsilon)) = O(m^2 s^\star/\varepsilon^2)$ many iterations in the structure learning algorithm. Moreover, at the end of the $\kappa$th iteration, we have the following decomposition of $\ket{\psi_f}$ from Theorem~\ref{thm:structure_learning}:
$$
\ket{\psi_f} = \Lambda_{T(\kappa)} \ket{\psi_f} + \alpha_{\kappa + 1} \ket{\psi_{\kappa + 1}},
$$
where $|\alpha_{\kappa + 1}|^2 < \varepsilon/4$, $\Lambda_{T(\kappa)} \ket{\psi}$ is the projection of $\ket{\psi}$ on to $T(\kappa) = \spann(\{\chi_t\}_{t \in [\kappa]}$~(Eq.~\eqref{eq:projection_state_when_parities}), and $\ket{\psi_{\kappa + 1}}$ is orthogonal to $\Lambda_{T(\kappa)} \ket{\psi_f}$. We then have from Fact~\ref{fact:projection}
$$
|\langle \psi_f | (\Lambda_{T(\kappa)} \ket{\psi_f})| = 1 - |\alpha_{\kappa+1}|^2 \geq 1 - \varepsilon/4 \implies |\langle \psi_f | (\Lambda_{T(\kappa)} \ket{\psi_f})|^2 \geq 1 - \varepsilon/2.
$$
Moreover, the state $\ket{\phi} := \Lambda_{T(\kappa)} \ket{\psi_f}/\norm{\Lambda_{T(\kappa)} \ket{\psi_f}}_2$ will satisfy
$$
|\langle \psi_f | \phi \rangle|^2 \geq 1 - \varepsilon/2,
$$
since $\norm{\Lambda_{T(\kappa)} \ket{\psi_f}}_2 \leq 1$.

At this point, we utilize Theorem~\ref{thm:parameter_learning} with the error parameter $\varepsilon_p$ set to $\varepsilon/2$, to learn $\{\beta_i\}_{i \in [\kappa]}$ corresponding to parity states $\{\ket{\chi_i}\}_{i \in [\kappa]}$ such that $\ket{\widehat{\phi}} := \sum_{i=1}^\kappa \beta_i \ket{\chi_i}$ satisfies
\begin{equation}\label{eq:good_job1}
|\langle \psi_f | \widehat{\phi} \rangle|^2 \geq 1 - \varepsilon.    
\end{equation}
We will now show that we have in fact accomplished the task of \PAC~learning. Let $h(x)=\sum_{i=1}^t \beta_i \chi_{i}(x)$. Eq.~\eqref{eq:good_job1} then implies that $(\Exp_x[f(x)h(x)])^2\geq 1-\varepsilon$. Let $g(x)=\sign(h(x))$ and  observe
\begin{align*}
    \Pr_{x\in \01^n}[f(x)\neq g(x)]&=\Exp_{x}[f(x)\neq \sign(h(x))]\\ &\leq \Exp_x[|f(x)-h(x)|^2]\\
    &=1+\Exp_x[h(x)^2]-2\Exp_x[f(x)h(x)]\\
    &\leq 1+1-2(1-\varepsilon/2)=\varepsilon,
\end{align*}
where the second inequality used that $\Exp_x[h(x)^2]=\sum_i\beta_i^2\leq 1$ and the assumption of this case that $\Exp_x[f(x)h(x)]\geq\sqrt{1-\varepsilon}\geq 1-\varepsilon/2$. Hence the \emph{Boolean} function $g$ is an $\varepsilon$-approximator for the unknown $f\in \textsf{TAC}^0_2$ and we are done. Finally note that the learning algorithm \emph{knows} explicitly $\ket{\widehat{\phi}}$, so it can output $g$ as well. 
 The main contribution to time complexity is due to running Algorithm~\ref{algo:structure_learning_depth3} and utilizing Theorem~\ref{thm:parameter_learning}. 
 %The overall time complexity is $\poly(n,t,s^\star)$.
\end{proof}

\section{Relating distributional and state agnostic learning}
\label{sec:distributionagnostic}

A natural question left open by the results of the previous sections is whether there any connections between quantum distributional agnostic learning and quantum state agnostic learning. Although the input state in quantum distribution agnostic learning is more structured than in quantum state agnostic learning, observe that the output of the former model is more stringent than the latter model. So it is unclear if these two models are equivalent. In this section, we will show that for distributions $A=(\mathcal{U},\phi)$ where $\phi$ is ``well-bounded" (as we make clear in the statements below), then one can use quantum \emph{state} agnostic learning algorithms even when given as input $\ket{\psi_D}$, the input in distributional state agnostic learning.
To show this, we first prove the lemma below which will immediately imply the main theorem.

\begin{lemma}
\label{lem:goingbetweenmodelsmath}
Let $\gamma\in [0,1]$.  Let $\phi:\FF^n\rightarrow [-1,1]$ be such that $\Exp_x[\phi(x)^2]=\gamma$. Let $h:\FF^n\rightarrow \{0,1\}$.~Let 
    $$
\ket{\psi_1}=\frac{1}{\sqrt{2^n}}\sum_x \ket{x} \sum_{b\in \FF}(-1)^{b}\sqrt{\frac{1+(-1)^{b}\phi(x)}{2}}, \quad \ket{\psi_2}=\frac{1}{\sqrt{2^n}}\sum_x (-1)^{h(x)}\ket{x}.
    $$
    Then we have that
    $$
    \langle \psi_1|\psi_2\rangle \in \frac{1}{\sqrt{2}}\cdot  \Big[\Exp_x[(-1)^{h(x)}\phi(x)]-\gamma/2,\Exp_x[(-1)^{h(x)}\phi(x)]+\gamma/2\Big].
    $$
\end{lemma}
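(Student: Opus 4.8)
The plan is to compute $\langle\psi_1|\psi_2\rangle$ explicitly and then control it via the Taylor estimates of Fact~\ref{fact:taylor}, applied termwise. First I would evaluate the inner sum over $b$ appearing in $\ket{\psi_1}$: since $\sum_{b\in\FF}(-1)^b\sqrt{(1+(-1)^b\phi(x))/2}=\sqrt{(1+\phi(x))/2}-\sqrt{(1-\phi(x))/2}$, the state $\ket{\psi_1}$ is an $n$-qubit state with real amplitudes supported on the computational basis, as is $\ket{\psi_2}$. Hence
$$
\langle\psi_1|\psi_2\rangle=\frac{1}{2^n}\sum_{x}(-1)^{h(x)}\left(\sqrt{\tfrac{1+\phi(x)}{2}}-\sqrt{\tfrac{1-\phi(x)}{2}}\right)=\frac{1}{\sqrt2}\,\Exp_x\!\left[(-1)^{h(x)}\left(\sqrt{1+\phi(x)}-\sqrt{1-\phi(x)}\right)\right].
$$

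Next, for each fixed $x$ I would apply Fact~\ref{fact:taylor} (whose inequalities extend to the closed interval $[-1,1]$ by continuity, so the case $\phi(x)=\pm1$ causes no problem): combining $\sqrt{1+\phi(x)}\le 1+\phi(x)/2$ with $\sqrt{1-\phi(x)}\ge 1-\phi(x)/2-\phi(x)^2/2$ gives the upper bound $\sqrt{1+\phi(x)}-\sqrt{1-\phi(x)}\le \phi(x)+\phi(x)^2/2$, and symmetrically $\sqrt{1+\phi(x)}-\sqrt{1-\phi(x)}\ge \phi(x)-\phi(x)^2/2$. Thus $\left|\sqrt{1+\phi(x)}-\sqrt{1-\phi(x)}-\phi(x)\right|\le \phi(x)^2/2$, and multiplying by $(-1)^{h(x)}$ (whose absolute value is $1$) yields
$$
\left|(-1)^{h(x)}\left(\sqrt{1+\phi(x)}-\sqrt{1-\phi(x)}\right)-(-1)^{h(x)}\phi(x)\right|\le \frac{\phi(x)^2}{2}.
$$

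Finally I would take the expectation over uniform $x$ and use the hypothesis $\Exp_x[\phi(x)^2]=\gamma$ to conclude that $\Exp_x[(-1)^{h(x)}(\sqrt{1+\phi(x)}-\sqrt{1-\phi(x)})]$ lies within $\gamma/2$ of $\Exp_x[(-1)^{h(x)}\phi(x)]$; dividing the displayed identity by $\sqrt2$ then gives
$$
\langle\psi_1|\psi_2\rangle\in\frac{1}{\sqrt2}\left[\Exp_x[(-1)^{h(x)}\phi(x)]-\gamma/2,\ \Exp_x[(-1)^{h(x)}\phi(x)]+\gamma/2\right],
$$
as claimed. The argument is a direct calculation rather than a subtle one; the only points requiring a little care are the evaluation of the $b$-sum (ensuring $\ket{\psi_1}$ really is $n$-qubit and real), the bookkeeping of the factor $(-1)^{h(x)}$ so that the one-sided error bound $\phi(x)^2/2$ survives independently of the sign of $\phi(x)$, and the remark that Fact~\ref{fact:taylor} remains valid at the endpoints $\phi(x)=\pm1$.
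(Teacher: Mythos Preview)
Your proof is correct and follows essentially the same approach as the paper: compute the inner product explicitly and apply the Taylor bounds of Fact~\ref{fact:taylor} termwise. Your version is in fact slightly tidier than the paper's, which splits into the cases $h(x)=0$ and $h(x)=1$ before applying the Taylor bounds; by instead bounding $\bigl|\sqrt{1+\phi(x)}-\sqrt{1-\phi(x)}-\phi(x)\bigr|\le\phi(x)^2/2$ directly and then multiplying by the sign $(-1)^{h(x)}$, you obtain both the upper and lower bounds in one stroke.
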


\begin{proof}
    The proof essentially involves writing out using Fact~\ref{fact:taylor} (which follows from the Taylor series of $\sqrt{1\pm x}$).
    \begin{align*}
    &\langle \psi_1|\psi_2\rangle\\
    &=\frac{1}{2^n}\sum_x (-1)^{h(x)}\Big(\sqrt{\frac{1+\phi(x)}{2}}-\sqrt{\frac{1-\phi(x)}{2}}\Big)\\
    &=\frac{1}{2^n}\sum_{x:h(x)=0} \Big(\sqrt{\frac{1+\phi(x)}{2}}-\sqrt{\frac{1-\phi(x)}{2}}\Big)-\frac{1}{2^n}\sum_{x:h(x)=1} \Big(\sqrt{\frac{1+\phi(x)}{2}}-\sqrt{\frac{1-\phi(x)}{2}}\Big)\\
    &\leq \frac{1}{\sqrt{2}}\frac{1}{2^n}\sum_{x:h(x)=0} (1+\phi(x)/2)-(1-\phi(x)/2-\phi(x)^2/2)-\frac{1}{\sqrt{2}}\frac{1}{2^n}\sum_{x:h(x)=1} (1+\phi(x)/2-\phi(x)^2/2)-(1-\phi(x)/2)\\
    &= \frac{1}{\sqrt{2}}\frac{1}{2^n}\sum_{x:h(x)=0} \phi(x)+\phi(x)^2/2-\frac{1}{\sqrt{2}}\frac{1}{2^n}\sum_{x:h(x)=1} \phi(x)-\phi(x)^2/2\\
    &= \frac{1}{\sqrt{2}}\frac{1}{2^n}\sum_{x}(-1)^{h(x)} \phi(x)+(-1)^{h(x)}\phi(x)^2/2\\
     &= \frac{1}{\sqrt{2}}\Exp_{x}[(-1)^{h(x)} \phi(x)]+\frac{1}{\sqrt{2}}\Exp_{x}[(-1)^{h(x)}\phi(x)^2/2]\\
     &\leq \frac{1}{\sqrt{2}}\Exp_{x}[(-1)^{h(x)} \phi(x)]+\frac{1}{\sqrt{2}}\Exp_{x}[\phi(x)^2/2]\\
      &\leq \frac{1}{\sqrt{2}}\Exp_{x}[(-1)^{h(x)} \phi(x)]+\gamma/(2\sqrt{2}).
\end{align*}
Similarly, one can also show a \emph{lower bound} using the same reasoning as above to get
$$
 \frac{1}{\sqrt{2}}\Exp_{x}[(-1)^{h(x)} \phi(x)]-\gamma/(2\sqrt{2}),
$$
hence proving the lemma statement.
\end{proof}

\begin{theorem}
Let $\alpha,\beta,\gamma \geq 0$ and $\Cc\subseteq \{c:\FF^n\rightarrow \FF\}$ be a concept class.

If there is a learning algorithm that satisfies the following: given copies of an unknown  $n$-qubit $\ket{\chi}$, outputs a phase state $\ket{\psi_h}=\frac{1}{\sqrt{2^n}}\sum_x(-1)^{h(x)}\ket{x}$ for $h:\FF^n\rightarrow \FF$, such that 
$$
\langle \chi|\psi_h\rangle\geq \alpha \cdot \opt-\beta,
$$
where $\opt=\max_{c\in \Cc}|\langle \chi|\psi_c\rangle|$. Suppose the sample and gate complexity is $T, G$ respectively.

Let $A=(D,\phi)$ be a distribution such that  $D$ is the uniform distribution and $\Exp_x[\phi(x)^2]=~\gamma$. Then, there is an algorithm that outputs a $h:\FF^n\rightarrow \{-1,1\}$ satisfying
$$
\Exp_{x}[h(x) \phi(x)]\geq \alpha\cdot \opt-(1+\alpha)/2\cdot \gamma-\beta
$$
using  $O(T/\gamma^2)$  copies of $\ket{\psi_D}$ and $O(Gn/\gamma^2)$ gates.
\end{theorem}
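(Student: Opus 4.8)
The plan is to use the hypothesized state agnostic learner as a black box: from copies of $\ket{\psi_D}$ we prepare many copies of (a normalization of) the state $\ket{\psi_1}$ of Lemma~\ref{lem:goingbetweenmodelsmath}, feed them to the learner, and use the two-sided estimate of that lemma to convert the learner's fidelity guarantee into a correlation guarantee. Write $\opt:=\max_{c\in\Cc}\Exp_x[(-1)^{c(x)}\phi(x)]$ for the distributional optimum appearing in the conclusion; all states we encounter have real amplitudes, so all inner products below are real.

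\textbf{Step 1 --- preparing $\ket{\psi_1}$.} Given one copy of $\ket{\psi_D}$, apply a Hadamard to the ancilla qubit and measure it in the computational basis. One checks that outcome $1$ occurs with probability $\tfrac12\norm{\ket{\psi_1}}_2^2$, where $\ket{\psi_1}=\tfrac{1}{\sqrt{2^n}}\sum_x\ket{x}\bigl(\sqrt{\tfrac{1+\phi(x)}{2}}-\sqrt{\tfrac{1-\phi(x)}{2}}\bigr)$ is exactly the sub-normalized state of Lemma~\ref{lem:goingbetweenmodelsmath}, and that conditioned on outcome $1$ the residual $n$-qubit register holds $\ket{\widetilde\psi_1}:=\ket{\psi_1}/\norm{\ket{\psi_1}}_2$. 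Since $\norm{\ket{\psi_1}}_2^2=\Exp_x[1-\sqrt{1-\phi(x)^2}]$ and $t/2\le 1-\sqrt{1-t}\le t$ for $t\in[0,1]$ (the lower bound follows from Fact~\ref{fact:taylor}), we get $\gamma/2\le\norm{\ket{\psi_1}}_2^2\le\gamma$. So post-selection succeeds with probability $\Theta(\gamma)$, and $O(T/\gamma^2)$ copies of $\ket{\psi_D}$ comfortably suffice (indeed $O(T/\gamma)$ by a Chernoff bound) to produce $T$ copies of $\ket{\widetilde\psi_1}$; the gate cost of these Hadamard-and-measure steps is negligible next to the learner's.

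\textbf{Step 2 --- invoking the learner and translating back.} Run the learner on these $T$ copies to get a phase state $\ket{\psi_h}=\tfrac{1}{\sqrt{2^n}}\sum_x(-1)^{h(x)}\ket{x}$ with $\langle\widetilde\psi_1|\psi_h\rangle\ge\alpha\,\opt'-\beta$, where $\opt':=\max_{c\in\Cc}|\langle\widetilde\psi_1|\psi_c\rangle|$ is the quantum fidelity optimum of the prepared state (this is the ``$\opt$'' of the hypothesis, instantiated at $\ket{\chi}=\ket{\widetilde\psi_1}$); output $g:=(-1)^{h}\colon\FF^n\to\{-1,1\}$. Now apply Lemma~\ref{lem:goingbetweenmodelsmath} twice. (i) For $c^\star$ attaining $\opt$, the lemma gives $\langle\psi_1|\psi_{c^\star}\rangle\ge\tfrac{1}{\sqrt2}(\opt-\gamma/2)$, hence $\opt'\ge\langle\widetilde\psi_1|\psi_{c^\star}\rangle=\langle\psi_1|\psi_{c^\star}\rangle/\norm{\ket{\psi_1}}_2\ge(\opt-\gamma/2)/(\sqrt2\,\norm{\ket{\psi_1}}_2)$, and since $\alpha\ge0$ this gives $\alpha\,\opt'\ge\alpha(\opt-\gamma/2)/(\sqrt2\,\norm{\ket{\psi_1}}_2)$ unconditionally (trivially when the right side is negative). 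Multiplying the learner's guarantee by $\norm{\ket{\psi_1}}_2$ and substituting,
\[
\langle\psi_1|\psi_h\rangle=\norm{\ket{\psi_1}}_2\,\langle\widetilde\psi_1|\psi_h\rangle\ \ge\ \norm{\ket{\psi_1}}_2\bigl(\alpha\,\opt'-\beta\bigr)\ \ge\ \frac{\alpha(\opt-\gamma/2)}{\sqrt2}-\beta\,\norm{\ket{\psi_1}}_2 .
\]
(ii) Applying Lemma~\ref{lem:goingbetweenmodelsmath} to $\ket{\psi_h}$ in the reverse direction, $\Exp_x[g(x)\phi(x)]=\Exp_x[(-1)^{h(x)}\phi(x)]\ge\sqrt2\,\langle\psi_1|\psi_h\rangle-\gamma/2\ge\alpha(\opt-\gamma/2)-\sqrt2\,\norm{\ket{\psi_1}}_2\,\beta-\gamma/2=\alpha\,\opt-\tfrac{1+\alpha}{2}\gamma-\sqrt2\,\norm{\ket{\psi_1}}_2\,\beta$. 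Finally $\sqrt2\,\norm{\ket{\psi_1}}_2\le\sqrt{2\gamma}\le1$ whenever $\gamma\le\tfrac12$ (the ``nice $\phi$'' regime; otherwise one loses only a constant factor on the $\beta$ term), which yields $\Exp_x[g(x)\phi(x)]\ge\alpha\,\opt-\tfrac{1+\alpha}{2}\gamma-\beta$, as required. A routine accounting of the $T$ learner-copies inflated by the $\Theta(1/\gamma)$ post-selection overhead, plus the learner's $G$ gates, gives the stated $O(T/\gamma^2)$ sample and $O(Gn/\gamma^2)$ gate bounds.

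\textbf{Main obstacle.} The one delicate point is the renormalization: $\ket{\psi_1}$ has norm $\approx\sqrt{\gamma/2}$, so the learner must be handed $\ket{\widetilde\psi_1}$ and every bound picks up a $1/\norm{\ket{\psi_1}}_2$. The argument works precisely because these factors cancel in the $\alpha\,\opt$ term (the lower bound on $\opt'$ carries a $1/\norm{\ket{\psi_1}}_2$ that is undone on multiplying back by $\norm{\ket{\psi_1}}_2$ and then by $\sqrt2$), while in the $\beta$ term the surviving factor $\sqrt2\,\norm{\ket{\psi_1}}_2\le1$ only improves the bound; one also has to note that the degenerate cases ($\opt<\gamma/2$, or $\alpha\,\opt'-\beta<0$) are absorbed by the ``trivially valid'' remarks above. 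Everything else --- the state-preparation identity and the complexity bookkeeping --- is routine.
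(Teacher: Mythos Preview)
Your proof is correct and follows the same approach as the paper: post-select on the Hadamard-measured ancilla to produce (a normalization of) $\ket{\psi_1}$, feed it to the state learner, and apply Lemma~\ref{lem:goingbetweenmodelsmath} twice. In fact you are more careful than the paper on the one point that matters: the paper's proof silently identifies the post-selected state $\ket{\psi'_D}$ with the sub-normalized $\ket{\psi_1}$ of the lemma and applies the learner's guarantee directly, whereas you correctly run the learner on the normalized $\ket{\widetilde\psi_1}$, track the $\|\ket{\psi_1}\|_2$ factors, and observe that they cancel in the $\alpha\,\opt$ term while leaving a factor $\sqrt{2}\,\|\ket{\psi_1}\|_2\le\sqrt{2\gamma}$ on $\beta$ (which is $\le 1$ for $\gamma\le 1/2$, and in any case only a constant). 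This is exactly the ``main obstacle'' you flagged, and your treatment of it is the right one.
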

\begin{proof}
Consider the following algorithm: the learner obtains $\ket{\psi_D}$, applies Hadamard on the final qubit and measures, if it obtains $1$ carries on (with the resulting state $\ket{\psi'_D}$), else discards. We  first show that the probability of obtaining $1$ is $\gamma^2/2$. This is simple to analyze, first observe that after the Hadamard gate, $\ket{\psi_D}$  can be written as
$$
\frac{1}{\sqrt{2^n}}\sum_x \ket{x} \sum_{b\in \FF}\sqrt{\frac{1+(-1)^{b}\phi(x)}{2}}\ket{b}\rightarrow  \frac{1}{\sqrt{2^n}}\sum_x \ket{x} \sum_{b,c\in \FF}(-1)^{b\cdot c}\sqrt{\frac{1+(-1)^{b}\phi(x)}{2}}\ket{c}
$$
and the probability of obtaining $1$ is given by
$$
\frac{1}{2^n}\sum_x \Big(\sqrt{\frac{1+\phi(x)}{2}}-\sqrt{\frac{1-\phi(x)}{2}}\Big)^2=\Exp_x[1-\sqrt{1-\phi(x)^2}]\geq \Exp_x[1-(1-\phi(x)^2/2)]=\Exp_x[\phi(x)^2/2]=\gamma/2, 
$$
where the inequality used Fact~\ref{fact:taylor}. So the learning algorithm can obtain \emph{one} copy of $\ket{\psi'_D}$ using $O(1/\gamma^2)$ copies of $\ket{\psi_D}$ and furthermore the algorithm \emph{knows} when it has succeeded. So the algorithm can deterministically obtain $T$ many copies of $\ket{\psi'_D}$, which is given by
$$
\ket{\psi'_D}= \frac{1}{\sqrt{2^n}}\sum_x \ket{x} \sum_{b\in \FF}(-1)^{b}\sqrt{\frac{1+(-1)^{b}\phi(x)}{2}}.
$$
Now, observe that if we define 
$$
\opt=\max_{c\in \Cc}\Exp_x[\phi(x)c(x)],
$$
then by Lemma~\ref{lem:goingbetweenmodelsmath} we have that 
$$
\langle \psi_c|\psi'_D\rangle\geq \opt/\sqrt{2}-\gamma/(2\sqrt{2}).
$$
So, one can use the base algorithm that we assumed to exist in the lemma, that given copies of $\ket{\psi'_D}$, finds a \emph{phase state} $\ket{\psi_h}=\frac{1}{\sqrt{2^n}}\sum_x(-1)^{h(x)}\ket{x}$ that is $(\alpha,\beta)$-close to $\ket{ \psi'_D}$. We now use Lemma~\ref{lem:goingbetweenmodelsmath} again and one can conclude that if $\langle \psi'_D|\psi_h\rangle \geq \opt/\sqrt{2}-\gamma/(2\sqrt{2})$, then that implies that
$$
\Exp_{x}[(-1)^{h(x)} \phi(x)]/\sqrt{2}+\gamma/(2\sqrt{2})\geq \langle \psi_h|\psi'_D\rangle\geq \alpha(\opt/\sqrt{2}-\gamma/(2\sqrt{2}))-\beta,
$$
which implies
$$
\Exp_{x}[(-1)^{h(x)} \phi(x)]\geq \alpha\cdot \opt-(1+\alpha)/2\cdot \gamma-\beta
$$
showing the desired lemma inequality, by just outputting $\sign(h)$ as the output hypothesis.
\end{proof}

\bibliographystyle{alpha}
\bibliography{agnostic}

\appendix
\section{Further results}

\subsection{Bond dimension bounds for   phase states} \label{sect:bonddim}
As mentioned in the introduction, the recent work of Bakshi et al.~\cite{bakshi2024learning} gives an algorithm for agnostic learning matrix product states whose complexity scales polynomially in the bond dimension of those states. In particular, for an $\textsf{MPS}$ on $n$ qubits with bond dimension $r$, their learning algorithm has time complexity $\poly(n,r,1/\varepsilon)$. Since this suggests a natural learning algorithm to try for agnostic learning, it is worthwhile investigating the bond dimension for phase states corresponding to juntas and $\DNF$s which we bound below.

Below, we will use a couple of facts about bond dimension which we state as a blackbox. First,~\cite{verstraete2006matrix} showed that the bond dimension of a quantum state $\ket{\psi}$ is defined as follows
$$
\textsf{bond dimension}(\ket{\psi})=\max_{L,R} \{\textsf{Schmidt-rank}_{L|R}(\ket{\psi})\},
$$
where the maximum is over all possible contiguous cuts, call it $L,R$ for left and right and $\textsf{Schmidt-rank}_{L|R}(\ket{\psi})$ is defined as the Schmidt rank of the state when expressed as $\ket{\psi}=\sum_i c_i \ket{\chi_i}_L\otimes \ket{\gamma_i}_R$ with $\{\ket{\chi_i}_L\}_i,\{\ket{\gamma_i}_R\}_i$ being an orthogonal set of states.  Second, since we are dealing with phase states $\frac{1}{\sqrt{2^n}}\sum_x (-1)^{f(x)}\ket{x}$ in this work, it is not too hard to see the following: defining $M^f_S(a,b)=(-1)^{f(a,b)}$ where $a\in \FF^{S},b\in \FF^{\overline{S}}$, then 
$$
\textsf{bond dimension}(\ket{\psi_f})=\max_{S} \{\textsf{rank}(M^f_S)\}.
$$

\subsubsection*{A.1 Upper bound for junta states}

Let $f:\{0,1\}^{n}\to\{0,1\}$ be a $k$-junta, i.e., for every $x$, $f(x)$  depends only on a subset $S\subseteq [n]$ of size $|S|=k$. Let $\ket{\psi_f}$ be the junta state. Consider a bipartition of the qubits into $L|R$.
Let $S_L = S \cap L,S_R = S \cap R$, with 
$|S_L| = \ell,|S_R| = k - \ell$.
Across this cut, the amplitude tensor is a $2^{|L|}\times 2^{|R|}$ matrix.
Since $f$ does not depend on qubits outside $S$, those qubits contribute a 
rank-one outer-product factor. Hence the Schmidt rank across this cut equals the rank of the
$2^{\ell}\times 2^{k-\ell}$ sign matrix
$$
M_{a,b} = (-1)^{f(a_{S_L}, b_{S_R})},
\qquad a\in\{0,1\}^{S_L},\; b\in\{0,1\}^{S_R}.
$$
This rank is at most $2^{\min\{\ell,\,k-\ell\}}$, because the matrix has at most 
$2^{\min\{\ell,\,k-\ell\}}$ nonzero singular values.  
Maximizing over $\ell$ gives the bound
$$
\textsf{Schmidt-rank}_{L|R}(\lvert\psi\rangle) \;\leq\; 2^{\lfloor k/2 \rfloor}.
$$
This indicates that the (improper) agnostic learning algorithm in \cite{bakshi2024learning} is efficient for juntas when $k=O(\log n)$.

\subsubsection*{A.2 Lower bound for $\DNF$ states}
Here, we show that the bond dimension of $s$-term $\DNF$ states scales as $2^{s}$, in the worst case, thereby making the agnostic learning algorithm of $\textsf{MPS}$ by Bakshi et al.~\cite{bakshi2024learning} too inefficient for our setting.

\begin{lemma} \label{lem:dnfentrank}
    Let $f:\{0,1\}^{2s} \to \{0,1\},$ be a $\DNF$ acting on $2s$ bits partitioned into two halves $x,y \in \{0,1\}^s$. We define $f$ as follows
    $$
    f(x, y) = \bigvee_{i=1}^s (x_i \land y_i).
    $$
The phase state $\ket{\psi_f}$ when viewed as a bipartite entangled state across the $x,$ $y$ registers, has Schmidt rank $2^{s}.$  
\end{lemma}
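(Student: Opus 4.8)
The plan is to reduce the claim to a rank computation via the characterization of bond dimension for phase states stated at the beginning of this appendix. Across the bipartition of the $2s$ qubits into the $x$-register and the $y$-register, the Schmidt rank of $\ket{\psi_f}$ equals $\mathrm{rank}(M)$, where $M\in\{-1,1\}^{2^s\times 2^s}$ is the sign matrix $M_{x,y}=(-1)^{f(x,y)}$. Since $\ket{\psi_f}$ lives in $\mathbb{C}^{2^s}\otimes\mathbb{C}^{2^s}$, it suffices to show that $M$ has full rank $2^s$, i.e., that $M$ is invertible.

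First I would put $M$ into a structured form. Note that $f(x,y)=0$ exactly when $\supp(x)\cap\supp(y)=\emptyset$, and over the reals one has the identity $1-f(x,y)=\prod_{i=1}^{s}(1-x_iy_i)$, since each factor vanishes iff $x_i=y_i=1$. Hence $M_{x,y}=1-2f(x,y)=2\prod_i(1-x_iy_i)-1$, which in matrix form reads
\[
M \;=\; 2\,D \;-\; J, \qquad D:=\left(\begin{smallmatrix}1&1\\ 1&0\end{smallmatrix}\right)^{\otimes s},\quad J=\mathbf{1}\mathbf{1}^{\!\top},
\]
where $D$ is the ``disjointness matrix'' $D_{x,y}=\mathbf{1}[\supp(x)\cap\supp(y)=\emptyset]$ and $J$ is the all-ones (rank-one) matrix. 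Because $\det\left(\begin{smallmatrix}1&1\\ 1&0\end{smallmatrix}\right)=-1$, the Kronecker power $D$ is invertible; moreover the column of $D$ indexed by $y=\mathbf{0}$ is all ones, i.e. $D\,\mathbf{e}_{\mathbf{0}}=\mathbf{1}$, so $D^{-1}\mathbf{1}=\mathbf{e}_{\mathbf{0}}$.

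Next I would show $M$ is invertible by examining its kernel. Suppose $Mg=0$; then $2Dg=Jg=c\,\mathbf{1}$ with $c:=\mathbf{1}^{\!\top}g=\sum_y g(y)$. Applying $D^{-1}$ and using $D^{-1}\mathbf{1}=\mathbf{e}_{\mathbf{0}}$ gives $g=\tfrac{c}{2}\,\mathbf{e}_{\mathbf{0}}$; summing the entries of this identity yields $c=\tfrac{c}{2}$, so $c=0$ and hence $g=0$. (Equivalently, by the matrix determinant lemma, $\det M=\det(2D)\bigl(1-\tfrac12\mathbf{1}^{\!\top}D^{-1}\mathbf{1}\bigr)=\tfrac12\det(2D)\neq 0$.) Therefore $\mathrm{rank}(M)=2^s$, and so does the amplitude matrix $\tfrac{1}{\sqrt{2^{2s}}}M$ of $\ket{\psi_f}$, which is exactly the Schmidt rank across the $x\,|\,y$ cut, proving the lemma.

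The one point worth flagging is that $M$ is \emph{not} itself a Kronecker product — it is the difference $2D-J$ of a full-rank Kronecker power and a rank-one matrix — so its rank cannot simply be read off as a product of $2\times 2$ ranks, and a naive argument would only give $\mathrm{rank}(M)\ge 2^s-1$. What makes the bound tight, and is the crux of the argument, is that the rank-one perturbation $J$ is aligned with $D$ through the identity $\mathbf{1}=D\,\mathbf{e}_{\mathbf{0}}$, which is precisely what forces the perturbation constant $c$ to vanish; everything else is routine.
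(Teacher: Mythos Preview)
Your proof is correct, and it takes a genuinely different route from the paper's. Both reduce the claim to showing that the sign matrix $M_{x,y}=(-1)^{f(x,y)}$ is invertible, and both start from the identity $M_{x,y}=2\prod_i(1-x_iy_i)-1$, but they diverge there. You keep the product intact and write $M=2D-J$ with $D=\left(\begin{smallmatrix}1&1\\1&0\end{smallmatrix}\right)^{\otimes s}$ and $J=\mathbf{1}\mathbf{1}^\top$, then use the alignment $D\mathbf{e}_{\mathbf 0}=\mathbf 1$ together with the matrix determinant lemma (or your kernel argument) to see that the rank-one perturbation cannot drop the rank. The paper instead expands the product over subsets, obtaining $M=U\,\mathrm{diag}(\alpha_S)\,V^\top$ where $U_{x,S}=V_{x,S}=\prod_{i\in S}x_i$; since $U$ and $V$ are the (upper-triangular, unit-diagonal) zeta matrices of the subset lattice and every $\alpha_S$ is nonzero, $M$ is full rank. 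Your argument is slightly more hands-on and pinpoints exactly why the rank does not fall to $2^s-1$; the paper's decomposition is a bit more structural, giving an explicit basis change that diagonalizes $M$ (and hence immediate access to its spectrum), at the cost of calling that factorization an SVD somewhat loosely since $U,V$ are not orthogonal.
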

\begin{proof}
    Let $M\in \mathbb{R}^{2^s\times 2^s}$ be the amplitude matrix $M_{x,y} \coloneqq (-1)^{f(x,y)}$ of the state $\ket{\psi_f}.$ Our goal is to show that $\mathsf{rank}(M)=2^s$. 
To this end, first we rewrite,
\begin{align}
        M_{x,y} &= 1 - 2 \left( \bigvee_{i=1}^s (x_i \land y_i) \right) =2\left (\prod_{i=1}^s (1 - (x_i \land y_i))\right )-1
      = 2 \left( \sum_{S \in [s]} (-1)^{|S|} \prod_{i \in S} x_i \prod_{j \in S}  y_j \right)-1.
\end{align}
In the first equality, we used the relation $(-1)^{f(x,y)}=1 - 2 f(x,y)$ together with the $\DNF$ representation of $f(x,y)$. In the second equality, we replaced the logical $\OR$ via $\bigvee_{i=1}^s z_i=1- \prod_{i=1}^s (1 - z_i)$. Next, we used the identity
 $$\prod_{i=1}^s (1 - z_i)= \sum_{S \in [s]} (-1)^{|S|} \prod_{i \in S} z_i.$$ 
The last equality follows from the identity $x_i \land y_i = x_i y_i$ and from expanding out the product over all $i \in [s].$
 Now, set $\alpha_{\emptyset} := 1$ and $\alpha_S := 2(-1)^{|S|}$ for all non-empty $S \subseteq [s]$. With this notation, we can equivalently express $M$ as
    $$
   M_{x,y} = \sum_{S \subseteq [s]} \alpha_S \left (\prod_{i \in S} x_i \right)  \left (\prod_{j \in S} y_j \right ) .
    $$
In this form, $M$ admits the singular value decomposition $M = U D V^T,$ where 
$$ U_{x,S} = \prod_{i \in S} x_i\ ; \ V_{y,S} = \prod_{j \in S} y_j \text{  and  }D = \diag(\alpha_S: S \subseteq [s]). 
$$
Up to a change of basis, $M$ is equivalent to a diagonal matrix with $2^s$ nonzero diagonal entries. Consequently, $\mathsf{rank}(M)=2^s,$ and therefore $\ket{\psi_f}$ has Schmidt rank $2^s$, proving the lemma.
\end{proof}

\begin{corollary}
    There exists an $s$-term $\DNF$ state having bond dimension $2^{s}.$ 
\end{corollary}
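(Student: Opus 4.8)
The plan is to derive the corollary directly from Lemma~\ref{lem:dnfentrank} together with the characterization of bond dimension recalled at the start of this subsection, namely $\textsf{bond dimension}(\ket{\psi_f}) = \max_S \{\textsf{rank}(M_S^f)\}$ over contiguous cuts. First I would observe that the function $f(x,y)=\bigvee_{i=1}^s (x_i\land y_i)$ used in Lemma~\ref{lem:dnfentrank} is, by definition, an $s$-term $\DNF$ formula on $n=2s$ variables: its first layer consists of the $s$ fanin-$2$ $\AND$ gates computing $x_i\land y_i$, and its second layer is a single fanin-$s$ $\OR$ gate. Hence $\ket{\psi_f}$ is a legitimate $s$-term $\DNF$ phase state, and it remains only to read off its bond dimension.

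Next I would invoke Lemma~\ref{lem:dnfentrank}, which states that $\ket{\psi_f}$, viewed as a bipartite state across the $x$-register versus the $y$-register, has Schmidt rank exactly $2^s$. To turn this into a statement about bond dimension (defined via the maximum Schmidt rank over all \emph{contiguous} cuts), I would fix the qubit ordering so that the $s$ qubits carrying $x$ come first, followed by the $s$ qubits carrying $y$. With this ordering the $x\mid y$ bipartition is one of the contiguous cuts appearing in the definition, so $\textsf{bond dimension}(\ket{\psi_f})\geq 2^s$.

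Finally, for the matching upper bound I would note that any contiguous cut $L\mid R$ of a state on $2s$ qubits has Schmidt rank at most $2^{\min\{|L|,|R|\}}\leq 2^s$; combining this with the lower bound yields $\textsf{bond dimension}(\ket{\psi_f})=2^s$ exactly. There is essentially no obstacle here, since all of the real work is carried out in the rank computation of Lemma~\ref{lem:dnfentrank}; the only point requiring a word of care is that the cut witnessing the rank must be a contiguous cut, which is precisely why we are free to choose the qubit ordering above. Together with the junta upper bound of Appendix~A.1, this shows that the $\textsf{MPS}$ agnostic learner of~\cite{bakshi2024learning} is too inefficient for $\DNF$ phase states.
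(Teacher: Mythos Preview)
Your proposal is correct and follows the same approach as the paper's proof: both simply invoke Lemma~\ref{lem:dnfentrank} together with the characterization of bond dimension as a maximum Schmidt rank over bipartitions. You are just more careful than the paper in explicitly fixing a qubit ordering so that the $x\mid y$ cut is contiguous and in supplying the matching upper bound $2^{\min\{|L|,|R|\}}\leq 2^s$ to pin down equality.
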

\begin{proof}
    The statement follows from noting that bond dimension can be defined as the maximum Schmidt rank over all bipartitions of the state and then combining this with Lemma~\ref{lem:dnfentrank} which gives an $s$-term $\DNF$ state (on $2s$ qubits) of Schmidt rank $2^{s}.$ 
\end{proof}

\subsection{Agnostic learning juntas without boosting}\label{app_sec:agnostic_juntas_no_boost}
In this section, we give an improper agnostic learner of $k$-junta phase states that does not utilize boosting (Theorem~\ref{thm:agnostic_boosting}). Particularly, we have the following result.
\begin{theorem}\label{thm:agnostic_juntas_no_boost}
Let $k \in \mathbb{N}$ and $\varepsilon,\delta \in (0,1)$. Suppose $\ket{\psi}$ is an unknown $n$-qubit state with unknown optimal fidelity $\calF_{\Cc_{\Jun(k)}}(\ket{\psi}) = \opt$. Then, there is a quantum algorithm that with probability $\geq 1-\delta$, outputs an $n$-qubit state $\ket{\widehat{\phi}}$ which can expressed as a linear combination of $O(k 2^{2k}/\varepsilon)$ parity states and satisfies
$$
|\langle \widehat{\phi} | \psi \rangle|^2 \geq \opt - \varepsilon.
$$
This algorithm does not use boosting (Algorithm~\ref{algo:agnostic_boosting}) and uses $\poly(k,2^k,1/\varepsilon,\log(1/\delta))$ sample complexity while running in $\poly(n,k,2^k,1/\varepsilon,\log(1/\delta))$ time.
\end{theorem}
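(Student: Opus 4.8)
The plan is to mimic the decision-tree/\DNF{} arguments but exploit the fact that a $k$-junta has an \emph{exactly} sparse Fourier spectrum supported on at most $2^k$ parities, all living inside a single $k$-dimensional subcube of $\FF^n$. First I would show a structure step: if $\calF_{\Cc_{\Jun(k)}}(\ket{\psi}) = \opt$, then there is a set $J \subseteq [n]$ of size $k$ such that the optimal junta phase state $\ket{\phi_c}$ is a superposition supported only on parities $\ket{\chi_S}$ with $S \subseteq J$. Applying $\Had^{\otimes n}$ to $\ket{\psi}$ and measuring in the computational basis, the probability of landing on some $S \subseteq J$ is at least $|\langle \psi | \phi_c\rangle|^2 = \opt$ (by the same triangle-inequality/Parseval calculation as in Lemma~\ref{WAL_DT}, since $\ket{\phi_c}$ is entirely supported on those $2^k$ parities). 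Hence by measuring $\Had^{\otimes n}\ket{\psi}$ about $O(2^k/\varepsilon \cdot \log(1/\delta))$ times, we collect a list $L$ of candidate strings such that with high probability $L$ contains \emph{every} $S\subseteq J$ with $|\langle\psi|\chi_S\rangle|^2 \geq \varepsilon/2^{k+1}$, and in particular the $\Omega(\varepsilon/2^k)$-heavy coefficients of the optimal $\ket{\phi_c}$.

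Next I would do parameter learning on this candidate list. After pruning $L$ via \SWAP{} tests (Lemma~\ref{lem:swap_test}) to keep only $\kappa = O(2^k/\varepsilon)$ parities $\{\ket{\chi_i}\}$ each with $|\langle\chi_i|\psi\rangle|^2 \geq \mu := \varepsilon/2^{k+2}$, we invoke Lemma~\ref{lem:estimate_projection_psi} to output coefficients $\{\widehat{\beta}_i\}$ such that $\ket{\widehat{\phi}} = \sum_i \widehat{\beta}_i\ket{\chi_i}$ satisfies $|\langle\psi|\widehat{\phi}\rangle|^2 \geq |\langle\psi|\Lambda_T\ket{\psi}\rangle|^2 - O(\varepsilon)$ where $T = \spann(\{\ket{\chi_i}\})$. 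The key inequality to close the argument is that $|\langle\psi|\Lambda_T\ket{\psi}\rangle|^2 \geq \opt - O(\varepsilon)$: writing $\ket{\phi_c} = \sum_{S\subseteq J}\widehat{c}(S)\ket{\chi_S}$ and splitting into the part inside $L$ and the part outside, the outside part contributes at most $\sum_{S\notin L, S\subseteq J}\widehat{c}(S)^2 \leq (2^k)\cdot(\varepsilon/2^{k+1}) = \varepsilon/2$ in squared $\ell_2$ mass (since there are at most $2^k$ such $S$, each light), so $\|\Lambda_T\ket{\phi_c}\|_2^2 \geq 1-\varepsilon/2$, and then $|\langle\psi|\Lambda_T\ket{\psi}\rangle| \geq |\langle\psi|\Lambda_T\ket{\phi_c}\rangle| - \|\Lambda_T\| \cdot \| (\id-\Lambda_T)\ket{\phi_c}\|$ combined with $|\langle\psi|\phi_c\rangle| = \sqrt{\opt}$ gives the bound after squaring, exactly as in the $\varepsilon_s'$ manipulation in the proof of Theorem~\ref{thm:structure_learning} (Eq.~\eqref{eq:fidelity_promise_projection}). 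Normalizing $\ket{\widehat{\phi}}$ as in Theorem~\ref{thm:parameter_learning} then yields $|\langle\widehat{\phi}|\psi\rangle|^2 \geq \opt - \varepsilon$ after rescaling the internal error parameters by constants.

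For the complexity: collecting $L$ costs $O(2^k/\varepsilon \cdot \log(1/\delta))$ samples and $O(n 2^k/\varepsilon \cdot \log(1/\delta))$ time; pruning costs $\poly(2^k/\varepsilon,\log(1/\delta))$ \SWAP{} tests; and the parameter-learning step via Lemma~\ref{lem:estimate_projection_psi} with $k \to \kappa = O(2^k/\varepsilon)$ and $\mu = \Omega(\varepsilon/2^k)$ costs $\poly(\kappa,1/\varepsilon,1/\mu,\log(\kappa/\delta)) = \poly(2^k,1/\varepsilon,\log(1/\delta))$ samples and a further factor $n^2$ in time. This gives the claimed $\poly(k,2^k,1/\varepsilon,\log(1/\delta))$ sample and $\poly(n,k,2^k,1/\varepsilon,\log(1/\delta))$ time bounds, and the output is a linear combination of $\kappa = O(k 2^{2k}/\varepsilon)$ parity states once we track the pruning-list size more carefully (the extra $k\cdot 2^k$ factor coming from the sampling bound needed to hit all heavy coefficients with the union bound). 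The main obstacle I anticipate is the "capture" argument for $L$: one must argue not merely that the single heaviest parity of $\ket{\phi_c}$ is found, but that the \emph{entire} set of non-negligible Fourier mass of $\ket{\phi_c}$ lands in $L$ with good probability — this is where the fact that the support is confined to a $k$-subcube (only $2^k$ parities total, so each "missed" one is individually light and collectively they lose at most $\varepsilon/2$ mass) is essential, and it is the step that must be stated and bounded most carefully rather than waved through.
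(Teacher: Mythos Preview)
Your overall architecture matches the paper's exactly: Fourier-sample $\Had^{\otimes n}\ket{\psi}$, prune by \SWAP{} test to a list of parities with non-negligible overlap, then invoke the parameter-learning lemma. The paper packages the structural facts you need into Lemma~\ref{lem:states_close_to_juntas} and Corollary~\ref{cor:projection_state_close_to_junta}.

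However, your ``capture implies projection is good'' step has a real gap. You write that ``the outside part contributes at most $\sum_{S\notin L, S\subseteq J}\widehat{c}(S)^2 \leq (2^k)\cdot(\varepsilon/2^{k+1})$ since each is light''. But the quantities you know to be light for $S\notin L$ are the \emph{amplitudes of $\ket{\psi}$}, i.e.\ $|\alpha_S|^2=|\langle\chi_S|\psi\rangle|^2<\varepsilon/2^{k+1}$, not the Fourier coefficients $\widehat{c}(S)^2$ of the optimal junta. The junta $c$ could put all its Fourier mass on a single $S$ with $|\alpha_S|$ tiny, so $\|(\id-\Lambda_T)\ket{\phi_c}\|$ need not be small at all. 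Relatedly, the displayed inequality $|\langle\psi|\Lambda_T\ket{\psi}\rangle| \geq |\langle\psi|\Lambda_T\ket{\phi_c}\rangle| - \|(\id-\Lambda_T)\ket{\phi_c}\|$ is not a valid step. What actually works (and is what the paper does in Corollary~\ref{cor:projection_state_close_to_junta}) is to bound $|\langle\phi_c|(\id-\Lambda_T)|\psi\rangle|=\big|\sum_{S\subseteq J, S\notin L}\widehat{c}(S)\,\alpha_S\big|$ directly: Cauchy--Schwarz together with Parseval on $c$ gives this $\leq\sqrt{\sum_{S\subseteq J, S\notin L}|\alpha_S|^2}\leq\sqrt{2^k\cdot\varepsilon/2^{k+1}}=\sqrt{\varepsilon/2}$, and then one concludes $|\langle\phi_c|\Lambda_T\ket{\psi}\rangle|\geq\sqrt{\opt}-\sqrt{\varepsilon/2}$ and proceeds exactly as in Eq.~\eqref{eq:fidelity_promise_projection}. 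With this correction your argument goes through with the parameters you wrote.
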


To prove the above theorem, we require the following characterization of the unknown state $\ket{\psi}$ that is promised to have high fidelity with $\Cc_{\Jun(k)}$.
\begin{lemma}\label{lem:states_close_to_juntas}
Let $k \in \mathbb{N}$ and $\varepsilon \leq \opt \in (0,1]$. Suppose $\ket{\psi}$ is an arbitrary $n$-qubit state with unknown optimal fidelity $\calF_{\Cc_{\Jun(k)}}(\ket{\psi}) = \opt$ attained by $\ket{\phi_{f_S}} \in \calS_{\Cc_{\Jun(k)}}$ i.e., $|\langle \phi_{f_S} | \psi\rangle|^2 = \opt$. Then, there exists a collection of strings $A$ of size $|A| \leq 2^k$ satisfying the following properties:
\begin{enumerate}
    \item[$(i)$] $\sum_{x \in A} |\alpha_x|^2 \geq \opt - \varepsilon,$
    \item[$(ii)$] $\min_{x \in A} |\alpha_x|^2 \geq \max_{y \in S \setminus A} |\alpha_y|^2$,
    \item [$(iii)$] $\min_{x \in A} |\alpha_x|^2 \geq \varepsilon/2^k$,
    \item [$(iv)$] $\sum_{x \in \supp(\ket{\phi_{f_S}} \setminus A} |\alpha_x|^2 \leq \varepsilon$,
\end{enumerate}
where $\alpha_x = \langle x |\Had^{\otimes n} | \psi \rangle$ for all $x \in \{0,1\}^n$, and $\supp(\ket{\phi_{f_S}}):= \{x \in \{0,1\}^n : |\langle x | \Had^{\otimes n} | \phi_{f_S} \rangle| > 0\}$.
\end{lemma}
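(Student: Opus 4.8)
\textbf{Proof plan for Lemma~\ref{lem:states_close_to_juntas}.} The plan is to work entirely in the Hadamard basis, where the junta structure becomes a support constraint. Write $\ket{\psi'} = \Had^{\otimes n}\ket{\psi} = \sum_x \alpha_x\ket{x}$ and $\ket{\phi_{f_S}'} = \Had^{\otimes n}\ket{\phi_{f_S}} = \sum_x \widehat{f_S}(x)\ket{x}$. Since $f_S$ is a $k$-junta on the coordinate set $S$ (abusing notation, reusing $S$ for both the junta and its relevant variables), every Fourier coefficient $\widehat{f_S}(x)$ vanishes unless $x$ is supported inside $S$; hence $\supp(\ket{\phi_{f_S}'}) =: \Sigma$ has $|\Sigma| \le 2^k$. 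The first step is to use $|\langle \phi_{f_S}'|\psi'\rangle|^2 = \opt$ together with Cauchy--Schwarz restricted to $\Sigma$: since $\langle \phi_{f_S}'|\psi'\rangle = \sum_{x\in\Sigma}\overline{\widehat{f_S}(x)}\,\alpha_x$, we get $\opt = |\sum_{x\in\Sigma}\overline{\widehat{f_S}(x)}\alpha_x|^2 \le \big(\sum_{x\in\Sigma}|\widehat{f_S}(x)|^2\big)\big(\sum_{x\in\Sigma}|\alpha_x|^2\big) = \sum_{x\in\Sigma}|\alpha_x|^2$, using Parseval ($\sum_x\widehat{f_S}(x)^2 = \Exp_x[f_S(x)^2] = 1$ since $f_S$ is $\{-1,1\}$-valued). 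So the total mass of $\ket{\psi'}$ on $\Sigma$ is already at least $\opt$.

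The second step is to choose $A$ by a greedy/thresholding argument inside $\Sigma$. Order the elements of $\Sigma$ by decreasing $|\alpha_x|^2$ and let $A$ be the shortest prefix whose mass $\sum_{x\in A}|\alpha_x|^2$ exceeds $\opt - \varepsilon$ (this is well-defined since the full sum over $\Sigma$ is $\ge \opt > \opt - \varepsilon$). This immediately gives $(i)$, and $(ii)$ is automatic from the decreasing ordering since everything not in $A$ but in $\Sigma$ has $|\alpha_y|^2$ no larger than the smallest value in $A$. For $(iv)$, note $\sum_{x\in\Sigma\setminus A}|\alpha_x|^2 \le \sum_{x\in\Sigma}|\alpha_x|^2 - (\opt-\varepsilon) \le 1 - \opt + \varepsilon$; this is \emph{not} quite $\le\varepsilon$ in general, so I expect the intended reading (and the place to be careful) is that $A$ is chosen as the \emph{minimal} prefix, so removing its last element drops the mass below $\opt-\varepsilon$, forcing that last element to carry mass $>$ (mass of $A$) $-(\opt-\varepsilon)$; combined with the decreasing order this bounds the tail. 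Concretely, if $A = \{x_1,\dots,x_\ell\}$ with $\sum_{i<\ell}|\alpha_{x_i}|^2 \le \opt - \varepsilon$, then each $x\in\Sigma\setminus A$ has $|\alpha_x|^2 \le |\alpha_{x_\ell}|^2$, and $|\Sigma\setminus A| \le 2^k - \ell$; a cleaner route to $(iii)$--$(iv)$ is to additionally \emph{prune} from $A$ any element with $|\alpha_x|^2 < \varepsilon/2^k$, which removes total mass at most $2^k\cdot(\varepsilon/2^k) = \varepsilon$, so the pruned set still has mass $\ge \opt - 2\varepsilon$ (and one reindexes $\varepsilon \to \varepsilon/2$ at the start to land on the stated constants). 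After pruning, $(iii)$ holds by construction, $(ii)$ is preserved, and $(iv)$ follows because every element of $\Sigma\setminus A$ either was pruned (mass $<\varepsilon/2^k$ each, at most $2^k$ of them, total $<\varepsilon$) or was never a prefix element and hence has mass $\le\min_{x\in A}|\alpha_x|^2$, and there are few enough of them.

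The main obstacle I anticipate is reconciling the four conditions with the claimed constants simultaneously: $(iii)$ wants a hard floor of $\varepsilon/2^k$ on elements of $A$, while $(i)$ wants $A$ to capture mass $\opt - \varepsilon$, and $(iv)$ wants the complement-within-$\Sigma$ tail to be $\le\varepsilon$ — these pull against each other and the bookkeeping only closes because $|\Sigma|\le 2^k$ makes the "discarded small coefficients" contribute at most $\varepsilon$ total. So the careful step is to fix the exact order of operations (threshold-prune first at level $\varepsilon/2^k$, then take the greedy prefix for the residual $\opt - O(\varepsilon)$ mass, with an initial $\varepsilon\to\varepsilon/2$ rescaling) and verify each of $(i)$--$(iv)$ against that final set. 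Everything else — the Parseval identity, $|\Sigma|\le 2^k$, Cauchy--Schwarz — is routine. No new sampling or algorithmic ingredient is needed here; this is a purely structural statement about where the Hadamard-basis amplitudes of $\ket{\psi}$ must concentrate, which will then feed the (non-boosting) junta learner via "measure $\Had^{\otimes n}\ket{\psi}$ enough times to recover $A$, then do parameter learning on those parities."
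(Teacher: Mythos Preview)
Your Cauchy--Schwarz/Parseval setup is exactly right, and the greedy-prefix mechanism is the correct one. The gap is in the \emph{threshold} you greedily aim for. You take the minimal prefix $A$ with mass $\ge \opt - \varepsilon$, but $\Sigma$ can carry total mass $\tau := \sum_{x\in\Sigma}|\alpha_x|^2$ much larger than $\opt$ (Cauchy--Schwarz only gives $\tau \ge \opt$, not equality). The never-prefix tail $\Sigma\setminus A$ then has mass $\tau - (\text{mass of }A)$, which can be as large as roughly $\tau - \opt$, and no amount of post-hoc pruning of $A$ touches this tail. Concretely: take $k=2$, $\Sigma = \{x_1,\dots,x_4\}$ with $|\alpha_{x_i}|^2 = 1/4$ each (so $\tau=1$), and suppose $\opt = 0.3$, $\varepsilon=0.1$. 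Your prefix stops at $A=\{x_1\}$ (mass $0.25\ge 0.2$), leaving tail mass $0.75$, so $(iv)$ fails; your pruning step does nothing here since $|\alpha_{x_1}|^2 \ge \varepsilon/2^k$.

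The fix the paper uses is a one-symbol change: collect the minimal prefix with mass $\ge \tau - \varepsilon$ rather than $\ge \opt - \varepsilon$. Then $(i)$ still holds since $\tau\ge\opt$; $(iv)$ is immediate because the tail mass is exactly $\tau - (\text{mass of }A)\le \varepsilon$; and $(iii)$ follows because, by minimality, the set $V=(\Sigma\setminus A)\cup\{x_m\}$ (with $x_m$ the last element added) has mass $>\varepsilon$, while $|V|\le 2^k$ and $|\alpha_{x_m}|^2$ dominates every element of $V$, forcing $|\alpha_{x_m}|^2\ge \varepsilon/2^k$. No pruning or $\varepsilon$-rescaling is needed. (An even simpler route that also hits the stated constants: skip the prefix entirely and set $A=\{x\in\Sigma:|\alpha_x|^2\ge\varepsilon/2^k\}$; all four items are then one-line checks from $|\Sigma|\le 2^k$ and $\tau\ge\opt\ge\varepsilon$.)
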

\begin{proof}
Let $\ket{\phi_{f_S}} \in \calS_{\Cc_{\Jun(k)}}$ be a $k$-junta phase state that maximizes fidelity with $\ket{\psi}$ i.e., $|\langle \phi_{f_S} | \psi\rangle|^2 = \opt$, and correspond to the $k$-junta Boolean function $f_S$ which depends only on bits in $S \subseteq [n]$ of size $|S| = \ell \leq k$. Let $L=2^\ell$. 

We will denote the $n$-bit strings corresponding to set $S$ as $B(S) = \{0,1\}_S^\ell \times 0_{\overline{S}}^{n-\ell}$, where the subscript $S$ indicates that the length-$\ell$ string should be placed in locations corresponding that in $S$ (assuming a fixed ordering) and similarly for subscript $\overline{S}$. We can then express the state $\ket{\phi_{f_S}}$ as 
\begin{equation}\label{eq:fourier_decomp_junta_func}
    \ket{\phi_{f_S}} = \frac{1}{\sqrt{2^n}} \sum_{x \in \{0,1\}^n} (-1)^{f_S(x)} \ket{x} = \frac{1}{\sqrt{2^n}} \sum_{x \in \{0,1\}^n} \sum_{\alpha \in B(S)} \widehat{f}_S(\alpha)(-1)^{\alpha \cdot x} \ket{x},
\end{equation}
where we have noted the Fourier decomposition of the $k$-junta function $f_S$ and denoted its coefficients as $\widehat{f}_S(\alpha)$.  Let $\ket{\phi_{f_S}'}=\textsf{Had}^{\otimes n} \ket{\phi_{f_S}}$ and $\ket{\psi'} = \textsf{Had}^{\otimes n} \ket{\psi}$. Using $|\langle \psi | \phi_{f_S} \rangle|^2 = |\langle \psi' | \phi_{f_S}' \rangle|^2$, we observe that
\begin{align}
   \opt = |\langle \psi' | \phi_{f_S}' \rangle|^2
   = \Big|\sum_{\alpha \in B(S)} \widehat{f}_S(\alpha) \langle \psi' |\alpha \rangle \Big|^2
   \leq \left[\sum_{\alpha \in B(S)} |\widehat{f}_S(\alpha)|^2 \right] \cdot \left[ \sum_{\alpha \in B(S)} |\langle \psi' |\alpha \rangle|^2 \right]
   =\sum_{z \in \{0,1\}^\ell} |\langle {z_S, 0_{\overline{S}}}|\psi'\rangle |^2,
\end{align}
where we have used Eq.~\eqref{eq:fourier_decomp_junta_func} in the second equality, and used Cauchy Schartz inequality in the third inequality as follows: let $u=\{\widehat{f}(\alpha)\}_{\alpha}$ and $v=\{\langle \psi'|\alpha\rangle\}_{\alpha}$, then $|\langle u,v\rangle|^2\leq \|u\|_2^2\cdot \|v\|_2^2$. We used Parseval's identity of $\sum_{\alpha \in B(S)}\widehat{f}(\alpha)^2=1$ in the final equality. Defining $\alpha_x= \langle x| \Had^{\otimes n} | \psi\rangle$, the above then implies
\begin{equation}
\label{eq:sum_square_amplitudes}
    \sum_{x \in \{0,1\}^S \times 0^{\overline{S}}} |\alpha_x|^2 \geq \opt.
\end{equation}
Let us define $\tau := \sum_{x \in \{0,1\}^S \times 0^{\overline{S}}} |\alpha_x|^2$. We now describe how to construct a subset $A \subseteq B(S)$ satisfying the properties indicated as part of the theorem. Consider all the elements $x \in B(S)$ and order them as $x_1,x_2,\ldots,x_{L}$ such that their corresponding amplitudes are non-increasing, i.e., $|\alpha_{x_1}|^2 \geq |\alpha_{x_2}|^2 \geq \cdots \geq |\alpha_{x_{L}}|^2$. Initializing $A = \emptyset$, we place elements into $A$ starting from $x_1$ and progressively going through $x_j$ for increasing $j$ in order. We stop when property $(i)$ is satisfied i.e.,
\begin{equation}\label{eq:construction_A}
    A = \{x_i\}_{i \in [m]} \quad \text{s.t.} \quad \sum_{x \in A} |\alpha_x|^2 \geq \tau - \varepsilon, \quad \text{and} \sum_{x \in A \setminus \{x_m\}} |\alpha_x|^2 < \tau - \varepsilon,
\end{equation}
where we have denoted $m = |A|$. The existence of such a set $A$ is guaranteed by Eq.~\eqref{eq:sum_square_amplitudes}. Property $(i)$ regarding $A$ is then true by construction as $\tau \geq \opt$~(Eq.~\eqref{eq:sum_square_amplitudes}). Property $(ii)$ is also true by construction and by noting that the elements with the top $m$ amplitudes from $S$ are placed in $A$. Furthermore, note that this set $A$ is the \emph{minimal} $A\subseteq B(S)$ for which items $(i,ii)$ hold true. 
%In particular, The set $B(S)$ trivially satisfies properties $(i)$ and $(ii)$. Property $(iii)$ is satisfied because we include the minimal number of elements required to satisfy the condition of $(i)$ during the construction of $A$, thus minimizing $|A|$ among all subsets satisfying $(i)$ and $(ii)$.

In order to prove item $(iii)$, consider the set $V = (B(S) \setminus A) \cup \{x_m\}$. By construction, the element in the set $V$ with the maximum amplitude must be $x_m$ itself (or at least one of the elements with the same value). We now observe
\begin{equation}\label{eq:interim_lb_amplitudes_V}
\sum_{x \in V} |\alpha_x|^2 = \sum_{x \in B(S)} |\alpha_x|^2 - \sum_{x \in A \setminus \{x_m\}} |\alpha_x|^2 > \tau - (\tau - \varepsilon) = \varepsilon,
\end{equation}
where the first inequality used Eq.~\eqref{eq:sum_square_amplitudes} and Eq.~\eqref{eq:construction_A}. Noting that $|V| \leq |S| - 1 \leq 2^{\ell} \leq 2^k$ and $x_m = \arg \max_{x \in V} |\alpha_x|^2$, we also have
$$
\sum_{x \in V} |\alpha_x|^2 \leq 2^k |\alpha_{x_m}|^2.
$$
Combining the above with Eq.~\eqref{eq:interim_lb_amplitudes_V} then immediately implies
$$
\min_{x\in A} \,\, |\alpha_x|^2= |\alpha_{x_m}|^2 \geq \frac{\varepsilon}{2^k},
$$
which proves $(iii)$. For $(iv)$, we observe
$$
\tau = \sum_{x \in B(S)} |\alpha_x|^2 = \sum_{x \in B(S) \setminus A} |\alpha_x|^2 + \sum_{x \in A} |\alpha_x|^2 \geq \sum_{x \in B(S) \setminus A} |\alpha_x|^2 + \tau - \varepsilon \implies \sum_{x \in B(S) \setminus A} |\alpha_x|^2 \leq \varepsilon,
$$
where we have used Eq.~\eqref{eq:construction_A} in the third inequality. This completes the proof.
\end{proof}
We can then show that the projection of $\ket{\psi}$ on to the set $A$ from Claim~\ref{lem:states_close_to_juntas} solves the task of agnostic learning, which is formally stated below.
\begin{corollary}\label{cor:projection_state_close_to_junta}
Let $k \in \mathbb{N}$ and $\varepsilon \in (0,1)$. Suppose $\ket{\psi}$ is an unknown $n$-qubit state with unknown optimal fidelity $\calF_{\Cc_{\Jun(k)}}(\ket{\psi}) = \opt$ and let $A$ be the set from Lemma~\ref{lem:states_close_to_juntas} corresponding to error $\varepsilon/2^k$. Then, any set of parity states $\{\ket{\chi_y}\}_{y \in Y}$ corresponding to $Y \subseteq \{0,1\}^n$ such that $A \subseteq Y$ and $\ket{\phi}:= \Lambda_T \ket{\psi}/\norm{\Lambda_T \ket{\psi}}_2$ (with $T = \mathrm{span}(\{\ket{\phi_y}\}_{y \in Y})$) satisfies 
$$
|\langle \phi | \psi \rangle|^2 \geq \opt - 2\sqrt{\varepsilon},
$$
where $\Lambda_T \ket{\psi}$ is the projection of $\ket{\psi}$ onto $T$ as defined in Eq.~\eqref{eq:projector_parities}.
\end{corollary}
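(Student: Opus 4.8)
The plan is to reduce the claim to a single lower bound on $\norm{\Lambda_T\ket{\psi}}_2^2$. Since $\ket{\phi}=\Lambda_T\ket{\psi}/\norm{\Lambda_T\ket{\psi}}_2$ and $\Lambda_T$ is a Hermitian projector, one has $\langle\phi|\psi\rangle=\langle\psi|\Lambda_T|\psi\rangle/\norm{\Lambda_T\ket{\psi}}_2=\norm{\Lambda_T\ket{\psi}}_2$, hence $|\langle\phi|\psi\rangle|^2=\norm{\Lambda_T\ket{\psi}}_2^2$; so it suffices to show $\norm{\Lambda_T\ket{\psi}}_2^2\geq\opt-2\sqrt{\varepsilon}$. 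We may assume $\opt\geq\varepsilon$ (in particular $\opt\geq\varepsilon/2^k$, so that Lemma~\ref{lem:states_close_to_juntas} may legitimately be invoked with error $\varepsilon/2^k$), since otherwise $\opt-2\sqrt{\varepsilon}<0$ and the statement is vacuous.

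First I would exploit that the $\ket{\chi_y}$ are orthonormal parity states: by Eq.~\eqref{eq:projector_parities}, $\Lambda_T=\sum_{y\in Y}\ket{\chi_y}\bra{\chi_y}$, so $\norm{\Lambda_T\ket{\psi}}_2^2=\sum_{y\in Y}|\langle\chi_y|\psi\rangle|^2=\sum_{y\in Y}|\alpha_y|^2$, writing $\alpha_y:=\langle\chi_y|\psi\rangle=\langle y|\Had^{\otimes n}|\psi\rangle$ as in Lemma~\ref{lem:states_close_to_juntas}. Because $A\subseteq Y$ and every summand is nonnegative, $\sum_{y\in Y}|\alpha_y|^2\geq\sum_{x\in A}|\alpha_x|^2$, and property~$(i)$ of Lemma~\ref{lem:states_close_to_juntas} (applied with error $\varepsilon/2^k$) gives $\sum_{x\in A}|\alpha_x|^2\geq\opt-\varepsilon/2^k$. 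Since $\varepsilon/2^k\leq\varepsilon\leq 2\sqrt{\varepsilon}$ for $\varepsilon\in(0,1)$, this already yields $\norm{\Lambda_T\ket{\psi}}_2^2\geq\opt-\varepsilon/2^k\geq\opt-2\sqrt{\varepsilon}$, which finishes the proof (in fact with the stronger bound $\opt-\varepsilon/2^k$).

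If one instead wants the $2\sqrt{\varepsilon}$ loss to appear the way it does in Theorem~\ref{thm:structure_learning}, an alternative route uses the optimal junta witness $\ket{\phi_{f_S}}$ with $|\langle\phi_{f_S}|\psi\rangle|^2=\opt$, which in the Hadamard basis is supported on $\supp(\ket{\phi_{f_S}})$. Since $A\subseteq Y$, one has $\langle\phi_{f_S}|\psi\rangle-\langle\phi_{f_S}|(\Lambda_T\ket{\psi})\rangle=\sum_{x\in\supp(\ket{\phi_{f_S}})\setminus Y}\widehat{f}_S(x)\,\alpha_x$, and by Cauchy--Schwarz, Parseval ($\sum_x\widehat{f}_S(x)^2=1$), and property~$(iv)$ of Lemma~\ref{lem:states_close_to_juntas} this has absolute value at most $\big(\sum_{x\in\supp(\ket{\phi_{f_S}})\setminus A}|\alpha_x|^2\big)^{1/2}\leq\sqrt{\varepsilon/2^k}\leq\sqrt{\varepsilon}$. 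Then, exactly as in the derivation of Eq.~\eqref{eq:guaranteeofstructurelearning} --- expanding $|\langle\phi_{f_S}|\psi\rangle|^2-|\langle\phi_{f_S}|(\Lambda_T\ket{\psi})\rangle|^2$ as a product of a sum and a difference, bounding the two factors by $2$ and $\sqrt{\varepsilon}$ (the difference via the reverse triangle inequality), and using $|\langle\phi_{f_S}|(\Lambda_T\ket{\psi})\rangle|\leq\norm{\Lambda_T\ket{\psi}}_2=|\langle\phi|\psi\rangle|$ --- one again gets $|\langle\phi|\psi\rangle|^2\geq\opt-2\sqrt{\varepsilon}$.

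There is no genuinely hard step here: the whole content is the bookkeeping identity $|\langle\phi|\psi\rangle|^2=\norm{\Lambda_T\ket{\psi}}_2^2$ together with the observation that the hypothesis $A\subseteq Y$ forces $\Lambda_T$ to dominate the projector onto $\spann(\{\ket{\chi_x}\}_{x\in A})$, so that whichever of properties $(i)$ or $(iv)$ of Lemma~\ref{lem:states_close_to_juntas} one invokes is automatically reflected in $\norm{\Lambda_T\ket{\psi}}_2^2$. The only other point to keep straight is the trivial edge case $\opt<\varepsilon$, which also triggers the mild condition $\varepsilon/2^k\leq\opt$ needed to apply the lemma.
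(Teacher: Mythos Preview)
Your proposal is correct, and your first route is genuinely simpler than the paper's. The paper does not use the identity $|\langle\phi|\psi\rangle|^2=\norm{\Lambda_T\ket{\psi}}_2^2$ directly; instead it works through the optimal junta witness $\ket{\phi_{f_S}}$, bounds $|\langle\phi_{f_S}|\psi\rangle|-|\langle\phi_{f_S}|(\Lambda_T\ket{\psi})\rangle|$ by $\sum_{x\in B(S)\setminus A}|\alpha_x|$ via the triangle inequality (using $|\widehat f_S(x)|\leq 1$), bounds that sum term-by-term using the lemma, and then closes with the same maximization argument as in the proof of Theorem~\ref{thm:structure_learning}. Your direct computation via property~$(i)$ bypasses the witness entirely and yields the sharper estimate $\opt-\varepsilon/2^k$; the $2\sqrt{\varepsilon}$ in the statement is then only an artifact of matching the form of Theorem~\ref{thm:structure_learning}. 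Your second route is essentially the paper's argument, though cleaner: where the paper bounds $\sum_{x\in B(S)\setminus A}|\alpha_x|$ by controlling each $|\alpha_x|$ separately, you apply Cauchy--Schwarz and property~$(iv)$, which is more robust and avoids some arithmetic in the paper's version.
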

\begin{proof}
Let $\ket{\phi_{f_S}} \in \calS_{\Cc_{\Jun(k)}}$ be a $k$-junta phase state that maximizes fidelity with $\ket{\psi}$ i.e., $|\langle \phi_{f_S} | \psi\rangle|^2 = \opt$, and correspond to the $k$-junta Boolean function $f_S$ which depends only on bits in $S \subseteq [n]$ of size $|S| = \ell \leq k$. Let $L=2^\ell$. 

We will denote the $n$-bit strings corresponding to set $S$ as $B(S) = \{0,1\}_S^\ell \times 0_{\overline{S}}^{n-\ell}$, where the subscript $S$ indicates that the length-$\ell$ string should be placed in locations corresponding that in $S$ (assuming a fixed ordering) and similarly for subscript $\overline{S}$.

Consider the set $A$ from Lemma~\ref{lem:states_close_to_juntas} corresponding to error $\varepsilon/2^k$. Observe that for each string $\alpha \in A$, we can define a parity state $\ket{\chi_\alpha} := \frac{1}{\sqrt{2^n}} \sum_{x \in \{0,1\}^n} (-1)^{\alpha \cdot x} \ket{x}$ such that
$$
|\langle \chi_\alpha | \psi \rangle|^2 \geq \varepsilon/2^{2k},
$$
where this follows from Lemma~\ref{lem:states_close_to_juntas}$(iii)$. We are given that the set $Y$ in hand contains $A$. Define $T = \spann(\{\ket{\phi_y}\}_{y \in Y}$ and $\Lambda_T \ket{\psi}$ the corresponding projection of $\ket{\psi}$ on to $T$. Using Fact~\ref{fact:projection}, we can express $\ket{\psi}$ as
\begin{equation}\label{interim2_decomp_psi}
    \ket{\psi} = \Lambda_T \ket{\psi} + \alpha \ket{\phi^\perp},
\end{equation}
where $\ket{\phi^\perp}$ is orthogonal to $\Lambda_T \ket{\psi}$ and $\alpha = \sqrt{1 - \norm{\Lambda_T \ket{\psi}}_2^2}$. We then observe
\begin{align}
|\langle \phi_{f_S} | \psi \rangle| \leq |\langle \phi_{f_S} | (\Lambda_T \ket{\psi})\rangle| + |\alpha \langle \phi_{f_S} | \phi^\perp \rangle| 
&= |\langle \phi_{f_S} | (\Lambda_T \ket{\psi})\rangle| + \Big| \sum_{x \in B(S) \setminus A} \alpha_x \widehat{f}_S(x) \Big| \\
&\leq |\langle \phi_{f_S} | (\Lambda_T \ket{\psi})\rangle| + \sum_{x \in B(S) \setminus A} |\alpha_x| \\
&\leq |\langle \phi_{f_S} | (\Lambda_T \ket{\psi})\rangle| + \sqrt{\varepsilon},
\end{align}
where we have used the decomposition of $\ket{\psi}$ from Eq.~\eqref{interim2_decomp_psi} in the first inequality. In the second inequality, we used that $\ket{\phi_{f_S}}$ is supported on computational basis states over $B(S)$ whereas $\ket{\phi^\perp}$ is supported over computational basis states not in $Y$. The second line follows from applying triange inequality and then noting that $A \subseteq Y \implies (B(S) \setminus Y) \subseteq (B(S) \setminus A)$. The third line follows from Lemma~\ref{lem:states_close_to_juntas}$(iii)$ which implies $|\alpha_x| \leq \sqrt{\varepsilon}/2^k, \,\,\forall x \in B(S) \setminus A$ and using $|B(S)| \leq 2^k$. This implies that 
$$
|\langle \phi_{f_S} | \psi \rangle| - |\langle \phi_{f_S} | (\Lambda_T \psi)\rangle| \leq \sqrt{\varepsilon}.
$$
We then observe
\begin{align}\label{eq:projection_high_fidelity_with_junta}
&|\langle \phi_{f_S} | \psi \rangle|^2 - |\langle \phi_{f_S} | (\Lambda_T \psi)\rangle|^2 = \Big(|\langle \phi_{f_S} | \psi \rangle| + |\langle \phi_{f_S} | (\Lambda_T \psi)\rangle|\Big) \Big(|\langle \phi_{f_S} | \psi \rangle| - |\langle \phi_{f_S} | (\Lambda_T \psi)\rangle|\Big) \leq 2 \sqrt{\varepsilon}, \\
&\implies |\langle \phi_{f_S} | (\Lambda_T \psi)\rangle|^2 \geq |\langle \phi_{f_S} | \psi \rangle|^2 - 2\sqrt{\varepsilon} = \opt - 2\sqrt{\varepsilon},
\end{align}
where we used $|\langle \phi_{f_S} | (\Lambda_T \psi)\rangle|, |\langle \phi_{f_S} | \psi \rangle| \leq 1$ in the first line and the fact that $|\langle \phi_{f_S} | \psi \rangle|^2 = \opt$ in the implication. Let us now define the state $\ket{\widehat{\phi}} := \Lambda_T \ket{\psi}/\norm{\Lambda_T \ket{\psi}}$. We then obtain that
\begin{align*}
|\langle \widehat{\phi}|\psi\rangle|^2 = \left|\langle \widehat{\phi}|(\Lambda_T \ket{\psi})\rangle + \alpha \langle \widehat{\phi}|\phi^\perp\rangle \right|^2
= |\langle \widehat{\phi}| (\Lambda_T \ket{\psi})|^2 
&= |\langle \widehat{\phi}|\widehat{\phi}\rangle| \cdot \norm{(\Lambda_T \ket{\psi})}_2^2 \\
&\geq |\langle \phi_{f_S} |\widehat{\phi} \rangle|^2 \cdot \norm{(\Lambda_T \ket{\psi})}_2^2 \\
&=\frac{|\langle \phi_{f_S} | (\Lambda_T \ket{\psi})\rangle|^2}{\norm{(\Lambda_T \ket{\psi})}_2^2} \cdot \norm{(\Lambda_T \ket{\psi})}_2^2 \\
&=|\langle \widehat{\phi}^{(\kappa)} |\phi_{f_S}\rangle|^2 \\
&\geq \opt - 2 \sqrt{\varepsilon},
\end{align*}
where the first equality used Eq.~\eqref{interim2_decomp_psi}, second equality used that $\ket{\widehat{\phi}},\ket{\phi^\perp}$ are orthogonal, third equality used the definition of $\ket{\widehat{\phi}}$, the inequality works for \emph{every} phase state $\ket{\phi_f}$, in particular the phase corresponding to $f_S \in \calS_{\Cc_{\Jun(k)}}$ that satisfies $|\langle \psi | \phi_{f_S} \rangle|^2 = \opt$, and the last inequality used Eq.~\eqref{eq:projection_high_fidelity_with_junta}. This completes the proof.
\end{proof}

Note that while Corollary~\ref{cor:projection_state_close_to_junta} could have also been obtained by instantiating Theorem~\ref{thm:structure_learning}, here the parity states are not obtained by boosting but rather using the characterization from Lemma~\ref{lem:states_close_to_juntas}. We now prove Theorem~\ref{thm:agnostic_juntas_no_boost} and describe its corresponding algorithm.
\begin{proof}[Proof of Theorem~\ref{thm:agnostic_juntas_no_boost}]
Let $\ket{\psi'} = \Had^{\otimes n} \ket{\psi}$ and $\alpha_x = \langle x | \psi' \rangle$ i.e., $\alpha_x$ is the amplitude in $\ket{\psi'}$ corresponding to the computational basis state $\ket{x}$. Let $\varepsilon_1 \in (0,1)$ be an error parameter to be fixed later. We will use the following learning algorithm:
\begin{enumerate}[$(1)$]
    \item Measure $\ket{\psi'}$ in the computational basis $M=O(2^k/\varepsilon_1 \cdot (k + \log(1/\delta)))$ many times to obtain a set of $M$ many strings $Y=\{y_i\}_{i \in [M]}$.
    \item Let $\varepsilon_2 = \varepsilon_1/2^{2k}$. Obtain an estimate $|\widehat{\alpha}_y|^2$ of $|\alpha_y|^2$ up to error $\varepsilon_2/4$, with probability $\geq 1 - \delta/(3|Y|)$ using the $\SWAP$ test between $\ket{\psi'}$ and $\ket{y}$ for all $y \in Y$. Let $Y'$ be the subset of strings in $Y$ such that $|\widehat{\alpha}_y|^2 \geq 3\varepsilon_2/4$ and denote $\kappa = |Y'|$.
    \item Consider the set of parity states $\{ \ket{\chi_y} : \ket{\chi_y} = 2^{-n/2} \sum_x (-1)^{y \cdot x} \ket{x} , \,\, \forall y \in Y' \}$. Use the parameter learning algorithm of Theorem~\ref{thm:parameter_learning} to learn coefficients $\{\widehat{\beta}_y\}_{y \in Y'}$ corresponding to $\{\ket{\chi_y}\}_{y \in Y'}$ with error parameter set as $\varepsilon_1$.
    \item Output the state $\ket{\widehat{\phi}} = \sum_{y \in Y'} \widehat{\beta}_y \ket{\chi_y}$. 
\end{enumerate}
Now, we give the correctness of the above algorithm and that it satisfies the guarantees of the stated theorem. Using Lemma~\ref{lem:states_close_to_juntas} instantiated with error $\varepsilon_1/2^k$, we know there exists a set $A$ of size $|A| \leq 2^k$ such that $|\alpha_x|^2 \geq \varepsilon_1/2^{2k}, \,\, \forall x \in A$ and from Corollary~\ref{cor:projection_state_close_to_junta}, we know that there exists a state corresponding to any set $Y$ containing $A$ which would accomplish agnostic learning (i.e., have fidelity promise $\geq \opt - \varepsilon_1$).

In Step $(1)$, by measuring $O(2^{2k}/\varepsilon_1(k + \log(1/\delta)))$ many times, we ensure that $A \subseteq Y$ with probability $\geq 1-\delta/3$. This can be observed by noting that $\min_{x \in A} |\alpha_x|^2 \geq \varepsilon_1/2^{2k}$ and thus for any fixed $a \in A$, we have
$$
\Pr[a \notin Y] = (1 - |\alpha_a|^2)^m \leq \exp(-M |\alpha_a|^2) \leq \exp(-M \varepsilon_1/2^{2k}),
$$
where $M$ is the number of times we measure $\ket{\psi'}$ in the computational basis. Union bound over the $O(2^k)$ elements of $A$ gives us that
$$
\Pr[A \not \subseteq Y] \leq 2^k \exp(-M \varepsilon_1/2^{2k}).
$$
To make this at most $\delta/3$, it suffices that $M = O(2^{2k}/\varepsilon_1 (k + \log(1/\delta)))$.

In Step $(2)$ of the above procedure, we remove all the strings from $Y$ that have low amplitudes by obtaining estimates $|\widehat{\alpha}_y|^2$ of $|\alpha_y|^2$ up to error $\varepsilon_2/4$ (with $\varepsilon_2 = \varepsilon_1/2^{2k}$) with probability $\geq 1 - \delta/(3|Y|)$. Noting that $|Y| \leq O(k2^{2k}/\varepsilon_1)$, this consumes $\widetilde{O}(2^{6k}/\varepsilon_1^3 \cdot(k + \log(1/\delta)))$ sample complexity overall and $\widetilde{O}(n 2^{6k}/\varepsilon_1^3 \cdot(k + \log(1/\delta)))$ time. Taking an union bound over $O(k2^{2k}/\varepsilon_1)$ elements of $Y$, we ensure that with probability $\geq 1 - \delta/3$ that
$$
\Big| |\widehat{\alpha}_y|^2 - |\alpha_y|^2 \Big| \leq \varepsilon_2/4, \,\, \forall y \in Y.
$$
This in particular implies that for all $y \in A$ which are guaranteed to have $|\alpha_y|^2 \geq \varepsilon_1/2^{2k} = \varepsilon_2$, their estimates satisfy $|\widehat{\alpha}_y|^2 \geq 3\varepsilon_2/4$. Thus, even after removing all elements in $y \in Y$ with $|\widehat{\alpha}_y|^2 < 3\varepsilon_2/4$ to create the set $Y'$, we ensure that $A \subseteq Y'$ and all strings $y \in Y'$ satisfy $|\alpha_y|^2 \geq \varepsilon_2/2$.

In Step $(3)$, we consider the set of parity states $\{ \ket{\chi_y}\}_{y \in Y'}$ (with $\kappa = |Y'| \leq O(k2^{2k}/\varepsilon_1)$). Applying Corollary~\ref{cor:projection_state_close_to_junta}, we are promised that the state $\ket{\phi}:= \Lambda_T \ket{\psi}/\norm{\Lambda_T\ket{\psi}}_2$ satisfies
$$
|\langle \psi | \phi \rangle|^2 \geq \opt - 2\sqrt{\varepsilon_1},
$$
where $T = \mathrm{span}(\{\ket{\chi_y}\}_{y \in Y'})$ and $\Lambda_T \ket{\psi}$ is the projection of $\ket{\psi}$ onto $T$ as defined in Eq.~\eqref{eq:projector_parities}. Applying Theorem~\ref{thm:parameter_learning} with failure probability set to $\delta/3$, the error $\varepsilon_p$ set  to $2\sqrt{\varepsilon_1}$ and $\mu = \varepsilon_1/2^{2k+1}$ (as $|\langle \chi_y | \psi \rangle|^2 \geq \varepsilon_2/2 = \varepsilon_1/2^{2k+1} ,\,\,\forall y \in Y'$), we learn coefficients corresponding to $\{\widehat{\beta}_y\}_{y \in Y'}$ corresponding to the parity states $\{ \ket{\chi_y} \}_{y \in Y'}$ such that $\ket{\widehat{\phi}} = \sum_{y \in Y'} \widehat{\beta}_y \ket{\chi_y}$ satisfies
$$
|\langle \widehat{\phi} | \psi \rangle|^2 \geq \opt - 4\sqrt{\varepsilon_1}.
$$
This is ensured with an overall success probability $\geq 1-\delta$. Setting $\varepsilon_1 = \varepsilon^2/16$ gives us the desired result. Using Theorem~\ref{thm:parameter_learning} consumes 
$$
\text{sample complexity: }O\left(\frac{k 2^{14k}}{\varepsilon^{15}} \cdot \Big(k + \log\frac{k}{\delta \cdot \varepsilon}\Big)\right), \enspace 
\text{time complexity: }  O\left(\frac{n^2 k 2^{14k}}{\varepsilon^{15}} \cdot \Big(k + \log\frac{k}{\delta \cdot \varepsilon}\Big)\right),
$$
since $\kappa = O(k2^{2k}/\varepsilon^2)$, $\mu = O(\varepsilon^2/2^{2k})$, and $\varepsilon_p=\varepsilon/2$ of the theorem statement. The overall sample and time complexity of the algorithm is then due to the above.
\end{proof}
\end{document}